\newtheorem{fact}{Fact}
\newtheorem{claim}{Claim}
\newtheorem{theorem}{Theorem}
\newtheorem{lemma}{Lemma}
\newtheorem{proposition}{Proposition}
\newtheorem{corollary}{Corollary}
\newtheorem{definition}{Definition}
\newtheorem{corollaryprime}{Corollary}[corollary]
\newtheorem{theoremAlph}{Theorem}
\newcommand{\matr}[1]{\bm{#1}}
\newcommand{\prlt}{\xi}
\newcommand{\tprlt}{\tilde \xi}
\newcommand{\FT}{\mathcal{F}}
\newcommand{\BL}{\mathcal{B}}
\newcommand{\BLs}{\mathscr{B}}
\newcommand{\TL}{\mathcal{D}}
\newcommand{\TLs}{\mathscr{D}}
\newcommand{\Ls}{\mathscr{L}}
\newcommand{\Pop}{\mathcal{P}}
\newcommand{\conj}[1]{\overline{#1}}
\newcommand{\Pl}{\stackrel{\text{Pl.}}{=}}
\newcommand{\bw}{W}
\DeclareMathOperator*{\argmax}{arg\,max}
\newcommand{\Hil}{\mathscr{H}}
\definecolor{darkblue}{rgb}{0.0,0.0,0.3}
\title{\textbf{Prolate Spheroidal Wave Functions and the Accuracy and Dimensionality of Spectral Analysis}}
\author{Timothy Stroschein\thanks{
                  Eidgenössische Technische Hochschule}}
\date{}
\begin{document}

\begin{titlepage}
    \centering
    \vspace*{1cm}
    \includegraphics[width=0.3\textwidth]{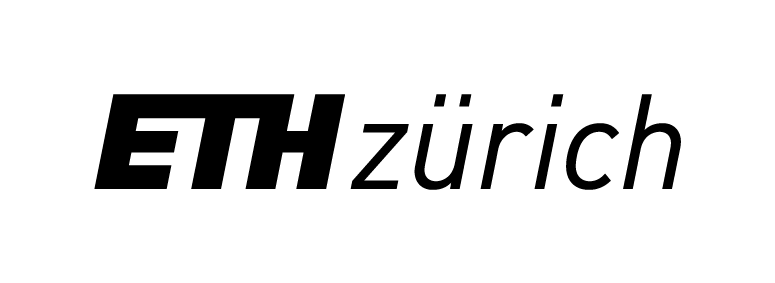} 
    \vspace{1.5cm}
    
    {\Large\textbf{Master Thesis}\par}
    \vspace{1.5cm}
    
    {\huge\bfseries Prolate Spheroidal Wave Functions and the Accuracy and Dimensionality of Spectral Analysis \par}
    \vspace{2cm}
    
    \vspace{1cm}
    

    {\Large\textbf{Timothy Stroschein}\par}
    \vspace{0.5cm} 
    {\small \texttt{timothy.stroschein@phys.chem.ethz.ch}\par}  
    \vspace{0.5cm} 
    {\large
    Eidgenössische Technische Hochschule \par
    }
    \vspace{2cm}
    
    {\large{Supervised by\par}}
    \vspace{0.3cm}
    \large \textbf{Prof. Markus Reiher}\par
    \vspace{2cm}
    
    {A Thesis Submitted to Attain the Degree of \par}

    \vspace{0.5cm}
     {\Large Master of Science in Physics\par}
    \vspace{1cm}
    
    {\large August 14, 2024\par}
\end{titlepage}

    
    
    
    
    
    
    

    

\shipout\null

\thispagestyle{empty} 
\section*{Acknowledgements}

This work is inspired by a project that Prof. Markus Reiher drew my attention to. 
I would like to express my gratitude for his guidance and support.
During the course of this project, some serendipity has led to unexpected research paths and
his encouragement and advice have been invaluable. I consider myself truly fortunate to call him my future PhD supervisor
and look with a great deal of joy at the many projects we will be able to work on over the next few years.\\ 

I would also like to thank Dr. Mihael Erakovic for the many insightful discussions and pieces of advice. My appreciation also extends 
to Prof. Gian Michele Graf, with whom conversations after lectures or upon request have been deeply appreciated. I would also like 
to thank Prof. Renato Renner for graciously agreeing to supervise this thesis formally. \\

My gratitude also goes to my family and friends, who have been an immense source of inspiration and support.
Special thanks to Vinicius Mohr.\\

Finally, I am grateful to my mother Ekaterina Eremenko for her exceptional ability to communicate science and complex mathematics,
and particularly for her enduring support throughout my life.

\newpage 

\shipout\null

\thispagestyle{empty} 


\noindent\hspace{-1cm}
\begin{minipage}{\linewidth}
\begin{displayquote}
\emph{"The unreasonable effectiveness of mathematics in the natural sciences is a gift which we neither understand nor deserve."} \\
--- Eugene Wigner, 1960
\end{displayquote}
\end{minipage}

\vspace{\fill} 

\noindent\hspace{1cm}
\begin{minipage}{\linewidth}
\begin{displayquote}
\emph{"... The other side of the coin is the maladroit ineffectiveness of mathematics at matching precise definitions to our intuitive preconceptions. (The truth might be that our intuition is vague and inconsistent, and nothing precise could possibly match it.)"} \\
--- Paul R. Halmos, 1985
\end{displayquote}
\end{minipage}



\newpage
\shipout\null
\thispagestyle{empty}

\chapter*{Preface}

\section*{On the Need for Approximation Theory}

While the underlying physical laws necessary to understand chemistry and material sciences are widely regarded to be long known,
the complexity of the resulting equations quickly becomes intractable for all but the simplest systems.
Thus, the curse of dimensionality is the greatest limitation of our current understanding of the physical world.\\

On the other hand, the need for an ever-increasing computational dimensionality to understand larger systems does not match our intuition. 
Across various fields, models operating at fewer degrees of freedom are often found to be of great predictive power. 
An emergence phenomena is observed: The laws of the macroscopic do not depend on the details of the microscopic.\\

While in the context of systems that exhibit a certain symmetry, this emergence phenomenon is often well leveraged through approximate models, it is not well understood. The approximate treatment is a well-informed guess but is rarely systematically derived 
from the fundamental equations of the microscopic layer underneath.
In more complex scenarios that are typical for chemistry and biology, it is often not feasible 
to guess, based on intuition, how the degrees of freedom can be efficiently compressed while still obtaining a good approximation.\\

Thus, it seems that a successful description of complex systems that are natural to the applied sciences,
require a mathematical 
language, different from the one that led to the effectiveness of mathematics in physics that Wigner referred to in his famous essay. 
We will broadly refer to this language as approximation theory.\\

The theoretical foundation of physics is based on mathematically sterile scenarios. Theorists are allowed to strip away the complex,
messy interactions of the real world in order to focus on understanding fundamental interactions. If an experimenter can sufficiently replicate these 
idealized scenarios, and the deviation between the experimental measurements and the theoretical predictions is small, the theoretical description is considered 
successful. 
Here, we highlight the asymmetric relation between theory and experiment that has been, to a large extent, dominant in physics.
The theory is permitted to idealize and derive exact solutions, whereas it is accepted that experimental quantification
will always be subject to finite precision.\\

For the success of theoretical and mathematical principles in the more complex and messy systems of chemistry and biology,
we can no longer afford such asymmetry. The spirit of the kind of approximation theory we dream of leverages 
the fact that our experience of the real world will always be of finite precision.
By embracing this truth, we recognize that exact solutions to complex equations are not only unnecessary but also not worth the computational effort.
Instead, we want to leverage the permission of theoretical predictions of finite precision to derive 
new mathematical equations in a smaller parameter space. \\

The broader mathematical understanding that remains elusive but necessary for the natural sciences is a
fundamental understanding of the optimal relationship between dimensionality and precision of approximation.
Surely, much of the approximation theory that we have been sketching here has various instances at different levels and has already been applied for a long time.
And they all form a part of the mathematical lexicon we are referring to. \\

Nevertheless, there exists a specific branch of approximation theory that has reached a rare level of mathematical clarity and depth
in the understanding of dimensionality and precision of approximation. Despite its potential, 
much of the natural sciences seem to have remained unaware of its existence.\\

H. Landau, H. Pollak and D. Slepian have already asked in the early 1960s about the optimal 
concentration a band-limited function might have in a given time interval. This simple question has led to much 
serendipity and eventually allowed them to derive an approximation theorem of particular significance. 
The space of band-limited functions that are also concentrated in a given time interval is to high 
precision approximated in a $2\bw T / \pi$ dimensional space \cite{ProI, ProIII, OnBandwidth}. \\
  
The significance of this result can be appreciated when compared to another theorem,
which is widely regarded as one the most impactful results to our information age. 
After Shannon's success in the discrete communication model, he sought to extend his 
theory to the physical scenario of communication across band-limited channels. Here, he offered a theorem on the optimal transmission rate of
a band-limited channel.
However, his proof of the theorem relied on the assumption that the space of signals used for communication is of 
dimension $2\bw T / \pi$ \cite{NoiseCommunication}. \\

It was only after Landau, Pollak, and Slepian introduced their approximation theorem that it became clear: 
Band-limited signals concentrated within a time interval are effectively approximated in a $2\bw T / \pi$ dimensional space.
We will refer to the theory that their seminal series of papers has initiated as Prolate Fourier Theory \cite{ProI, ProII,ProIII, ProIV, ProV}.\\

In Prolate Fourier Theory, we recognize an essential cornerstone of approximation theory, which seems invaluable for further theoretical progress in the natural sciences. 

\newpage 

\section*{Thesis Overview}

The main result of this thesis is an efficient protocol to determine the frequencies of a signal $C(t)= \sum_k |a_k|^2 e^{i \omega_k t}$, 
which is given for a finite time to a high degree of precision.
More specifically, it is a theorem that gives a bound on the error of the frequency estimates that are obtained through the protocol.
Equally noteworthy, however, is the theoretical framework that underlies this theorem, as it is general and allows application to a wide range of problems.\\

In Chapter \ref{chap:prolate_theory_intro}, we provide an introductory review of Prolate Fourier Theory and establish subsequently used notation.\\

Chapter \ref{chap:prlt_bound} presents our main result within Prolate Fourier Theory. This chapter establishes a bound on the maximum value a prolate spheroidal 
wave function may attain outside of its concentration region. It also proves that the concentration 
property of prolates translates to a high degree to their derivatives.
The underlying proof relies on a commutation relation, which is often regarded as a coincidence in the field.
This chapter also provides a novel geometrical insight into the commutation relation.  \\

Chapter \ref{chap:Sampling} builds on the new concentration identities established in Chapter \ref{chap:prlt_bound} 
to derive more precise error bounds for truncated sampling formulas. Chapter \ref{chap:Sampling} is non-expository at this 
stage. \\

Chapter \ref{chap:DimRedSepAna} develops an approximation theory for
spectral analysis through low-dimensional subspaces. This approach already has widespread applications
in quantum chemistry. In this chapter, we introduce a formalism that enables an uncertainty quantification for the method and derive spectral inequalities.\\

Chapter \ref{chap:FilterDiag} formulates Filter Diagonalization as a numerical scheme that can access the approximation theory of Chapter \ref{chap:DimRedSepAna}.
In this context, prolates are identified as the optimal choice for filter functions. The bounds of Chapter \ref{chap:prlt_bound} are applied 
to provide the essential precision estimate.

\tableofcontents

\chapter{An Introduction to Prolate Fourier Theory}
\label{chap:prolate_theory_intro}

In this chapter we offer a pedagogical introduction to Prolate Fourier Theory, that is based on the simple 
problem of finding the band limited function that is maximally concentrated in a time interval.
This approach is 
inspired by the second paper of the prolate series by Slepian, Landau and Pollak \cite{ProII}, but we will
also draw from the first paper \cite{ProI}. \\

An integral operator
forms the heart of Prolate Fourier Theory. The spectrum of this prolate spheroidal integral operator 
is of information theoretical significance, as it reveals the essential dimensionality of band limited signals that are also time concentrated.
Different from the original papers, we will strongly emphasize a commutation relation of the integral operator 
with a second order differential operator. \\

\section{Preliminaries}

\subsection{Notation} 

We consider the Hilbertspace $\Ls^2_\infty$ of square integrable functions on the real line
with the the usual inner product
\begin{align*}
    \langle f, g \rangle_\infty = \int_{-\infty}^{\infty} \conj{f(t)} g(t) dt.
\end{align*}
Analogously we denote the Hilbertspace of square integrable functions in $[-A,A]$ as $\Ls^2_A$ and write 
\begin{align*}
    \|f\|_A^2 = \int_{-A}^{A} \|f(x)\|^2 dx.
\end{align*}
Thus in this notation $\|f\|_\infty^2$ denotes the $\ell_2$-norm over 
the real line. Sometimes we will refer to the $\ell_2$-norm of a functions as its energy.  Our convention of the Fourier transform is
$$\FT[f](\omega) = \int_{-\infty}^{\infty} f(t) e^{-i \omega t} dt, \quad \quad \FT^{-1}[f](t) = \frac{1}{2 \pi} \int_{-\infty}^{\infty} f(\omega) e^{i \omega t} d \omega.$$ 
By Plancherel's theorem the Fourier Transform is an isometry on $\Ls^2_\infty$ and we have 
\begin{align*}
    \int_{-\infty}^{\infty} \conj{f(t)} g(t) dt \stackrel{\text{Pl.}}{=} \frac{1}{2\pi} \int_{-\infty}^{\infty} \conj{\FT[f](\omega)} \FT[g](\omega) d\omega.
\end{align*}
Occasionally we will denote an application of Plancherel's theorem with a $\stackrel{\text{Pl.}}{=}$ sign.
The convolution theorem writes as 
\begin{align*}
    \FT[ g f](\omega) = \frac{1}{2\pi} \FT[g] \ast \FT[f](\omega) \\
    \FT^{-1}[ g f](\omega) =  \FT^{-1}[g] \ast \FT^{-1}[f](\omega). 
\end{align*}
If $f \in \Ls^2_\infty$ is continuously differentiable we have $\FT[f'](t) = i \omega \FT[f](\omega)$. 
It is also worth to recall the standard result of Fourier analysis that 
$\left\{\frac{1}{\sqrt{2 A}} e^{\pi i n t / A}: n \in \mathbb{Z}\right\}$ 
is an orthonormal basis of $\Ls^2_A$. By Parseval's identity 
we have for every orthonormal basis $\left(\varphi_j\right)_{j=1}^{\infty}$ on an Hilbertspace $\mathscr{H}$
$$ \|f\|^2=\sum^{\infty}\left|\left(\varphi_j,  f \right)\right|^2, \quad \forall \varphi \in \mathscr{H}.$$

We will also use the notion of finite Fourier transforms defined as 
$$\FT_T[f](\omega) = \int_{-T}^{T} f(t) e^{-i \omega t} dt, \quad \quad \FT^{-1}_\bw[f](t) = \frac{1}{2 \pi} \int_{-\bw}^{\bw} f(\omega) e^{i \omega t} d \omega.$$ 

\subsection{Bandlimited Functions}

\subsubsection{Simple known Identities}
\label{sec:bandlimited_known}

We denote with $\chi_\bw(x)$ the characteristic function of the interval $[-\bw, \bw]$ and 
write for the sampling function $\rho_\bw(x) = \frac{\sin{\bw x}}{\pi x}$ . Note that the sampling function is even 
$\rho_\bw(x) = \rho_\bw(-x)$ and $|\rho_\bw(x)| \leq \frac{1}{\pi|x|}$ for $|x|$ large enough.
Thus $\rho_\bw(x)$ is square integrable and $\chi_\bw(y)$ is the Fourier transform of $\rho_\bw(x)$,
\begin{align}
    \rho_\bw(x) = \frac{1}{2\pi} \int_{-\infty}^{\infty} \chi_\bw(y) e^{i y x} d y     \label{eq:rho_int_rep}\\  
    \chi_{\bw}(x)= \int_{-\infty}^{\infty} e^{-i y x} \rho_{\bw}(y) d y. \notag
\end{align}
The first identity follows from computation, the second 
from the fact that $\FT$ and $\FT^{-1}$ are inverse operators.
From the integral expression we see $\sup_{x\in \mathbb{R}} \rho_\bw(x) = \rho_\bw(0)= \frac{\bw}{\pi}$.
Through Plancherel's theorem we infer 
$$\|\rho_\bw\|_\infty^2 = \frac{1}{2\pi } \|\chi_\bw\|_\infty^2 = \frac{\bw}{\pi}.$$

We denote with $\BLs_\bw$ the class of square integrable functions whose 
Fourier transform is supported on $[-\bw, \bw]$ and call it the space 
of $\bw$-band limited functions\footnote{In the literature 
this space is sometimes also referred to as 
the Paley Wiener Space. If the context is clear we will omit the $\bw$ and simply say 
a function is band limited.}. A prototypical member of $\BLs_\bw$ is given by $\rho_\bw(x)$ as can be seen 
form (\ref{eq:rho_int_rep}).
We denote with $\BL_\bw$ the operation of band limiting a function $f$,
such that its Fourier Transform is supported on the interval $[-\bw, \bw]$ (an ideal band pass filter)
\begin{align}
    \BL_\bw[f](t) & : = \frac{1}{2\pi} \int_{-\bw}^{\bw} e^{i \omega t}  \FT[f](\omega)   d\omega = \frac{1}{2\pi} \int_{-\infty}^{\infty}  \chi_\bw (\omega) e^{i \omega t}  \FT[f](\omega) \notag \\
    & \Pl \int_{-\infty}^{\infty} \conj{\FT^{-1}[ \chi_\bw(\omega) e^{- i\omega t}](s)} (\FT^{-1}\FT)[f](s) ds \notag\\
                        & = \int_{-\infty}^{\infty} \rho_\bw(s-t) f(s) ds. \label{eq:bandlimiting_int}
\end{align}
The first equality is the definition and in the last 
we used $\FT^{-1}[ \chi_\bw(\omega) e^{- i\omega t}](s)= \rho_\bw(s-t)$ 
as Fourier transform with respect to $\omega$.
Similarly we have 
\begin{align}
    \int_{-\infty}^{\infty} \rho_\bw(s-x) \rho_\bw(s-y) & \Pl  \frac{1}{2\pi} \int_{-\infty}^{\infty} \chi_\bw(t)^2 e^{it(x-y)} dt  \notag \\
    & = \rho_\bw(x-y).   \label{eq:double_rho_int}
\end{align}
From this identity and (\ref{eq:bandlimiting_int}) wa can deduce that $\BL_\bw^2 = \BL_\bw$ as square integrability implies 
that the order of the integrals can be exchanged.
From Plancherel's theorem we also have for all $f \in \Ls^2_\infty$ and $g \in \BLs_\bw$
\begin{align*}
    \langle f - \BL_\bw[f], g \rangle_\infty = 0 .
\end{align*}
Thus we have established the well known fact that 
$\BL_\bw$ is the orthogonal projection operator onto the space of band limited functions $\BLs_\bw$. 
It is worth to refresh that orthogonal projection operators are self adjoint and idempotent.\\

From the integral representation of a band limited function, 
\begin{align}
    f(t) = \int_{-\bw}^{\bw} e^{i \omega t} \FT[f](\omega) d\omega  \label{eq:bandlimited}
\end{align}
we have that every band limited function
can be extended to an entire function on the complex plane. formally this can be proven through a simple application of Morera's theorem.\footnote{
    Morera's theorem states that if a function $f$ is continuous on a domain $D$ and 
    $\oint_{\gamma} f(z) dz = 0$ for every closed piecewise $C^1$ curve $\gamma$ in $D$, then $f$ is holomorphic on $D$.
    If $D$ is simply connected Cauchy's integral theorem guarantees that the converse is also true.
    The order of the integrals can be exchanged and $\BL_\bw[f](t)$ inherits from $e^{i \omega t}$ the property of being an entire function.
}\\

\subsubsection*{Time Limited Functions}

Similarly we can consider the class of $T$-time limited functions $\TLs_T \subset \Ls_\infty^2$ that 
are supported on $[-T,T]$. 
For the corresponding orthogonal
projection operator $\TL_T$ we have 
\begin{align*}
    \TL_T f(t) = f(t) \chi_T(t).
\end{align*}
Naturally $\TLs_T$ is closely related to $\Ls^2_T$ with the sole difference 
that members of $\TLs_T$ are defined on the whole real line.
In fact, there is indeed a trivial isometry between $\Ls^2_T$ and $\TLs_T$.
For a function $f$ that is formally only defined in $\Ls^2_T$ we denote with $\TL_T f (t) \in \Ls_\infty^2$ the function that coincides with $f$ on $[-T,T]$ and is zero elsewhere.
We will prefer to work in $\TLs_T$ as here functions are defined on the whole real line.\footnote{Through the 
trivial isometry between $\Ls^2_T$ and $\TLs_T$ known results in $\Ls^2_T$ as for example the completeness relation 
of the exponential basis, similarly apply in $\TLs_T$. Statements we will derive in $\TLs_T$ also apply in $\Ls^2_T$.}\\

\subsubsection{Additional simple Remarks}

Further, we have that every band limited function is bounded on the real line, as can be seen 
from (\ref{eq:bandlimited}). By applying the Cauchy-Schwarz inequality and noting that 
$|e^{i \omega t}| =1$ for $t \in \mathbb{R}$,
\begin{align*}
    |f(t)|^2 & = \left| \frac{1}{2 \pi} \int_{-\bw}^{\bw} e^{i \omega t} \FT[f](\omega) d\omega \right|^2  \leq \frac{1}{2\pi} \int_{-\bw}^{\bw} |e^{i \omega t}|^2 d\omega \int_{-\bw}^{\bw}  |\FT[f](\omega)|^2  d\omega \\
    & =   \frac{2 \bw }{2\pi} \| \FT[f]\|_{\infty}^2  \Pl 2 \bw \|f\|_{\infty}^2.
\end{align*}
Here we also used  $\| \FT[f]\|_{\bw}^2 = \| \FT[f]\|_{\infty}^2$ for $f \in \BLs_\bw$.
From this we can further deduce
that every band limited function is
$L_p$ integrable over the real line for all $p\geq 2$,
\begin{equation*}
    \int_{-\infty}^{\infty} |f(t)|^p dt \leq C^{p-2} \int_{-\infty}^{\infty} |f(t)|^2 dt < \infty.
\end{equation*}
As band limited functions are entire we have that all their derivatives are also band limited
\begin{align*}
    \FT[f^{(m)}](t) = (i \omega)^m \FT[f](\omega) \quad \Rightarrow \quad f^{(m)} \in \BL_\bw,
\end{align*}
where we denoted $f^{(m)}(t) = \frac{d^m }{dt^m} f(t)$. Thus $\BL_\bw$ is closed 
with respect to differentiation. \\

$\BL_\bw$ is also closed with respect to convolution, 
i.e. if $f,g \in \BL_\bw$ then $f \ast g \in \BL_\bw$ as can be seen 
from the convolution theorem. Similarly the product of two band limited function results 
in a bandlimitted function with twice the bandwidth,
\begin{align*}
    \FT[fg](\omega) & = \frac{1}{2\pi} \FT[f] \ast \FT[g](\omega) =  \frac{1}{2\pi} \int_{-\infty}^{\infty} \FT[f](\omega - s) \FT[g](s) ds
\end{align*}
The last equality implies that $\FT[fg](\omega) = 0$ for $|\omega| > 2\bw$ and thus $fg \in \BL_{2\bw}$.\\

\newpage 

\section{From Optimal Fourier Space Concentration to  Prolate Fourier Theory}
\label{sec:prolate_theory_intro}

\label{sec:Fourier_concentration_time_limited}

We are interested in the problem of finding 
the function $f$ supported in the interval $[-T, T]$ whose Fourier transform 
is optimally concentrated in the interval $[-\bw, \bw]$.\\

More formally we are interested in the members of $\TLs_T$ that maximize the ratio of
energy in the $\bw$-band to total energy $\| f \|_T^2$. 
We denote $F(\omega) = \FT[f](\omega)$ and are lead to following optimization problem,
\begin{align}
    \gamma_0' := \max_{f \in \TLs_T} \frac{1}{2\pi}\frac{\int_{-\bw}^{\bw} |F(\omega)|^2 d\omega}{\int_{-T}^{T} |f(t)|^2dt}.  \label{eq:optimization_Problem_DT}
\end{align}
We can rewrite this expression such that the nature of our question as a concentration problem becomes more apparent.
Considering the numerator we have with Plancherel,
\begin{align*}
    \int_{-\infty}^{\infty} |f(t)|^2dt = \frac{1}{2\pi} \int_{-\infty}^{\infty} |F(\omega)|^2 d\omega = \frac{1}{2\pi}  \| F\|_\infty^2. 
\end{align*}
Plugging this into (\ref{eq:optimization_Problem_DT}) we obtain 
\begin{align}
    \gamma_0' = \max_{f \in \TLs_T} \frac{\|F\|_\bw^2}{\|F\|_\infty^2}.  \label{eq:optimization_Problem_DT_Pl}
\end{align}
We already see that $\gamma_0' \leq 1$. Equality can not be attained, as the Fourier transform of a time-limited function 
is supported across the whole real line and we have $\|F\|_\bw^2 < \|F\|_\infty^2$ \footnote{
    More formally: For $f \in \TLs_T$ we have that its Fourier transform $F$ is entire. By the identity theorem for analytic functions, $F$ cannot be equal to zero on an open subset such as $(T,\infty)$ unless 
    $F$ is identically zero. Thus $\|F\|_\bw^2 < \|F\|_\infty^2$.}. Therefore $\gamma_0' < 1$.\\
 
Focusing on the numerator we can write 
\begin{align}
    \frac{1}{2\pi} \int_{-\bw}^{\bw} |F(\omega)|^2 d\omega & = \frac{1}{2\pi} \int_{-\bw}^{\bw} \int_{-T}^{T} \int_{-T}^{T} \conj{f(s)} f(t) e^{-i \omega (s-t)} ds dt d\omega \notag \\
    & = \int_{-T}^{T} \int_{-T}^{T} \conj{f(s)} f(t) \rho_\bw(t-s) ds dt = \langle f, \BL_\bw \TL_T f \rangle_\infty.  \label{eq:optimization_Problem_DT_2}
\end{align}
In the first line we used that $f$ is supported on $[-T,T]$ and in the second line that $\BL_\bw \TL_T$ has the integral representation
\begin{align}
    \BL_\bw \TL_T f(t) = \int_{-T}^{T} \rho_\bw(t-s) f(s) ds.  \label{eq:BLTL_operator}
\end{align}
Thus our search for time-limited functions whose Fourier transform is optimally concentrated in the $\bw$-band,
leads to the integral operator $\BL_\bw \TL_T$ acting on $\TLs_T$.\\

And $\BL_\bw \TL_T$ is indeed a nice operator, in the sense that it is compact and allows access to the spectral theorem. 
Moreover, $\BL_\bw \TL_T$ is a Hilbert-Schmidt operator in the more general space of square integrable functions $\Ls^2_\infty$.\footnote{Every Hilbert-Schmidt operator is compact
and has a spectral decomposition, if it is self adjoint. For a mathematically detailed exposition on integral operators see \cite{courant_methoden_1993} chapter 3 section 5 or \cite{Reed1980MethodsOM}. 
Within $\Ls_\infty^2$  $\BL_\bw \TL_T$ is an integral operator with integral kernel $K(t,s)=\rho_\bw(t-s) \chi_T(s)$}
To compute the Hilbert-Schmidt norm we recall
$ \rho_{\bw}(t-s) = \mathcal{F}^{-1}\left[\chi_{\bw}(\omega) e^{-i \omega t}\right](s)$  from section \ref{sec:bandlimited_known} and have
\begin{align}
    \|\BL_\bw \TL_T\|_{\text{HS},\Ls^2_\infty}^2 & = \int_{-\infty}^{\infty} \int_{-T}^{T} |\rho_\bw(t-s)|^2 dt ds = \int_{-T}^{T}  \int_{-\infty}^{\infty}  |\rho_\bw(t-s)|^2 dt ds \notag \\
    & \Pl  \frac{1}{2\pi }\int_{-T}^{T}  \int_{-\infty}^{\infty}  \left|\chi_{\bw}(\omega) e^{-i \omega t}\right|^2 d\omega ds = \frac{1}{2\pi }\int_{-T}^{T} 2 \bw ds  \notag\\
   & = \frac{2 \bw T}{\pi} < \infty.  \label{eq:BLTL_HSopNorm}
\end{align}
Thus $\BL_\bw \TL_T$ is indeed Hilbert-Schmidt in $\Ls^2_\infty$. 
\footnote{ $\BL_\bw \TL_T$ was in the literature mostly only considered as a Hilbert-Schmidt operator acting in $\Ls^2_T$.}\\

Returning to the subspace of time-limited functions $\TLs_T$ we note that here $\BL_\bw \TL_T$ acts equivalently to $ \TL_T \BL_\bw \TL_T$. In this form we immediately see that $\BL_\bw \TL_T$ is self adjoint in $\TLs_T$
and that the spectral theorem applies. 
Going a bit further we can deduce from (\ref{eq:optimization_Problem_DT_2}) that $\BL_\bw \TL_T$ is positive definite in $\TLs_T$. 
To formally prove this note that the Fourier transform of a time-limited function is entire.
From the identity theorem for analytic functions, we have that an entire function that equals zero on an open subset must be identically zero.
In particular $\|F\|_\bw^2 = 0$ implies $F=0$. Since the Fourier transform is an isometry this ultimately implies  
$f=0$. Therefore (\ref{eq:optimization_Problem_DT_2}) has indeed
\begin{align*}
    \langle \TL_T f, \BL_\bw \TL_T f \rangle_\infty  & =  \langle  f, \TL_T \BL_\bw \TL_T f \rangle_\infty = \|F\|_\bw^2 \geq 0
\end{align*} 
and equality is attained if and only if $\TL_T f=0$. Thus $\BL_\bw \TL_T$ is indeed positive definite in $\TLs_T$.\\

From this, we can also infer, that all eigenvalues of $\BL_\bw \TL_T$ acting in $\Ls^2_\infty$ are real and non-negative. Let $\varphi \in \Ls^2_\infty$ be an eigenfunction of $\BL_\bw \TL_T$ with eigenvalue $\lambda \in \mathbb{C}$
\begin{align*}
    \BL_\bw \TL_T \varphi = \lambda \varphi.
\end{align*}
Acting from the left with the time-limiting operator we obtain 
\begin{align*}
    \TL_T \BL_\bw \TL_T \varphi = \lambda \TL_T \varphi.
\end{align*}
If $\TL_T \varphi \neq 0$ we have that $\TL_T \varphi$ is an eigenfunction of $\TL_T \BL_\bw \TL_T$ and thus $\lambda > 0$.
On the other hand if $\TL_T \varphi = 0$ it is clear that $\lambda = 0$. Thus we have that all eigenvalues of $\BL_\bw \TL_T$ are real and non-negative.
We call the eigenfunctions of $\BL_\bw \TL_T$ that do not map to zero prolates (also known as Prolate Spheroidal Wave Functions PSWF) and denote them as $\prlt_n$ and 
their corresponding eigenvalues as $\gamma_n$.\\

We have already more than enough to answer our original question 
posed in (\ref{eq:optimization_Problem_DT}). By the spectral theorem, 
the projected prolates $\TL_T \prlt_n$ form a complete orthogonal system of $\TLs_T$ and we have established that all non-trivial eigenvalues of $\BL_\bw \TL_T$ are real and positive.
Let $\gamma_0$ be the leading eigenvalue with eigenfunction $\prlt_0$. We can rewrite equation (\ref{eq:optimization_Problem_DT}) a final time 
using (\ref*{eq:optimization_Problem_DT_2}) and $f = \TL_T f$ for $f \in \TLs_T$
\begin{align*}
    \gamma_0' & = \max_{f \in \TLs_T} \frac{1}{2\pi}\frac{\int_{-\bw}^{\bw} |F(\omega)|^2 d\omega}{\int_{-T}^{T} |f(t)|^2dt} = \max_{f \in \TLs_T}  \frac{\langle \TL_T f, \BL_\Omega \TL_T f \rangle }{ \| f\|_T^2} \\
    & =  \max_{f \in \TLs_T}  \frac{\langle f, \TL_T \BL_\Omega \TL_T f \rangle }{ \| \TL_T f \|_\infty^2} = \gamma_0.
\end{align*}
In the last step, we used the variational principle for self-adjoint operators. Thus the optimal energy concentration of a time-limited function in the $\bw$-band is given by the leading eigenvalue $\gamma_0$ and is attained by $\TL_T \prlt_0$.
From our considerations given in (\ref{eq:optimization_Problem_DT_Pl}) we also have the bound $\gamma_0 < 1$ and with descending ordering of the eigenvalues
we have
\begin{align*}
    1 > \gamma_0 \geq \gamma_1 \geq \gamma_2 \geq \ldots > 0.
\end{align*}

The particular question for optimal Fourier concentration we posed at the beginning of this section is just one of many
Fourier concentration problems one might be interested in for applications in larger information processing routines. 
The natural arising of $\BL_\bw \TL_T$ in this context gives 
us an indication of the importance of prolates for achieving optimal bounds. We are motivated to further investigate their properties.

\subsection{Double Completeness and Further Properties}

We have already established that Prolates form an orthogonal basis of $\TLs_T$.
But as eigenfunctions of a product of two projection operators, they also form 
a complete orthogonal system in $\BLs_\bw$ in correspondence to $\BL_\bw$. For $g,f \in \BLs_\bw$ we have 
$ f = \BL_\bw f$. Recalling that orthogonal projection operators are self-adjoint, we have 
\begin{align*}
    \langle g , \BL_\bw \TL_T f \rangle & =  \langle g , \BL_\bw \TL_T \BL_\bw  f \rangle
    = \langle \BL_\bw \TL_T \BL_\bw g  , f  \rangle = \langle \BL_\bw \TL_T g   , f  \rangle,
\end{align*}
and $\BL_\bw \TL_T$ is indeed self adjoint in $\BLs_\bw$. But even more so, $\BL_\bw \TL_T$ is in fact positive definite in $\BLs_\bw$,
\begin{align*}
    \langle f, \BL_\bw \TL_T f \rangle & = \langle \BL_\bw f, \TL_T f \rangle = \langle f,  \TL_T f \rangle \\
    & = \int_{-T}^{T} \conj{f(t)} f(t) dt =\|f\|_T^2.
\end{align*}
Similar to previous arguments we have $\|f\|_T^2 = 0$ implies $f=0$ as $f\in \BLs_\bw$ is entire.\footnote{In the original paper the positive 
definite spectrum was formally proven through a reference to a theorem of Bochner, see page 58 in \cite{ProI}. 
By Bochners Theorem, every function $\rho$ whose Fourier transform is a probability measure, is totally positive. In particular $\rho_\bw$ is totally 
positive and induces an positive definite convolution operator \cite{Bochner1932}.}\\
Therefore the eigenfunctions of $\BL_\bw \TL_T$ in $\BLs_\bw$ 
are exactly the non zero eigenfunctions $\prlt_n$ of $\BL_\bw \TL_T$ in $\Ls^2_\infty$ with eigenvalue $\gamma_n$.
By the spectral theorem, prolates form a complete orthogonal system in $\BLs_\bw$.\\

It is worth highlighting the peculiar double orthogonal basis relation we have just derived. 
$\TL_T \prlt_n$ is a orthonormal basis of $\TLs_T$ and $\prlt_n$ is a orthonormal basis of $\BLs_\bw$.
In other words prolates form an orthogonal system on the interval $[-T, T]$ and the whole real line. The 
double completeness relation turns out to be very practical for many derivations.\\

Of special interest is also how the eigenvalues characterize the energy concentration of prolates
\begin{align}
    \|\prlt_n\|_T^2 = \gamma_n \|\prlt_n\|_\infty^2.  \label{eq:prolate_interval_energy}
\end{align} 
To establish this relation we recall $\int_{-\infty}^{\infty} \rho_\bw(t-x) \rho_\bw(t-y) = \rho_\bw(x-y)$ (see eq.\ref{eq:double_rho_int}) and
compute
\begin{align*}
    \| \prlt_n\|_\infty^2 & = \int_{-\infty}^\infty \prlt_n(t)^2 dt = \frac{1}{\gamma_n^2} \int_{-\infty}^\infty  \int_{-T}^{T}  \int_{-T}^{T} \rho_\bw(t-x) \rho_\bw(t-y) \prlt_n(x) \prlt_n(y) dx dy dt \\
   & = \frac{1}{\gamma_n^2} \int_{-T}^{T} \int_{-T}^{T}   \rho_\bw(x-y) \prlt_n(x) \prlt_n(y) dx dy = \frac{1}{\gamma_n}  \int_{-T}^{T} \prlt_n(x)^2 dx = \frac{1}{\gamma_n}  \|\prlt_n\|_T^2.
\end{align*}
We see that eigenvalues of $\BL_\bw \TL_T$ describe the energy concentration of prolates in the interval $[-T, T]$. \\

Naturally, prolates are band-limited. In Section \ref{sec:bandlimited_known} we have seen how band-limited functions 
can be extended to entire functions from their integral representation. As the extension is unique, prolates can also be extended from their 
integral eigenfunction representation and
\begin{align}
    \gamma_n \prlt_n(t) = \int_{-T}^{T} \rho_\bw(t-s) \prlt_n(s) ds  \label{eq:prlt_eigfunc_extension}
\end{align}
holds in fact for all $t \in \mathbb{C}$. Since prolates are eigenfunctions of an operator with real eigenvalues,
they can be chosen to be real-valued on the real line. As the integral kernel $\rho_\bw$ is even, prolates can be 
constructed to be either even $\prlt_e(t) = \prlt(t) + \prlt(-t)$ or odd $\prlt_o(t) = \prlt(t) - \prlt(-t)$. Indeed we have 
\begin{align*}
     \BL_\bw \TL_T \prlt_e(t) & = \int_{-T}^{T} \rho_\bw(t-s) (\prlt(s) + \prlt(-s)) ds = \gamma \prlt(t) + \int_{-T}^{T} \rho_\bw(t-s) \prlt(-s) ds \\
     &  = \gamma \prlt(t) + \int_{-T}^{T} \rho_\bw(-t-s) \prlt(s) ds = \gamma \prlt(t) + \gamma \prlt(-t) = \gamma \prlt_e(t).
\end{align*}
and the calculation for $\prlt_o$ is analogous.\\

We have already derived a fair share of interesting properties of prolates.
We will make a pause here and take a look at results that can be summarized from the 
original paper on their discovery \cite{ProI}.

\begin{theoremAlph}
    The eigenvalues of $\BL_\bw \TL_T$ are non degenerate and permit ordering 
    \begin{align*}
        1> \gamma_0 > \gamma_1 > \gamma_2 > \ldots > 0.
    \end{align*}
    The corresponding eigenfunctions $\prlt_n$ are real-valued and entire, such that 
    \begin{align*}
        \int_{-T}^{T} \rho_\bw(t-x) \prlt(s) ds = \gamma_n \prlt_n(t) 
    \end{align*}
    holds for all $t \in \mathbb{C}$. They are also eigenfunctions to a second integral operator
    \begin{align}
        \int_{-T}^{T} e^{i \frac{ \bw \tau t}{T} } \prlt_n(t) dt  = \mu_n \sqrt{\frac{T}{\bw}} \prlt_n(\tau),  \label{eq:ori_exp_eig_int}
    \end{align}
    where $\frac{|\mu_n|^2}{2\pi} = \gamma_n$. The functions $\prlt_n$ are even or odd with $n$. They satisfy
    the energy concentration relation
    \begin{align*}
        \|\prlt_n\|_T^2 = \gamma_n \|\prlt_n\|_\infty^2.  
    \end{align*}
    In particular, they form a complete orthogonal system in $\Ls^2_T$ and $\BLs_\bw$, and can be 
    chosen to be normalized such that
    \begin{align*}
        \int_{-\infty}^{\infty} \prlt_n(t) \prlt_m(t) dt = \delta_{nm} \qquad \text{and}  \qquad \int_{-T}^{T} \prlt_n(t) \prlt_m(t) dt = \gamma_n \delta_{nm}.
    \end{align*}
    Further, we have unique solutions to the following optimization problems \footnote{Understood up to multiplicative constants. To 
    be precise, the optimization problem (\ref{eq:original_optimization1}) was actually solved in \cite{ProII}.}
    \begin{alignat}{2}
        \gamma_0  & = \max_{ f\in \BLs_\bw} \frac{\| f \|_T^2}{ \|f \|_{\infty}^2}  \quad  &&\text{ attained by } \prlt_0.  \label{eq:original_optimization1}\\
        \gamma_0  &  =  \max_{f \in \Ls^2_\infty} \frac{ \| \BL_\bw \TL_T f \|_\infty^2}{\| f\|_\infty^2} \quad  &&\text{ attained by } \TL_T \prlt_0.\\
        \gamma_0^2 & = \max_{f \in \BLs_\bw} \frac{ \| \BL_\bw \TL_T f \|_\infty^2}{\| f\|_\infty^2} \quad  &&\text{ attained by } \prlt_0. \label{eq:original_optimization3}
    \end{alignat}
    \label{thm:prolate_original}
\end{theoremAlph}

We see that most of the statements in Theorem \ref{thm:prolate_original} are already familiar to us by our previous analysis, 
based on the integral operator $\BL_\bw \TL_T$. The optimization problems (\ref{eq:original_optimization1})-(\ref{eq:original_optimization3})
can be solved similarly to how we solved (\ref{eq:optimization_Problem_DT}). However, the non-degeneracy of the eigenvalues
and the additional integral relation (\ref{eq:ori_exp_eig_int}) have not been established yet.\\

And they were indeed not derived solely from the integral operator $\BL_\bw \TL_T$, but rather through what Slepian much later referred to as a 'lucky accident' \cite{SlepianComment}. 
Originally the completeness relation of prolates in
$\BLs_\bw$ and $\TLs_T$ were also proven through this 'lucky accident.' Here, Hilbert-Schmidt operator theory has allowed us to derive the completeness in a slightly more direct manner.\\

In the next section, we shall explore the lucky accident and collect the remaining puzzle pieces.

\section{The "Lucky Accident": The Prolate Spheroidal Wave Equation}
\label{sec:PSWEq}

From the late 19th century until the mid 50s Eigenfunctions of the Laplace operator in most general coordinates were of much 
interest to the mathematical physicist.\footnote{ Efficient numerical methods 
to generate eigenfunctions and to understand the spectrum of the Laplacian are also today a very active research area.
} Especially elliptical coordinates are due to their generality of
great relevance and find naturally many applications in molecular physics or scattering problems.
A surprise was their significance to the communication and computation engineer. \\

The eigenfunctions of the Laplace operator $\nabla \Phi = -k^2 \Phi$ when solved by separation of variables 
in spheroidal coordinates  
yields a differential equation of Sturm-Liouville type for the angular part
\begin{align}
    \left[\left(1-z^2\right) S_{mn}^{\prime}(z)\right]^{\prime}+\left[\lambda_{mn}+c^2\left(1-z^2\right)-\frac{m^2}{1-z^2}\right] S_{mn}(z)=0. \label{eq:Meixner_dgl}
\end{align}
Above we are following the convention of Meixner \cite{meixner_mathieusche_1954}. In modern literature 
the differential equation is more commonly written in the form
\begin{align*}
    \left[\left(1-z^2\right) S_{mn}^{\prime}(z)\right]^{\prime}+\left[\chi_{nm}+c^2 z^2-\frac{m^2}{1-z^2}\right] S_{mn}(z)=0
\end{align*}
and we have the relation $\chi_{mn}(c) = \lambda_{mn}(c) + c^2$. The function $S_{mn}$ is called angular Prolate Spheroidal Wave Functions (PSWF),
the oblate case is of the same form with a minus sign in front of $c^2$ in the differential equation. For $c \to 0$ we obtain the well-known differential 
equation for associated Legendre polynomials and thus $\lambda_{mn}(0) = n(n+1)$. \\

Meixner and Schäfke have extensively studied the prolate spheroidal wave functions in their book \cite{meixner_mathieusche_1954}. They derived
many very general integral relations for $S_{mn}$, some of which take especially for $m=0$ a simple form.\footnote{Section 3.8 page 312 in \cite{meixner_mathieusche_1954}.}
For our analysis it suffices to only consider $m=0$ and we will simply write $\lambda_{mn}(c)= \lambda_{n}$ and $S_{mn}(c,z) = S_n(z)$.\footnote{The case $m \neq 0$ is only relevant for higher dimensional
generalization of Prolates (also known as Slepians) and is treated in \cite{ProIV}.} For convenience
we define the prolate spheroidal differential operator $\Pop_c$,
\begin{align*}
   \Pop_c:  y \mapsto\left(1-z^2\right) y^{\prime \prime}-2 z y^{\prime}+ c^2 (1-z^2)  y.
\end{align*}
Occasionally we will simply refer to the operator $\Pop_c$ as prolate spherical wave equation (PSWEq). 
The following is a selection of known results, on the prolate spheroidal wave equation \cite{meixner_mathieusche_1954} .

\begin{theoremAlph}
    Let $c\geq 0$. The eigenvalue problem for the prolate spheroidal wave equation
    \begin{align} 
        \Pop_c y(z) = -\lambda_n y(z)   \label{eq:Sn_dgl}
    \end{align}
    with $z\in[-1,1]$ has eigenvalues 
    \begin{align*}
        -c^2 < \lambda_0(c) < \lambda_1(c) < \lambda_2(c) < \ldots  \longrightarrow \infty.
    \end{align*}
    The eigenvalues are analytic functions of $c$.
    The corresponding eigenfunctions $S_n(z)$ form a complete orthogonal system in $\Ls^2_1$. The functions
    $S_n(c,z)$ can be extended to entire functions of complex variables $z$ (and $c$) and are real valued for real $z$.
    They are odd if $n$ is odd and even if $n$ is even. $S_n(z)$ has exactly $n$ zeros in $(-1,1)$.
    In addition to eq.(\ref{eq:Sn_dgl}) PSWF $S_n$ satisfy integral equations
    \begin{align} 
        \int_{-1}^1 \rho_c(s-z) S_{ n}(c, s) d s & =\gamma_n S_{n}( z)  \label{eq:S_band_kernel } \\
        \int_{-1}^1 e^{i c s z} S_{n}(c, s) d s & =\frac{\mu_n}{\sqrt{c}} S_{n}(c, z), \label{eq:S_exp_kernel}
    \end{align}
    where $\gamma_n = \frac{|\mu_n|^2}{2\pi}$ and $\conj{\mu}_n = (-1)^n \mu_n$. \footnote{
        The $\mu_n$ are related to the radial part of the prolate spheroidal wave equation
        $\mu_n \propto R_{0n}(c,1)$ \cite{ProI}. For a very long time, the common numerical expansion formulae for $R_{0n}(c,1)$ exhibited catastrophic cancellation effects. 
        The first paper that recognized and solved the problem seems to be only from 2002 \cite{Buren2002AccurateCO} and only 
        came last year along with a published code to calculate $R_{0n}(c,1)$.
        This may also explain why applications of Prolate Fourier Theory have not been as prevalent and much of the theory was somewhat left in the dust. }

    \label{thm:Prolate_Spheroidal_Wave_Equation}
\end{theoremAlph}

Much of Theorem \ref{thm:Prolate_Spheroidal_Wave_Equation}
follows from Sturm-Liouville theory. 
Note that equations (\ref{eq:Sn_dgl})-(\ref{eq:S_exp_kernel}) hold in fact for all complex $z$ as all equations are analytic 
and can be uniquely extended to the entire complex plane. \\

With the connection to the prolate spheroidal wave equation all puzzle pieces 
to understand Theorem \ref{thm:prolate_original} are finally in place. 
From (\ref{eq:S_band_kernel }) we can deduce that 
$S_{n}(c ,\frac{t}{T})$ with $c= \bw T$ satisfies the integral relation $\BL_\bw \TL_T S_{n}(c ,\frac{t}{T}) = \gamma_n S_{n}(c ,\frac{t}{T})$
\begin{align*}
    \int_{-T}^{T} \rho_\bw(s-t)  S_{0,n}(c ,\frac{t}{T}) dt & = \int_{-T}^{T} \frac{\sin{\bw(s-t)}}{{\pi(s-t)}} S_{0,n}(c ,\frac{t}{T})  dt \\
    & = \int_{-1}^{1} T \frac{\sin{\bw T (\frac{s}{T}-t)}}{{\pi T(\frac{s}{T} -t )}} S_{0,n}(c ,t) dt\\
    & = \int_{-1}^{1} \rho_c(\frac{s}{T} -t) S_{0,n}(c ,t) dt  = \gamma_n S_{0,n}(c ,\frac{s}{T}). 
\end{align*}
Due to the completeness relation of the angular PSWF $S_n(\frac{t}{T})$ in $\Ls_T^2 \equiv \TLs_T$, they coincide indeed with the prolates $\prlt_n(t)$
as the Fourier concentration optimizers, that we are interested in. 
The remaining non-degeneracy in $\gamma_n$ was proven in \cite{ProI} as a consequence of the non-degeneracy in $\lambda_n$ (or equivalently $\chi_n$). 
It was also shown that the ordering where $\lambda_n$ is increasing and $\gamma_n$ is decreasing in $n$ is indeed correct. The
additional integral relation  (\ref{eq:ori_exp_eig_int}) for $\prlt_n$ follows from (\ref{eq:S_exp_kernel})
\begin{align*}
    \int_{-T}^{T} e^{i \frac{ \bw \tau t}{T} } \prlt_n(t) dt & = \int_{-T}^{T} e^{i \frac{ \bw \tau t}{T} } S_n(\frac{t}{T}) dt = T \int_{-T}^{T} e^{i \bw \tau t} S_n(t) dt\\
    & = T \int_{-T}^{T} e^{i c \frac{\tau}{T} t} S_n(t) dt = \mu_n \frac{T}{\sqrt{\bw T}}  S_n(\frac{\tau}{T}) dt = \mu_n \sqrt{\frac{T}{\bw}} \prlt_n(\tau).
\end{align*}

\subsection{Generalized PSWEq and Commutation Relations}
\label{sec:generalized_PSWEq}

From their relation to $S_{0n}$ and (\ref{eq:Meixner_dgl}) we also obtain a differential equation to which 
prolates are eigenfunctions. The analog of (\ref{eq:Meixner_dgl}) reads \footnote{
    In the more common convention we would have 
    $$ \left(T^2-t^2\right)  \frac{d^2 \prlt_n(t)}{ d t^2}-2 t \frac{d \prlt_n(t)}{d t}+ \left( \chi_n(c)- \bw^2 t^2\right) \prlt_n(t)=0$$
    With $\prlt_n(t)= S_n(\frac{t}{T})$ and $z = \frac{t}{T}$ we can indeed rewrite this as 
    \begin{align*}
        \left(T^2 -t^2\right)&  \frac{d^2 S_n(\frac{t}{T})}{ d t^2}  -2 t \frac{d S_n(\frac{t}{T})}{d t}+  \bw^2 t^2 S_n(\frac{t}{T})  = \left(T^2-t^2 \right)  \frac{1}{T^2} \frac{d^2  S_n(z) }{dz^2}- 2 \frac{t}{T} \frac{d S_n(z)}{ d z} + \bw^2 t^2  S_n(z)\\
        \quad & = \left(1-z^2 \right)  \frac{d^2 S_n(z)}{dz^2}  - 2 z \frac{d S_n(z)}{ d z} + \bw^2 T^2  z^2  S_n(z)  = - \chi_n(c) S_n(z) = - \chi_n(c) \prlt_n(t).
    \end{align*}}
\begin{align}
   \left[ \left(T^2-t^2\right) \prlt_n'(t)\right]' +\left[\lambda_{n}(c)+ \bw^2(T^2 -t^2)\right] \prlt_n(t)=0, \label{eq:Prlt_dgl}
\end{align}
valid for all $t \in \mathbb{C}$. Correspondingly we can define a generalized prolate spheroidal differential operator to whom $\prlt_n$ are eigenfunctions 
\begin{align}
    & \mathcal{P}_{\bw T}:  y  \mapsto \left(T^2-t^2\right)y'' - 2 t y'  + \bw^2(T^2- t^2)  y  \notag \\
    & \mathcal{P}_{\bw T}[\prlt_n](t) = - \lambda_n(c) \prlt_n(t)  \label{eq:Prlt_dgl_op}.
\end{align}
With $\mathcal{P}_{\bw T}$ in hand we could have alternatively also 
shown that $\BL_\bw\TL_T$  does indeed commute with $\BL_\bw \TL_T$ and that the 
eigenfunctions of the two operators necessarily have to coincide. \\

In fact the stronger statement that $\Pop_{\bw T}$ commutes with $\TL_T$ and $\BL_\bw$ separately is true. 
Let $\varphi \in C^2(\mathbb{R})$ be a test function. We have in the distributional sense 
\begin{alignat*}{2}
    \Pop_{\bw T} \TL_T \varphi & = (T^2-t^2) (\chi_T \varphi'' +2 \chi_T' \varphi' +  \chi_T'' \varphi) - 2 t \left(\chi_T \varphi' + \chi_T'\varphi\right) + \bw^2(T^2-t^2) \chi_T \varphi\\
    & = \chi_T  \left[(T^2-t^2)\varphi'' - 2 t \varphi + \bw^2 (T^2-t^2) \varphi\right] \\
    & \hspace{3cm}+ (\delta_{-T}' - \delta_T')(T^2-t^2) \varphi + (\delta_{-T} -\delta_T)(2 (T^2 -t^2) \varphi' -2t\varphi)\\
    & = \TL_T \Pop_{\bw T}  \varphi  +  2t \varphi (\delta_{-T} -\delta_T ) - 2t\varphi  (\delta_{-T} -\delta_T )\\
    & = \TL_T \Pop_{\bw T}  \varphi +0.
\end{alignat*}
Next we want to proof that $\Pop_{\bw T}$ commutes with the Fourier transform up to a swap in $T$ and $\bw$
\begin{align}
    \Pop_{\bw T}  \FT = \FT  \Pop_{T \bw}. \label{eq:Pop_FT_comm}
\end{align}
To establish (\ref{eq:Pop_FT_comm}) we recall that for smooth square integrable functions $\varphi$
the Fourier transform has $\FT[ \partial^\alpha \varphi] = (i \omega)^\alpha \FT[\varphi]$ and $\FT^{-1}[ t^\alpha \varphi] = (-i)^\alpha \partial^\alpha\FT[\varphi]$.
We have 
\begin{align*}
    \FT^{-1} \Pop_{\bw T}  \FT \varphi & = \FT^{-1}[ (T^2 -\omega^2) \FT''\varphi - 2 \omega \FT' \varphi + \bw^2 (T^2 -\omega^2) \FT \varphi]\\
    & = \FT^{-1} \left[ [(T^2 -\omega^2) \FT'\varphi]' + \bw^2 (T^2 -\omega^2) \FT \varphi \right]\\
    & =  -it \FT^{-1}\left[ (T^2 -\omega^2) \FT'\varphi \right] + \FT^{-1} \left[\bw^2 (T^2 -\omega^2) \FT \varphi \right]\\
    & =  -it \left( T^2  \FT^{-1}\FT'\varphi  - \FT^{-1} [\omega^2 \FT'\varphi] \right) + T^2 \bw^2 \varphi -  \bw^2 \FT^{-1}[ \omega^2  \FT \varphi]\\
    & = -it \left( - it T^2   \varphi  + [\FT^{-1} \FT'\varphi]'' \right) + T^2 \bw^2 \varphi +  \bw^2 [\FT^{-1} \FT \varphi]''\\
    & = -t^2 T^2 \varphi - it[ -i t \varphi]'' + T^2 \bw^2 \varphi +  \bw^2  \varphi''\\
    & = (\bw^2 -t^2) \varphi'' - 2 t \varphi' + T^2(  \bw^2 -t^2) \varphi  \\
    & = \Pop_{T \bw} \varphi.
\end{align*}
Since $\FT$ is an isometry on the Hilbertspace of square integrable functions and $\FT^{-1}$ is its inverse equation (\ref{eq:Pop_FT_comm}) follows. 
The same relation also holds for the inverse Fourier transform
$$\Pop_{\bw T}  \FT^{-1} = \FT^{-1}  \Pop_{T \bw},$$
since $\conj{\Pop_{\bw T}} = \Pop_{\bw T}$. 
Indeed simply taking the complex conjugate of the previous calculation we obtain
$$\FT^{} \Pop_{\bw T}  \FT^{-1}  \conj{\varphi} = \Pop_{T \bw} \conj{\varphi},$$
and $\conj{\varphi}$ is still just a test function. \\

Naturally $\Pop_{T \bw}$ satisfies analogous to $\Pop_{\bw T}$ the commutation relation $$\Pop_{T \bw} \TL_\bw = \TL_\bw \Pop_{T \bw}.$$
Note that the operation of band limiting a function $\BL_\bw \varphi $ writes in terms of Fourier transforms as $\BL_\bw \varphi= \FT^{-1} \TL_{\bw} \FT \varphi$.
We can finally proof that the prolate differential operator $\Pop_{\bw T}$ and the integral operator $\BL_\bw$ commute,
\begin{align*}
    \Pop_{\bw T} \BL_\bw  \varphi & =  \Pop_{\bw T}  \FT^{-1} \TL_\bw \FT  \varphi = \FT^{-1}  \Pop_{T \bw }  \TL_\bw \FT \varphi =  \FT^{-1}  \TL_\bw  \Pop_{T \bw }\FT \varphi \\
    & = \FT^{-1}  \TL_\bw  \FT \Pop_{\bw T }  \varphi  = \BL_\bw  \Pop_{\bw T }\varphi.
\end{align*}

The fact that $\Pop_{\bw T}$ commutes with $\BL_\bw$ and $\TL_T$ separately was first noted and proven 
by Walter in \cite{walter_differential_1992}, much after the majority of known results of Prolates Fourier Theory have been established. He also showed more generally that a second 
order differential operator $P$ with quadratic coefficients commutes with $\BL_\bw \TL_T$ if and only if it is of the form,\footnote{\cite{walter_differential_1992} has a different convention for the Prolate differential operator, $\tilde P_{\bw T} = P_{\bw T} - \bw^2 T^2$. However, the statement does note care about constant offsets.}
$$P=a P_{\bw T}+b.$$

From standard Sturm Liouville theory it can also be established 
that $c^2 - \Pop_{\bw T}$ is positive definite on the interval $(-T,T)$. Recall that $c = \bw T$.
For convince we define $\tilde \Pop_{\bw T} = c^2 -  \Pop_{\bw T}$. Note that $\tilde \Pop_{\bw T} \prlt_n(t)= (c^2 + \lambda_n) \prlt_n(t)= \chi_n \prlt_n(t)$.
Let $\varphi \in \Ls^2_T \setminus \{0\}$ be sufficiently regular. We have with partial integration 
\begin{align*}
    \langle \varphi, \tilde \Pop_{\bw T} \varphi \rangle_T & = \int_{-T}^{T} \conj{\varphi(t)} (\bw^2 T^2 - \Pop_{\bw T})  \varphi(t) dt \\
    & = \int_{-T}^{T} \conj{\varphi(t)} \left( -[(T^2 -t^2)\varphi'(t)]' + t^2 \varphi(t)  \right) dt\\
    & = \int_{-T}^{T} (T^2 -t^2) |\varphi'(t)|^2 + t^2 |\varphi(t)|^2 dt > 0.
\end{align*}
Here we used, that boundary term from the partial integration vanishes.
We denote with $\TLs_T^*$ the subclass of time limited functions that are sufficiently regular on the open interval $(-T,T)$ and we have just seen that 
$\tilde \Pop_{\bw T}$ is positive definite in $\TLs_T^*$.\\

That an operator such as $\tilde \Pop_{\bw T}$ is positive definite on $\TLs_T^*$ might be a trivial statement to someone familiar with Sturm-Liouville theory.
However, we can go a little bit further and show that $\tilde \Pop_{\bw T}$ is also positive definite on 
the space of band limited functions $\BL_\bw$. This can indeed conveniently be shown using the commutation relations we previously derived.
Formally we will define the subclass of band limited functions $\BL_\bw^*$ whose Fourier transform is guaranteed to be in $\TLs_\bw^*$.
Then we have for $\varphi \in \BL_\bw^*$
\begin{align*}
    \langle \varphi, \tilde \Pop_{\bw T} \varphi \rangle &  \Pl \frac{1}{2\pi}\langle \FT \varphi, \FT \tilde \Pop_{\bw T} \varphi \rangle = \frac{1}{2\pi}\langle \FT \varphi, \FT \tilde \Pop_{\bw T} \varphi \rangle_\bw = \frac{1}{2\pi}\langle \FT \varphi,  \tilde \Pop_{T \bw} \FT \varphi \rangle_\bw \geq 0.
\end{align*}
Where the last inequality follows from the fact that $\tilde \Pop_{T \bw}$ is positive definite on $\TLs_\bw^*$.
From the previously seen identity theorem argument equality is attained if and only if $\varphi=0$. Thus we 
have established that $\tilde \Pop_{\bw T}$ is positive definite on $\BL_\bw^*$.\\

The following theorem summarizes the operator properties and commutation relations we have seen in this section.

\begin{theoremAlph}
    Withhin the subclass of sufficiently regular square integrable functions $\Ls^{2}_\infty{}^*$ we have in a distributional sense
    \begin{alignat*}{2}
        \Pop_{\bw T} \TL_T &=  \TL_T \Pop_{\bw T} , \hspace{2cm}  \Pop_{\bw T} \FT  &&= \FT \Pop_{ T \bw} ,\\
        \Pop_{\bw T} \BL_\bw &= \BL_\bw \Pop_{ \bw T},  \hspace{2cm} \Pop_{\bw T} \FT^{-1}   &&= \FT^{-1} \Pop_{ T \bw}.
    \end{alignat*}
    In particular, $$  \BL_\bw \TL_T \Pop_{\bw T} =  \Pop_{\bw T} \BL_\bw \TL_T.$$
    The integral operator $\BL_\bw \TL_T:\Ls^2_\infty \to \Ls^2_\infty $ is Hilbert-Schmidt and positive definite in $\TLs_T$ and $\BLs_\bw$. Its Hilbert-Schmidt norm is
    \begin{align*}
        \|\BL_\bw \TL_T\|_{HS,\Ls^2_\infty}^2 = \sum_{n=0}^\infty \gamma_n = \frac{2\bw T}{ \pi}.
    \end{align*}
    The differential operator $\tilde \Pop_{\bw T} = c^2 - \Pop_{\bw T}$ is positive definite in $\TLs_T^*$ and $\BLs_\bw^*$.
\end{theoremAlph}

\subsubsection{Remark on the lucky accident}

We have seen how the question of optimal Fourier concentration within intervals, naturally leads to the integral operator 
$\BL_\bw \TL_T$, that is at the heart of Prolate Fourier Theory. It is important to point out that the connection 
to the prolate spheroidal wave equation was much more incidental to the development of the theoretical apparatus to which 
we merely gave an introductory taste.\\

Much after the initial paper from 1961, Slepian referred to the fact, that the eigenfunctions of the integral operator $\BL_\bw \TL_T$ coincide with the eigenfunctions 
of the Prolate Spheroidal Wave Equation as a "lucky accident" \cite{SlepianComment} (1983). He and his colleagues had a feeling 
that there seemed to be a bigger structure underneath, that would allow us to carry out derivations in a more general manner such that the lucky 
accident may seem less incidental.\\




Apart from higher-dimensional generalizations and the discretization of the theory, the Prolate Spheroidal Wave Equation (PSWEq)\
did not play a significant role in the majority of the results derived by Landau, Pollak, and Slepian. Subsequent research primarily focused 
on understanding the \emph{essential dimensionality} of signal processing. 
Their work enabled a rigorous mathematical foundation for a long-standing engineering intuition that was coined the "Engineering Folk Theorem".
The mathematical essence of this theorem is expressed in the spectrum of the prolate spheroidal integral operator.

\section{The Engineering Folk Theorem}

A longstanding engineering intuition, which took root shortly after the development of telecommunications, 
suggests that any signal of bandwidth $2\bw$ and duration $ 2T$ can be reconstructed from $2\bw T / \pi$ samples points \cite{Nyquist1928,NoiseCommunication}. 
This intuition was eventually coined the "Engineering Folk Theorem" and can be reformulated in terms of the dimensionality of signals. \\

\noindent  \textbf{Engineering Folk Theorem:} \emph{The space of signals of bandwidth $2\bw$  and duration $2 T$ is esssentially $2 \bw T / \pi $-dimensional.}\\

Much of the formulation of the engineering folk theorem seems hand-wavy as there are no signals that are band limited and simultaneously also time limited.
Nevertheless, this folk theorem has had a profound impact on our modern understanding of communication.\\

In his 1948 seminal paper "A Mathematical Theory of Communication" Claude Shannon presented 
a model that introduced a discrete channel to describe communication across electrical devices \cite{MatCommunication}. An abstract apparatus that transmits messages from a discrete alphabet and 
operates at discrete time instances. The information theoretical implications he could derive from his model are today regraded 
as one of the most important scientific contributions to the information age we live in.\\

However, today and also back then, most actual electrical devices, operate with continuous electrical signals at
continuous time instances. Shannon was very much aware that a continuous theory of communication could more accurately describe
communication devices of reality. In his 1949 paper "Communication in the Presence of Noise" he 
made an attempt to reconcile his discrete communication model with continuous signals. Here he provided perhaps his most 
celebrated result, the optimal transmission rate at which discrete information can be transmitted over a band limited channel \cite{NoiseCommunication}.
However, he proved his theorem only up the very engineering folk theorem that we stated above. \\

One of the deepest insights gained by Prolate Fourier Theory, is that the mathematical truth behind the engineering folk theorem is expressed 
in the spectrum of prolate spheroidal integral operator. Landau and Pollak gave the fist mathematical rigorous instance of the engineer folk theorem, through 
an approximation theorem that leverages the double completeness property of prolates \cite{ProIII}. Any band limited signal that is also concentrated in $[-T,T]$ is well approximated 
within $[-T,T]$ by the projection of the signal onto the first $\sim 2\bw T / \pi$ prolates. 
The details of the approximation statement rely on the spectrum of the prolate spheroidal integral operator.
It is found that the first $\tilde 2\bw T /\pi - A \log(\tilde 2\bw T /\pi )$
eigenvalues $\gamma_n$ are very close to one, while eigenvalues $\gamma_n$ for $n > \tilde 2\bw T /\pi + A \log(\tilde 2\bw T /\pi )$ are very small.
Here $A$ is some constant \cite{ProIII}.\\

Thus, the spectrum of the prolate spheroidal integral operator exhibits strong clustering of the eigenvalues at $1$ and $0$ with a logarithmically sharp transition region 
in between \cite{ProIII,FUCHS1964317,SlepianAsymp,landau_eigenvalue_1980,landau_density_1993,OnBandwidth}. After the spectrum 
of the prolate spheroidal integral operator was better understood, Slepian gave a second version of the engineering folk theorem that leaned 
deeper into physical interpretation \cite{OnBandwidth}.\\

\chapter{Supremum Bounds for Prolate Spheroidal Wave Functions: New Insights and Implications}
\label{chap:prlt_bound}

\section{Content and Motivation}

Prolate Spheroidal Wave Functions (PSWFs) form a sequence of band-limited functions that are optimally concentrated within a time interval. 
They can be characterized as solutions to the optimization problem,
\begin{align*}
    \prlt_n = \argmax_{f \in \{ \prlt_0, \cdots, \prlt_{n-1}\}^\perp } \frac{\|\TL_T \BL_\bw f \|^2}{\|\BL_\bw f \|^2}.
\end{align*}
Prolate Fourier Theory has provided significant insights, such as the $2\bw T$-Theorem,
which has been a long standing intuition on the essential dimensionality of signals, often referred to as the "Engineering Folk Theorem" \cite{ProIII, OnBandwidth,NoiseCommunication}.
Consequently, PSWFs are a cornerstone of approximation theory. The energy of prolates outside their concentration region is given by $1-\gamma_n$ and frequently serves as a 
fundamental approximation bound in various derivations \cite{ProI, ProII, ProIII,OnBandwidth}.\\

This work investigates how well the ideal concentration property with respect to the square norm extends to a supremum bound outside the concentration region. 
Specifically, the conjecture was
\begin{align}
    \sup_{|t| \geq T}  \prlt_n(t)^2 \sim (1- \gamma_n),  \label{eq:goal_bund}
\end{align}
up to a less significant prefactor.\\

Establishing such a bound solely from the integral relations central to Prolate Fourier Theory and techniques from harmonic analysis
proved to be very challenging. However, by using the prolate spheroidal wave equation (PSWEq), a straightforward proof became feasible.
David Slepian used the fact that prolates are also eigenfunctions to 
a second-order differential equation in his generalizations of Prolate Fourier Theory to higher dimensions and discrete sequences \cite{ProIV, ProV}. 
He famously described the commutation relation between the integral operator and the differential operator as a "lucky accident" \cite{SlepianComment}.
The miraculous commutation relation was later coined the "Prolate Spheroidal Phenomena" and inspired Grünbaum and colleagues
to develop bispectral theory \cite{Grnbaum2004ThePS}. The Prolate Spheroidal Phenomena has lead to further mathematical miracles in the recent years
and unexpected connections \cite{UVSpectrum}.\\

However, the geometry behind the commutation relation still seems to be mysterious and unexplored. 
Landau has demonstrated that the spectrum of the prolate integral operator
$\BL_\bw \TL_T$ exhibits a certain \emph{spectral signature} \cite{landau_density_1993},
\begin{align}
    \gamma_{[\tilde c ]-1} \geq \frac{1}{2} \geq \gamma_{[\tilde c ]+1},
\end{align}
where  $\tilde c = 2\bw T / \pi $. We conjecture, that this spectral signature has a dual in the spectrum 
of the prolate spheroidal wave equation, 
\begin{align}
    \lambda_{[\tilde c ]-1} \leq 0 \leq \lambda_{[\tilde c ]+1}.  \label{eq:conj_dual}
\end{align}
We give a Bohr-Sommerfeld argument in support of this conjecture, which resulted from a conversation with
Gian Michele Graf. Additionally, the proof of $(\ref{eq:goal_bund})$ also implies, that the optimal concentration property of prolates 
extends up to a less important prefactor to their derivatives. This enhanced regularity property allows
to derive sharper error bounds for sampling formulas as we elaborate in Chapter \ref{chap:Sampling}.

\subsection{Result Overview}
\label{sec:bound_res_overview}

The following theorem summarizes the main results of this chapter. 
\begin{theorem}
    \label{thm:prlt_bound}
    The energy concentration of Prolates transfers to its derivatives as 
    \begin{alignat}{4}
        &\| \prlt_n' \|_{T}^2  && =  &&\gamma_n     && C_{\text{intra},n} \label{eq:thm_derivConcen_intra}\\
        &\| \prlt_n' \|_{>T}^2 && = (1- &&\gamma_n) && C_{\text{extra},n}  \label{eq:derivConcen_extra}.
    \end{alignat}
    Prolates have bounds within and outside of the concentration region
    \begin{alignat}{3}
        \sup_{|t| \geq T}  \prlt_n(t)^2 &\leq (1- &&\gamma_n) &&C_{\text{extra},n}  \label{eq:thm_prop_bound_extra} \\
        \sup_{|t| \leq T}  \prlt_n(t)^2 &\leq &&\gamma_n && \tilde C_{\text{intra},n}  \label{eq:thm_prop_bound_intra},
    \end{alignat}
    where $ \tilde C_{\text{intra},n} = C_{\text{intra},n}  + \delta_{n0} \frac{\prlt_0(T)^2}{\gamma_0}$.
\end{theorem}
\noindent Throughout this chapter we assume prolates $\prlt_n$ that are normalized across the real line. The $C$-factors of Theorem \ref{thm:prlt_bound} are defined as,
\begin{align*}
    C_{\text{extra},n} &= \sqrt{\| \prlt_n' \|_{\infty}^2 - \frac{\lambda_n}{T}  \frac{\prlt_n(T)^2}{1-\gamma_n} }, \quad \text{ and } \quad C_{\text{intra},n} = \sqrt{\| \prlt_n' \|_{\infty}^2 +  \frac{\lambda_n }{T}  \frac{\prlt_n(T)^2 }{\gamma_n}}.
\end{align*}
They have bounds,
\begin{itemize}
    \item For $\lambda_n < 0$  \begin{align} C_{\text{extra},n} \leq \left(C_n - \frac{\lambda_n}{2T}\right) \quad \text{ and } \quad C_{\text{intra},n} < \bw,  \label{eq:lambda_neg_C_estimate}
    \end{align}
    \item For $\lambda_n \geq 0$ \begin{align} C_{\text{extra},n} < \bw \quad \text{ and } \quad C_{\text{intra},n} \leq \left(C_n + \frac{\lambda_n}{2T}\right),
     \end{align}
\end{itemize}
where $C_n = \sqrt{\|\prlt_n'\|_{\infty}^2 + \frac{\lambda_n^2}{4T^2}}$. Further we have the bound $\|\prlt_n'\|_\infty < \bw$. We also recall from the prolate spheroidal wave equation $-c^2 < \lambda_0 < \lambda_1 < \ldots \to \infty$ where
$ c= \bw T$. In the interesting case of $c$ large and $\lambda_n < 0$  we have the asymptotic bound,
\begin{align*}
    C_{\text{extra},n} & < \bw ( c + c^{-1} +  O(c^{-2})).
\end{align*}
Meanwhile, the asymptotic behavior for the eigenvalues $1- \gamma_n$ at large $c$ is given by,
\begin{align}
    1 -\gamma_n = \frac{4 \sqrt{\pi} 2^{3 n} c^{n+\frac{1}{2}} e^{-2 c}}{n!}\left[1-\frac{6 n^2-2 n+3}{32 c}+O\left(c^{-2}\right)\right].  \label{eq:gamma_n_asympt}
\end{align}
The first term in (\ref{eq:gamma_n_asympt}) is due to Fuchs \cite{FUCHS1964317}, and the second term was added by Slepian \cite{SlepianAsymp}.
In particular, for $n$ fixed, $C_{\text{extra},n}$ varies in comparison to $1 -\gamma_n$ only insignificantly with respect to $c$. \\

Therefore, Theorem \ref{thm:prlt_bound} shows, that the
concentration property of prolates with respect to the energy norm approximately extends to supremum bounds outside and within their concentration region.
This further confirms prolates as optimal information processing basis. In Chapter \ref{chap:FilterDiag}, the bound (\ref{eq:thm_prop_bound_extra})
is crucial for the proof of a high precision signal processing routine.\\

The above results will be established in Section \ref{sec:exposition_prolate_bound}. The Bohr-Sommerfeld argument in support of conjecture (\ref{eq:conj_dual})
is given in Section \ref{sec:transition_signature}.

\section{Main Exposition} 
\label{sec:exposition_prolate_bound}

In Chapter \ref{chap:prolate_theory_intro} we have seen that Prolates are eigenfunctions to two integral operators and a differential operator, 
\begin{align}
    & \int_{-T}^{T} \rho_\bw(s-t)  \prlt_n(t) dt  = \gamma_n \prlt_n(s),   \label{eq:init_sinc_eig} \\
    & \int_{-T}^{T} e^{i \frac{ \bw \tau t}{T} } \prlt_n(t) dt  = \mu_n \sqrt{\frac{T}{\bw}} \prlt_n(\tau),  \label{eq:FT_eig}\\
    (T^2 -t^2) \prlt_n''(t)&  - 2 t \prlt_n'(t) + \bw^2 (T^2 -t^2) \prlt_n(t)  = - \lambda_n \prlt_n(t). \label{eq:genPSWEq}
\end{align}

\noindent Let $\tprlt_n$ denote the dual prolate of $\prlt_n$, defined as
$$\tprlt_n(\omega) = \sqrt{\frac{T}{\bw}} \prlt_n(\frac{T}{\bw} \omega).$$
By construction, we have
$$\|\tprlt_n\|_{\bw}^2 = \frac{T}{\bw} \int_{-\bw}^{\bw} \prlt_n(\frac{T}{\bw} \omega)^2 d \omega  = \| \prlt_n\|_{T}^2 = \gamma_n.$$
Due to symmetry, $\tprlt_n$ is a prolate function with the roles of $T$ and $\bw$ reversed and satisfies
the eigenfunction property,
\begin{align*}
    \BL_T \TL_\bw \tprlt_n(\omega) = \gamma_n \tprlt_n(\omega).
\end{align*}
Thus, all identities for $\prlt_n$ can be directly transferred to $\tprlt_n$ by exchanging $T$ and $\bw$ in the equations.\\

The following fact illustrates how prolates $\prlt_n$ and their dual $\tprlt_n$ are related 
through the Fourier transform.


\begin{fact} 
    We have Fourier transform relations
    \begin{align}
        \FT [\prlt_n](\omega) & =  \frac{2 \pi}{\mu_n}  \TL_\bw \tprlt_n \left(\omega \right)   \hspace{1.5cm}    \text{for all } \omega \in \mathbb{R} \label{eq:prolate_FT}\\
        \FT  [\TL_T \prlt_n](\omega) & =  \conj{\mu}_n  \tprlt_n \left(\omega \right)    \hspace{2.1cm}   \text{for all } \omega \in \mathbb{C}  \label{eq:prolate_FT_T}.
    \end{align}
    Vice versa,
    \begin{align}
        \FT [\tprlt_n](t) & =  \frac{2 \pi}{\mu_n}  \TL_T \prlt_n \left(t \right)   \hspace{1.5cm}    \text{for all } t \in \mathbb{R} \label{eq:Tprolate_FT}\\
        \FT [\TL_\bw \tprlt_n](t) & =  \conj{\mu}_n  \prlt_n \left(t \right)    \hspace{2.1cm}   \text{for all } t \in \mathbb{C}  \label{eq:Tprolate_FT_T}.
    \end{align}
    \label{fact:prolate_FT}
\end{fact}
Where we recall that $\TL_A$ is the truncation operator defined by $\TL_A f(t) = \chi_A(t) f(t) $. The crucial
eigenfunction relation to establish Fact \ref{fact:prolate_FT} is given by (\ref{eq:FT_eig}).

\begin{proof}[\textit{Proof of Fact \ref{fact:prolate_FT}}]
    We start with eq.(\ref{eq:prolate_FT_T}) and have
    \begin{align*}
        \FT [\TL_T \prlt_n](\omega) & = \int_{-T}^{T} e^{ - i \omega t } \prlt_n(t) dt = \int_{-T}^{T} e^{ - i \frac{t \bw}{T} \frac{T\omega  }{\bw} } \prlt_n(t) dt
    \end{align*}
    In this form we can apply (\ref{eq:FT_eig}) with $\tau = \frac{T\omega}{\bw}$ and obtain
    \begin{align*}
        \FT [\TL_T \prlt_n](\omega) = \conj{\mu}_n  \sqrt{\frac{T}{\bw}} \prlt_n(\frac{T}{\bw} \omega)= \conj{\mu}_n \tilde \prlt_n \left(\omega  \right).
    \end{align*}
    As we took the complex conjugate of (\ref{eq:FT_eig}) we also applied that prolates are real valued on the real line.
    By symmetry we have that $\tilde \prlt_n$ satisfies the analogous relation
    \begin{align}
        \prlt_n \left(t \right)  & = \frac{2 \pi}{\mu_n} \FT^{-1} [ \TL_\bw \tprlt_n](t). \label{eq:prlt_FT_p1}
    \end{align}
    Taking the Fourier transform of (\ref{eq:prlt_FT_p1}) gives
    \begin{align*}
        \FT [\prlt_n](\omega) & = \frac{2 \pi}{\mu_n}  \TL_\bw \tprlt_n(\omega).
    \end{align*}
    Equations (\ref{eq:Tprolate_FT}) and (\ref{eq:Tprolate_FT_T}) follow immediately.
\end{proof}

\begin{lemma}
    We have 
    \begin{align}
        \| \prlt_n' \|_{T}^2 & = \gamma_n \| \prlt_n' \|_{\infty}^2 + 2 \prlt_n(T) \prlt_n'(T)  \label{eq:deriv_concentration_T}
    \end{align}
    and 
    \begin{align*}
        \| \prlt_n' \|_{\infty}^2 & = \frac{1}{\gamma_n} \int_{-\bw}^{\bw} \omega^2 \tprlt_n(\omega)^2 d\omega.
    \end{align*}
    Further,
    \begin{align*}
        \| \prlt_n' \|_{\infty}^2  & < \bw^2 \| \prlt_n \|_{\infty}^2.
    \end{align*}
    \label{lem:deriv_concentration}
\end{lemma}
The proof of Lemma \ref{lem:deriv_concentration} applies the integral eigenfunction relation $(\ref{eq:init_sinc_eig})$ for the dual prolates $\BL_T \TL_\bw \tprlt_n = \gamma_n \tprlt_n$.

\begin{proof}[\textit{Proof of Lemma \ref{lem:deriv_concentration}}]
    We start with $\|\prlt_n'\|_\infty^2$. Applying Plancherel and (\ref{eq:prolate_FT}) gives
    \begin{align*}
        \int_{-\infty}^{\infty} \prlt_n'(t)^2 dt & \Pl  \frac{1}{2\pi}  \int_{-\infty}^{\infty}  \omega^2 |\FT[\prlt_n](\omega)|^2 d\omega = \frac{1}{2\pi} \frac{4 \pi^2}{|\mu_n|^2} \int_{-\bw}^{\bw} \omega^2 \tprlt_n^2(\omega) d\omega \\
        & = \frac{1}{\gamma_n} \int_{-\bw}^{\bw} \omega^2 \tprlt_n^2(\omega) d\omega.
    \end{align*}
    Partial integration and the fact that $\prlt_n(t) \prlt_n'(t)$ is an odd function gives for $\| \prlt_n' \|_T^2$, 
    \begin{align}
        \int_{-T}^{T} \prlt_n'(t) \prlt_n'(t) dt = 2 \prlt_n(T) \prlt_n'(T) - \int_{-T}^{T} \prlt_n(t) \prlt_n''(t) dt.  \label{eq:lem_123}
    \end{align}
    Relation (\ref{eq:Tprolate_FT_T}) of Fact \ref{fact:prolate_FT} can be differentiated and applied to the control the integral term in (\ref{eq:lem_123}). 
    We obtain,
    \begin{align*}
        \int_{-T}^{T} \prlt_n''(t) \prlt_n(t) dt & \stackrel{\text{eq.\ref{eq:Tprolate_FT_T}}}{=} \frac{1}{|\mu_n|^2} \int_{-T}^{T} \conj{\FT''} [\TL_\bw \tprlt_n](\omega) \FT [ \TL_\bw \tprlt_n](\omega) d \omega \\
        & = \frac{i^2}{|\mu_n|^2} \int_{-T}^{T} \int_{-\bw}^{\bw} \int_{-\bw}^{\bw} \omega^2 e^{i t (\omega -s)} \tprlt_n(\omega) \tprlt_n(s) d \omega ds dt \\
        & = i^2 \frac{2\pi }{|\mu_n|^2} \frac{1}{2\pi} \int_{-\bw}^{\bw} \int_{-\bw}^{\bw} \int_{-T}^{T} \omega^2 e^{i t (\omega -s)} \tprlt_n(\omega) \tprlt_n(s) dt d\omega ds \\
        & \stackrel{(*)}{=} i^2 \frac{1 }{\gamma_n} \int_{-\bw}^{\bw} \int_{-\bw}^{\bw} \omega^2 \rho_T(\omega -s) \tprlt_n(\omega) \tprlt_n(s) d\omega ds \\
        & = i^2 \int_{-\bw}^{\bw} \omega^2 \tprlt_n(\omega)^2 d\omega = -\gamma_n \| \prlt_n' \|_{\infty}^2.
    \end{align*}
    In  $(*)$ the integral representation of $\rho_T$ was used.
    For the inequality we simply have 
    \begin{align*}
        \| \prlt_n' \|_{\infty}^2 & = \frac{1}{\gamma_n} \int_{-\bw}^{\bw} \omega^2 \tprlt_n(\omega)^2 d\omega
        <  \frac{\bw^2}{\gamma_n} \int_{-\bw}^{\bw}  \tprlt_n(\omega)^2 d\omega = \frac{\bw^2}{\gamma_n} \gamma_n \|\prlt_n\|^2_\infty \\
        & = \bw^2 \| \prlt_n \|_{\infty}^2,
    \end{align*}
    and can conclude. 
\end{proof}

\begin{lemma}
    The energy concentration of the derivatives of Prolates has
    \begin{alignat}{4}
        &\| \prlt_n' \|_{T}^2  && =  &&\gamma_n     && \Bigl(\| \prlt_n' \|_{\infty}^2 +  \frac{\lambda_n }{T}  \frac{\prlt_n(T)^2 }{\gamma_n}  \Bigr)  \label{eq:derivConcen_intra}\\
        &\| \prlt_n' \|_{>T}^2 && = (1- &&\gamma_n) && \Bigl( \| \prlt_n' \|_{\infty}^2 - \frac{\lambda_n}{T}  \frac{\prlt_n(T)^2}{1-\gamma_n}  \Bigr) \label{eq:derivConcen_extra}.
    \end{alignat}
    \label{lem:deriv_concentration}
\end{lemma}

Finally, we also make use of the generalized prolate spheroidal wave equation (\ref{eq:genPSWEq}) to prove Lemma \ref{lem:deriv_concentration}.

\begin{proof}[\textit{Proof of Lemma \ref{lem:deriv_concentration}}]
    Evaluating (\ref{eq:genPSWEq}) at $t=T$ gives
    \begin{align*}
        \prlt_n'(T)= \frac{1}{2} \frac{\lambda_n}{T} \prlt_n(T).
    \end{align*}
    Plugging into (\ref{eq:deriv_concentration_T}) yields
    \begin{align*}
    \| \prlt_n' \|_{T}^2 & = \gamma_n \| \prlt_n' \|_{\infty}^2 +  \frac{\lambda_n}{T} \prlt_n(T)^2  = \gamma_n  \Bigl(\| \prlt_n' \|_{\infty}^2 +  \frac{\lambda_n }{T}  \frac{\prlt_n(T)^2 }{\gamma_n}  \Bigr) 
\end{align*}
    We may alternatively also write
    \begin{align*}
        \| \prlt_n'\|_\infty^2 - \| \prlt_n' \|_{T}^2 & = (1-\gamma_n) \| \prlt_n' \|_{\infty}^2 - \frac{\lambda_n}{T} \prlt_n(T)^2\\
        & = (1-\gamma_n)  \Bigl( \| \prlt_n' \|_{\infty}^2 - \frac{\lambda_n}{T}  \frac{\prlt_n(T)^2}{1-\gamma_n}  \Bigr).
    \end{align*}
\end{proof}
For convenience we denote 
\begin{align*}
    C_{\text{extra},n} &= \sqrt{\| \prlt_n' \|_{\infty}^2 - \frac{\lambda_n}{T}  \frac{\prlt_n(T)^2}{1-\gamma_n} }, \qquad \text{ and } \qquad C_{\text{intra},n} &= \sqrt{\| \prlt_n' \|_{\infty}^2 +  \frac{\lambda_n }{T}  \frac{\prlt_n(T)^2 }{\gamma_n}}.
\end{align*}
\begin{proposition}
    For $t \in \mathbb{R}$ we have bounds
    \begin{alignat}{3}
        \sup_{|t| \geq T}  \prlt_n(t)^2 &\leq (1- &&\gamma_n) &&C_{\text{extra},n}  \label{eq:prop_bound_extra} \\
        \sup_{|t| \leq T}  \prlt_n(t)^2 &\leq &&\gamma_n && \tilde C_{\text{intra},n}  \label{eq:prop_bound_intra},
    \end{alignat}
    where $ \tilde C_{\text{intra},n} = C_{\text{intra},n}  + \delta_{n0} \frac{\prlt_0(T)^2}{\gamma_0}$.
    \label{prop:bounds}
\end{proposition}
Here $\delta_{n0}$ is the Kronecker delta.

\begin{proof}[\textit{Proof of Proposition \ref{prop:bounds}}]
    We show the first inequality. Wlog we assume $t\geq T$. We have by the Fundamental Theorem of Calculus
    \begin{align}
        \prlt_n(t)^2 &= - \int_{t}^{\infty} [\prlt(x)^2]' dx = -2  \int_{t}^{\infty}  \prlt(x) \prlt'(x) dx \leq 2 \int_{\tau}^{\infty} |\prlt_n(x) \prlt_n'(x)| dx  \notag \\
        & \leq \int_{|x|\geq T} |\prlt_n(x) \prlt_n'(x)| dx  \leq \sqrt{\int_{|x|\geq T} \prlt_n(x)^2 dx \int_{|x|\geq T} \prlt_n'(x)^2 dx }. \label{eq:FundCalc}
    \end{align}
    Recall that $ \|\prlt_n\|_{>T}^2 = (1-\gamma_n) \|\prlt_n\|_{\infty}^2$ and with
    Lemma \ref{lem:deriv_concentration} 
    $$ \int_{|x|\geq T} \prlt_n'(x)^2 dx = (1-\gamma_n) \Bigl( \| \prlt_n' \|_{\infty}^2 - \frac{\lambda_n}{T}  \frac{\prlt_n(T)^2}{1-\gamma_n}  \Bigr) = (1-\gamma_n)C_{\text{extra},n}^2.$$
    Plugging into (\ref{eq:FundCalc}) yields
    \begin{align*}
        \prlt_n(t)^2 \leq \left(1-\gamma_n \right) C_{\text{extra},n}.\\
    \end{align*}
    For the second inequality the cases $n>0$ and $n=0$ are treated separately.
    \begin{itemize}
        \item For $n>0$: Recall that $\prlt_n$ has $n$ zeros in $[-T,T]$. Thus for $n>0$ there exists 
        at least one $ \tau \in [-T,T]$ with $\prlt_n(\tau)^2 = 0$. 
        Consider $t\in [-T,T]$ and assume wlog that $\operatorname{sign}{t} = \operatorname{sign}{\tau}$.
        Then we have similarly as above 
        \begin{align*}
            \prlt_n(t)^2 & = \prlt_n(t)^2 - \prlt_n(\tau)^2  = 2 \int_{\tau}^{t} \prlt_n(x) \prlt_n'(x) dx \\
            &\leq 2 \left|\int_{\tau}^{t} |\prlt_n(x) \prlt_n'(x)| dx \right| \leq \int_{-T}^{T} |\prlt_n(x) \prlt_n'(x)| dx  \leq \sqrt{  \|\prlt_n\|_T^2 \|\prlt_n'\|_T^2} \\
            & = \sqrt{\gamma_n^2 \left( \| \prlt_n' \|_{\infty}^2 + \frac{\lambda_n}{T} \frac{\prlt_n(T)^2}{\gamma_n} \right)}  = \gamma_n C_{\text{intra},n}
        \end{align*}
        In the last line (\ref{eq:derivConcen_intra}) was applied.
        \item For $n=0$: Assume wlog $t \in [-T,0]$. Then we have
            \begin{align*}
                \prlt_0(t)^2- \prlt_0(-T)^2 = 2 \int_{-T}^{t} \prlt_0(x) \prlt_0'(x) dx \leq \gamma_0 \sqrt{ \| \prlt_0' \|_{\infty}^2 + \frac{\beta_0 \prlt_n(T)^2}{\gamma_0}},
            \end{align*}
            Rearranging gives 
            \begin{align*}
                \prlt_0(t)^2 & \leq \gamma_0 \left(\sqrt{ \| \prlt_0' \|_{\infty}^2 + \frac{\beta_0 \prlt_n(T)^2}{\gamma_0}} + \frac{\prlt_0(T)^2}{\gamma_0}\right)\\
                & = \gamma_0 \left( C_{\text{intra},0} + \frac{\prlt_0(T)^2}{\gamma_0}\right).
            \end{align*}
    \end{itemize}
\end{proof}

The following corrolary stablishes that $\prlt_n(T)^2 \sim (1-\gamma_n)$ up to factors, that vary less importantly in $\bw T$. 

\begin{corollary}
    We have bounds for Prolates at the edge of the concentration region
    \begin{alignat}{3}
        \prlt_n(T)^2 & \leq (1-&&\gamma_n) &&\left(C_n - \frac{\lambda_n}{2T} \right)  \label{eq:bound_T_extra}\\
        \prlt_n(T)^2 & \leq &&\gamma_n  &&\left(C_n + \frac{\lambda_n}{2T} \right) \label{eq:bound_T_intra}, 
    \end{alignat}
    where $C_n = \sqrt{\|\prlt_n'\|_{\infty}^2 + \frac{\lambda_n^2}{4T^2}}$. The second identity assumes $n>0$.
\end{corollary}

\begin{proof}
    Evaluating the square of (\ref{eq:prop_bound_extra}) at $t=T$ gives 
    \begin{align*}
        \prlt_n(T)^4 \leq (1-\gamma_n)^2 \left( \|\prlt_n'\|_{\infty}^2  - \frac{\lambda_n}{T} \frac{\prlt_n(T)^2}{1-\gamma_n} \right) =  (1-\gamma_n)^2 \|\prlt_n'\|_{\infty}^2 - (1-\gamma_n) \frac{\lambda_n}{T} \prlt_n(T)^2.
    \end{align*}
    Left-hand side reappears on the right and completing the square yields,
    \begin{align*}
        \left( \prlt_n(T)^2 + (1-\gamma_n) \frac{\lambda_n}{2T} \right)^2 & \leq (1-\gamma_n)^2 \left( \|\prlt_n'\|_{\infty}^2 + \frac{\lambda_n^2}{4T^2}\right)  = (1-\gamma_n)^2 C_n^2.
    \end{align*}
    Taking the square root and rearranging gives,
    \begin{align*}
        \prlt_n(T)^2 & \leq (1-\gamma_n) \left( \sqrt{\|\prlt_n'\|_{\infty}^2 + \frac{\lambda_n^2}{4T^2}} - \frac{\lambda_n}{2T} \right) = (1-\gamma_n) \left( C_n - \frac{\lambda_n}{2T} \right).
    \end{align*}
    We can proceed analogously to obtain a second bound from (\ref{eq:prop_bound_intra}). We assume $n>0$.
    \begin{align*}
        \prlt_n(T)^4 & \leq \gamma_n^2 \left( \|\prlt_n'\|_{\infty}^2  + \frac{\lambda_n}{T} \frac{\prlt_n(T)^2}{\gamma_n} \right) =  \gamma_n^2 \|\prlt_n'\|_{\infty}^2 + \gamma_n \frac{\lambda_n}{T} \prlt_n(T)^2\\
        \left(\prlt_n(T)^2- \gamma_n \frac{\lambda_n}{2T} \right)^2  & \leq \gamma_n^2 \left( \|\prlt_n'\|_{\infty}^2  + \frac{\lambda_n^2}{4T^2} \right) =  \gamma_n^2 C_n^2\\
        \prlt_n(T)^2 & \leq \gamma_n \left( C_n + \frac{\lambda_n}{2T} \right).
    \end{align*}
\end{proof}

\noindent The case  $n=0$ in (\ref{eq:bound_T_intra}) can be derived in a manner similar to the derivation given in (\ref{eq:prop_bound_intra}), and the details of this calculation are omitted here.
With Theorem \ref{thm:prlt_bound} now established, the remaining task is to determine the bounds for the  prefactors $C_{\text{extra},n}$ and $C_{\text{intra},n}$.


\subsection{Bounds on the Prefactors}

We now provide a bound on $C_{\text{extra},n}$. 
For $\lambda_n \geq 0$ we can estimate,
\begin{align}
    C_{\text{extra},n}^2& = \|\prlt_n'\|_{\infty}^2 - \frac{\lambda_n}{T} \frac{\prlt_n(T)^2}{1-\gamma_n}  \label{eq:C_ext_est1}\\
                        & \leq \|\prlt_n'\|_{\infty}^2 \leq \bw^2 \notag.
\end{align}    
For $\lambda_n < 0$ we use (\ref{eq:bound_T_extra}) to estimate $\prlt_n(T)^2$ in  (\ref{eq:C_ext_est1}),
\begin{align*}
    C_{\text{extra},n}^2 &= \|\prlt_n'\|_{\infty}^2 - \frac{\lambda_n}{T} \frac{\prlt_n(T)^2}{1-\gamma_n} \leq \|\prlt_n'\|_{\infty}^2 - \frac{\lambda_n}{T}\left(C_n - \frac{\lambda_n}{2T}\right) \\
                         &  = \|\prlt_n'\|_{\infty}^2 + \frac{\lambda_n^2}{2T^2} - \frac{\lambda_n}{T} C_n = C_n^2 - \frac{\lambda_n}{T} C_n + \frac{\lambda_n^2}{4T^2} \\
                         & = \left(C_n - \frac{\lambda_n}{2T}\right)^2.
\end{align*}
For $C_{\text{intra},n}$ we estimate in the case $\lambda_n <0$,
\begin{align*}
    C_{\text{intra},n}^2 & = \|\prlt_n'\|_{\infty}^2 + \frac{\lambda_n}{T} \frac{\prlt_n(T)^2}{\gamma_n} \\
                         & < \|\prlt_n'\|_{\infty}^2.
\end{align*}
For $\lambda_n \geq 0$ we apply (\ref{eq:bound_T_intra}) to obtain,
\begin{align*}
    C_{\text{intra},n}^2 & = \|\prlt_n'\|_{\infty}^2 + \frac{\lambda_n}{T} \left(C_n + \frac{\lambda_n}{2T}\right) = \|\prlt_n'\|_{\infty}^2 + \frac{\lambda_n^2}{4 T^2} +  C_n \frac{\lambda_n}{T} + \frac{\lambda_n^2}{4 T^2}\\
                         & = C_n^2 + \frac{\lambda_n}{T} C_n + \frac{\lambda_n^2}{4 T^2} = \left(C_n + \frac{\lambda_n}{2T}\right)^2.
\end{align*}

\noindent In summary,
\begin{itemize}
    \item For $\lambda_n < 0$  $$C_{\text{extra},n} \leq \left(C_n - \frac{\lambda_n}{2T}\right) \quad \text{ and } \quad C_{\text{intra},n} < \|\prlt_n'\|_{\infty}.$$
    \item For $\lambda_n \geq 0$ $$C_{\text{extra},n} \leq \|\prlt_n'\|_\infty \quad \text{ and } \quad C_{\text{intra},n} \leq \left(C_n + \frac{\lambda_n}{2T}\right).$$
\end{itemize}

\subsection{Asymptotic Estimate for Supremum Outside of the Concentration Region}

\label{sec:asymptotic_bounds}

\noindent We can further estimate $C_{\text{extra},n}$ for $\lambda_n \geq 0$ to obtain an asymptotic bounds in terms of $c = \bw T$. 
We recall  $-c^2 < \lambda_0 < \lambda_1 < \ldots \to \infty$ and make a simple estimate, 
\begin{align*}
    C_{\text{extra},n}& \leq \sqrt{ \|\prlt_n'\|^2_T + \frac{\lambda_n^2}{4T^2}} - \frac{\lambda_n}{2T}\leq \sqrt{ \bw^2 + \frac{\lambda_0^2}{4T^2}} - \frac{\lambda_0}{2T}\\
    & \leq \sqrt{ \bw^2 + \frac{\bw^4 T^2}{4}} + \frac{\bw^2 T }{2} = \bw \left( \sqrt{ 1 + \frac{c^2}{4}} + \frac{c}{2} \right) .
\end{align*}

\noindent Applying asymptotic expansion formulas, we obtain the behavior of $C_{\text{extra},n}$ for large and small $c$,
\begin{align}
    \sqrt{ 1 + \frac{c^2}{4}} + \frac{c}{2} \approx \begin{cases}
        c + \frac{1}{c} + O(\frac{1}{c^2}) & \text{ for } c \gg 1\\
        1 + c + O(c^2) & \text{ for } c \ll 1
    \end{cases}. \label{eq:C_extra_asympt}
\end{align}
More accurate asymptotic expression can be derived from the asymptotic behavior of the eigenvalues $\lambda_n$ of the PSWEq \cite{meixner_mathieusche_1954,SlepianAsymp}.\footnote{Recall that we use a different convention 
for the eigenvalues of the PSWEq than in \cite{SlepianAsymp}: $\lambda_n = \chi_n - c^2$.}
Their asymptotic behavior for large $c$ is given by,
\begin{align}
    \lambda_n = c^2\left( -1 + \frac{2n+1}{c} + O(c^{-2})\right).
\end{align}
We find to leading order for large $c$,
\begin{align*}
    C_{\text{extra},n} & < \sqrt{ \bw^2 + \frac{\lambda_n^2}{4T^2}} - \frac{\lambda_n}{2T} \\
    &   = \bw c \left( 1 -  \frac{2n+1}{c} +O(c^{-2}) \right).
\end{align*}
Here we used,
\begin{align*}
    \frac{\lambda_n }{2 T} = \bw \frac{c}{2} \left( -1 +  \frac{2n+1}{c} +O(c^{-2})\right).
\end{align*}
Using the asymptotic formula (\ref{eq:gamma_n_asympt}) for $1-\gamma_n$ we obtain,
\begin{align}
    (1-\gamma_n) C_{\text{extra},n} = \bw \frac{4 \sqrt{\pi} 2^{3 n} c^{n+\frac{3}{2}} e^{-2 c}}{n!}\left[1-\frac{6 n^2+ 62n +35 }{32 c}+O\left(c^{-2}\right)\right], \label{eq:sup_extra_asympt}
\end{align}
for $n \leq \operatorname*{argmax}_n \text{ such that } (\lambda_n <  0)$. \\

The sequence of asymptotic expressions $f_n$ defined (\ref{eq:sup_extra_asympt}), 
is up to the leading-order term increasing for $n \leq 2c$.
A more detailed analysis of the asymptotic behavior of these bounds will be presented in future work.



\newpage 

\section{A Dual of the Transition Signature in the Spectrum of the PSWEq}
\label{sec:transition_signature}

The spectrum of the prolate integral operator $\BL_\bw \TL_T \prlt_n = \gamma_n \prlt_n$
exhibits a spectral transition.  
Let $\tilde c = 2\bw T / \pi$ be the characterizing phase space volume of prolates, which are concentrated in the interval $[-T,T]$  and consist of frequencies within 
$[-\bw,\bw]$.\footnote{ Here $\tilde c$ has the interpretation of the volume of a rectangle in phase space, $\tilde c = 2\bw \times 2T / 2 \pi$.  }
For $n \ll \tilde c $ the prolate eigenvalues $\gamma_n$ are nearly equal $1$ and for $n \gg \tilde c $ they are very close to $0$.
The transition region in between is logarithmically narrow,
\begin{align}
    \#\left\{k \mid \gamma \geq \gamma_k(\tilde c) \geq 1-\gamma\right\} \leq \frac{A}{\gamma(1-\gamma)} \log \tilde c,  \label{eq:transition_width}
\end{align}
where $A$ is a constant. Landau has proven a key spectral signature, that 
 precisely locates the transition region at $\tilde c$, 
\begin{align}
    \gamma_{[\tilde c]-1} \geq \frac{1}{2} \geq \gamma_{[\tilde c]+1}.  \label{eq:transition_signature_landau}
\end{align}
Equation (\ref{eq:transition_width}) and (\ref{eq:transition_signature_landau}) are drawn from \cite{landau_density_1993}.
We conjecture that a dual of the transition signature also exists in the spectrum of the PSWEq,
\begin{align}
    \lambda_{[\tilde c]-1} \leq 0 \leq \lambda_{[\tilde c]+1}.  \label{eq:transition_signature_conj}
\end{align}
In following we present an argument supporting the conjecture with a Bohr-Sommerfeld quantization condition.
The Bohr-Sommerfeld quantization condition corresponds to the leading order term of an exact quantization condition, as discussed in \cite{BohrSommerfeldQuant}
and also known from the context of WKB theory \cite{reed_iv_1978}.
We extend our gratitude to Gian Michele Graf for the insightful discussion that informed the development of this argument.

\subsection{Bohr-Sommerfeld Quantization Argument}

Consider the PSWEq,
\begin{align}
 \Pop_{c} =  \frac{d}{dx}\left[( 1-x^2) \frac{d}{dx} \right] + c^2\left(1 -x^2\right).  \label{eq:PSWEq_bohr}
\end{align}
where $\Pop_{c}$ acts on the space of sufficiently regular functions defined on the interval $[-1,1]$.
In Section \ref{sec:PSWEq} and  \ref{sec:generalized_PSWEq} we have seen that prolates are eigenfunctions to the PSWEq with $\Pop_{c} \prlt_n = -\lambda_n \prlt_n$.
We now turn to a Hamiltonian phase space interpretation.\\

Consider the momentum operator $p = -i \hbar \frac{d}{dx}$ and the domain $x \in [-1,1]$. We obtain from $\Pop_{c}$ a Hamiltonian function, 
\begin{align*} 
 H_c(x,p) =  \frac{1}{\hbar^2} p(1-x^2)p - c^2(1-x^2),
\end{align*}
which is defined on the phase space  $(x,p) \in [-1,1] \times \mathbb{R}$.
For an eigenfunction $\Pop_{c} \prlt_n = - \lambda_n \prlt_n$, the Hamiltonian 
takes the value $\lambda_n$ in phase space of $\prlt_n$,
\begin{align*}
 H_c(x,p)  = \lambda_n,
\end{align*}
for all $(x,p) \in \prlt_n$. \\

Within the Bohr-Sommerfeld quantization rule, the eigenvalues 
$\lambda_n$ are approximated through phase space volumes, 
\begin{align*}
 \left| \left\{ (x,p): H_c(x,p) \leq \lambda_n \right\} \right| = 2\pi \hbar n.
\end{align*}
To analyze the critical points of $H_c$, we compute its gradient and Hessian,
\begin{align*}
    \nabla H_c & =  \frac{2}{\hbar^2} \begin{pmatrix}  -xp^2 + c^2 x \\ + p(1-x^2) \end{pmatrix}, \\
    \delta^2 H_c & =  \frac{2}{\hbar^2} \begin{pmatrix}  c^2 - p^2 & -2xp \\ -2xp & 1 - x^2  \end{pmatrix}.
\end{align*}
The stationary points where $\nabla H_c=0$ are given at $(x,p)_g = (0,0)$ and $(x,p)_\text{trans} = (1, c)$. The point
$(x,p)_g$ corresponds to the ground state and has energy $H_c(x,p)_g =  -c^2$. The point $(x,p)_\text{trans}$ 
has energy,
\begin{align*}
 H_c(x,p)_\text{trans}  = \lambda_\text{trans} = 0. 
\end{align*}
For the Hessian we have,
\begin{align*}
    \delta^2 H_c (x,p)_\text{trans} = \frac{2}{\hbar^2} \begin{pmatrix}  0 & -2c^2 \\ -2c^2 & 0  \end{pmatrix} = - \frac{4c^2}{\hbar^2} \sigma_x,
\end{align*}
where $\sigma_x$ is the Pauli matrix. Thus the Hessian is indefinite and $(x,p)_\text{trans}$ is saddle point.\\

To calculate the encircled phase space area of the orbit, we note that 
\begin{align*}
 H_c(p,x) =  \left(\frac{p^2}{\hbar^2}  -c^2 \right) (1-x^2) \leq 0 
\end{align*}
is satisfied if and only if $|p| \leq \hbar c$. The phase space volume is given 
by a rectangle of area $4 \hbar c$. Thus, 
\begin{align*}
 \left| \left\{ (x,p): H_c(x,p) \leq 0 \right\} \right| = 4 \hbar c = 2\pi \hbar n_\text{trans}.
\end{align*}
This implies a saddle point at $n_\text{trans} = \frac{2 c}{\pi}$ and ultimately,
\begin{align*}
 \lambda_{ \lfloor\frac{2 c}{\pi}  \rfloor} \leq 0 \leq \lambda_{ \lceil \frac{2 c}{\pi} \rceil }.
\end{align*}
From Section \ref{sec:generalized_PSWEq}, we recall that the eigenvalues of $\bw T$-prolate coincide with the eigenvalues of the PSWEq (\ref{eq:PSWEq_bohr}) by 
setting $c = \bw T $. Thus, the transition signature in the spectrum of the generalized PSWEq (\ref{eq:transition_signature_conj}) is proven up to 
correction terms from the Bohr-Sommerfeld quantization condition.


\chapter{A Sampling Instance of the Engineering Folk Theorem }
\label{chap:Sampling}

This chapter is at this stage non-expositionary. We merely state the main results in a rough form. \\

From the refined concentration identities of Chapter \ref{chap:prlt_bound} we can 
derive improved error bounds for the truncation of the prolate sampling formula, that has been introduced by Walter and Shen \cite{walter_sampling_2003}.\\

Sampling formulas lead the first generation of pioneers of communication theory, such as Shannon and Nyquist, to the Engineering Folk Theorem \cite{NoiseCommunication, Nyquist1928}.
It states, that the space of signals through which we communicate across a band limiting channel is approximately $2\bw T/\pi $ dimensional.
Shannon's seminal and widely known theorem on the optimal channel capacity was actually only proven up to this intuition \cite{NoiseCommunication}.
Shannon has been led to this intuition through what is now known as the Whittaker-Shannon interpolation formula.
To reconstruct a signal of bandwidth $2 \bw$ and duration $2T$ to high precision, $2\bw T / \pi$ sample points should suffice.
However, they could not provide rigorous proof for this intuition through sharp error estimates. \\

Here we note that this long-standing intuition can be met with mathematical rigour, through the prolate sampling formula and the refined 
error estimates, that we derive in this chapter.\\

\section{Related Work}

\begin{theorem}[Whittaker-Shannon interpolation formula]
    Let $f \in \BL_\bw$. Then we have 

    \begin{alignat}{2}
        f(t)  &= \frac{\pi}{\bw} &&\sum_{n \in \mathbb{Z}} f\left(\frac{n \pi}{\bw}\right) \rho_\bw\left(t-\frac{n \pi}{\bw} \right).  \label{eq:shannonInterpolation}
    \end{alignat}
    The series converges uniformly over $\mathbb{R}$.
    \label{thm:shannonInterpolation}
\end{theorem}

We give a proof for this particular form of the Whittaker-Shannon interpolation in Section \ref{sec:samp_appendix} .
Walter and Shen, have derived through Prolate Fourier Theory an improved interpolation formula that converges faster and is more accurate than the Whittaker-Shannon interpolation formula \cite{walter_sampling_2003}.
After translating into our notation, the following theorem summarizes a selection of key results of their work.

\begin{theoremAlph}[Walter, Shen]
    Band limited functions $f \in \BL_\bw$ have discrete prolate expansions with respect to a $\bw$-Prolate basis $\{ \prlt_n \}$,
    \begin{align}
        f (t) = \frac{\pi}{\bw} \sum_{n=0}^{\infty}  \sum_{k= -\infty}^{\infty} f\left(\frac{k \pi}{\bw}\right) \prlt_n\left(\frac{k \pi}{\bw}\right) \prlt_n(t). \label{eq:ProlateSampling}
    \end{align}
    Prolates $\prlt_n$ satisfy discrete orthogonality relations
    \begin{align*}
        \sum_{n=0}^{\infty} \prlt_n\left(\frac{k \pi}{\bw} \right) \prlt_n \left(\frac{l \pi}{\bw}\right) & = \frac{\bw}{\pi} \delta_{kl} \\
        \sum_{k\in \mathbb{Z}} \prlt_n\left(\frac{k\pi}{\bw}\right) \prlt_m\left(\frac{k\pi}{\bw}\right) & = \frac{\bw}{\pi} \delta_{nm}
    \end{align*}
    for all $k,l \in \mathbb{Z}$ and $n,m \in \mathbb{N}$.
\end{theoremAlph}

\section{Results}

\begin{theorem}
    \label{thm:SamplingEngFolk}
    Consider a band limited function $f \in \BL_\bw$ in the time domain $[-T,T]$. Let $\{\prlt_n\}$ be the sequence of $\bw T $-prolates 
    and $c= T\bw/\pi$. We denote the truncated sampling series $f_{N,c}(t)$ as,
    \begin{align*}
        f_{N,c}(t) = \frac{\pi}{\bw} \sum_{n=0}^{N-1}  \sum_{k= -[T\bw/ \pi]}^{[T\bw/ \pi]} f\left(\frac{k \pi}{\bw}\right) \prlt_n\left(\frac{k \pi}{\bw}\right) \prlt_n(t).
    \end{align*}
    The sampling series $f_{N,c}(t)$ approximates $f$ within the interval $[-T,T]$ with a precision guarantee of,
    \begin{align*}
        \|f -f_{N,c}\|_{T}^2 \leq &  \frac{\pi}{\bw} \left( \|f \|_{>T}^2+ 2 \sqrt{\|f \|_{>T}^2\|f'\|_{>T}^2}\right) \sum_{n=0}^{N-1} \gamma_n(1-\gamma_n) C_{n} \\
        & +  \sum_{n=N}^{\infty}  \gamma_n |a_n|^2.
    \end{align*}
    Here $a_n= \int_{-\infty}^{\infty} \prlt_n(t) f(t) dt $ denotes the $n$-th coefficient of $f$ in a $\bw T$-prolate expansion and $C_n = (1+2 C_{\text{extra},n}^{ \frac{1}{2}})$ with $C_{\text{extra},n}$ as in Theorem \ref{thm:prlt_bound}.
\end{theorem}

The following convergence statement does not seem to be known in the literature.
\begin{proposition}
    The expansion of a band limited function $f \in \BL_\bw$ in a $\bw$-prolate basis converges uniformly over $\mathbb{R}$.
    \label{prop:ProlateExpansionConvergence}
\end{proposition}

\begin{corollary}
    The prolate sampling formula (\ref{eq:ProlateSampling}) converges uniformly over $\mathbb{R}$.
\end{corollary}

\begin{corollary}
    We denote with $\mathscr{P}_{WT}(N)$ the subspace spanned by the first $N+1$ prolates. Let $f \in \mathscr{P}_{WT}(N)$
    and $f_{N,c}$ as in Theorem \ref{thm:SamplingEngFolk}. Then we have
    \begin{align}
        \|f - f_{N+1,c}\|_{T}^2 \leq \frac{\pi \| f\|  }{\bw} (1-\gamma_{N}) C_{N} \sum_{n=0}^{N} \gamma_n(1-\gamma_n) C_{n} \label{eq:prlt_samp_trunc}
    \end{align}
    with $C_n$ as in Theorem \ref{thm:SamplingEngFolk}.\\
    
    In particular,  $2\bw T / \pi$ sample points suffice, to reconstruct a modulation of the first $N < 2\bw T / \pi - A \log (2\bw T / \pi )$ prolates within the
    interval $[-T,T]$ to high precision.
\end{corollary}

Here $A$ is a constant that describes the width of the transition region in spectrum of prolate spheroidal integral operator \cite{ProIII,landau_density_1993}.
For comparison, the truncation estimate of Walter and Shen had
\begin{align}
    \|f - f_{N,c}\|_{T}^2 \sim (1-\gamma_N)^{1/2},
\end{align}
while the estimate (\ref{eq:prlt_samp_trunc}) gives a sharper bound with
\begin{align}
    \|f - f_{N,c}\|_{T}^2 \sim (1-\gamma_N)^{2}.
\end{align}
So we get an improvement of four orders of magnitude in the precision parameter $(1-\gamma_N)$.

\section{Derivations}




\begin{fact}
    We have for all $ y \in \mathbb{R}$
    \begin{align}
        \sum_{n=0}^\infty \prlt_n(y)^2 = \frac{\bw}{\pi}  \label{eq:ProlateTsquareSum}
    \end{align}
    And 
    \begin{align}
        \rho_\bw(x-y) & = \sum_{n =0}^\infty \prlt_n(x) \prlt_n(y) \label{eq:SamplingFunctionExpansion}
    \end{align}
    with uniform and absolute convergence on $\mathbb{R}$, both as function 
    of $x$ or $y$.
    \label{fact:SamplingFunctionExpansion} 
\end{fact}
\begin{proof}[Proof of Fact \ref{fact:SamplingFunctionExpansion}]
    Consider $\rho(\cdot, y): \mathbb{R} \to \mathbb{R}, x \mapsto \rho_\bw(x-y)$. 
    We have $\|\rho(\cdot, y )\|_\infty^2 = \| \rho_\bw\|_\infty^2 = \frac{\bw}{\pi}$.
    Expanding $\rho(\cdot, y)$ in the prolate basis gives 
    \begin{align*}
        \rho(x-y) & = \lim_{N\to \infty} \underbrace{\sum_{n =0}^N \prlt_n(x) \prlt_n(y)}_{:= \rho_N(x-y)},
    \end{align*}
    where we used 
    \begin{align*}
        c_n = \int_{-\infty}^{\infty} \rho(x-y) \prlt_n(x) dx = \prlt_n(y).
    \end{align*}
    By Parseval's identity we have,
    \begin{align*}
        \sum_{n=0}^{\infty} \prlt_n(y)^2  = \frac{\bw}{\pi}.
    \end{align*}
    Similarly we define 
    $f (\cdot,y): [-\bw, \bw] \to \mathbb{C}, f(\omega)=  \frac{1}{2\pi} e^{-i \omega y}$ and expand in 
    the Fourier prolate basis $\tprlt_n$,

    \begin{align*}
        c_n & = \frac{1}{\gamma_n} \frac{1}{2\pi} \int_{-\bw}^{\bw} e^{i \omega y} \tprlt_n(\omega) d \omega = \frac{1}{\gamma_n} \frac{1}{2\pi} \FT_\bw [\tprlt_n](y) = \frac{2\pi }{|\mu_n|^2} \frac{\conj{\mu_n}}{2\pi}\prlt_n(y)\\
        & = \frac{1}{\mu_n} \prlt_n(y).
    \end{align*}
    We thus have with convergence in $\Ls^2_\bw$ (as function of $\omega$)
    \begin{align*}
        f(\omega,y) & = \lim_{N \to \infty}\underbrace{\sum_{n=0}^{N} \frac{1}{\mu_n} \tprlt_n(\omega) \prlt_n(y)}_{:= f_N(\omega,y)}.
    \end{align*}
    To proof uniform convergence we consider
    \begin{align*}
        \rho_\bw(x-y) -\rho_N(x,y) & = \int_{-\bw}^{\bw} \frac{1}{2\pi} e^{i \omega (x-y)} - \sum_{n =0}^{N} \frac{1}{\mu_n} e^{i \omega x} \tprlt_n(\omega) \prlt_n(y) d \omega.\\
            & = \int_{-\bw}^{\bw} \left(\frac{1}{2\pi} e^{-i \omega y} - \sum_{n =0}^{N} \frac{1}{\mu_n} \tprlt_n(\omega) \prlt_n(y) \right) e^{i\omega x} d \omega.\\
            & = \int_{-\bw}^{\bw} \left(f(\omega,y) - f_N(\omega,y) \right) e^{i\omega x} d \omega.
    \end{align*}
    Applying on the last line Cauchy Schwarz inequality we find
    \begin{align*}
        (\rho_\bw(x-y) -\rho_N(x,y))^2 \leq 2\bw \int_{-\bw}^{\bw} \left|f(\omega,y) - f_N(\omega,y) \right|^2 d \omega,
    \end{align*}
    where we used $|e^{i\omega x}| =1$. The right-hand side is already independent of $x$ 
    and approaches zero as $N \to \infty$ due to the  $\Ls^2_\bw$ convergence of $f_N(\cdot,y)$.
    By symmetry the same argument with respect to $y$ holds.
\end{proof}

\begin{fact}
    \label{fact:prlt_discrete_ortho}
    Prolates $\prlt_n$ satisfy discrete orthogonality relations
    \begin{align*}
        \sum_{n=0}^{\infty} \prlt_n\left(\frac{k \pi}{\bw} \right) \prlt_n \left(\frac{l \pi}{\bw}\right) & = \frac{\bw}{\pi} \delta_{kl} \\
        \sum_{k\in \mathbb{Z}} \prlt_n\left(\frac{k\pi}{\bw}\right) \prlt_m\left(\frac{k\pi}{\bw}\right) & = \frac{\bw}{\pi} \delta_{nm}
    \end{align*}
    for all $k,l \in \mathbb{Z}$ and $n,m \in \mathbb{N}$.
\end{fact}
We have $\rho_\bw(\frac{k \pi}{\bw}) = 0 $ for all $k \in \mathbb{Z} \setminus \{0\}$ and  $\rho_\bw(0) = \frac{\bw}{\pi}$.
From this and eq.(\ref{eq:SamplingFunctionExpansion}) we get for all $k,l \in \mathbb{Z}$
$$\sum_{n=0}^{\infty} \prlt_n\left(\frac{k \pi}{\bw} \right) \prlt_n \left(\frac{l \pi}{\bw}\right) = \frac{\bw}{\pi} \delta_{kl}. $$
Expanding $\prlt_n(t)$ in the sampling functions basis $ \sqrt{\frac{\pi}{\bw}} \rho_\bw(t - \frac{n\pi}{\bw})$ gives
with 
\begin{align*}
    \int_{-\infty}^{\infty} \prlt_n(t)\rho_\bw(t - \frac{n\pi}{\bw}) dt = \prlt_n\left(\frac{n\pi}{\bw}\right),
\end{align*}
\begin{align*}
    \prlt_n(t) = \frac{\pi}{\bw} \sum_{k\in \mathbb{Z}} \prlt_n\left(\frac{k\pi}{\bw}\right) \rho_\bw(t - \frac{k\pi}{\bw}).
\end{align*}
With Parseval's identity this implies 
\begin{align*}
    \int_{-\infty}^{\infty } \prlt_n(t) \prlt_m(t) dt  & = \frac{\pi}{\bw} \sum_{k\in \mathbb{Z}} \prlt_n\left(\frac{k\pi}{\bw}\right) \prlt_m\left(\frac{k\pi}{\bw}\right)
    = \delta_{nm}.          
\end{align*}

\begin{proposition}
    Consider $f \in \BL_\bw$ and the expansion in the Prolate basis
    \begin{align*}
        f(t) = \sum_{n=0}^{\infty} a_n \prlt_n(t),  \quad \text{ with } \quad a_n = \int_{-\infty}^{\infty} f(t) \prlt_n(t) dt.
    \end{align*}
    We have absolute and uniform convergence of the series in $\mathbb{R}$.
    \label{prop:ProlateExpansionConvergence}
\end{proposition}
\begin{proof}
    Absolute convergence follows from Parseval's identity and eq.(\ref{eq:ProlateTsquareSum}),
    \begin{align*}
       \sum_{n=0}^{\infty} |a_n \prlt_n(t)| \leq \sqrt{\sum_{n=0}^{\infty} |a_n|^2 \sum_{n=0}^{\infty} \prlt_n(t)^2} = \|f\| \sqrt{\frac{\bw}{\pi}} < \infty.
    \end{align*}
    Unifrom convergence:
    Recall that we have $\prlt_n(t) = \frac{2\pi}{\mu_n} \FT^{-1}_\bw[\tprlt_n](t)$ and 
    $\FT[\prlt_n](\omega) = \frac{2\pi}{\mu_n} \tprlt_n(\omega)$. We denote $\FT[f](\omega) = F(\omega)$ and write the expansion of $F$ in $\Ls^2_\bw$ with respect to the Fourier Prolate basis $\tprlt_n$,
    \begin{align*}
        F =  \lim_{N\to \infty} \underbrace{\sum_{n=0}^{N} \frac{b_n}{\gamma_n} \tprlt_n}_{:=F_N}, \quad \text{ with } \quad b_n = \int_{-\bw}^{\bw} F(\omega) \tprlt_n(\omega) d\omega,
    \end{align*}
    where the $\gamma_n$ factor comes from the normalization on $[-\bw, \bw]$.
    From Plancherel's theorem we get 
    \begin{align*}
        a_n & \Pl \frac{1}{2\pi} \int_{-\bw}^{\bw}  \conj{\FT[\prlt_n](\omega)}F(\omega)  d\omega = \frac{1}{2\pi} \int_{-\bw}^{\bw} F(\omega) \frac{2\pi}{\conj{\mu_n}} \tprlt_n(\omega) d\omega \\
        & = \frac{1}{\conj{\mu_n}} b_n.
    \end{align*}
    With this we have
    \begin{align*}
        f(t) - f_N(t) & = \frac{1}{2\pi} \int_{-\bw}^{\bw} F(\omega) e^{i \omega t} - \sum_{n=0}^{N} a_n \frac{2\pi}{\mu_n}\tprlt_n(\omega)e^{i\omega t}  d\omega \\
        & = \frac{1}{2\pi} \int_{-\bw}^{\bw} (F(\omega) - \sum_{n=0}^{N}  \frac{2\pi}{|\mu_n|^2 } b_n\tprlt_n(\omega)) e^{i \omega t} d \omega.
    \end{align*}
    We recognize $F_N(\omega) = \sum_{n=0}^{N} \frac{b_n}{\gamma_n} \tprlt_n(\omega)$ in the integrand and can apply Cauchy Schwarz inequality to find
    \begin{align*}
        |f(t) - f_N(t)|^2 \leq \frac{2\bw}{2\pi} \int_{-\bw}^{\bw} |F(\omega) - F_N(\omega)|^2 d\omega.
    \end{align*}
    Here we used $|e^{it\omega}|=1$ for $t\in \mathbb{R}$. The right hand side approaches zero as $N \to \infty$ due to the $\Ls^2_\bw$ convergence of $F_N$
    and is independent of $t$.
\end{proof}
Similarly the proof generalizes to show uniform convergence on any subspace $D \subset \mathbb{C}$ 
that ensures that $\operatorname*{imag}(z)$ is bounded for all $z \in D$.\\

Inserting the prolate expansion of $\rho_\bw$ in the Whittaker-Shannon interpolation formula
gives the prolate interpolation formula
\begin{align*}
    f(t) = \frac{\pi}{\bw} \sum_{k\in \mathbb{Z}} \sum_{n=0}^{\infty} f\left(\frac{k \pi}{\bw}\right) \prlt_n\left(\frac{k \pi}{\bw}\right) \prlt_n(t),
\end{align*}
that has similarly guaranteed absolute and uniform convergence over $\mathbb{R}$.
The advantage over the Whittaker-Shannon formula is that the optimal concentration 
properties of the prolates basis ensure fast convergence 
and allow for precise error estimates in truncations.  However, the essential 
time concentration of prolates is only given for prolates before the 
transition region $n < \frac{2 \bw T}{\pi}$. Fortunately, a result of 
Pollak and Landau states that bandlimitted functions that are also 
time concentrated in the interval $[-T,T]$ can be well approximated in 
the subspace of first $N$ prolates \cite{ProIII}.

\subsubsection{Error estimates with respect to L2 norm}
These are especially interesting for numerical integration. 
Denote $\|f \|_{>T}^2 = \int_{|t|>T} |f(t)|^2 dt=\|f \|_{\infty}^2 - \|f \|_{T}^2$

\begin{lemma}
    Let $f: \mathbb{R} \to \mathbb{C}$ be differentiable such that $f,f' \in \mathscr{L}^2$ and 
    $T,\bw >0$. We denote $c = \frac{T\bw}{\pi}$ and have 
    \begin{align*}
        \sum_{|k|\geq c} \left|f \left(\frac{\pi k}{\bw} \right) \right|^2 &\leq \|f \|_{>T}^2+ 2 \sqrt{\|f \|_{>T}^2\|f'\|_{>T}^2}, \\
        \sum_{|k| \leq c} \left|f \left(\frac{\pi k}{\bw} \right) \right|^2 & \leq \|f \|_{T}^2+ 2 \sqrt{\|f \|_{T}^2\|f'\|_{T}^2}.
    \end{align*}
    \label{lem:ErrorEstimatesTrucation}
\end{lemma}
\begin{proof}[Proof of Lemma \ref{lem:ErrorEstimatesTrucation}]
    We compare $ \sum_{|k|>[c]} |f(\frac{\pi k}{\bw})|^2$ to $|\int_{|x|>T} f(x)^2 dx|$. 
    From the integral mean value theorem we have $\int_{\frac{\pi k}{\bw}}^{\frac{\pi (k+1)}{\bw}} f(x)^2 dx =   f(x_k)^2$
    with $\frac{\pi k}{\bw} \leq x_k \leq \frac{\pi (k+1)}{\bw}$. Thus,
    \begin{align*}
        \left|  \left|\int_{\frac{\pi k}{\bw}}^{\frac{\pi (k+1)}{\bw}} f(x)^2 dx\right| -  |f(\frac{\pi k}{\bw})|^2 \right| & = \left| |f(x_k)|^2 -  |f(\frac{\pi k}{\bw})|^2  \right|  \leq \left| f(x_k)^2 -  f(\frac{\pi k}{\bw})^2  \right| \\
        & = \left| 2 \int_{\frac{\pi k}{\bw}}^{x_k} f(x) f'(x) dx \right| \leq 2 \int_{\frac{\pi k}{\bw}}^{x_k} |f(x) f'(x)| dx \\
        &  \leq 2 \int_{\frac{\pi k}{\bw}}^{\frac{\pi (k+1)}{\bw}} |f(x) f'(x)| dx,
    \end{align*}
    where I applied the reversed triangle inequality and the fundamental theorem of calculus.
    Lower bounding the left-hand side and rearranging gives
    \begin{align*}
        |f(\frac{\pi k}{\bw})|^2 & \leq \left|\int_{\frac{\pi k}{\bw}}^{\frac{\pi (k+1)}{\bw}} f(x)^2 dx\right| + 2 \int_{\frac{\pi k}{\bw}}^{\frac{\pi (k+1)}{\bw}} |f(x) f'(x)| dx.
    \end{align*}
    Here we assumed $k>0$, for $k<0$ we make an analogous bound with $\int_{\frac{\pi (k-1)}{\bw}}^{\frac{\pi k}{\bw}}$.
    Summing over $k$ gives
    \begin{align*}
        \sum_{|k| \geq c} |f(\frac{\pi k}{\bw})|^2  & \leq \int_{|x|>T} |f(x)|^2 dx + 2 \int_{|x|>T} |f(x) f'(x)| dx \\
        & \leq \|f \|_{>T}^2 + 2\sqrt{\|f \|_{>T}^2 \|f'\|_{>T}^2}.
    \end{align*}
    And similarly,
    \begin{align*}
        \sum_{|k|\leq c} |f(\frac{\pi k}{\bw})|^2   & \leq \int_{|x|<T} |f(x)|^2 dx + 2 \int_{|x|<T} |f(x) f'(x)| dx \\
        & \leq \|f \|_{T}^2 + 2\sqrt{\|f \|_{T}^2 \|f'\|_{T}^2}.
    \end{align*}
\end{proof}

\begin{proposition}
    Consider $f \in \BL_\bw$ and an interval in the time domain $[-T,T]$. Let $c= T\bw$ and 
    \begin{align*}
        f_{N,c}(t) = \frac{\pi}{\bw} \sum_{n=0}^{N-1}  \sum_{k= -[c]}^{[c]} f\left(\frac{k \pi}{\bw}\right) \prlt_n\left(\frac{k \pi}{\bw}\right) \prlt_n(t).
    \end{align*}
    Then we have 
    \begin{align}
        \|f -f_{N,c}\|_{T}^2 \leq \sum_{n=N}^{\infty} \gamma_n |a_n|^2 + \frac{\pi}{\bw} \sum_{n=0}^{N-1} \gamma_n(1-\gamma_n) \sum_{|k|>[c]} f\left(\frac{k \pi}{\bw}\right)^2. \label{eq:ErrorEstimatesTrucation}
    \end{align}
\end{proposition}

\begin{proof}
    Consider the auxiliary function
    \begin{align*}
        f_N(t) = \frac{\pi}{\bw} \sum_{n=0}^{N-1}  \sum_{k = -\infty}^\infty f\left(\frac{k \pi}{\bw}\right) \prlt_n\left(\frac{k \pi}{\bw}\right) \prlt_n(t).
    \end{align*}
    We have 
    \begin{align*}
        \|f -f_{N,c}\|_{T}^2 & = \|(f -f_N) - (f_N -f_{N,c})\|_{T}^2  = \|f -f_N\|_{T}^2 + \|f_N -f_{N,c}\|_{T}^2.
    \end{align*}
    The last equality follows from the fact that $f -f_N$ and $f_N -f_{N,c}$ are orthogonal in $\TLs_T$.
    The bound on the first term on the righ-hand side is obtained from the discrete orthogonality relations of Prolates.
    \begin{align*}
        \int_{-T}^{T}  \left| f(t) - f_N(t) \right|^2 dt & = \int_{-T}^{T} \left| \frac{\pi}{\bw} \sum_{n=N}^{\infty}\sum_{k\in \mathbb{Z}}  f\left(\frac{k \pi}{\bw}\right) \prlt_n\left(\frac{k \pi}{\bw}\right) \prlt_n(t) \right|^2 dt\\
        & =  \int_{-T}^{T} \left| \frac{\pi}{\bw} \sum_{n=N}^{\infty} \sum_{k\in \mathbb{Z}}   \sum_{m} a_m \prlt_m\left(\frac{k \pi}{\bw}\right) \prlt_n\left(\frac{k \pi}{\bw}\right) \prlt_n(t) \right|^2 dt\\
        & = \int_{-T}^{T} \left| \frac{\pi}{\bw}  \sum_{n=N}^{\infty}  \sum_{m} a_m   \delta_{nm} \frac{\bw}{\pi} \prlt_n(t) \right|^2 dt\\
        & = \int_{-T}^{T} \left|  \sum_{n=N}^{\infty}  a_n  \prlt_n(t) \right|^2 dt  =  \sum_{n=N}^{\infty}  |a_n|^2   \gamma_n
    \end{align*}
    In the third line we applied the discrete orthogonality relations of prolates given in Fact \ref{fact:prlt_discrete_ortho}.
    For the second term we have, 
    \begin{align*}
        \int_{-T}^{T}  \left| f_N(t) - f_{N,c}(t) \right|^2 dt & = \int_{-T}^{T} \left| \frac{\pi}{\bw} \sum_{|k|> [c]} \sum_{n=0}^{N-1} f\left(\frac{k \pi}{\bw}\right) \prlt_n\left(\frac{k \pi}{\bw}\right) \prlt_n(t) \right|^2 dt\\
        & =  \sum_{n=0}^{N-1} \gamma_n \left( \frac{\pi}{\bw} \sum_{|k|> [c]}  f\left(\frac{k \pi}{\bw}\right) \prlt_n \left(\frac{k \pi}{\bw}\right) \right)^2\\
        & = \frac{\pi^2}{\bw^2}   \sum_{n=0}^{N-1} \gamma_n \left( \sum_{|k|> [c]} \sum_{m=0}^\infty a_m  \prlt_m\left(\frac{k \pi}{\bw}\right) \prlt_n \left(\frac{k \pi}{\bw}\right) \right)^2\\
        & \leq \frac{\pi^2}{\bw^2} \sum_{n=0}^{N-1} \gamma_n  \sum_{|k|> [c]}  f\left(\frac{k \pi}{\bw}\right)^2 \sum_{|k|> [c]} \prlt_n \left(\frac{k \pi}{\bw}\right)^2.
    \end{align*}
    Lemma \ref{lem:ErrorEstimatesTrucation} and Theorem \ref{thm:prlt_bound} have together

    \begin{align*}
        \sum_{|k|> [c]} \prlt_n \left(\frac{k \pi}{\bw}\right)^2 & \leq  \|\prlt_n  \|_{>T}^2+ 2 \sqrt{\|\prlt_n  \|_{>T}^2\|\prlt_n '\|_{>T}^2} \\
        & \leq (1-\gamma_n)( 1 + 2 C_{\text{extra},n}^{\frac{1}{2}})
    \end{align*}
    Thus, (\ref{eq:ErrorEstimatesTrucation}) is established.
\end{proof}
\newpage 

\section{Appendix}

\label{sec:samp_appendix}

\begin{theorem}[Whittaker-Shannon interpolation formula]
    Let $f \in \BL_\bw$. Then we have 

    \begin{alignat}{2}
        f(t)  &= \frac{\pi}{\bw} &&\sum_{n \in \mathbb{Z}} f\left(\frac{n \pi}{\bw}\right) \rho_\bw\left(t-\frac{n \pi}{\bw} \right).  \label{eq:shannonInterpolation}
    \end{alignat}
    The series converges absolutely and uniformly over $\mathbb{R}$.
    \label{thm:shannonInterpolation}
\end{theorem}
With  
\begin{align*}
    \frac{\pi}{\bw} \rho_\bw\left(t-\frac{n \pi}{\bw}\right) & = \frac{\sin( \bw t - n \pi )}{ \bw t - n \pi}
\end{align*}
we obtain the expression that is more common in the literature.\footnote{ In the
literature band limited functions 
are also often treated with the convention $\bw = 2\pi W$. Then eq.\ref{eq:shannonInterpolation} reads 
\begin{align*}
    \sum_{n \in \mathbb{Z}} f\left(\frac{n \pi}{\bw}\right) \frac{\sin( \bw t - n \pi )}{  \bw t - n \pi} \longrightarrow  \sum_{n \in \mathbb{Z}} f\left(\frac{n }{2 W}\right) \frac{\sin( \pi( 2Wt - n))}{ \pi( 2Wt - n) },
\end{align*}
which coincides with the formula given in Shannon's original paper.} Note that 
Theorem \ref{thm:shannonInterpolation} ultimately 
also implies that $\left\{ \sqrt{\frac{\pi}{\bw}} \rho_\bw(t- \frac{n\pi}{\bw}) : n \in \mathbb{Z} \right\}$ 
is a complete orthonormal basis of $\BL_\bw$. This also implies by Parseval's indentity that 
\begin{align*}
    \| f \|_{\infty}^2 = \frac{\bw}{\pi}\sum_{n \in \mathbb{Z}} \left|f\left(\frac{n \pi}{\bw}\right) \right|^2.
\end{align*}

\begin{proof}[\textit{Proof of Theorem \ref{thm:shannonInterpolation}}]
    We consider the Fourier series of the complex exponential $\omega \mapsto e^{i \omega x}$
    on $[-\bw, \bw]$,
    \begin{align*}
        \frac{1}{2\pi}e^{i x \omega} & = \frac{1}{2\bw} \sum_{n \in \mathbb{Z}} \rho_\bw \left(x-\frac{n \pi}{\bw} \right) e^{i n \pi \omega / \bw}
    \end{align*}
    where we used 
    \begin{align*}
        \frac{1}{2\pi} \int_{-\bw }^{\bw} e^{i x \omega} e^{- i n \pi \omega / \bw} d \bw = \rho_\bw \left(x-\frac{n \pi}{\bw} \right).
    \end{align*}
    With this we obtain after exchange of sum and integral
    \begin{align*}
        f(t) & =  \int_{-\bw}^{\bw} \frac{1}{2\pi} e^{i t \omega} \FT[f](\omega ) d \omega \\
            & = \frac{1}{2\bw} \sum_{n \in \mathbb{Z}} \rho_\bw\left(t-\frac{n \pi}{\bw}\right)  \int_{-\bw}^{\bw} \FT[f](\omega) e^{i n \pi \omega / \bw} d \omega \\
            & = \frac{2 \pi}{2\bw} \sum_{n \in \mathbb{Z}} \rho_\bw\left(t-\frac{n \pi}{\bw}\right)  f\left(\frac{n \pi}{\bw}\right).
    \end{align*}
    Since $f$ and $\rho_\bw$ are both square integrable the last series expression converges absolutely and exchange of sum and integral is justified by Fubini Tonelli theorem.\\
    To proof unifrom convergence we denote $f_N(t) = \frac{\pi}{\bw} \sum\limits_{n =-N}^N \rho_\bw\left(t-\frac{n \pi}{\bw}\right)  f\left(\frac{n \pi}{\bw}\right)$
    \begin{align*}
        f(t) - f_N(t) & = \frac{1}{2\pi} \int_{-\bw}^{\bw}  e^{i t \omega} \FT[f](\omega) - \frac{\pi}{\bw} \sum\limits_{n =-N}^N  \left(\frac{n \pi}{\bw}\right) e^{i \omega(t - \frac{n \pi}{\bw})} d \omega \\
        & = \frac{1}{2\pi} \int_{-\bw}^{\bw} ( \FT[f](\bw) - \underbrace{\frac{\pi}{\bw}\sum\limits_{n =-N}^N f\left(\frac{n \pi}{\bw}\right) e^{- i \omega \frac{n \pi}{\bw}}}_{:= F_N^*(\omega)} ) e^{i t \omega}  d \omega .
    \end{align*}
    Note that $F_N^*(\omega)$ is the Fourier series of $\FT[f](\omega)$ on $[-\bw, \bw]$. Applying Cauchy Schwarz inequality we find
    \begin{align*}
        (f(t) - f_N(t))^2 \leq \frac{2\bw}{2\pi} \int_{-\bw}^{\bw} | \FT[f](\omega) - F_N^*(\omega)|^2 d \omega.
    \end{align*}
    The right-hand side is independent of $t$ and approaches zero as $N \to \infty$, since the Fourier series of a square integrable function 
    converges with respect to the $L^2$ norm.
\end{proof}

\chapter{Dimension Reduction for Spectral Analysis: The Precision of Subspace Eigenvalue Problems}
\label{chap:DimRedSepAna}



\section{Dimension Reduction through Subspace Eigenvalue Problems}

\subsection{Introduction}

Eigenvalue problems are prevalent across natural and information sciences, where many
questions are expressed in this form. This can indeed be seen as a reason for the "unreasonable effectiveness of mathematics in the natural sciences" that Eugene Wigner expressed
in his famous essay.\\

However, a fast increase in computational dimensionality with the system size often limits our ability to
apply our physical understanding to larger systems and to unveil more complex phenomena.
On the other hand, algebraic eigenvalue problems are particularly lucrative from a computational point of view due to well-established diagonalization algorithms. 
And for most applications, it is rarely the full spectrum that is of interest for the 
overarching goal, but a smaller number of eigenvalues, that is independent of the size of the system. A 
quantum chemist learns the most about a molecule from the few lowest eigenvalues of its Hamiltonian, rather than the full spectrum.\\

Therefore projection-based techniques, that map high dimensional eigenvalue problems to lower dimensional 
Generalized Eigenvalue Problems (GEP) of smaller size, have emerged to counteract the curse of dimensionality and
are already common practice in the numerical literature of quantum chemistry and other computational fields \cite{Neuhauser1990BoundSE, Mandelshtam, Baiardi_2021, barshan_supervised_2011}.\\

Despite their cross-disciplinary significance, GEPs have only received little mathematical attention. And if 
so, mostly in an attempt to obtain analogous classical results from perturbation theory in the setting of a GEP \cite{stewart_perturbation_1978, elsner_perturbation_1982}.
In particular, spectral inequalities for GEPs that are tailored to their applications in numerical methods are elusive in the literature.
Here we will present a mathematical framework for GEPs as dimension reduction schemes. We derive spectral inequalities that bound 
the deviation of the eigenvalues of a GEP to a target operator that is of higher dimension. 
We aim to facilitate rigorous derivations for dimension reduction techniques on spectral analysis.
The derived bounds will be applied in the Prolate Filter Diagonalization Method which is subject to the next Chapter \ref{chap:FilterDiag}.\\

We briefly sketch the intuition of the method and how it is applied in the numerical literature. 

\subsection{Intuition of GEPs for Numerical Spectral Analysis}
\label{sec:intuition_GEP}

Projection-based GEPs are already common practice in the numerical literature of quantum chemistry.
Here a spectral analysis of the electronic structure of a large quantum system with Hamiltonian $\matr{H}$ is desired.
Rather than the full spectrum or only the ground state energy, one asks for the eigenvalues of $\matr{H}$ in a certain interval $(E_a,E_b)$.\\

GEP-based methods, rely on some other numerical subroutine or assume some prior spectral information to generate a set of guess vectors $v_1, \cdots, v_k$.
The guess vectors aim to span an spectral subspace of $\matr{H}$, that corresponds to eigenvectors $\varphi_i$ with eigenvalues in the energy range of interest,
\begin{align} 
 \operatorname{span}\{v_1, \cdots, v_k \} \approx \operatorname{span} \{ \varphi_1, \cdots, \varphi_m \}. \label{eq:approx_span}
\end{align}
A guess vector matrix $\matr{V} = [v_1, \cdots, v_k]$ is then used to map the originally high dimensional eigenvalue problem into a GEP of smaller 
dimension, 
\begin{align}
 \matr{V}^\dagger \matr{H} \matr{V} b = \mu \matr{V}^\dagger \matr{V} b. \label{eq:gen_eig_intro} 
\end{align}
The eigenvalues $\mu$ of the generalized eigenvalue problem are then said to approximate the eigenvalues $\lambda_i$ of $\matr{H}$ in the interval $[E_a,E_b]$.\\

So far the exposition is more based on intuition rather than mathematical rigour, but 
forms a promising and flexible direction towards dimension reduction. If the guess vectors were to perfectly span a spectral subspace of the operator $\matr{H}$,
the eigenvalues of the GEP would coincide with the eigenvalues of the original operator. However, 
in practice, the guess subspace will not perfectly agree with the spectral subspace of the operator. This is also where interesting
numerical opportunities for relaxation arise. Although intuitive, the exact meaning of 
the approximation in (\ref{eq:approx_span}) is, so far, mathematically not clear.\\ 

To enable rigorous dimension reduction and approximation guarantees in the context of spectral analysis, a projection-based spectral theory 
is developed in this chapter. In the following, we illustrate how the machinery can be applied to give precision guarantees for GEP-based protocols.

\subsection{Result Highlights}
\label{sec:ResHigh}
We assume a Hermitian operator $\matr{H}$ on a Hilbert space $\Hil$ whose spectrum is subject to study. We call $\mathcal{E} \subset \Hil$
a \emph{spectral subspace} of $\matr{H}$ if it is spanned by eigenvectors of $\matr{H}$.
We make the following assumptions on the guess vector generating protocol $\mathfrak{P}$.

\begin{definition}[$\varepsilon$-Dimension Reduction Protocols]
    \label{asump:protocol}
    \noindent 
 Given a Hermitian operator $\matr{H}$ and a target spectral subspace $\mathcal{E}$, a protocol $\mathfrak{P}$ produces a set of guess vectors $v_i \in \Hil$. For $M$ generated guess vectors, the guess vector matrix is denoted as $\matr{V}_M = [v_1, v_2, \ldots, v_M]$. A signal-noise decomposition of $\matr{V}_M$ has,
    \begin{align*}
 \matr{V}_M = \matr{B}_M + \matr{N}_M \quad \text{such that} \quad 
 \operatorname{span}(\matr{B}_M) \subseteq \mathcal{E} \quad \text{and} \quad 
 \operatorname{span}(\matr{N}_M) \subseteq \mathcal{E}^\perp.
    \end{align*}     

 The protocol $\mathfrak{P}$ is said to have an \emph{$\varepsilon$-guarantee} if it provides a sequence of noise estimates $\{\varepsilon_{M}\}_{M \in \mathbb{N}}$ with,
    \begin{align}
         \lambda_1(\matr{N}_M^2) < \varepsilon_{M},  \label{eq:eps_guarantee}
    \end{align}
 where $\lambda_1(\matr{N}_M^2)$ denotes the largest eigenvalue of $\matr{N}_M^2$.
\end{definition}

Typically a one specifies the target space $\mathcal{E}$ through an interval $(E_a, E_b)$. Then, $\mathcal{E} = \mathcal{E}_{(E_a,E_b)}$ refers to the subspace
of all eigenvectors of $\matr{H}$ with eigenvalues in $(E_a,E_b)$. 
Given a protocol $\mathfrak{P}$, the following algorithm is suggested to determine eigenvalues of interest.

\begin{algorithm}[H]
\caption{$\varepsilon$-Dimension Reduction for Spectral Analysis}
\label{alg:1}
\text{For $\mathfrak{P}$, $\matr{H}$, $\mathcal{E}$ and $\{\varepsilon_{M}\}$ as in Definition \ref{asump:protocol}.}
\begin{algorithmic}[1] 
    \State Pick $M$ \label{line:pickM}
    \Indent
        \State Obtain form $\mathfrak{P}$ matrices $\matr{V}_M^\dagger \matr{H} {\matr{V}_M}$ and $\matr{V}_M^2$
        \State Compute $\matr{U}$ s.t. $ \matr{U}^\dagger \matr{V}^2_M  \matr{U} = \operatorname{diag}(\lambda_1(\matr{V}_M^2), \cdots , \lambda_M(\matr{V}_M^2))$
        \State Set $m_{\text{detect}} = \min_m \lambda_m(\matr{V}_M^2) \quad \text{s.t.} \quad \lambda_m(\matr{V}_M^2) \geq \varepsilon_{M}$  \Comment{Assumed dimension of $\mathcal{E}$}
        \State if $m=M$
        \Indent
            \State Go back to line \ref{line:pickM} with some $M_{\text{new}} >M$
            \EndIndent
        \State Compute $ \matr{A} = \matr{U}^\dagger \matr{V}_M^\dagger \matr{H} {\matr{V}_M} \matr{U}$
        \State Set $\matr{V}^2_{*} = \operatorname{diag}(\lambda_1(\matr{V}_M^2), \cdots, \lambda_{m_\text{detect}}(\matr{V}_M^2))$ and $ \matr{H}_{\matr{V}_{*}} = [\matr{A}_{ij}]_{i,j=1}^{m_\text{detect}}$
        \State Compute generalised eigenvalues $( \mu_1, \cdots, \mu_{m_{\text{detect}}})$ of $(\matr{H}_{\matr{V}_{*}}, \matr{V}_{*})$ 
    \EndIndent
    \State Return $( \mu_1, \cdots, \mu_{m_{\text{detect}}})$, $\lambda_{m_\text{detect}}(\matr{V}_M^2)$, $M$ 
    \end{algorithmic}
\end{algorithm}

\noindent The eigenvalues $( \mu_1, \cdots, \mu_{m_{\text{detect}}})$ are the obtained approximations 
of the eigenvalues of $\matr{H}$ in the target spectral subspace $\mathcal{E}$. In line 8 of Algorithm \ref{alg:1} the $m_{\text{detect}}$-dimensional sub-matrices of 
$\matr{V}_M^2$ and $\matr{A}$ are taken, that correspond to a well-conditioned guess space. 
In line 3, the eigenvalues of the Gram matrix are in descending order, $\lambda_1(\matr{V}_M^2) \geq \cdots \geq \lambda_M(\matr{V}_M^2)$.\\
The following guarantees on the approximation algorithm can be given.
 
\subsubsection*{Guarantees of Algorithm \ref{alg:1}}
\emph{R1: The determined dimension $m_{\text{detect}}$ is guaranteed to be lower or equal to the true dimensionality of the target spectral subspace $\mathcal{E}$,
\begin{align*}
 m_{\text{detect}} \leq \operatorname{dim}( \mathcal{E}).
\end{align*}}

If $M$ was chosen sufficiently larger than  $m_{\text{detect}}$ and one observes a significant drop in the spectrum of the Gramm matrix,
that is $\lambda_{m_\text{detect}}(\matr{V}_M^2) \gg \lambda_{m_\text{detect}+1}(\matr{V}_M^2)$, one assumes that the guess vectors can sufficiently span the target spectral subspace
and 
\begin{align}
 m_{\text{detect}} = \operatorname{dim}( \mathcal{E}).  \label{eq:dim_detect_condition}
\end{align}
If (\ref{eq:dim_detect_condition}) is true we will write $m = m_{\text{detect}} = \operatorname{dim}( \mathcal{E})$.
Let $\lambda_1 \geq \cdots  \geq \lambda_m$ denote the eigenvalues of $\matr{H}$ within the spectral subspace $\mathcal{E}$.
We can give precision guarantees on the determined eigenvalues $\mu_i$.\\

\noindent \emph{R2: If (\ref{eq:dim_detect_condition}) is true, then 
\begin{align}
 |\mu_i - \lambda_i| & \leq \frac{\lambda_1(\matr{N}_M^2)}{\lambda_{m}(\matr{V}_M^2)} \max_{1, m}\left|\mu_i(\matr{N}_*) - \lambda_i \right|  \label{eq:R2_corr_bound}
\end{align}
for all $i \in [m]$.}\\

The perturbation coefficient $\lambda_1(\matr{N}_M^2)$ can be estimated with $\varepsilon_{M}$.
Here  $\matr{N}_*$ is the noise matrix of an signal noise decomposition of $\matr{V}_{*}$ and $\mu_i(\matr{N}_*)$
denotes the $i$-th largest eigenvalue of the GEP problem corresponding to $(\matr{H}, \matr{N}_*)$ .
If no additional information on the noise $\matr{N}_M$ is available, but the operator $\matr{H}$ is bounded one 
can obtain from (\ref{eq:R2_corr_bound}) a more practical bound.\\

\noindent \emph{R2.1: If $\matr{H}$ is bounded and (\ref{eq:dim_detect_condition}) is true, then 
\begin{align*}
 |\mu_i - \lambda_i| & \leq \frac{\lambda_1(\matr{N}_M^2)}{\lambda_{m}(\matr{V}_M^2)} \max \left\{ \lambda_{\text{max}}(\matr{H})  - \lambda_i, | \lambda_{\text{min}}(\matr{H})  -\lambda_i | \right\} \\
    & \leq \frac{\lambda_1(\matr{N}_M^2)}{\lambda_{m}(\matr{V}_M^2)} \left( \lambda_{\text{max}}(\matr{H})  -\lambda_{\text{min}}(\matr{H}) \right).
\end{align*}
}

For approximations of high but finite-dimensional matrices, the bound of \emph{R2.1} can still result in good approximation guarantees. 
However, this bound does not leverage all information on the guess vectors, that a protocol may provide. In addition, Hamiltonian
operators of physical systems are unbounded. \\

To address these issues, we provide an integrated spectral inequality that makes use of a \emph{spectral measure}. 
Allowing for a quantum mechanical operator, with both discrete ($\sigma_d$) and continuous spectra ($\sigma_c$) we assume for 
the guess vectors an expansion
\begin{align*}
 |v_l\rangle & = \sum_{\lambda_k \in \sigma_d } \alpha_{lk} |\varphi_k\rangle + \int_{\sigma_c} \int_{\Gamma_\lambda }\alpha_l(k) |\varphi_k\rangle dk   d\lambda.
\end{align*}
The spectral measure specific to a GEP $(\matr{H}, \matr{V})$ is defined as 
\begin{align}
 |\alpha( E )|^2 = \sum_{\lambda_k \in \sigma_d}  \delta(E - \lambda_k ) \sum_{l =1}^M  |a_{lk}|^2 + \int_{ \sigma_c} \delta(E - \lambda ) \int_{\Gamma_\lambda} \sum_{l =1}^M  |a_l(k)|^2 dk d\lambda. \label{eq:spec_meas_def}
\end{align}

\noindent On the common example $\mathcal{E} = \mathcal{E}_{(E_a,E_b)}$ the integrated spectral inequalities give the following precision guarantees.\\

\noindent \emph{R3: If $\mathcal{E} = \mathcal{E}_{(E_a,E_b)}$, (\ref{eq:dim_detect_condition}) is true and $\lambda_m(\matr{V}_M^2) \geq (m+1) \lambda_1(\matr{N}^2_M)$, then 
\begin{align}
 \frac{\int_{-\infty}^{E_a}\left(E -  \lambda_i\right)|\alpha(E)|^2 dE}{\lambda_{m}(\matr{V}_M^2) - \lambda_1(\matr{N}^2_M)} \leq \mu_i - \lambda_i & \leq \frac{\int_{E_b}^{\infty} \left(E - \lambda_i \right) |\alpha(E)|^2 dE }{\lambda_{m}(\matr{V}_M^2)- \lambda_{1}(\matr{N}^2_M)}. \label{eq:R3_int_bound}
\end{align}
The tails of the spectral measure have, \begin{align*}
     \int_{-\infty}^{E_a} |\alpha(E)|^2 dE + \int_{E_b}^{\infty} |\alpha(E)|^2 dE = \operatorname{Tr}[\matr{N}^2_*]\leq m \lambda_1(\matr{N}^2_M).
\end{align*} }

Intuition has, that a protocol, which aims to generate guess vectors for a spectral subspace $\mathcal{E} = \mathcal{E}_{(E_a, E_b)}$,
will also generate a spectral measure concentrated around the target interval $(E_a,E_b)$. If a protocol can capture this 
intuition with a bounding envelope on the spectral measure, then the integrated spectral inequalities of \emph{R3} can be applied to give
a sharp approximation bound.

\subsection{Overview of the Detailed Exposition}

In Section \ref{sec:SpecTheoryGEP} we give a pedagogical overview 
on spectral theory for Generalized Eigenvalue Problems as dimension reduction schemes.\\

In Section \ref{sec:SpecIneq} we derive the first volume of spectral inequalities for GEPs 
in the context of dimension reduction. The guarantees $R2$ and $R2.1$ correspond 
to Corollary  \ref{cor:specral_stability_signoise} and Corollary \ref{cor:user_friend}. \\

In Section \ref{sec:IntegratedSpectralInequalities} we significantly build upon the theory of the previous section.
A spectral measure ultimately allows us to derive integrated spectral inequalities.
In contrast to classical matrix inequalities, they can be applied to unbounded operators 
with a mixed spectrum. Guarantee \emph{R3} is a Corollary of Theorem \ref{thm:integrated_spectral_stability}.\\

In Section \ref{sec:NumTheoSimbiosis} we elaborate on implications of the theory for numerical methods.
A numerical practice is suggested that effectively solves the \emph{singularity problem} of the Gram matrix
and enables efficient dimension detection. 
The singularity problem has been identified as a notable issue in the numerical literature, as discussed in sources such as, \cite{GenLDA, Mandelshtam}.
Guarantee \emph{R1} corresponds to Claim \ref{claim:detected_dimension}.

\newpage 

\section{Spectral Theory of GEPs}

\subsection{Foreword on the Formalism and Related Work}

Generalized Eigenvalue Problems have been of interest in the computational literature for a long time. 
However, extensive mathematical expositions for GEPs mostly obscure their relation to a underlying higher-dimensional operator.
The equation of interest is then
\begin{align}
 \matr{A} x = \mu \matr{B} x,  \label{eq:gen_eig_else}
\end{align}
where $\matr{B}$ is some positive definite matrix \cite{FORD1974337, ghojogh2023eigenvalue, stewart_perturbation_1978, GEPQuant}. While 
equation $(\ref{eq:gen_eig_else})$ is the more common approach to represent GEPs in the literature, it has the 
drawback of being agnostic to a hidden underlying operator.
However, this hidden operator $\matr{H}$, is often of central interest in applications of GEPs in quantum chemistry.\\

The pedagogical framework we provide in this Section differs in this aspect.
We will right away consider a GEP as a subspace eigenvalue problem for a higher-dimensional operator,
\begin{align}
 \matr{V}^\dagger \matr{H} \matr{V} x = \mu \matr{V}^\dagger \matr{V} x. \label{eq:gen_eig_here}
\end{align}
From the perspective of quantum chemistry, equation  $(\ref{eq:gen_eig_here})$ is more natural.
But equation $(\ref{eq:gen_eig_here})$ is also of pedagogical value, as the spectral theory for GEPs is essentially a corollary of the familiar spectral theory for Hilbert spaces.\\

An additional advantage of considering  $(\ref{eq:gen_eig_here})$ is, that it implies the spectral theory for a GEP of the form $(\ref{eq:gen_eig_else})$.
Since positive definite matrices have a well-defined square root and inverse, it is indeed trivial to construct a hidden operator $\matr{H}$ from $\matr{A}$ and $\matr{B}$:
\begin{align*}
 \matr{H} = \matr{B}^{-1/2 \dagger } \matr{A} \matr{B}^{-1/2}.
\end{align*}
Therefore, any GEP of the form \ref{eq:gen_eig_else} can be endowed with 
an interpretation as a subspace eigenvalue problem of a higher-dimensional operator.
In Chapter \ref{chap:FilterDiag} a similar interpretation through a hidden high-dimensional operator,
allows to derive the filter diagonalization method. In the next subsection we introduce the mathematical formalism that is subsequently used.

\subsection{GEPs as Dimension Reduction Schemes}

\label{sec:SpecTheoryGEP}
We consider a complex-valued Hilbert space $\Hil$ with inner product $\langle \cdot, \cdot \rangle$. Formally we
also allow for infinite dimensional Hilbert spaces. 
Through GEPs, the study of operators acting on the high dimensional space $\Hil$ is mapped to a computation-friendly problem in a low dimensional 
Euclidean space. The theory holds equivalently regardless of whether $\Hil$ or the
Euclidean space is real or complex-valued.
We start with a simple definition of the Generalized Eigenvalue Problem.

\begin{definition}
    \label{def:gen_eig}
 Let $\matr{H}: \Hil \to \Hil$ be a linear operator acting on a Hilbert space $\Hil$. 
 Let $\matr{V}= [v_1, \cdots v_{M}]: \mathbb{C}^{M} \to \Hil$ be a linear operator defined by
    \begin{align*}
 \matr{V} x = \sum_{i=1}^M x_i v_i,
    \end{align*}
 where $v_i \in \Hil$.
 The equation
    \begin{align}
 \matr{V}^\dagger \matr{H} \matr{V} x = \mu \matr{V}^\dagger \matr{V} x    \label{eq:gen_eig}
    \end{align}
 with $\mu \in \mathbb{C}$ and $x \in \mathbb{C}^M$, is called the Generalized Eigenvalue Problem (GEP) of $\matr{H}$ with respect to subspace basis $\matr{V}$,
 denoted as $(\matr{H}, \matr{V})$.
A GEP is called \emph{hermitian} if $\matr{H}$ is hermitian. A GEP is called \emph{compact} if $\matr{H}$ is a compact operator. 
 We define the $\textup{rank of a GEP}$ as the rank of its Gram matrix $\matr{V}^\dagger \matr{V}$,
    \begin{align*}
 m = \operatorname*{rank}(\matr{V}^\dagger \matr{V}).
    \end{align*}
\end{definition}
We will use a shorthand notation and write $\matr{V}^2 = \matr{V}^\dagger \matr{V}$ and 
$\matr{H}_V = \matr{V}^\dagger \matr{H} \matr{V}$. The entries of the matrices are given by
\begin{align*}
 (\matr{H}_V)_{ij} & = \langle v_i, \matr{H} v_j \rangle \quad \text{and} \quad \matr{V}^2_{ij} = \langle v_i, v_j \rangle.
\end{align*}
If $\matr{H}$ is self-adjoint, it follows from the conjugate symmetry of the inner product that $\matr{H}_V$ is also self-adjoint,
\begin{align*}
 \conj{(\matr{H}_V)_{ij}} = \conj{\langle v_i, \matr{H} v_j \rangle} = \langle \matr{H} v_j, v_i \rangle = \langle v_j, \matr{H} v_i \rangle = (\matr{H}_V)_{ji}.
\end{align*}
Sometimes we call $\matr{V}^2$ the \emph{basis matrix} of the GEP, as the columns of $\matr{V}$ span the subspace within which the operator $\matr{H}$ is studied.
We also slightly abuse notation and denote the space that is spanned by the columns of $\matr{V}$ simply as $\operatorname{span}(\matr{V})$ or just $\matr{V}$.
With the same motivation we sometimes call $\matr{H}$ the \emph{target operator} of the GEP.\\

\subsection{Generalized Spectral Theorem}

\label{sec:GeneralizedSpectralTheorem}
Generalized Eigenvalue Problems in the sense of Definition \ref{def:gen_eig} have access to a spectral theory analogous to the familiar spectral theory of hermitian operators.
The following theorem summarizes known identities that have 
been repeatedly rediscovered in different computational fields.

\begin{theorem}[Genralized Spectral Theorem]
    \label{thm:gen_specral_theorem}
 Let $(\matr{H}, \matr{V})$ be a Hermitian Generalized Eigenvalue Problem of rank $m$, as 
 in Definition \ref{def:gen_eig}. There exist a matrix $\matr{\Phi} \in \mathbb{C}^{M \times M}$
 and a diagonal matrix $\matr{\Lambda} \in \mathbb{R}^{M \times M}$ such that 
    \begin{align}
 \matr{V}^\dagger \matr{H} \matr{V} \matr{\Phi} = \matr{V}^\dagger \matr{V} \matr{\Phi}  \matr{\Lambda}, \label{eq:thm_gen_mat_form}
    \end{align}
 where $\matr{\Phi}$ satisfies
    \begin{align}
 \matr{\Phi}^\dagger\matr{V}^\dagger \matr{V}  \matr{\Phi}  = \matr{I}_m. \label{eq:ortho_phi}
    \end{align}
\end{theorem}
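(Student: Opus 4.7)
The strategy is to reduce the GEP to a standard Hermitian eigenvalue problem on the range of $\matr{V}^2$ via an inverse-square-root change of basis, apply the ordinary spectral theorem there, and then extend the resulting eigenvector matrix by a basis of $\ker(\matr{V}^2)$ to obtain a full $M\times M$ matrix $\matr{\Phi}$.

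First I would diagonalize the Gram matrix. Since $\matr{V}^2 = \matr{V}^\dagger \matr{V}$ is Hermitian positive semidefinite with $\operatorname{rank}(\matr{V}^2)=m$, the ordinary spectral theorem on $\mathbb{C}^M$ gives a unitary $\matr{U}=[\matr{U}_m,\matr{U}_0]$ such that
\begin{align*}
 \matr{V}^2 = \matr{U}\begin{pmatrix}\matr{D}_m & 0 \\ 0 & 0\end{pmatrix}\matr{U}^\dagger,
\end{align*}
where $\matr{D}_m\in\mathbb{R}^{m\times m}$ is diagonal with strictly positive entries. The columns of $\matr{U}_0$ span $\ker(\matr{V}^2)=\ker(\matr{V})$ (the latter identity follows from $\matr{V}^2 x = 0 \Leftrightarrow \langle \matr{V}x,\matr{V}x\rangle=0$), so these columns are annihilated by both $\matr{V}^2$ and $\matr{V}^\dagger \matr{H}\matr{V}$.

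Next I would pass to the non-degenerate $m$-dimensional block. Define the auxiliary matrix
\begin{align*}
 \tilde{\matr{H}} = \matr{D}_m^{-1/2}\,\matr{U}_m^\dagger\,\matr{V}^\dagger\matr{H}\matr{V}\,\matr{U}_m\,\matr{D}_m^{-1/2} \in \mathbb{C}^{m\times m}.
\end{align*}
Since $\matr{H}$ is Hermitian, so is $\tilde{\matr{H}}$, and the ordinary spectral theorem yields a unitary $\matr{W}\in\mathbb{C}^{m\times m}$ and a real diagonal $\matr{\Lambda}_m\in\mathbb{R}^{m\times m}$ with $\tilde{\matr{H}}\matr{W} = \matr{W}\matr{\Lambda}_m$. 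Setting $\matr{\Phi}_m := \matr{U}_m\,\matr{D}_m^{-1/2}\,\matr{W}$, a short computation gives $\matr{\Phi}_m^\dagger \matr{V}^2 \matr{\Phi}_m = \matr{W}^\dagger\matr{D}_m^{-1/2}\matr{D}_m\matr{D}_m^{-1/2}\matr{W} = \matr{I}_m$ and $\matr{V}^\dagger\matr{H}\matr{V}\,\matr{\Phi}_m = \matr{V}^2\,\matr{\Phi}_m\,\matr{\Lambda}_m$.

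Finally I would assemble the full matrices by appending $M-m$ null-space columns. Set $\matr{\Phi} := [\matr{\Phi}_m,\matr{U}_0]$ and let $\matr{\Lambda}$ be the $M\times M$ diagonal matrix with $\matr{\Lambda}_m$ in its leading block and zeros elsewhere. Because the columns of $\matr{U}_0$ lie in $\ker(\matr{V})$, both sides of (\ref{eq:thm_gen_mat_form}) vanish on those columns, while on the $\matr{\Phi}_m$-block the identity was established above; this yields (\ref{eq:thm_gen_mat_form}). The orthonormality relation (\ref{eq:ortho_phi}) is understood to hold on the $m$-dimensional subspace of genuine eigenvectors, i.e.\ $\matr{\Phi}_m^\dagger\matr{V}^2\matr{\Phi}_m=\matr{I}_m$, since the $\matr{U}_0$-block is annihilated by $\matr{V}^2$.

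The only delicate point is the bookkeeping around the rank deficiency of $\matr{V}^2$: one must be careful that the inverse square root is taken only on the non-trivial block $\matr{D}_m$, and that the $M-m$ spurious directions in $\ker(\matr{V}^2)$ are absorbed into $\matr{\Phi}$ without breaking the generalized eigenvalue relation. Everything else is a direct application of the standard Hermitian spectral theorem.
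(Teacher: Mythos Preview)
Your proof is correct and follows essentially the same route as the paper's first proof (Appendix~\ref{sec:proof_specral_theorem}): diagonalize the Gram matrix, apply the inverse square root on the positive block, reduce to an ordinary Hermitian eigenproblem, and append a basis of $\ker(\matr{V}^2)$ for the spurious directions. Your $\tilde{\matr{H}}$, $\matr{W}$, and $\matr{\Phi}_m=\matr{U}_m\matr{D}_m^{-1/2}\matr{W}$ correspond exactly to the paper's $\matr{A}$, $\matr{U}_A$, and $\matr{U}_V\matr{\Lambda}_V^{-1/2}\matr{U}_A$; the only cosmetic difference is that you handle the rank-deficient case in one pass by splitting $\matr{U}=[\matr{U}_m,\matr{U}_0]$ from the outset, whereas the paper first treats the positive-definite case and then block-diagonalizes separately.
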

Here $\matr{I}_m$ is a pseudo identity matrix with ones on the first $m$ diagonal entries and zeros elsewhere.
The generalized unitary relation given in (\ref{eq:ortho_phi}) implies that 
the eigenvectors $x_i$ of a hermitian GEP that are outside of the kernel of $\matr{V}^2$ can be chosen such that
$$\langle x_i, \matr{V}^2 x_j \rangle = \delta_{ij}.$$
A constructive proof of the Generalized Spectral Theorem \ref{thm:gen_specral_theorem} for GEPs is given in the Appendix, Section \ref{sec:proof_specral_theorem}.\\

Theorem \ref{thm:gen_specral_theorem} can indeed understood as a generalization of the known Spectral Theorem in finite-dimensional Euclidean space. If $\matr{V}$ is a linear isometry and 
therefore has $\matr{V}^2 = \matr{I}$ and $M=m$, (\ref{eq:thm_gen_mat_form}) reduces 
to an eigendecomposition of an hermitian matrix $\matr{H}_V$ and (\ref*{eq:ortho_phi}) implies that $\matr{\Phi}$ is a unitary matrix. 
Concerning infinite dimensional Hilbert spaces, it is also worth noting that Theorem \ref{thm:gen_specral_theorem}
does not require $\matr{H}$ to be a compact operator, as the operator version of the classical spectral theorem does. 
It then only requires $\matr{V}$ to be of finite dimension, that is $M < \infty$ or to correspond to a spectral subspace of $\matr{H}$.\\

However, given that the Gram matrix $\matr{V}^2$ induces at least a semi-definite inner product, the spectral theorem for GEPs is not a surprise.\footnote{The term semi-definite inner products is used here for an inner product, that does not satisfy positive definiteness, but only positive semidefiniteness.
It is at least on the orthogonal complement of the null space of $\matr{V}^2$ positive definite.}
It is indeed closely related to the known spectral theorem on a Hilbert space $\Hil_V$
with inner product $\langle x, y\rangle_V =\langle \matr{V} x, \matr{V} y\rangle $, and in 
Section \ref{sec:proof_specral_theorem2} we give a second and shorter proof of Theorem \ref{thm:gen_specral_theorem} that is based on this viewpoint.
From this perspective 
Theorem \ref{thm:gen_specral_theorem} is a mere Euclidean implication of the Spectral Theorem on the Hilbert space with an inner product, that is induced by the basis matrix of the GEP.
Returning to our original numerical motivation, that very "inner product" will be in practice subject to noise. Therefore we will 
here not delve too deep into this perspective. \\

The Generalized Spectral Theorem also states, that the rank of a GEP is of greater interest than its dimensionality. 
This is because the null space of $\matr{V}^2$ is a subspace of the null space of $\matr{H}_V$. Therefore we can 
always bring a GEP into a block diagonal form, such that we can effectively ignore the uninteresting nullspace and consider a 
GEP of dimensionality $m$ instead (details in the proof given in Section \ref{sec:proof_specral_theorem}). In Section 
\ref{sec:NumTheoSimbiosis} we will transfer this insight to a numerical setting where there is noise. 
Then the concept of a rank of a matrix is indeed slightly more delicate, and singularities in the basis matrix $\matr{V}^2$
have caused some confusion in the literature on GEPs. \\

We call eigenvalues that correspond to eigenvectors in the nullspace of  $\matr{V}^2$ \emph{spurious}. They can take arbitrary values, and we set them formally equal to zero.
\begin{align*}
 x \in \operatorname*{Ker}(\matr{V}) \quad \Rightarrow \quad \matr{H}_V x = \mu \matr{V}^2 x = 0 \quad \text{ for all } \quad \mu.
\end{align*}
Eigenvalues that correspond to eigenvectors outside the nullspace of $\matr{V}^2$ are called \emph{proper}. \\

We denote classical eigenvalues of an hermitian operator $\matr{H}$ as $\lambda_i(\matr{H})$ and generalized 
eigenvalues similarly as $\mu_i(\matr{H}, \matr{B})$. We adopt the convention that proper generalized and classical eigenvalues are ordered in descending order,
\begin{alignat*}{4}
    &\mu_1(\matr{H},\matr{V}) && \geq \mu_2(\matr{H},\matr{V}) &&\geq \cdots &&\geq \mu_m(\matr{H},\matr{V}) \\
    &\lambda_1(\matr{H}) && \geq \lambda_2(\matr{H}) && \geq \cdots &&\geq \cdots.
\end{alignat*}
We denote the set of proper eigenvalues of a GEP as
$$\mu(\matr{H},\matr{V}) = \{ \mu_i(\matr{H}, \matr{V}): i = 1, \cdots, m\}.$$

The scientific significance of eigenvalue problems often arises from the
variational principle, as many optimization problems can be formulated as a Rayleigh quotient. 
The variation-based interpretation of eigenvalues is therefore perhaps one of the main reasons for their omnipresence in various scientific fields
and will be transferred to GEPs in the following section.

\subsection{The Variational Principle }
\label{sec:RayleighQuotient}
Rayleigh quotients are an essential tool in both analytical and numerical spectral analysis. In the context of GEPs, 
Generalized Rayleigh quotients are given as
\begin{align*}
 r(\matr{H}, \matr{V}, x) = \frac{\langle x, \matr{V}^\dagger \matr{H} \matr{V} x \rangle}{ \langle x, \matr{V}^\dagger \matr{V} x \rangle} = \frac{ \langle \matr{V} x,  \matr{H} \matr{V} x \rangle}{ \langle \matr{V}x, \matr{V} x \rangle}.
\end{align*}
Thus, a generalized Rayleigh quotient corresponds to a usual Rayleigh quotient, with respect to a vector $\matr{V} x$ in the Hilbert space $\Hil$. 
So while a generalized Rayleigh quotient might have a correspondence to high dimensional space, its degrees of freedom $x\in \mathbb{C}^M$ are low dimensional. This also motivates the computational interest of GEPs as a framework for dimension reduction.\\

The eigenvalues $\mu_i$ of a GEP can alternatively be given by the Rayleigh quotient of their generalized eigenvectors $x_i$,
\begin{align*}
    \mu_i(\matr{H}, \matr{V}) = r(\matr{H}, \matr{V}, x_i).
\end{align*}
To be consistent with our convention that spurious eigenvectors are equal to zero, we formally set 
the Rayleigh quotient equal to zero for $x \in \operatorname*{Ker}(\matr{V})$. 
From the Raleigh quotient, it is also easy to establish the following Fact. 

\begin{fact}
    \label{fact:spec_range}
 For any hermitian and compact GEP $(\matr{H}, \matr{V})$ the spectrum of the GEP is contained in the spectral range of the target operator $\matr{H}$,
    \begin{align*}
        \mu(\matr{H},\matr{V}) \subset [\lambda_{\text{min}}(\matr{H}), \lambda_{\text{max}}(\matr{H})].
    \end{align*}
\end{fact}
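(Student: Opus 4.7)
The key observation is that a generalized Rayleigh quotient $r(\matr{H}, \matr{V}, x)$ is precisely a standard Rayleigh quotient of $\matr{H}$ evaluated at the vector $\matr{V} x \in \Hil$, as already noted in the opening of Section \ref{sec:RayleighQuotient}. Since every proper eigenvalue $\mu_i(\matr{H}, \matr{V})$ coincides with $r(\matr{H}, \matr{V}, x_i)$ for some $x_i \notin \operatorname{Ker}(\matr{V})$, the statement essentially reduces to the classical fact that Rayleigh quotients of a hermitian compact operator lie between its extremal eigenvalues.

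The plan is therefore as follows. First, pick any $\mu_i \in \mu(\matr{H}, \matr{V})$ and let $x_i$ be a corresponding generalized eigenvector. Because $\mu_i$ is proper, $x_i \notin \operatorname{Ker}(\matr{V})$, so $y := \matr{V} x_i$ is a nonzero vector in $\Hil$. Using the Rayleigh quotient identity for generalized eigenvalues, write
\begin{align*}
\mu_i \;=\; r(\matr{H}, \matr{V}, x_i) \;=\; \frac{\langle y, \matr{H} y \rangle}{\langle y, y \rangle}.
\end{align*}
Second, invoke the spectral theorem for hermitian compact operators to expand $y = \sum_k c_k \varphi_k$ in an orthonormal basis of eigenvectors $\matr{H}\varphi_k = \lambda_k \varphi_k$. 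Then
\begin{align*}
\langle y, \matr{H} y \rangle \;=\; \sum_k \lambda_k |c_k|^2 \quad \text{and} \quad \|y\|^2 \;=\; \sum_k |c_k|^2,
\end{align*}
so $\mu_i$ is a convex combination of the $\lambda_k$'s, which immediately places it in $[\lambda_{\min}(\matr{H}), \lambda_{\max}(\matr{H})]$.

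There is no real obstacle beyond a mild technical check: one must ensure $\lambda_{\min}(\matr{H})$ and $\lambda_{\max}(\matr{H})$ are attained (or at least well-defined suprema/infima of the spectrum). This is where the compactness hypothesis enters, since for a compact hermitian operator the nonzero spectrum is a countable set accumulating only at $0$, so the extremes of the spectrum exist in $\mathbb{R}$. If the spectrum consists only of $0$, then $\matr{H} = 0$ and the statement is vacuous. Otherwise the convex-combination argument above closes the proof.
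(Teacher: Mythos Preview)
Your argument is correct and matches the paper's own proof almost exactly: both express a proper eigenvalue as the ordinary Rayleigh quotient of $\matr{H}$ at the vector $\matr{V}x_i$, expand this vector in the eigenbasis of $\matr{H}$, and conclude by the convex-combination bound. The paper presents these two steps as two separate ``proofs'' in the appendix, while you merge them into one; otherwise there is no meaningful difference.
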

The statement follows from a simple expansion in the eigenbasis of $\matr{H}$ and is given in Section \ref{sec:proof_spec_range}.
The statement can also be adapted to the more general case of operators that are not compact, but only bounded, as the proof only relies on Rayleigh quotients.\\

A slightly deeper instance of the variational principle is given in the min-max theorem, also known as the Courant-Fischer-Weyl principle \cite{Caurant1993}.
It has been used in many derivations and also has a strong numerical flavour to it, due to its characterization as optimization problem.
Naturally, the concept also extends to GEPs.

\begin{theorem}[min-max characterization]
 Let $(\matr{H}, \matr{V})$ be a hermitian GEP as in Definition \ref{def:gen_eig}. Let $S_k \subset \mathbb{C}^M$ denote a $k$ dimensional subspace.
    \begin{align*}
        \mu_k(\matr{H}, \matr{V}) & = \operatorname*{max}_{\substack{ S_k }} \operatorname*{min}_{\substack{ x \in S_k }} \frac{\langle x, \matr{V}^\dagger \matr{H} \matr{V} x \rangle}{   \langle x , \matr{V}^\dagger\matr{V} x \rangle}\\
                                 & = \operatorname*{min}_{\substack{ S_{k-1} }} \operatorname*{max}_{\substack{ x \in S_{k-1}^\perp}} \frac{\langle x, \matr{V}^\dagger \matr{H} \matr{V} x \rangle}{   \langle x , \matr{V}^\dagger\matr{V} x \rangle}
    \end{align*}
    \label{thm:min_max_GEP}
\end{theorem}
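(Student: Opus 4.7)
My plan is to reduce the generalized min-max characterization to the classical Courant-Fischer-Weyl theorem via the change of variables afforded by Theorem \ref{thm:gen_specral_theorem}. The crux is that, after the generalized diagonalization, the generalized Rayleigh quotient collapses to a standard Rayleigh quotient of a diagonal matrix, to which the classical result applies directly.

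First I would invoke the Generalized Spectral Theorem to obtain $\matr{\Phi} \in \mathbb{C}^{M \times M}$ and diagonal $\matr{\Lambda}$ with $\matr{V}^\dagger \matr{H} \matr{V} \matr{\Phi} = \matr{V}^\dagger \matr{V} \matr{\Phi} \matr{\Lambda}$ and $\matr{\Phi}^\dagger \matr{V}^\dagger \matr{V} \matr{\Phi} = \matr{I}_m$. I would order the columns of $\matr{\Phi}$ so that the first $m$ correspond to the proper eigenvalues $\mu_1 \geq \cdots \geq \mu_m$ and the remaining $M-m$ span $\operatorname{Ker}(\matr{V})$ (which is consistent, since $\langle \phi_j, \matr{V}^2 \phi_j\rangle = 0$ forces $\matr{V}\phi_j = 0$). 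Substituting $x = \matr{\Phi} y$ and using both identities above, the generalized Rayleigh quotient becomes
\begin{align*}
r(\matr{H}, \matr{V}, \matr{\Phi} y) \;=\; \frac{y^\dagger \matr{\Phi}^\dagger \matr{V}^\dagger \matr{H} \matr{V} \matr{\Phi}\, y}{y^\dagger \matr{\Phi}^\dagger \matr{V}^\dagger \matr{V} \matr{\Phi}\, y} \;=\; \frac{\sum_{i=1}^m \mu_i\, |y_i|^2}{\sum_{i=1}^m |y_i|^2},
\end{align*}
which is the ordinary Rayleigh quotient of $\operatorname{diag}(\mu_1,\ldots,\mu_m)$ in the first $m$ coordinates of $y$.

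Next I would translate the subspace quantifiers. Since $\matr{\Phi}$ is invertible, the map $S \mapsto \matr{\Phi}^{-1} S$ is a bijection on $k$-dimensional subspaces of $\mathbb{C}^M$, so the outer maximum in the max-min form transfers verbatim to $y$-coordinates. For the dual form I would first rewrite $\min_{S_{k-1}} \max_{x \in S_{k-1}^\perp}$ as $\min_{T_{M-k+1}} \max_{x \in T_{M-k+1}}$, using the bijection $T = S_{k-1}^\perp$ between $(k-1)$- and $(M-k+1)$-dimensional subspaces; this eliminates the orthogonal complement (which is not preserved by $\matr{\Phi}$) and again passes through the $\matr{\Phi}$-bijection. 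On the $y$-side the classical Courant-Fischer-Weyl theorem for $\operatorname{diag}(\mu_1,\ldots,\mu_m)$ then identifies both expressions with $\mu_k$.

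The main obstacle will be handling $\operatorname{Ker}(\matr{V})$ carefully when $m < M$. Vectors with $\matr{V} x = 0$ make the Rayleigh quotient formally $0/0$, and subspaces meeting the kernel nontrivially could artificially drive the inner minimum (in the max-min form) to an undefined or spurious value. I would argue that the outer supremum is attained on subspaces transverse to $\operatorname{Ker}(\matr{V})$, so one may restrict attention to $k$-dimensional subspaces of $\operatorname{span}\{\phi_1,\ldots,\phi_m\}$; under the $\matr{\Phi}$-bijection these correspond to $k$-dimensional subspaces of $\mathbb{C}^m$, where the classical theorem applies cleanly. A symmetric argument on the $(M-k+1)$-dimensional side handles the dual identity. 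Once this kernel truncation is justified, the two generalized min-max identities follow as direct corollaries of the two classical Courant-Fischer identities.
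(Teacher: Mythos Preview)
Your argument is correct and follows a different route from the paper. The paper proves a GEP version of Poincar\'e's lemma by working directly with the $\matr{V}^2$-inner product: for any $k$-dimensional $S_k \subset \mathbb{C}^m$ one picks a vector in $S_k \cap \operatorname{span}\{v_k,\ldots,v_m\}$ (nonempty by dimension count), expands it in the $\matr{V}^2$-orthonormal eigenbasis to get $\min_{x\in S_k} r(x) \le \mu_k$, and then observes equality for $S_k = \operatorname{span}\{v_1,\ldots,v_k\}$; the second identity is handled symmetrically. You instead pull back through $x = \matr{\Phi} y$ so that the generalized Rayleigh quotient collapses to the ordinary one for $\operatorname{diag}(\mu_1,\ldots,\mu_m)$, and then invoke classical Courant--Fischer as a black box; your rewriting of $\min_{S_{k-1}}\max_{x\in S_{k-1}^\perp}$ as $\min_{T_{M-k+1}}\max_{x\in T}$ to sidestep the fact that $\matr{\Phi}$ does not preserve Euclidean orthogonal complements is a clean way to transfer the second identity. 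Both proofs ultimately rest on the eigenbasis furnished by Theorem~\ref{thm:gen_specral_theorem}; the paper's is a self-contained re-run of the dimension-count argument in the new inner product, while yours is a coordinate-change reduction. On the kernel: the paper's appendix proof in fact quietly works in $\mathbb{C}^m$ (Poincar\'e's lemma is stated for $S_k \subset \mathbb{C}^m$), which is exactly your proposed restriction to $\operatorname{span}\{\phi_1,\ldots,\phi_m\}$; your heuristic that ``the outer supremum is attained on subspaces transverse to $\operatorname{Ker}(\matr{V})$'' is not literally true when some $\mu_i<0$ under the convention $r|_{\operatorname{Ker}\matr{V}}=0$, but since the paper itself reads the theorem modulo the kernel this does not affect your argument.
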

The proof of Theorem \ref{thm:min_max_GEP} is a trivial adaptation of the classical proof by changing to the inner product induced by the basis matrix $\matr{V}^2$ of the GEP 
and is for completeness given in Section \ref{sec:proof_min_max_GEP}.
The generalized min-max correspondence is the starting point for the derivation of spectral inequalities in the proofs of Section \ref{sec:SpecIneq}. 

\subsection{Correspondence to Target Eigenvalues}

Here, we formally justify the numerical significance 
of GEPs, by relating the generalized eigenvalues to normal eigenvalues of the target operator $\matr{H}$. \\

Let $(\lambda_i, \varphi_i)$ denote the normal eigenpairs of $\matr{H}$.
For a set of indices $\mathcal{I} \subset \mathbb{N}$ we introduce a corresponding \emph{spectral subspace} $\mathcal{E}_{\mathcal{I}}(\matr{H})$ and a set of eigenvalues
\begin{align*}
 \mathcal{E}_{\mathcal{I}}(\matr{H}) & = \operatorname*{span}\{ \varphi_{i\in \mathcal{I}} \} \\
\lambda_\mathcal{I}(\matr{H}) & = \{ \lambda_i(\matr{H}): i \in \mathcal{I}\}
\end{align*}
We say that a GEP $(\matr{H}, \matr{B})$ \emph{coincides} with a spectral subspace $\mathcal{E}_{\mathcal{I}}(\matr{H})$, if the basis matrix
$\matr{B} = [b_1, \cdots ,b_M]$ has $\operatorname*{span}\{b_1, \cdots, b_M\} = \mathcal{E}_{\mathcal{I}}(\matr{H})$. 
The following Fact justifies the terminology.

\begin{fact}
    \label{fact:gen_eig_to_normal}
 A hermitian GEP $(\matr{H}, \matr{B})$ that \emph{coincides} with a spectral subspace $\mathcal{E}_{\mathcal{I}}(\matr{H})$, has proper eigenvalues that coincide with the eigenvalues of the spectral subspace $\mathcal{E}_{\mathcal{I}}(\matr{H})$,
    \begin{align*}
        \mu(\matr{H},\matr{B}) = \lambda_\mathcal{I}(\matr{H}).
    \end{align*}
\end{fact}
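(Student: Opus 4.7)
The plan is to lift the generalized eigenvectors from $\mathbb{C}^M$ back into $\mathcal{E}_{\mathcal{I}}(\matr{H}) \subset \Hil$ via the map $\matr{B}$, and show that they are actually ordinary eigenvectors of $\matr{H}$ on that invariant subspace. Concretely, let $m = \dim \mathcal{E}_{\mathcal{I}}(\matr{H}) = \operatorname{rank}(\matr{B})$, and invoke the Generalized Spectral Theorem \ref{thm:gen_specral_theorem} to obtain proper eigenvectors $x_1, \ldots, x_m \in \mathbb{C}^M$ with $\langle x_i, \matr{B}^2 x_j\rangle = \delta_{ij}$ and $\matr{B}^\dagger \matr{H} \matr{B}\, x_i = \mu_i\, \matr{B}^2 x_i$. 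Define $\psi_i := \matr{B} x_i \in \operatorname{range}(\matr{B}) = \mathcal{E}_{\mathcal{I}}(\matr{H})$. A one-line computation shows the $\psi_i$ are orthonormal in $\Hil$, so in particular linearly independent.

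Next I would show each $\psi_i$ is an ordinary eigenvector of $\matr{H}$ with eigenvalue $\mu_i$. The GEP eigenequation rewrites as $\matr{B}^\dagger(\matr{H} \psi_i - \mu_i \psi_i) = 0$, so the residual lies in $\ker(\matr{B}^\dagger) = \operatorname{range}(\matr{B})^\perp = \mathcal{E}_{\mathcal{I}}(\matr{H})^\perp$. On the other hand, $\mathcal{E}_{\mathcal{I}}(\matr{H})$ is $\matr{H}$-invariant (it is spanned by eigenvectors of $\matr{H}$), so $\matr{H}\psi_i - \mu_i\psi_i \in \mathcal{E}_{\mathcal{I}}(\matr{H})$. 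The intersection being trivial, we get $\matr{H}\psi_i = \mu_i \psi_i$.

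Finally, because $\psi_1, \ldots, \psi_m$ are orthonormal eigenvectors of $\matr{H}$ living in $\mathcal{E}_{\mathcal{I}}(\matr{H})$ and their count equals $\dim \mathcal{E}_{\mathcal{I}}(\matr{H})$, they form an orthonormal eigenbasis of the restriction $\matr{H}|_{\mathcal{E}_{\mathcal{I}}}$. The spectrum of $\matr{H}|_{\mathcal{E}_{\mathcal{I}}}$ equals $\lambda_{\mathcal{I}}(\matr{H})$ by definition of the spectral subspace, hence as multisets $\{\mu_1, \ldots, \mu_m\} = \lambda_{\mathcal{I}}(\matr{H})$, which is precisely the claim.

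The only delicate point is the multiplicity bookkeeping: one must justify $\operatorname{rank}(\matr{B}) = \dim \mathcal{E}_{\mathcal{I}}(\matr{H})$ (immediate from $\operatorname{span}(\matr{B}) = \mathcal{E}_{\mathcal{I}}(\matr{H})$) and ensure that degenerate eigenvalues of $\matr{H}$ appear with matching multiplicity among the $\mu_i$. This is handled automatically once one observes that the $m$ lifted vectors $\psi_i$ are linearly independent inside the $m$-dimensional space $\mathcal{E}_{\mathcal{I}}(\matr{H})$, so they exhaust all of $\matr{H}|_{\mathcal{E}_{\mathcal{I}}}$'s eigenvalues counted with multiplicity; no separate spectral-counting argument is needed.
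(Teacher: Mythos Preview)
Your proof is correct and in fact more complete than the paper's. The paper's argument in Section~\ref{sec:ProofFact} goes the other direction: it expands each eigenvector $\varphi_k$ of $\matr{H}$ (with $k\in\mathcal{I}$) in the guess-vector basis as $\varphi_k = \matr{B} z_k$, left-multiplies the GEP eigenequation by $z_k^\dagger$, and obtains $(\mu_i - \lambda_k)\,\varphi_k^\dagger \matr{B} x_i = 0$. Since for a proper $x_i$ at least one $\varphi_k$ must overlap $\matr{B} x_i$, each $\mu_i$ equals some $\lambda_k$. As written, the paper only establishes the inclusion $\mu(\matr{H},\matr{B}) \subseteq \lambda_{\mathcal{I}}(\matr{H})$ and does not explicitly close the multiplicity count.

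Your approach instead \emph{lifts} the proper GEP eigenvectors $x_i$ to $\psi_i = \matr{B} x_i \in \mathcal{E}_{\mathcal{I}}$, uses the $\matr{B}^2$-orthonormality from Theorem~\ref{thm:gen_specral_theorem} to see that the $\psi_i$ are orthonormal in $\Hil$, and then uses $\ker(\matr{B}^\dagger) = \operatorname{range}(\matr{B})^\perp$ together with $\matr{H}$-invariance of $\mathcal{E}_{\mathcal{I}}$ to force $\matr{H}\psi_i = \mu_i\psi_i$. This buys you the full multiset equality automatically, because $m$ orthonormal eigenvectors in an $m$-dimensional invariant subspace exhaust the spectrum of $\matr{H}|_{\mathcal{E}_{\mathcal{I}}}$. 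The paper's argument is slightly more elementary (no appeal to the kernel--range duality or to the full orthonormality of Theorem~\ref{thm:gen_specral_theorem}), but yours is cleaner on the bookkeeping and delivers the statement as an equality of multisets without any residual hand-waving.
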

A derivation of the simple Fact is given in Section \ref{sec:ProofFact}.\\

For many applications of GEPs in numerical spectral analysis, the spectral subspace is 
defined according to an energy interval $[a,b]$ of interest. In such a setting we have 
$\mathcal{I} = \{J , J+1, \cdots , J+M \}$ such that $\lambda_{J-1}(\matr{H}) <  a \leq \lambda_{J}(\matr{H}) $ and $\lambda_{J+M} \leq b <\lambda_{J+M+1} $.
Then, we may also write $\mathcal{E}_{[a,b]}$ and $\lambda_{[a,b]}(\matr{H})$. 
In practice, the number of eigenvalues $M$ in a region of interest is often unknown. \\

However, in the scenario of relevance here, the guess vectors are not guaranteed to perfectly span an eigenspace $\mathcal{E}_{\mathcal{I}}$,
but rather only approximately so. Numerical methods will generate 
guess vectors $v_i$ that also have at least small contributions outside of the spectral subspace of interest. To treat this
from a mathematical perspective we will use an orthogonal decomposition of the basis matrix $\matr{V}$. 

\subsection{Orthogonal Decomposition of the Basis Matrix}
Here, we will decompose the basis matrix, into two parts, to one of which the spectral correspondence of Fact \ref{fact:gen_eig_to_normal} applies. 
We assume a compact and hermitian operator $\matr{H}$ and expand the guess vectors $v_i$ in terms the eigenvector basis $\varphi_i$ of $\matr{H}$,
\begin{align*}
 v_i & = \sum_{j} \alpha_{j} \varphi_j = \sum_{j \in \mathcal{I}} \alpha_{j} \varphi_j + \sum_{j \not \in \mathcal{I}} \alpha_{j} \varphi_j\\
    & = b_i + n_i.
\end{align*}
Thus, $(b_i,n_i)$ yields here the unique decomposition of $v_i$, such that $b_i \in \mathcal{E}_{\mathcal{I}}$ and $n_i \in \mathcal{E}_{\mathcal{I}}^\perp$.
Accordingly, we decompose the basis matrix $\matr{V}$ as
\begin{align*}
 \matr{V} = \matr{B} + \matr{N},
\end{align*}
with $\matr{B} = [b_1, \cdots, b_m]$ and $\matr{N} = [n_1, \cdots, n_m]$.
Since the columns of $\matr{B}$ correspond to a linear combination of vectors orthogonal to the basis that spans the columns of $\matr{N}$, 
the decomposition has 
\begin{align*}
 \matr{B}^\dagger \matr{N} = \matr{N}^\dagger \matr{B} = 0.
\end{align*}
Since the action of a linear and self-adjoint operator does not mix its eigensubspaces, we also have 
\begin{align*}
 \matr{B}^\dagger \matr{H} \matr{N} = \matr{N}^\dagger \matr{H} \matr{B} = 0.
\end{align*}

The motivated orthogonal decomposition is summarized in the following definition. 
\begin{definition}
    \label{def:ortho_decomp}
 We call $[\matr{B}, \matr{N}] \in \mathbb{C}^{M \times M} \times \mathbb{C}^{M \times M}$ an orthogonal decomposition to the GEP of $\matr{H}$ to basis $\matr{V}$,
 if 
    \begin{enumerate}
        \item $\matr{V} = \matr{B} + \matr{N}$,
        \item $\matr{B}^\dagger \matr{N} = \matr{N}^\dagger \matr{B} = 0$,
        \item $\matr{B}^\dagger \matr{H} \matr{N} = \matr{N}^\dagger \matr{H} \matr{B} = 0$.
    \end{enumerate}
 An orthogonal decomposition $[\matr{B}, \matr{N}]$ is called a \emph{signal-noise decomposition}, if the space $\matr{B}$ is a subspace of an spectral subspace of $\matr{H}$,
   $$\operatorname{span} (\matr{B}) \subset \mathcal{E}_\mathcal{I}(\matr{H}).$$ We call a signal-noise decomposition is \emph{complete} if the space $\matr{B}$ coincides with a spectral subspace,
    $$\operatorname{span} (\matr{B}) = \mathcal{E}_\mathcal{I}(\matr{H}).$$
\end{definition}

With the notion of a signal-noise decomposition of a GEP, the mathematical framework that allows us to 
rigorously express the intention of \emph{guess vectors approximately spanning a spectral subspace} is completed. 
For a GEP $(\matr{H},\matr{V})$ with signal noise decomposition $[\matr{B}, \matr{N}]$ we have
\begin{align}
 \left(\matr{B}^\dagger \matr{H} \matr{B} + \matr{N}^\dagger \matr{H} \matr{N} \right) b &= \mu \left(\matr{B}^\dagger \matr{B} + \matr{N}^\dagger \matr{N} \right) b. \label{eq:gen_eig_signal_noise}
\end{align}
Thus, a noisy guess vector matrix $\matr{V}$ results in perturbations on both sides of the GEP equation. Moreover, the disturbing matrices $\matr{H}_N$ and $\matr{N}^2$ are not independent
of each other but are correlated. This is where our analysis finally starts to differ from simple generalizations of classical spectral theory.
We are not aware of spectral inequalities that capture the behavior in a noisy scenario such as in (\ref{eq:gen_eig_signal_noise}).
Nor do we know of spectral inequalities that relate the eigenvalues of a low-dimension GEP to the eigenvalues of high dimensional operator $\matr{H}$ that is hidden within the GEP.\\

In the next section, we study the implication of the situation on the spectrum of the GEP and derive spectral inequalities. 
We will give bounds on the deviation of eigenvalues of a GEP to the
eigenvalues of the target operator $\matr{H}$. 

\newpage 

\section{Spectral Stability of Guess Vector Projections}
\label{sec:SpecIneq}

The rationale of projection-based GEPs forms a promising direction towards dimension reduction in spectral analysis. 
In particular, the generation of guess vectors enables opportunities to incorporate prior information and to save computational cost. 
However, to understand the accuracy that can be achieved with such approximate methods and to assess opportunities to overcome 
the curse of dimensionality, a mathematical analysis with rigorous precision statements is desired.\\

Classical perturbation theory typically examines the effects of a perturbing operator $\matr{E}$ on a operator $\matr{H}$ with a known spectrum.
In such theoretical treatments, the spectrum of the perturbed operator $\matr{H} + \matr{E}$ is then of primary interest.\\

The scenario for GEPs as a numerical dimension reduction scheme differs in two essential ways.
First, the perturbation does not arise in the operator $\matr{H}$ that is subject to study, but in the basis matrix $\matr{V}$ that is used to project the operator.
Second, the clean basis matrix $\matr{B}$ and the noisy perturbation $\matr{N}$ are both unknown to us, but not their sum $\matr{V} = \matr{B} + \matr{N}$.
More swiftly, through the GEP we can efficiently "measure" a spectrum subject to noise and want to relate it to the hidden high dimensional operator $\matr{H}$.\\

Surprisingly, spectral inequalities that capture the precision of GEPs as dimension reduction schemes appear elusive in the literature.
In the following, we present the first volume of results, that we obtained in this setting.

\subsection{Results}

The following inequalities capture how GEPs as dimension reduction schemes react to noise in the guess vector matrix $\matr{V} = \matr{B} + \matr{N}$.
Here, $\matr{N}$ is considered as the perturbation.

\begin{theorem}[Spectral Stability]
 Let $(\matr{H}, \matr{V})$ be a GEP with rank $m$ as in Definition \ref{def:gen_eig}, with an orthogonal decomposition $[\matr{B}, \matr{N}]$, as in Definition \ref{def:ortho_decomp}. For all $i \in \{1,\cdots, m\}$ the spectral stability inequality holds,
    \begin{align}
 \left| \mu_i(\matr{H}, \matr{V}) - \mu_i(\matr{H}, \matr{B})  \right| & \leq \frac{\lambda_{1}({\matr{N}^2})}{\lambda_m({\matr{V}^2})}  \max_{j \in {1,m}} \bigl| \mu_j(\matr{H}, \matr{N})- \mu_i(\matr{H}, \matr{B}) \bigr|.  \label{eq:specral_stability_main} 
    \end{align}
    \label{thm:specral_stability}
\end{theorem}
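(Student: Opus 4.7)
The plan is to exploit the orthogonality identities from Definition~\ref{def:ortho_decomp} to rewrite the generalised Rayleigh quotient of $(\matr{H},\matr{V})$ as a convex combination of those of $(\matr{H},\matr{B})$ and $(\matr{H},\matr{N})$, then to deploy the min-max characterisation (Theorem~\ref{thm:min_max_GEP}) twice, once for each side of the absolute value.

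First I would use $\matr{B}^\dagger \matr{N}=0$ and $\matr{B}^\dagger \matr{H}\matr{N}=0$ to obtain the clean block identities
\begin{align*}
 \matr{V}^\dagger\matr{H}\matr{V} = \matr{B}^\dagger\matr{H}\matr{B} + \matr{N}^\dagger\matr{H}\matr{N}, \qquad \matr{V}^2 = \matr{B}^2 + \matr{N}^2.
\end{align*}
For any $x\notin\ker(\matr{V}^2)$ this gives the key decomposition
\begin{align*}
 r(\matr{H},\matr{V},x) - \mu_i(\matr{H},\matr{B}) = w(x)\,\bigl(r(\matr{H},\matr{B},x)-\mu_i(\matr{H},\matr{B})\bigr) + (1-w(x))\,\bigl(r(\matr{H},\matr{N},x)-\mu_i(\matr{H},\matr{B})\bigr),
\end{align*}
with weight $w(x)=\langle x,\matr{B}^2 x\rangle/\langle x,\matr{V}^2 x\rangle \in[0,1]$. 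Once $x$ is restricted to the orthogonal complement of $\ker(\matr{V}^2)$, the complementary weight satisfies
\begin{align*}
 1-w(x) = \frac{\langle x,\matr{N}^2 x\rangle}{\langle x,\matr{V}^2 x\rangle} \leq \frac{\lambda_1(\matr{N}^2)}{\lambda_m(\matr{V}^2)},
\end{align*}
and by Fact~\ref{fact:spec_range} the quantity $r(\matr{H},\matr{N},x)$ lies in the convex hull of the proper spectrum of $(\matr{H},\matr{N})$, so $|r(\matr{H},\matr{N},x)-\mu_i(\matr{H},\matr{B})|$ is controlled by $\max_{j\in\{1,m\}}|\mu_j(\matr{H},\matr{N})-\mu_i(\matr{H},\matr{B})|$.

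Next I would invoke Theorem~\ref{thm:min_max_GEP} twice. For the upper bound on $\mu_i(\matr{H},\matr{V})$, I feed the $(i-1)$-dimensional subspace $T^*$ that realises $\mu_i(\matr{H},\matr{B})$ in its min form into the min form of $\mu_i(\matr{H},\matr{V})$; on $(T^*)^\perp$ one has $r(\matr{H},\matr{B},x)\leq \mu_i(\matr{H},\matr{B})$, so the first summand in the decomposition is nonpositive and only the bounded second summand contributes. For the lower bound, I feed the span $S^*$ of the first $i$ generalised eigenvectors of $(\matr{H},\matr{B})$ into the max form of $\mu_i(\matr{H},\matr{V})$; on $S^*$ one has $r(\matr{H},\matr{B},x)\geq \mu_i(\matr{H},\matr{B})$, so the first summand is nonnegative. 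Combining the two one-sided estimates yields \eqref{eq:specral_stability_main}.

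The main obstacle is the rank bookkeeping between the three GEPs, whose kernels and ranks generally differ. The key compatibility is that $\matr{V}^2=\matr{B}^2+\matr{N}^2$ is a sum of positive semidefinite matrices, which forces $\ker(\matr{V}^2)\subseteq\ker(\matr{B}^2)\cap\ker(\matr{N}^2)$ and guarantees that the $(\matr{H},\matr{B})$-optimal test subspaces plugged into the min-max of $(\matr{H},\matr{V})$ remain admissible (i.e.\ disjoint from $\ker(\matr{V}^2)$) on both sides. Verifying this inclusion and checking that the test subspaces intersect only trivially with the spurious subspace in each of the three problems is where the bulk of the technical care is required.
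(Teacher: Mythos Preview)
Your approach is correct and rests on the same core identity the paper uses: after the orthogonal split, the $(\matr{H},\matr{V})$ Rayleigh quotient decomposes as a convex combination of the $(\matr{H},\matr{B})$ and $(\matr{H},\matr{N})$ quotients, and feeding the $(\matr{H},\matr{B})$-optimal test subspace into the min-max for $(\matr{H},\matr{V})$ kills the signal term. The paper does not prove Theorem~\ref{thm:specral_stability} directly, however; it first establishes the two-parameter family of Spectral Inequalities (Theorem~\ref{thm:spectral_inequalities}) via a five-subspace intersection argument, and then specialises indices ($i=j$, $l\in\{1,m\}$, $k=m$) together with a three-case sign analysis on $(s_1,s_m)$ to extract the stability bound. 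When one unwinds that specialisation, three of the five subspaces become trivial and one lands exactly on your two-subspace argument. So what you propose is the paper's proof with the detour through Theorem~\ref{thm:spectral_inequalities} excised: more economical for this single statement, but it does not yield the general Weyl-type inequalities with free indices $i,j,k,l$ that the paper later reuses (e.g.\ in the telescope of Proposition~\ref{prop:integrated_spectral_inequality}). Your remark about rank bookkeeping is on point; the cleanest way to handle it is, as the paper does in its proof of Theorem~\ref{thm:spectral_inequalities}, to pass at the outset to $\ker(\matr{V}^2)^\perp\cong\mathbb{C}^m$, where the inclusion $\ker(\matr{V}^2)\subseteq\ker(\matr{B}^2)\cap\ker(\matr{N}^2)$ you noted guarantees that all three GEPs and their optimal subspaces live compatibly in that $m$-dimensional space and the bound on $1-w(x)$ holds for every nonzero $x$.
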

The spectral stability (\ref{eq:specral_stability_main}) gives an error bound in terms of a precision factor $\lambda_{1}({\matr{N}^2}) / \lambda_m({\matr{V}^2})$.
Thus, $\lambda_m({\matr{V}^2})^{-1}$ quantifies the stability of a GEP with respect to noise and can be considered as the condition number for projection-based GEPs.
Importantly, as $\matr{V}^2$ is accessible and of small dimension, a GEP-based algorithm can determine the quality of its conditioning, independently of a theoretical guarantee. \\

In the special case of a signal noise decomposition, the generalized eigenvalues $\mu_i(\matr{H}, \matr{B})$ coincide with the eigenvalues of the target operator $\matr{H}$ (Fact \ref{fact:gen_eig_to_normal}).
We immediately obtain a corollary.
\begin{corollary}
    \label{cor:specral_stability_signoise}
 A hermitian GEP $(\matr{H}, \matr{V})$ of rank $m$ with a complete signal noise decomposition $[\matr{B}, \matr{N}]$ 
 to an eigensubspace $\mathcal{E}_{\mathcal{I}}(\matr{H})$ has 
    \begin{align}
 \left| \mu_i(\matr{H}, \matr{V}) - \lambda_{\mathcal{I},i}(\matr{H}) \right| & \leq \frac{\lambda_{1}({\matr{N}^2})}{\lambda_m({\matr{V}^2})}  \max_{j \in {1,m}} \bigl| \mu_j(\matr{H}, \matr{N})- \lambda_{\mathcal{I},i}(\matr{H})\bigr|.  \label{eq:specral_stability_signoise} 
    \end{align}
\end{corollary}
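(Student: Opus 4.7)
The plan is to obtain the corollary as an essentially immediate consequence of two earlier results: Theorem \ref{thm:specral_stability} (which bounds $|\mu_i(\matr{H},\matr{V}) - \mu_i(\matr{H},\matr{B})|$ in terms of the noise) and Fact \ref{fact:gen_eig_to_normal} (which identifies $\mu(\matr{H},\matr{B})$ with the eigenvalues of $\matr{H}$ in a spectral subspace). The completeness hypothesis of the signal–noise decomposition is exactly the bridge that lets the second result be plugged into the first.

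First I would verify that $[\matr{B},\matr{N}]$ satisfies the orthogonal decomposition conditions of Definition \ref{def:ortho_decomp}, so that Theorem \ref{thm:specral_stability} can be applied to the GEP $(\matr{H},\matr{V})$. This yields, for every $i\in\{1,\ldots,m\}$,
\begin{align*}
\left| \mu_i(\matr{H}, \matr{V}) - \mu_i(\matr{H}, \matr{B}) \right|
 \leq \frac{\lambda_{1}(\matr{N}^2)}{\lambda_m(\matr{V}^2)} \max_{j \in \{1,m\}} \bigl| \mu_j(\matr{H}, \matr{N}) - \mu_i(\matr{H}, \matr{B}) \bigr|.
\end{align*}
Next I would invoke completeness: since $[\matr{B},\matr{N}]$ is a complete signal–noise decomposition, $\operatorname{span}(\matr{B}) = \mathcal{E}_{\mathcal{I}}(\matr{H})$, so the GEP $(\matr{H},\matr{B})$ coincides with the spectral subspace $\mathcal{E}_{\mathcal{I}}(\matr{H})$ in the sense of Fact \ref{fact:gen_eig_to_normal}. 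Applying that fact gives $\mu(\matr{H},\matr{B}) = \lambda_{\mathcal{I}}(\matr{H})$ as ordered tuples (both are arranged in descending order by our convention), so $\mu_i(\matr{H},\matr{B}) = \lambda_{\mathcal{I},i}(\matr{H})$ for each proper index $i$. Substituting into the inequality above gives the desired bound.

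The only real subtlety I expect is a bookkeeping one about ranks and indices. Theorem \ref{thm:specral_stability} ranges over $i\in\{1,\ldots,m\}$ where $m$ is the rank of $(\matr{H},\matr{V})$, and the corollary reuses the same $m$; I would briefly note that the completeness of the decomposition, together with the identity $\matr{V}^2 = \matr{B}^2 + \matr{N}^2$ (which follows from $\matr{B}^\dagger\matr{N}=0$), forces $\dim \mathcal{E}_{\mathcal{I}}(\matr{H}) = m$, so that $\lambda_{\mathcal{I},i}(\matr{H})$ is defined for exactly the same index range as $\mu_i(\matr{H},\matr{V})$. Beyond this indexing check, the argument is simply the two-line composition of Theorem \ref{thm:specral_stability} and Fact \ref{fact:gen_eig_to_normal}; there is no genuine new estimation to perform.
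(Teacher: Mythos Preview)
Your proposal is correct and follows essentially the same route as the paper: the corollary is obtained as an immediate consequence of Theorem \ref{thm:specral_stability} together with Fact \ref{fact:gen_eig_to_normal}, by substituting $\mu_i(\matr{H},\matr{B})=\lambda_{\mathcal{I},i}(\matr{H})$ under the completeness hypothesis. One small imprecision: from $\matr{V}^2=\matr{B}^2+\matr{N}^2$ you only get $\dim\mathcal{E}_{\mathcal{I}}(\matr{H})=\operatorname{rank}(\matr{B}^2)\leq m$, not equality, so the inequality should be read for those $i$ at which $\lambda_{\mathcal{I},i}(\matr{H})$ is defined; this does not affect the argument.
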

Here, $\lambda_{\mathcal{I},i}(\matr{H})$ denotes eigenvalues $\lambda_j(\matr{H})$ with $j \in \mathcal{I}$ that are given in 
descending order; for example $\lambda_{\mathcal{I},1}(\matr{H}) = \max_{j \in \mathcal{I }} \lambda_j(\matr{H})$. \\

Corollary \ref{cor:specral_stability_signoise} is the strongest bound for dimension reduction protocols $\mathfrak{P}$ provided in this section. The bound is characterized by the leading eigenvalue of the noise Gramm matrix $\matr{N}^2$.
A protocol  $\mathfrak{P}$ that generates guess vectors and wants to access a precision guarantee on the obtained eigenvalues, must
provide a bound on the operator norm of the noise Gram matrix $\matr{N}^2$ it may generate. Since operator norm estimates are a well-established mathematical discipline,
we consider this to be a feasible requirement.\footnote{A simple operator norm estimate can often be derived through the Gerschgorin circle theorem.}\\

However, the bound also assumes control over $\mu_{1,m}(\matr{H},\matr{N})$, which is difficult to obtain solely from the accessible GEP $(\matr{H}, \matr{V})$. 
Corollary \ref{cor:user_friend} illustrates how we can still obtain user-friendly inequalities from the more general bound 
given above.
\begin{corollary}
    \label{cor:user_friend}
 A hermitian and compact GEP $(\matr{H}, \matr{V})$ of rank $m$ with a complete signal noise decomposition $[\matr{B}, \matr{N}]$ has 
    \begin{align*}
 \operatorname*{min}_{j}\left| \mu_i(\matr{H}, \matr{V}) - \lambda_{j}(\matr{H}) \right| & \leq \frac{\lambda_{1}({\matr{N}^2})}{\lambda_m({\matr{V}^2})}  \left( \lambda_{\text{max}}(\matr{H}) -  \lambda_{\text{min}}(\matr{H})\right)
    \end{align*}
\end{corollary}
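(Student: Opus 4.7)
The plan is to derive Corollary \ref{cor:user_friend} as a direct simplification of Corollary \ref{cor:specral_stability_signoise} by replacing the data-dependent factor $\max_{j} |\mu_j(\matr{H}, \matr{N}) - \lambda_{\mathcal{I},i}(\matr{H})|$ with a quantity that depends only on the spectral range of $\matr{H}$. Starting from the inequality
\begin{align*}
 \left| \mu_i(\matr{H}, \matr{V}) - \lambda_{\mathcal{I},i}(\matr{H}) \right| \leq \frac{\lambda_{1}(\matr{N}^2)}{\lambda_m(\matr{V}^2)} \max_{j \in \{1,m\}} \bigl| \mu_j(\matr{H}, \matr{N}) - \lambda_{\mathcal{I},i}(\matr{H}) \bigr|,
\end{align*}
the whole task reduces to bounding the absolute difference inside the $\max$.

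The key observation is that both quantities inside $|\cdot|$ lie in the same interval. Since $\matr{H}$ is hermitian and compact, I would apply Fact \ref{fact:spec_range} to the auxiliary GEP $(\matr{H}, \matr{N})$, which gives $\mu_j(\matr{H}, \matr{N}) \in [\lambda_{\text{min}}(\matr{H}), \lambda_{\text{max}}(\matr{H})]$ for every proper generalized eigenvalue $j$ (these are precisely the indices appearing in the $\max$). Trivially, $\lambda_{\mathcal{I},i}(\matr{H})$ is itself an eigenvalue of $\matr{H}$ and so lies in the same interval. Hence,
\begin{align*}
 \bigl| \mu_j(\matr{H}, \matr{N}) - \lambda_{\mathcal{I},i}(\matr{H}) \bigr| \leq \lambda_{\text{max}}(\matr{H}) - \lambda_{\text{min}}(\matr{H}),
\end{align*}
uniformly in $j$. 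Substituting this into Corollary \ref{cor:specral_stability_signoise} produces
\begin{align*}
 \left| \mu_i(\matr{H}, \matr{V}) - \lambda_{\mathcal{I},i}(\matr{H}) \right| \leq \frac{\lambda_{1}(\matr{N}^2)}{\lambda_m(\matr{V}^2)} \bigl( \lambda_{\text{max}}(\matr{H}) - \lambda_{\text{min}}(\matr{H}) \bigr).
\end{align*}

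To finish, I would note that $\lambda_{\mathcal{I},i}(\matr{H})$ is one particular eigenvalue $\lambda_j(\matr{H})$ of $\matr{H}$, so the left-hand side can only be decreased by minimising over $j$:
\begin{align*}
 \min_{j} \left| \mu_i(\matr{H}, \matr{V}) - \lambda_{j}(\matr{H}) \right| \leq \left| \mu_i(\matr{H}, \matr{V}) - \lambda_{\mathcal{I},i}(\matr{H}) \right|,
\end{align*}
which yields the claimed bound. There is no genuine obstacle here beyond verifying that Fact \ref{fact:spec_range} is applicable to $(\matr{H}, \matr{N})$; this is immediate because compactness and hermiticity are properties of $\matr{H}$ alone, and the $\max$ only involves proper eigenvalues, for which the spectral-range containment is the exact content of Fact \ref{fact:spec_range}.
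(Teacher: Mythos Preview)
Your proof is correct and follows essentially the same route as the paper's own derivation: invoke Fact~\ref{fact:spec_range} for the GEP $(\matr{H},\matr{N})$ to place $\mu_j(\matr{H},\matr{N})$ in $[\lambda_{\min}(\matr{H}),\lambda_{\max}(\matr{H})]$, bound the $\max$ term by the spectral range, and then relax the left-hand side by minimising over all eigenvalues of $\matr{H}$.
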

Corrolary \ref{cor:user_friend} follows from Corollary \ref{cor:specral_stability_signoise} by applying that the eigenvalues of a compact GEP are contained in the range of the target operator.
If the spectral subspace $\mathcal{E} = \mathcal{E}_{(a,b)}$ corresponds to an energy interval $(a,b) \subset [ \lambda_{\text{min}}, \lambda_{\text{max}}]$ the bound can be improved by 
considering $\operatorname*{max}\{ | \lambda_{\text{max}}(\matr{H}) - a|, |b - \lambda_{\text{min}}(\matr{H})|\}$ instead of $(\lambda_{\text{max}}(\matr{H}) -  \lambda_{\text{min}}(\matr{H}))$.\\

Similarly, if it is known, that the guess vectors are supported in a larger spectral subspace, 
$\operatorname{span}( \matr{V}) \subset \mathcal{E}_{(E_A,E_B)}$ one can obtain from Corollary \ref{cor:specral_stability_signoise} a sharper bound 
with 
\begin{align*}
 E_A \leq \mu_m(\matr{H}, \matr{N}) \qquad \text{ and } \qquad \mu_1(\matr{H}, \matr{N}) \leq E_B.
\end{align*}
Especially for operators $\matr{H}$ with dense spectra, we expect Corollary \ref{cor:user_friend} 
to provide a satisfactory and user-friendly estimate. However, we also point out that 
the estimates necessary from Corollary \ref{cor:specral_stability_signoise} to Corollary \ref{cor:user_friend} are 
quite pessimistic for many applications. This shortcoming will be further addressed in Section \ref{sec:Limitations}. \\

The basis for the previous results is given by the spectral inequalities of the following Theorem. It will also form the basis 
for more sophisticated inequalities that will be developed in Section \ref{sec:IntegratedSpectralInequalities}.

\begin{theorem}[Spectral Inequalities]
    \label{thm:spectral_inequalities}
 For a GEP $(\matr{H}, \matr{V})$ of rank $m$ with an orthogonal decomposition $[\matr{B}, \matr{N}]$ we have
    \begin{align}
        \mu_j(\matr{H}, \matr{V}) \leq \mu_i(\matr{H}, \matr{B}) + \frac{\lambda_{k + \emph{s} p}({\matr{N}^2})}{\lambda_k({\matr{V}^2})} \bigl( \mu_l(\matr{H}, \matr{N}) - \mu_i(\matr{H}, \matr{B}) \bigr)  \label{eq:upper_spectral_Weyl} \\
        \mu_j(\matr{H}, \matr{V}) \geq  \mu_i(\matr{H}, \matr{B}) + \frac{\lambda_{k - \emph{s} q}({\matr{N}^2})}{\lambda_k({\matr{V}^2})} \bigl( \mu_l(\matr{H}, \matr{N}) - \mu_i(\matr{H}, \matr{B}) \bigr)  \label{eq:lower_spectral_Weyl}
    \end{align}
 where \begin{align*}
 p  & = j-i-l-m+2,\\
 q &= -2m +1+ i +l -j,\\
 s & = \operatorname*{sgn} \left( \mu_l(\matr{H}, \matr{N}) - \mu_i(\matr{H}, \matr{B}) \right) \in \{ \pm 1,0\} .
    \end{align*}
\end{theorem}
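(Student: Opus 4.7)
The plan is to combine the min-max characterization of Theorem~\ref{thm:min_max_GEP}, applied separately to the GEPs $(\matr{H},\matr{V})$, $(\matr{H},\matr{B})$ and $(\matr{H},\matr{N})$, with a Rayleigh-quotient decomposition enabled by the orthogonality conditions $\matr{B}^\dagger \matr{N} = 0$ and $\matr{B}^\dagger \matr{H} \matr{N} = 0$ of the decomposition in Definition~\ref{def:ortho_decomp}. These conditions give $\matr{V}^2 = \matr{B}^\dagger \matr{B} + \matr{N}^\dagger \matr{N}$ and $\matr{V}^\dagger \matr{H} \matr{V} = \matr{B}^\dagger \matr{H} \matr{B} + \matr{N}^\dagger \matr{H} \matr{N}$, so the generalized Rayleigh quotient factors as a convex combination
\[
  r(\matr{H},\matr{V},x) = (1-w(x))\, r(\matr{H},\matr{B},x) + w(x)\, r(\matr{H},\matr{N},x), \qquad w(x) = \frac{\langle x, \matr{N}^2 x \rangle}{\langle x, \matr{V}^2 x\rangle} \in [0,1].
\]
Equivalently, $r(\matr{H},\matr{V},x) - r(\matr{H},\matr{B},x) = w(x)\bigl( r(\matr{H},\matr{N},x) - r(\matr{H},\matr{B},x)\bigr)$, which is the seed of both inequalities \eqref{eq:upper_spectral_Weyl} and \eqref{eq:lower_spectral_Weyl}. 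The sign $s = \operatorname{sgn}(\mu_l(\matr{H},\matr{N}) - \mu_i(\matr{H},\matr{B}))$ will dictate whether we need an upper or lower bound on $w(x)$.

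For \eqref{eq:upper_spectral_Weyl}, I would use the $\min$-$\max$ half of Theorem~\ref{thm:min_max_GEP} on $(\matr{H},\matr{V})$ and construct an explicit $(j-1)$-dimensional subspace $S_{j-1}$ so that on $S_{j-1}^\perp$ every term in the decomposition above is controlled. The construction intersects three ingredients: (i) the span of the first $i$ generalized eigenvectors of $(\matr{H},\matr{B})$, so that $r(\matr{H},\matr{B},x) \leq \mu_i(\matr{H},\matr{B})$; (ii) a complement built from the generalized eigenvectors of $(\matr{H},\matr{N})$ around index $l$, which pins the sign of $r(\matr{H},\matr{N},x) - \mu_i(\matr{H},\matr{B})$; and (iii) an eigenspace of $\matr{V}^2$ associated with its top $k$ eigenvalues, so that $\langle x, \matr{V}^2 x\rangle / \langle x, x\rangle \geq \lambda_k(\matr{V}^2)$ by Courant--Fischer. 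On the intersection one then has a simultaneous Courant--Fischer bound $\langle x, \matr{N}^2 x\rangle/\langle x,x\rangle \leq \lambda_{k+sp}(\matr{N}^2)$, yielding $w(x) \leq \lambda_{k+sp}(\matr{N}^2)/\lambda_k(\matr{V}^2)$. Inserting this into the Rayleigh identity gives exactly \eqref{eq:upper_spectral_Weyl}. The lower bound \eqref{eq:lower_spectral_Weyl} is produced by running the same argument with the dual $\max$-$\min$ form, swapping the roles of top and bottom spectral parts, producing the complementary index shift $k - sq$.

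The crux, and the step I expect to take real care, is the dimension-counting (Grassmann) argument that forces the indices to land at $k+sp$ and $k-sq$ with $p = j-i-l-m+2$ and $q = -2m+1+i+l-j$. Each of the three ingredients above fixes the codimension of a subspace inside an ambient space whose effective dimension is the rank $m$ of the GEP; requiring the intersection to contain a subspace of dimension compatible with the min-max formula for $\mu_j(\matr{H},\matr{V})$ imposes linear inequalities among $j,i,l,m,k$, and the arithmetic expressions for $p$ and $q$ are precisely the slack of these inequalities. The factor $s$ appears because flipping the sign of $\mu_l(\matr{H},\matr{N}) - \mu_i(\matr{H},\matr{B})$ reverses the desired direction of the bound on $w(x)$, and therefore reverses which Courant--Fischer bound on $\matr{N}^2$ is needed (a top versus bottom eigenvalue), which in turn reverses the shift in $k$.

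Finally, the spurious part of the spectrum (eigenvectors in $\ker \matr{V}$) requires a minor safeguard: the min-max formulation is applied on the orthogonal complement of $\ker \matr{V}^2$, consistent with the rank-$m$ reduction discussed after Theorem~\ref{thm:gen_specral_theorem}, and the convention $r(\matr{H},\matr{V},x) = 0$ for $x \in \ker \matr{V}$. With this caveat, the two inequalities recover the spectral stability statement of Theorem~\ref{thm:specral_stability} in the special case $i=j=l$, $k=m$, where both $|p|$ and $|q|$ collapse to the extremal indices $1$ or $m$ producing the factor $\lambda_1(\matr{N}^2)/\lambda_m(\matr{V}^2)$.
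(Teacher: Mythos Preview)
Your plan is the paper's: the convex-combination identity $r(\matr{H},\matr{V},x) = (1-w(x))\,r(\matr{H},\matr{B},x) + w(x)\,r(\matr{H},\matr{N},x)$, then a Courant--Fischer subspace-intersection argument to find a vector on which all three Rayleigh quotients and the weight $w(x)$ are simultaneously controlled. Two corrections to your sketch. First, ingredient (i) is stated backwards: membership in the span of the top $i$ generalized eigenvectors of $(\matr{H},\matr{B})$ forces $r(\matr{H},\matr{B},x)\geq \mu_i$, not $\leq$; what the paper actually intersects is the \emph{complement} $\mathcal{S}_{H_B}^\perp$ of the span of the top $i-1$ eigenvectors (and likewise $\mathcal{S}_{H_N}^\perp$ of codimension $l-1$), together with the optimal $j$-dimensional subspace $\mathcal{S}_{H_V}$ from the max--min form of $\mu_j(\matr{H},\matr{V})$, plus two further subspaces $\mathcal{S}_V$ (dimension $k$) and $\mathcal{S}_N^\perp$ (codimension $d-1$) controlling numerator and denominator of $w(x)$ separately --- five subspaces in all, and the value $d=k+sp$ is exactly what makes the intersection in $\mathbb{C}^m$ nonempty. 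Second, for \eqref{eq:lower_spectral_Weyl} the paper does not rerun the dual max--min construction you propose; it simply applies the already-proved \eqref{eq:upper_spectral_Weyl} to $-\matr{H}$ with relabeled indices $i'=m+1-i$, $j'=m+1-j$, $l'=m+1-l$, which converts $p$ into $q$ and $s$ into $-s$ without a second dimension count.
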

All eigenvalue indices in Theorem \ref{thm:spectral_inequalities} need to be chosen such that they are valid.
On the example of (\ref{eq:upper_spectral_Weyl}) this means $i,j,l,k,(k+sp) \in \{1, \cdots, m\}$.\\
The proof of Theorem \ref{thm:spectral_inequalities} also shows, that it is possible 
to first upper (\ref{eq:upper_spectral_Weyl}) or lower bound (\ref{eq:lower_spectral_Weyl}) the factor 
$\bigl( \mu_l(\matr{H}, \matr{N}) - \mu_i(\matr{H}, \matr{B}) \bigr) $ and then to choose $s$ accordingly.
On the example of the upper estimate (\ref{eq:upper_spectral_Weyl}) this means if we take
\begin{align*}
 \bigl( \mu_l(\matr{H}, \matr{N}) - \mu_i(\matr{H}, \matr{B}) \bigr) \leq \max \{ A, 0 \} := \tilde A,
\end{align*}
we will be sure to have
\begin{align*}
    \mu_j(\matr{H}, \matr{V}) \leq \mu_i(\matr{H}, \matr{B}) + \frac{\lambda_{k + p}({\matr{N}^2})}{\lambda_k({\matr{V}^2})} \tilde A.
\end{align*}

In the next subsection, we briefly discuss the possibilities arising from the 
presented inequalities towards dimension reduction, but also their limitations.

\subsection{Discussion}
More clarity on the relationship between dimensionality and precision of computation is 
of much interest in our contemporary scientific landscape. The provided spectral inequalities 
enable precision guarantees for projection-based GEPs. They form a small step towards 
an understanding of how we can effectively overcome the curse of dimensionality in spectral analysis
by allowing for a small but controlled error. \\

The presented inequalities showcase a precision factor $\frac{\lambda_{1}({\matr{N}^2})}{\lambda_m({\matr{V}^2})}$,
that is rather satisfactory. The error bound in terms of $\lambda_{1}({\matr{N}^2})$ only imposes a feasible requirement 
on a protocol, to access precision guarantees. A method that aims to circumvent high dimensional 
computations through a low dimensional guess space only has to derive an operator norm bound 
on the noise Gramm matrix associated with the guess vector space. A bound for an operator norm 
enjoys the advantage of many established mathematical approaches and is usually feasible to derive.\\
The second part of the precision factor $\lambda_m({\matr{V}^2})$ is especially from a computational perspective
very attractive. Control on the lowest non-zero eigenvalue of the Gramm matrix $\lambda_m({\matr{V}^2})$ 
does not require any theoretical guarantees. Indeed, while the theoretical toolbox for 
operator norm estimates is vast, lower bounds on the smallest eigenvalues are less common. Instead, $\lambda_m({\matr{V}^2})$ 
can be directly computed from the guess vector matrix. In Section \ref{sec:NumTheoSimbiosis} we 
further elaborate on the computational opportunities that arise from the spectrum of guess vector matrix $\matr{V}^2$. In particular
dimension detection and avoidance, of the so-called \emph{singularity problem} is addressed.\\

However, the second part of the error bound in Corollary \ref{cor:specral_stability_signoise}  $\max_{j \in {1,m}} \bigl| \mu_j(\matr{H}, \matr{N})- \mu_i(\matr{H}, \matr{B}) \bigr|$
 depends the noise GEP eigenvalues $\mu_1(\matr{H}, \matr{N})$ and $\mu_m(\matr{H}, \matr{N})$. 
Simple estimates, that do not require any prior information on $\matr{N}$ allow this factor to be replaced by the spectral range of the target operator $\lambda_{\text{max}}(\matr{H}) -  \lambda_{\text{min}}(\matr{H})$
as seen in Corollary \ref{cor:user_friend}. Although this bound is perhaps a bit pessimistic, it provides 
a simple tool to provide guarantees on the analysis of high dimensional operators. 

\subsubsection{Example on Second Quantization Electronic Hamiltonian}
\label{sec:Example_sec_quant}
The second quantized molecular Hamiltonian is particularly prone to the curse of dimensionality.
We assume the Born-Oppenheimer approximation. 

Here the electronic Hamiltonian to describe $N$ orbitals 
is given by Pauli strings of length $N$ and has dimension $2^{N}$. 
\begin{align*}
    H = \sum_l h_l P_l^{(N)}
\end{align*}
The Hamiltonian commutes with the particle number operator $\matr{n}$. 
The ground state energy for a number of electrons $n_e$ is then given by 
the lowest eigenvalue of the Hamiltonian in the subspace of $n_e$ electrons.\\

An estimate of the 
spectral range of $\matr{H}$ can be easily computed from the Pauli amplitudes $h_j$ that have been 
obtained from a Hartree-Fock ansatz \cite{MarkusReiher2017},
\begin{align*}
\lambda_{\text{max}}(H) - \lambda_{\text{min}}(H) = \sum_l |h_l|.
\end{align*}
If a protocol determines now for example the $m$ lowest eigenvalues of the Hamiltonian,
though a projected GEP, Corollary \ref{cor:user_friend} gives a guarantee on the precision of the obtained eigenvalues,
\begin{align*}
 \min_j \left| \mu_i(\matr{H}, \matr{V}) - \lambda_{j}(\matr{H}) \right| & \leq \frac{\lambda_{1}({\matr{N}^2})}{\lambda_m({\matr{V}^2})}  \sum_l |h_l|.
\end{align*}
Thus, the spectral inequalities indeed enable rigorous precision guarantees for algorithms within quantum chemistry.
After the protocol has generated the guess vector matrix $\matr{V}^2$ and $\matr{H}_{\matr{V}}$, the remaining computational 
cost to determine $m$ eigenvalues boils down to the cost of diagonalizing $m \times m$ matrices.\\

However, there are also some limitations to the presented results of this section that we would like to point out.

\subsubsection{Limitations and need for Full-Scale Analysis}
\label{sec:Limitations}
We have seen that the inequalities of this section can 
already provide first precision guarantees for GEPs. In particular Corollary \ref{cor:user_friend} can 
be used to provide good precision guarantees for the low-dimensional spectral analysis of 
operators with a narrow spectral range or high but still finite-dimensional matrices, such as a Pauli Hamiltonian.\\

However the estimate (\ref{eq:bad_estimate}) that was necessary in the derivation 
of Corollary \ref{cor:user_friend} from Corollary \ref{cor:specral_stability_signoise} is rather pessimistic for many applications,
\begin{align}
    \lambda_{\min}(\matr{H}) \leq \mu_m(\matr{H}, \matr{N}), \qquad  \qquad \mu_1(\matr{H}, \matr{N}) \leq \lambda_{\max}(\matr{H}). \label{eq:bad_estimate}
\end{align}
A protocol that generates guess vectors that aim to span a spectral subspace $\mathcal{E}_{(a,b)}(\matr{H})$, is very likely to exhibit 
\emph{target clustering}. This means that even the undesired noisy components of the guess vectors, mostly 
correspond to eigenvectors of $\matr{H}$ with eigenvalues close to the target spectral range $(a,b)$.
The simple estimate  (\ref{eq:bad_estimate}) does not honor the property, even though 
it is very much in favor of the approximation. \\

A second and more severe objection is that (\ref{eq:bad_estimate}) assumes that the operator subject to study 
is bounded. In the context of quantum chemistry, this is not exactly true. On the quantum scale, physical systems are actually described by
unbounded operators. In addition, these operators have a mixture of discrete and continuous spectra. The second quantized electronic Hamiltonian
of the previous example was itself only a finite-dimensional approximation to an infinite-dimensional operator.\\

From this perspective, we have derived a rigorous precision bound on a low dimensional solution to a high-dimensional operator, that 
was itself only an approximation of an infinite-dimensional operator. This is still a step forward. We also have the intuition that the error from the full truth to 
the second quantized Hamiltonian is very small indeed. \\

But it would be a disappointment if we had to keep deriving different inequalities for  
lower-dimensional approximations of operators at different scales. 
A satisfactory bound should not depend on the range of $\matr{H}$ in 
such a sensitive way. We would rather have it lean deeper on the nature of the guess vectors. 
They are the objects that we can control, not the operator $\matr{H}$ that is subject to study.\\

To overcome the objections and meet the requirements, a more sophisticated continuation of the theory is provided in 
 Section \ref{sec:IntegratedSpectralInequalities}.
A bound is presented, that has the means to encompass all the approximation errors across the different scales 
in a single integrated inequality. The inequity does not demand a need to 
consider the discrete and continuous spectrum of the target operators independently of each other.
If the target clustering of the guess vectors is 
appropriately quantified, the bound can also yield more precise estimates than Corollary \ref{cor:user_friend}.

\subsection{Derivations}

\subsubsection{From Corollary \ref{cor:specral_stability_signoise} to Corollary \ref{cor:user_friend}}
\label{sec:proof_user_friend}

\begin{proof}[\textit{From Corollary \ref{cor:specral_stability_signoise} to Corollary \ref{cor:user_friend}}]
 Let $(\matr{H}, \matr{V})$ and $[\matr{B}, \matr{N}]$ be as in Corollary \ref{cor:user_friend}. We assume Corrolary \ref{cor:specral_stability_signoise}. 
 By Fact \ref{fact:spec_range} we have that the spectrum of the GEP $(\matr{H}, \matr{N})$ is contained in the range of the target operator. In particular
    \begin{align*}
        \mu_1(\matr{H}, \matr{N}) \leq \lambda_{\text{max}}(\matr{H}) \quad \text{and} \quad \mu_m(\matr{H}, \matr{N}) \geq \lambda_{\text{min}}(\matr{H}).
    \end{align*}
 Trivially, $\lambda_{\mathcal{I},i}(\matr{H})$ can be lower or upper estimated by $\lambda_{\text{min}}$ or $\lambda_{\text{max}}$. Overall we obtain
    \begin{align*}
 \max_{j \in {1,m}} \bigl| \mu_j(\matr{H}, \matr{N})- \lambda_{\mathcal{I},i}(\matr{H})\bigr| \leq \lambda_{\text{max}}(\matr{H}) - \lambda_{\text{min}}(\matr{H}).
    \end{align*}
 For the left hand side of (\ref{eq:specral_stability_signoise}) we have
    \begin{align*}
 \min_{j} \left| \mu_i(\matr{H}, \matr{V}) - \lambda_{j}(\matr{H}) \right| \leq \left| \mu_i(\matr{H}, \matr{V}) - \lambda_{\mathcal{I},i}(\matr{H}) \right|,
    \end{align*}
 since $\lambda_{\mathcal{I},i}(\matr{H})$ is feasible in the minimization. 
\end{proof}

\subsubsection{From min-max Principle to Spectral Inequalities}

\begin{proof}[\textit{Proof of Theorem \ref{thm:spectral_inequalities}}]
 Let $(\matr{H}, \matr{V})$ and $[\matr{B}, \matr{N}]$ be as in Theorem \ref{thm:spectral_inequalities}. Since the 
 GEP $(\matr{H}, \matr{V})$ has rank $m$ we can WLOG regrad it in an $m$-dimensional Euclidean space $\mathbb{C}^m$.
 By the min-max Theorem \ref{thm:min_max_GEP}, there exist subspaces $\mathcal{S}_{H_B} \subset \mathbb{C}^m$ with dimension $i-1$, $\mathcal{S}_{H_N}$ with dimension $l-1$ and 
    $\mathcal{S}_{H_V}$ with dimension $j$ such that 
    \begin{align*}
        \mu_i( \matr{H}, \matr{B}^2) & = \max_{\substack{x \in \mathcal{S}_{H_B}^\perp }} \frac{\langle x, \matr{B}^\dagger \matr{H} \matr{B} x \rangle}{\langle x, \matr{B}^2 x \rangle }\\
        \mu_l( \matr{H}, \matr{N}^2) & = \max_{\substack{x \in \mathcal{S}_{H_N}^\perp }} \frac{\langle x, \matr{N}^\dagger \matr{H} \matr{N} x \rangle}{\langle x, \matr{N}^2 x \rangle }\\
        \mu_j( \matr{H}, \matr{V}^2) & = \min_{\substack{x \in \mathcal{S}_{H_V}}} \frac{\langle x, \matr{V}^\dagger \matr{H} \matr{V} x \rangle}{\langle x, \matr{V}^2 x \rangle }.
    \end{align*}
 For any $x\in \mathcal{S}_{H_B}^\perp \cap \mathcal{S}_{H_N}^\perp  \cap \mathcal{S}_{H_B}$ (to be proven non-empty later), we have 
    \begin{align}
        \mu_j( \matr{H}, \matr{V}^2) & \leq \frac{\langle x, \matr{H}_V x \rangle}{\langle x, \matr{V}^2 x \rangle} = \frac{\langle x, \matr{H}_B x \rangle + \langle x, \matr{H}_N x \rangle}{\langle x, \matr{V}^2 x \rangle} \leq \mu_i( \matr{H}, \matr{B}) \frac{\langle x, \matr{B}^2 x \rangle }{\langle x, \matr{V}^2 x \rangle} + \frac{\langle x, \matr{H}_N x \rangle}{\langle x, \matr{V}^2 x \rangle} \notag\\ 
        & = \mu_i( \matr{H}, \matr{B}) \frac{\langle x ,(\matr{B}^2 + \matr{N}^2 -\matr{N}^2) x\rangle }{\langle x, \matr{V}^2 x \rangle}  + \frac{\langle x, \matr{N^2} x \rangle}{\langle x, \matr{V}^2 x \rangle} \frac{\langle x, \matr{H}_N  x \rangle}{ \langle x, \matr{N^2} x \rangle} \notag \\
        & = \mu_i( \matr{H}, \matr{B}) + \frac{\langle x, \matr{N^2} x \rangle}{\langle x, \matr{V}^2 x \rangle} \left( \frac{\langle x, \matr{H}_N  x \rangle}{ \langle x, \matr{N^2} x \rangle} - \mu_i( \matr{H}, \matr{B}) \right) \notag\\
        & \leq \mu_i( \matr{H}, \matr{B}) + \frac{\langle x, \matr{N^2} x \rangle}{\langle x, \matr{V}^2 x \rangle} \left(  \mu_l( \matr{H}, \matr{N})- \mu_i( \matr{H}, \matr{B}) \right) \label{eq:ineq4}
    \end{align}
 In the first inequality we used $x\in \mathcal{S}_{H_V}$, in the second $x\in \mathcal{S}_{H_B}^\perp$, and in the third $x\in \mathcal{S}_{H_N}^\perp$.
 In the case $ x \in \operatorname*{Ker}(\matr{N})$ where the second and first line is not well defined, inequality (\ref{eq:ineq4}) is trivially satisfied.
 Depending on the sign of $\left( \mu_l(\matr{H}, \matr{N}) - \mu_i( \matr{H}, \matr{B}) \right)$ we either have to lower or upper estimate the noise 
 to signal ratio $\langle x, \matr{N^2} x \rangle / \langle x, \matr{V}^2 x \rangle$.\\

 We first handle the case where $ \left( \mu_l(\matr{H}, \matr{N}) - \mu_i( \matr{H}, \matr{B}) \right)$ is non-negative, that is $s=1$.
 We set $d = k +j-i-l-m+2$ and consider
 subspaces $\mathcal{S}_N$ with dimension $d-1$ and $\mathcal{S}_V$ with dimension $k$ such that
    \begin{align*}
        \lambda_d( \matr{N}^2 ) & = \max_{\substack{x \in \mathcal{S}_{N}^\perp }} \frac{\langle x, \matr{N}^2 x \rangle}{\langle x, x \rangle }\\
        \lambda_k( \matr{V}^2 ) & = \min_{\substack{x \in \mathcal{S}_{V}}} \frac{\langle x, \matr{V}^2 x \rangle}{\langle x, x \rangle }.
    \end{align*}
 Then $ x\in \mathcal{S}_N^\perp \cap \mathcal{S}_V$ has
    \begin{align*}
 \frac{\langle x, \matr{N^2} x \rangle}{\langle x, \matr{V}^2 x \rangle} \leq \frac{\lambda_d( \matr{N}^2)}{\lambda_k( \matr{V}^2)}.
    \end{align*}
 The overlap $\mathcal{S}_{H_B}^\perp \cap \mathcal{S}_{H_N}^\perp \cap \mathcal{S}_N^\perp$ has codimension at most $i +l +d -3 = j+k -m-1$, while
    $\mathcal{S}_{H_V} \cap \mathcal{S}_V$ has dimension at least $j+k -m$. Thus $\mathcal{S} = \mathcal{S}_{H_B}^\perp \cap \mathcal{S}_{H_N}^\perp \cap \mathcal{S}_N^\perp \cap \mathcal{S}_{H_V} \cap \mathcal{S}_V$
 has dimensionality of at least one. Choosing any non-zero $x \in \mathcal{S}$ proves the first bound (\ref{eq:upper_spectral_Weyl}), in the case $s=1$.\\

 For the second case, where $ \left( \mu_l(\matr{H}, \matr{N}) - \mu_i( \matr{H}, \matr{B}) \right)$ is negative, we
 consider subspaces  $\mathcal{S}_N$ with dimension $d$ and $\mathcal{S}_V$ with dimension $k-1$ such 
 that $ x\in \mathcal{S}_N \cap \mathcal{S}_V^\perp$ has
    \begin{align*}
 \frac{\langle x, \matr{N^2} x \rangle}{\langle x, \matr{V}^2 x \rangle} \geq \frac{\lambda_d( \matr{N}^2)}{\lambda_k( \matr{V}^2)}.
    \end{align*}
 Now, $\mathcal{S}_{H_B}^\perp \cap \mathcal{S}_{H_N}^\perp \cap \mathcal{S}_V^\perp$ has codimension at most $i +l +k -3$.
 We pick  $d =k -p = k- (j-i-l-n+2)$. Then 
    $\mathcal{S}_{H_V} \cap \mathcal{S}_N$ has dimension at least $j+d-n = k+i+l -2$ therefore a non-trivial overlap with $\mathcal{S}_{H_B}^\perp \cap \mathcal{S}_{H_N}^\perp \cap \mathcal{S}_V^\perp$.
 With this, inequality (\ref{eq:upper_spectral_Weyl}) is proven.\\

 The second inequality follows as a corollary of the first. We consider the first inequality 
 with respect to the negative $-\matr{H}$ and relabeled indices $i' = m+1-i$, $j' = m+1-j$ and $l' = m+1-l$.
 We use the simple relation 
    $$\mu_{i'}(-\matr{H}, \matr{B}) = - \mu_{m+1-i'}(\matr{H}, \matr{B}) = - \mu_{i}(\matr{H}, \matr{B})$$ 
 multiple times.  Inequality (\ref{eq:upper_spectral_Weyl}) then gives 
    \begin{align}
 -  \mu_j(\matr{H}, \matr{V})  \leq - \mu_i(\matr{H}, \matr{B}) - \frac{\lambda_{k + s' p'}({\matr{N}^2})}{\lambda_k({\matr{V}^2})} \bigl( \mu_l(\matr{H}, \matr{N}) - \mu_i(\matr{H}, \matr{B}) \bigr).  \label{eq:2828}
    \end{align}
 We have 
    \begin{align*}
 s' & = \operatorname*{sgn} \left( \mu_{l'}(\matr{H}, \matr{N}) - \mu_{i'}(\matr{H}, \matr{B}) \right) \\
            & = \operatorname*{sgn} \left( - \mu_l(\matr{H}, \matr{N}) - (-1) \mu_i(\matr{H}, \matr{B}) \right) = -s.
    \end{align*}
 For $p'$ we have
    \begin{align*}
 p' & = j'-i'-l'-m+2 = -j + i +l -2m +1 = q
    \end{align*}
 Plugging into (\ref{eq:2828}) we find the second inequality,
    \begin{align*}
        \mu_j(\matr{H}, \matr{V}) \geq \mu_i(\matr{H}, \matr{B}) + \frac{\lambda_{k-sq}({\matr{N}^2})}{\lambda_k({\matr{V}^2})} \bigl( \mu_{l}(\matr{H}, \matr{N}) - \mu_i(\matr{H}, \matr{B}) \bigr).
    \end{align*}
\end{proof}

\subsubsection{From Spectral inequalities to Spectral Stability}

\emph{\textit{Derivation of Theorem \ref{thm:specral_stability} from Theorem \ref{thm:spectral_inequalities}}.}
 Let $(\matr{H}, \matr{V})$ and $[\matr{B}, \matr{N}]$ be as in Theorem \ref{thm:specral_stability}.
 We assume Theorem (\ref{thm:spectral_inequalities})
 and consider eq.\ref{eq:upper_spectral_Weyl} at $i=j$ and $l=1$. Then, $ p= -m+1$, and we have 
    \begin{align*}
        \mu_i(\matr{H}, \matr{V}) - \mu_i(\matr{H}, \matr{B}) \leq \frac{\lambda_{k+s_1(1-m)}({\matr{N}^2})}{\lambda_m({\matr{V}^2})} \bigl( \mu_1(\matr{H}, \matr{N})- \mu_i(\matr{H}, \matr{B}) \bigr).
    \end{align*}
 Analogously for a lower bound eq.\ref{eq:lower_spectral_Weyl} at $i=j$ and $l =m$ has $q = -m +1$ and
    \begin{align*}
        \mu_i(\matr{H}, \matr{V}) - \mu_i(\matr{H}, \matr{B}) \geq \frac{\lambda_{k-s_m(1-m)}({\matr{N}^2})}{\lambda_m({\matr{V}^2})} \bigl( \mu_m(\matr{H}, \matr{N})- \mu_i(\matr{H}, \matr{B}) \bigr).
    \end{align*}
 Since $s_1 \geq s_m$ we distinguish between three cases, $(s_1,s_m)\in \{(-1,-1), (1,1), (-1,1)\}$.

    \begin{itemize}
        \item Case I: $s_1 =-1$, that is $\mu_1(\matr{H}, \matr{N}) < \mu_i(\matr{H}, \matr{B})$. 
 This scenario occurs when for example $\matr{N}$ only contains vectors spanned by eigenvectors of $\matr{H}$ with energy lower than the ones spanning $\matr{B}$.\\
 Then, we also have  $s_m = -1$ and $\mu_i(\matr{H}, \matr{V}) < \mu_i(\matr{H}, \matr{B})$. The remaining indices have
        \begin{alignat*}{5}
            &s_1 &&: \qquad && k-(1-m) && = m \quad &&\text{ with } \quad k=1\\
            &s_m &&: \qquad && k+(1-m) && = 1 \quad &&\text{ with } \quad k=m
         \end{alignat*}
 We find in this case inequalities for the absolute value
    \begin{align*}
 \frac{\lambda_{m}({\matr{N}^2})}{\lambda_1({\matr{V}^2})}   \bigl| \mu_1(\matr{H}, \matr{N}) & - \mu_i(\matr{H}, \matr{B}) \bigr|  \leq \\
        & \left| \mu_i(\matr{H}, \matr{V})  - \mu_i(\matr{H}, \matr{B})  \right| \leq \frac{\lambda_{1}({\matr{N}^2})}{\lambda_m({\matr{V}^2})}   \bigl| \mu_m(\matr{H}, \matr{N}) - \mu_i(\matr{H}, \matr{B}) \bigr|
    \end{align*}
 In this scenario we have 
    \begin{align*}
 \bigl| \mu_1(\matr{H}, \matr{N}) - \mu_i(\matr{H}, \matr{B}) \bigr|  \leq \bigl| \mu_m(\matr{H}, \matr{N}) - \mu_i(\matr{H}, \matr{B}) \bigr|
    \end{align*}
 (\ref{eq:specral_stability_main}) holds.
    \item Case II: $s_m =1$, that is $ \mu_m(\matr{H}, \matr{N}) \geq  \mu_i(\matr{H}, \matr{B})$.  This occurs for example if $\matr{N}$ only contains 
 energy vectors above the target range.\\
 Then we also have $s_1 =1$ and $\mu_i(\matr{H}, \matr{V}) \geq \mu_i(\matr{H}, \matr{B})$.
 Then, remaining indices have 
    \begin{alignat*}{5}
        &s_1 &&: \qquad && k-(1-m) && = 1 \quad &&\text{ with } \quad k=m\\
        &s_m &&: \qquad && k+(1-m) && = m \quad &&\text{ with } \quad k=1
     \end{alignat*}
 and we yield
    \begin{align*}
 \frac{\lambda_{m}({\matr{N}^2})}{\lambda_1({\matr{V}^2})}   \bigl| \mu_m(\matr{H}, \matr{N}) & - \mu_i(\matr{H}, \matr{B}) \bigr| \leq \\
        & \left| \mu_i(\matr{H}, \matr{V}) - \mu_i(\matr{H}, \matr{B})  \right| \leq \frac{\lambda_{1}({\matr{N}^2})}{\lambda_m({\matr{V}^2})}   \bigl| \mu_1(\matr{H}, \matr{N}) - \mu_i(\matr{H}, \matr{B}) \bigr|
    \end{align*}
 As above this inequality implies (\ref{eq:specral_stability_main}).
    \item Case III: $s_m=-1$ and $s_1=1$. Intuitively this corresponds to the most common case, when the noise contains eigenvector contributions above and below the target energy range. Then
 the generalized eigenvalues $\mu_l(\matr{H}, \matr{N})$ of the noise can take values above, below and also within the target range.\\
 We have 
        \begin{alignat*}{5}
           &s_1=1 &&: \qquad && k+(1-m) && = 1 \quad &&\text{ with } \quad k=m\\
           &s_m=-1&&: \qquad && k+(1-m) && = 1 \quad &&\text{ with } \quad k=m
        \end{alignat*}
 and obtain
        \begin{align*}
 \frac{\lambda_{1}({\matr{N}^2})}{\lambda_m({\matr{V}^2})} \bigl( \mu_m(\matr{H}, \matr{N}) & - \mu_i(\matr{H}, \matr{B}) \bigr) \leq \\
            & \mu_i(\matr{H}, \matr{V}) - \mu_i(\matr{H}, \matr{B})  \leq \frac{\lambda_{1}({\matr{N}^2})}{\lambda_m({\matr{V}^2})}   \left( \mu_1(\matr{H}, \matr{N}) - \mu_i(\matr{H}, \matr{B}) \right)
        \end{align*}
 Depending on the sign of  $\mu_i(\matr{H}, \matr{V}) - \mu_i(\matr{H}, \matr{B}) $ either the left or right hand side will give an upper bound 
 for the absolute value. Taking the maximum of the two candidates will surely give a valid upper bound.
    \end{itemize}
 With this, all possible cases are covered and the proof is complete.

\newpage 

\section{An Integrated Spectral Inequality}
\label{sec:IntegratedSpectralInequalities}

\subsection{Motivation: Full-Scale Analysis}
The purpose of this section is to establish a spectral inequality, that 
overcomes an overly sensitive dependence on the spectral range 
of the target operator $\matr{H}$ and that allows us to incorporate more information on the energy distribution of the guess vectors.
On the one hand, this enables a more precise estimate of a low-dimensional approximation in the spectral analysis of operators. 
On the other hand, additional information about the energy distribution of the guess vectors is essential for deriving bounds for unbounded operators, which are of particular physical interest.\\

Throughout this section, we assume that a protocol aims to generate guess vectors for spectral subspace $\mathcal{E}_{(E_a,E_b)}(\matr{H})$ to study 
all the eigenvalues in the range $(E_a,E_b)$. 
For the guess vectors $v_l$ we write 
\begin{align*}
 v_l = \sum_{k} a_{lk} \varphi_k.
\end{align*}
We say that the guess vector $v_l$ exhibits \emph{target clustering}
if the weights of the energy coefficients $|a_{lk}|^2$ and the distance of the eigenvalues $\lambda_k$ to the target range $(E_a,E_b)$
have a negative relationship. That is, even the noise weights $|a_{lk}|^2$ are concentrated around the target energy range $(E_a,E_b)$.\\

Most protocols that aim to generate guess vectors for spectral subspace $\mathcal{E}_{(E_a,E_b)}$ of an operator $\matr{H}$,
exhibit target clustering inherent to their machinery. Theorem \ref{thm:specral_stability}  does not allow us to incorporate this information
to a satisfactory extent, even though it can be crucial for an accurate estimate. The problem of the spectral stability inequality of Theorem \ref{thm:specral_stability}
\begin{align*}
 \left| \mu_i(\matr{H}, \matr{V}) - \mu_i(\matr{H}, \matr{B})  \right| & \leq \frac{\lambda_{1}({\matr{N}^2})}{\lambda_m({\matr{V}^2})}  \max_{j \in {1,m}} \bigl| \mu_j(\matr{H}, \matr{N})- \mu_i(\matr{H}, \matr{B}) \bigr|,
\end{align*}
is that $\mu_{1,m}(\matr{H}, \matr{N})$ has to be either estimated by the leading eigenvalues $\lambda_{1,m}(\matr{H})$, or it has to be known 
that the noise contributions are zero outside a larger spectral range, that is $\operatorname{span}(\matr{V}) \subset \mathcal{E}_{(E_A,E_B)}$ with $E_A < E_a$ and $E_b < E_B$.\\

The first kind of estimate $\mu_{1,(m)}(\matr{H}, \matr{N}) \leq_{(\geq)}  \lambda_{1,(m)}(\matr{H})$ can still be interesting for approximating 
a high-dimensional matrix, into a lower dimension as seen in on the example of the second quantized electronic Hamiltonian in Section \ref{sec:Example_sec_quant}.
This is perhaps already interesting towards more clarity on the precision of dimension reduction algorithms in the context of quantum chemistry.\\

However, since we have already been  working on precision guarantees for approximate methods, we could also argue for a slightly more demanding position. Even the high-dimensional second quantized electronic Hamiltonian 
is only an approximation to an infinite-dimensional and unbounded operator that, moreover, consists of both, point and continuous spectrum.\footnote{Point spectrum corresponds 
to bound states, while a continuous spectrum corresponds to scattering states.} A mathematical estimate of the solution of an approximate model 
is already a step forward, and this is also where we expect the most important deviations to occur. 
Iterating error estimates across different scales is commonly referred to as \emph{multi-scale analysis}.
But could we instead right away encompass all the approximation errors across the different scales in a single inequality?
We refer to such an attempt as \emph{full-scale analysis}. \\

For a full-scale analysis, in the context of physical systems, a simple 
estimate $\mu_{1,(m)}$ by leading eigenvalues of the operator $\matr{H}$ is not sufficient, either because the operator is unbounded or because
the estimate would become too lose. Therefore, it is necessary to honor the clustering property of the guess vectors for an accurate estimate. 
The second way of achieving an estimate of $\mu_{1,(m)}$, which was mentioned above, assumes 
\begin{align}
 \operatorname{span}(\matr{V}) \subset \mathcal{E}_{(E_A,E_B)} \label{eq:span_assumption}
\end{align}
and can be regarded as one incarnation of the clustering property. However, the assumption (\ref{eq:span_assumption}) is rather 
strong. In algorithmic protocols or theoretical derivations, often a guess vector $v_l$ will have some very small contributions outside of $\mathcal{E}_{(E_A,E_B)}$. 
Nevertheless, intuition has it that the physical theory should not depend on whether (\ref{eq:span_assumption}) is exactly true or only approximately so, in an overly sensitive manner. \\

In Subsection $\ref{sec:IntSpecResults}$ we will introduce an integrated spectral inequality that can capture this intuition
and allow for a full-scale analysis. It is instructive to first introduce some preliminary definitions, which enable 
such an "integrated" spectral inequality.

\subsection{Preliminary Definitions: The Spectral Measure and Projections of the Guess Space}
A pivotal object for this section is a \emph{spectral measure} that is introduced 
in Section \ref{sec:spec_meas}.
The spectral measure as we will use for our approximation purposes, is similar to the spectral measure in rigorous scattering theory \cite{Simon1978AnOO}
and allows us to express target clustering which is essential for improved spectral inequalities.
It is also closely related to an operator-valued function, which encodes projections of the guess vectors onto spectral subspaces of  $\matr{H}$ and is
introduced in Section \ref{sec:op_val_func}. Take together these two concepts enable an integrated spectral inequality. 

\subsubsection{Complete Operators}
In the following, we will be somewhat more specific on the hermitian operator $\matr{H}$ that is subject to study. 

\begin{definition}
    \label{def:com_op}
 Let $\matr{H}$ be an hermitian operator on a Hilbert space $\Hil$. The spectrum of $\matr{H}$ can be divided into a
 point spectrum $\sigma_p$ and a continuous spectrum $\sigma_c$.\footnote{
    The point spectrum consist of discrete eigenvalues, as they are familiar form finite-dimensional matrices, and accumulation points. 
 Physical operators typically only have a single accumulation point where the discrete spectrum merges into the continuous.}

    \begin{itemize}
        \item The \textbf{Point Spectrum (\(\sigma_p\))} consists of all eigenvalues \(\lambda\) where there exists a non-zero vector \(|\varphi\rangle \in \Hil\) such that \(\mathbf{H}|\varphi\rangle = \lambda |\varphi\rangle\). Each eigenvector \(|\varphi\rangle\) associated with an eigenvalue \(\lambda\) is normalizable and belongs to the Hilbert space \(\Hil\).
        \item The \textbf{Continuous Spectrum (\(\sigma_c\))} consists of those \(\lambda\) for which \(\mathbf{H} - \lambda \mathbf{I}\) is not invertible, yet does not yield any normalizable eigenvectors. Instead, the "eigenfunctions" associated with \(\sigma_c\) are generalized eigenfunctions, often lying in an extension of $\Hil$ (like distributions).
 These functions are parameterized by points in a phase space \(\Gamma \subseteq \mathbb{R}^d\). 
    \end{itemize}

 An hermitian operator \(\mathbf{H}\) is said to be \emph{complete in $\Hil$} if it satisfies the spectral theorem in the relaxed sense 
    \begin{align} 
\mathbf{I} = \sum_{\lambda_k \in \sigma_p} |\varphi_k \rangle \langle \varphi_k| + \int_{\sigma_c} \int_{\Gamma_\lambda} |\varphi_k \rangle \langle \varphi_k | dk d\lambda,   \label{eq:completeness}
\end{align}
where \(\mathbf{I}\) is the identity operator on $\Hil$, \(|\varphi_k\rangle\) are the eigenfunctions or generalized eigenfunctions and $\Gamma_\lambda \subset \Gamma$ is the subspace of the phase 
space that corresponds to $\lambda \in \sigma_c$.
\end{definition}
Roughly speaking, the main point from 
Definition \ref{def:com_op} is a guarantee, that we can expand the guess vectors in terms of the eigenfunctions of the operator $\matr{H}$ 
at least in a relaxed sense. The definition is exactly sufficiently general to encompass quantum mechanical operators that 
describe physical systems.\\
It is more than general enough to include finite-dimensional matrices and compact operators for $\matr{H}$.
This is easily seen as they have only a point spectrum. The
subsequently derived inequalities are also of interest in the context of finite-dimensional matrix inequalities. \\

An advantage of Definition \ref{def:com_op} is that we can jointly treat discrete and continuous spectra, which is necessary for a full-scale analysis.
Hamiltonians that describe physical systems are indeed a mixture of continuous and discrete spectra.
They do not have access to the spectral theorem, but the completeness relation in the sense of (\ref{eq:completeness}) is 
a satisfactory substitute.
As bound and scatter states are often in a superposition, the joint treatment is closer to physical reality \cite{Moyal_1949}.\\

We will frequently make use of a spectral factorization of the phase space in the integrations 
\begin{align*}
    \int_\Gamma f(k)  dk = \int_{\sigma_c} \int_{\Gamma_\lambda} f(k)  dk d\lambda.
\end{align*}
A slightly more careful account for the definition of a GEP in the context of dimension reduction is given.

\begin{definition}
    \label{def:GEP_com}
 Let $\matr{H}$ be an hermitian operator, complete in $\Hil$ as in Definition \ref{def:com_op}. 
 Consider guess vectors $|v_1\rangle, \cdots , |v_M\rangle \in \Hil$ with expansions 
    \begin{align*}
 |v_l\rangle & = \sum_{\lambda_k \in \sigma_p } \alpha_{lk} |\varphi_k\rangle + \int_{\sigma_c} \int_{\Gamma_\lambda }\alpha_l(k) |\varphi_k\rangle dk   d\lambda.
    \end{align*}
 The guess vector operator $\matr{\hat V}: \mathbb{C}^M \to \Hil$ is defined as 
    \begin{alignat*}{2}
 \matr{\hat V} & = \sum_{l=1}^{M} |v_l\rangle \langle e_l|\\
                   & =  \sum_{l=1}^{M} \sum_{\lambda_k \in \sigma_p }  \alpha_{lk} |\varphi_k\rangle \langle e_l|  + \sum_{l=1}^{M} \int_{\sigma_c} \int_{\Gamma_\lambda}  \alpha_l(k) |\varphi_k\rangle  \langle e_l| dk  d\lambda .
     \end{alignat*}
 The matrix representation of $\matr{\hat V}$ in the left basis $\{|\varphi_k\rangle\}$ and right basis $\{|e_l\rangle\}$ is denoted as $\matr{V}$. 
 The generalized eigenvalue problem that is induced by the operator $\matr{H}$ and the guess vector matrix $\matr{V}$
 is denoted as $(\matr{H}, \matr{V})$.
\end{definition}
The completeness relation $(\ref{eq:completeness})$ implies $\langle \varphi_k | \varphi_{k'} \rangle = \delta(k-k')$ and 
the entries of the guess vector matrix $\matr{V}_{kl}$ are given by  $a_{kl}$ (discrete) or $ a_l(k)$ (continious). \\

In order to leverage more information on the guess vectors for a sharper precision guarantee, 
a certain operator-valued function proves to be a valuable tool.

\subsubsection{An Operator-Valued Function for Guess Vector Projections}
\label{sec:op_val_func}

For an hermitian and complete GEP as in Definition \ref{def:GEP_com} we introduce an operator-valued function that projects 
the guess vectors on spectral subspaces of $\matr{H}$.\\ 

Let $\text{Hom}(\mathbb{C}^M, \Hil)$ denote the set of linear operators that map from $\mathbb{C}^M$ to $\Hil$ and 
let $\mathcal{P}(\mathbb{R})$ denote the power set of $\mathbb{R}$.\footnote{ That is $\mathcal{P}(\mathbb{R})$ is the set of all subsets of $\mathbb{R}$.}
For a given guess vector operator $\matr{\hat V}$ to a Hamiltonian $\matr{H}$ with spectrum $\{\sigma_p , \sigma_c\}$,
an operator-valued function $\matr{\hat V}\ast$ is introduced as,
\begin{align*}
 \matr{\hat V}\ast& : \mathcal{P}(\mathbb{R}) \to \text{Hom}(\mathbb{C}^M, \Hil),\\ 
 \matr{\hat V}I & =  \sum_{l=1}^{M} \sum_{\lambda_k \in I_p}    \alpha_{lk} |\varphi_k\rangle \langle e_l| +   \sum_{l=1}^{M} \int_{ I_c }   \int_{\Gamma_\lambda}  \alpha_l( k) |\varphi_k\rangle \langle e_l| dk d\lambda . 
\end{align*}
Here, the shorthand notation $I_p = I \cap \sigma_p$ and $I_c = I \cap \sigma_c$ was used.
In the sum of $\matr{\hat V}I$ only those terms are included, which correspond to eigenvalues $\lambda$ of $\matr{H}$ that are in $I_p \subset \mathbb{R}$.
The integral is taken over the phase space variables $k$ that have eigenvalues $\lambda$ in $I_c$. 
Since $\matr{H}$ is hermitian we have for example $\matr{\hat V}\mathbb{R} = \matr{\hat{V}}$. 
If, for example, $I = [E,E')$ is an interval we write $\matr{\hat{V}}[E,E')$ and
\begin{align*}
 \matr{\hat V} [E,E') & =  \sum_{l=1}^{M} \sum_{\lambda_k \in [E,E')}    \alpha_{lk} |\varphi_k\rangle \langle e_l| +   \sum_{l=1}^{M} \int_{[E,E') \cap \sigma_c}   \int_{\Gamma_\lambda}  \alpha_l( k) |\varphi_k\rangle \langle e_l| dk d\lambda
\end{align*}
We also exploit a short-hand notation $\matr{\hat{V}}^\uparrow(E) = \matr{\hat{V}}(-\infty,E )$ and $\matr{\hat{V}}^\downarrow(E) = \matr{\hat{V}}(E, \infty)$.
For example, 
\begin{align*}
 \matr{\hat{V}}^\uparrow(E) & =  \sum_{l=1}^{M} \sum_{\lambda_k <E }    \alpha_{lk} |\varphi_k\rangle \langle e_l| +   \sum_{l=1}^{M} \int_{ \inf {\sigma_c} }^E   \int_{\Gamma_\lambda}  \alpha_l( k) |\varphi_k\rangle \langle e_l| dk d\lambda.
\end{align*}
The definition of the operator-valued function $\matr{\hat{V}}\ast$ induces the definition of the matrix-valued function $\matr{V} \ast$.
For example $\matr{V} (E) = [\langle \varphi_k |\matr{\hat{V}}(E) | e_l \rangle ]_{kl}$
and accordingly for $\matr{V}[E,E'), \matr{V}^\uparrow(E),  \dots$.\\
Similarly, a matrix-valued function for spectral decompositions of the Gram matrix is induced,
\begin{align*}
 \matr{V}^2 \ast:  \mathcal{P}(\mathbb{R}) \to  \mathbb{C}^{M \times M}, \quad \matr{V}^2I = (\matr{V}I)^2.
\end{align*}

The dependence of generalized eigenvalues $\mu_i(\matr{H}, \matr{V})$ on the Hamiltonian $\matr{H}$ 
will no longer explicitly be denoted and instead we write $\mu_i(\matr{V})$.
The definition of $\matr{V}\ast$ as matrix-valued function also induces interpretations of
of $ \lambda_j(\matr{V}^2), \lambda_j^\uparrow(\matr{V}^2), \mu_j(\matr{V}), \mu_j^\downarrow(\matr{V}),\dots$
as functions defined on $\mathcal{P}(\mathbb{R})$. In Table \ref{tab:not_induced} a representative collection of the arising notation 
is given and will be subsequently used. 

\begin{table}[h]
    \centering
    \begin{tabular}{|c|c|}
    \hline
    \textbf{Notation} & \textbf{Definition} \\
    \hline   
    $\lambda_j(\matr{V}^2)[E,E')$ & $\lambda_j(\matr{V}^2[E,E'))$ \\
    \hline  
    $\lambda_j^\uparrow(\matr{V}^2)(E)$ & $\lambda_j((\matr{V}^\uparrow(E))^2)$ \\
    \hline
    $\mu_j(\matr{V})(E,E']$ & $\mu_j(\matr{V}(E,E'])$ \\
    \hline
    $\mu_j^\downarrow(\matr{V})(E)$ & $\mu_j(\matr{V}^\downarrow(E))$ \\
    \hline
    \end{tabular}
    \caption{Representative collection of notation that is induced by the matrix-valued function $\matr{V}\ast$.}
    \label{tab:not_induced}
\end{table}

We will also make use of special notation to denote the lowest non-vanishing eigenvalue of a Gramm matrix $\matr{V}^2$ 
by writing $\lambda_{m^*}(\matr{V}^2)$. The definition is similarly induced trough the matrix-valued function $\matr{V}\ast$,
\begin{align*}
    \lambda_{m^*}(\matr{V}^2)(E) := \min_{x \in \operatorname*{Ker}(\matr{V}^2(E))^\perp} \frac{\langle x, \matr{V}^2(E) x \rangle}{\langle x, x \rangle}.
\end{align*}

\subsubsection{The Spectral Measure}
\label{sec:spec_meas}

The final concept that is necessary to express the target clustering property and to
obtain approximation guarantees, that honor the full spectrum of an operator $\matr{H}$ in its unbounded and 
mixed nature, is given by the \emph{spectral measure}.\\

We briefly motivate: For an hermitian and complete GEP as in Definition \ref{def:GEP_com} consider the map 
\begin{align*}
    \int |\alpha(E)|^2 (\ast) & : \mathcal{P}(\mathbb{R}) \to \mathbb{R}, \\
    \int |\alpha(E)|^2 (I) & := \operatorname{Tr}[ (\matr{\hat V} I )^2]\\
    & = \sum_{\lambda_k \in I \cap \sigma_p} |a_{lk}|^2 + \int_{I \cap \sigma_c} \int_{\Gamma_\lambda} |a_l(k)|^2 dk d\lambda.
\end{align*}

 So far, $\int |\alpha(E)|^2$ is merely a sequence of symbols that is used to denote the map,
 but the map is indeed seen to be a measure on $\mathbb{R}$. 
 Its normalization can be directly obtained from the trace of the 
 Gram matrix 
    \begin{align*}
 \operatorname{Tr}[ (\matr{\hat V} \mathbb{R} )^2] = \operatorname{Tr}[\matr{\hat V}^2] = \operatorname{Tr}[\matr{V}^2].
    \end{align*}

 The following definition allows us to treat the measure form above as an integral.
    \begin{definition}[Spectral Measure]
        \label{def:spec_meas}
 For a GEP \((\matr{H}, \matr{V})\) as in Definition \ref{def:GEP_com}, the \emph{spectral measure} \( |\alpha(\cdot)|^2 : \mathbb{R} \to \mathbb{R}_{\geq 0} \) is defined as:
        \begin{align}
 |\alpha(E)|^2 = \sum_{\lambda_k \in \sigma_p} \delta(E - \lambda_k) \sum_{l=1}^M |a_{lk}|^2 + \int_{\sigma_c} \delta(E - \lambda) \int_{\Gamma_\lambda} \sum_{l=1}^M |a_l(k)|^2 \, dk \, d\lambda. \label{eq:spec_meas_def}
        \end{align}
        
 A spectral integral for a GEP with point spectra \(\sigma_p\) and continuous spectra \(\sigma_c\) is defined as follows. Let \(I \subset \mathbb{R}\) and \(f: \mathbb{R} \to \mathbb{C}\) be Riemann integrable on \(\sigma_c\) and a linear combination of delta functions $\delta(E -\lambda)$ on $\mathbb{R} \setminus \sigma_c $. Then, the integral is given by
        \begin{align*}
            \int_{I} f(E) \, dE = \int_{\sigma_d} \mathbb{1}_I(E) f(E) \, dE + \int_{\sigma_c} \mathbb{1}_I(E) f(E) \, dE,
        \end{align*}
 where the first integral over \(\sigma_p\) satisfies linearity and 
        \begin{align*}
            \int_{\sigma_d} \mathbb{1}_I(E) \delta(E - \lambda_k) \, dE = \begin{cases}
            1 & \text{if } \lambda_k \in I \cap \sigma_d \\
            0 & \text{otherwise}
            \end{cases}.
        \end{align*}
 The second integral is the Riemann integral over \(\sigma_c\).
        \end{definition}
 Here, $\mathbb{1}_I(E)$ denotes the indicator function of the set $I$. We solely introduced 
 the spectral integral to deal with measures, which are a mixture of discrete and continuous measures.\\

 Regularity of the spectral measure $|\alpha(E)|^2$ within the continuous spectrum is, in practice, not a problem. 
 Within statistical physics, $\Gamma_\lambda$ is often referred to as the \emph{energy shell}. For a Hamiltonian $\matr{H}$ that is 
a smooth function in the phase space variables $k$, the energy shells 
also vary smoothly in $\lambda$. This means that phase space variables $k$ and $k'$ that are close to one another
will lie in energy shells that have small energy differences. In particular, reasonable regularity assumptions on the 
continuous coefficients $a_l(k)$ of the guess vectors will imply that the spectral measure $|\alpha(E)|^2$ is also sufficiently regular within $\sigma_c$.\\

The following statement is an immediate consequence of the definition, but we want to highlight the value of the spectral measure. 
\begin{proposition}
    \label{prop:spec_meas}
 For a GEP as in Definition \ref{def:GEP_com} where $\matr{H}$ and a discrete spectrum $\sigma_d$ and continuous spectrum $\sigma_c$.
 For all $I \subset \mathbb{R}$ the spectral measure $|\alpha(E)|^2$ satisfies
    \begin{align}
        \int_{I} |\alpha(E)|^2 dE & = \operatorname{Tr}[\matr{V}^2 I].   \label{eq:spec_measur_meas}
    \end{align}
    It can be obtained as a derivate from cumulative distribution function of the Gramm matrix $\matr{V}^2$,
    \begin{align}
        |\alpha(E)|^2 = \frac{d}{dE'} \operatorname{Tr}[\matr{V}^2(-\infty, E']] \bigg|_{E' = E}. \label{eq:spec_measur_meas_derivate}
    \end{align}    
\end{proposition}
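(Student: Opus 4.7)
The plan is to unfold both sides of (\ref{eq:spec_measur_meas}) using the definitions of the spectral measure (Definition \ref{def:spec_meas}), the spectral integral, and the guess vector operator $\matr{\hat V} I$ (Section \ref{sec:op_val_func}), and then to obtain (\ref{eq:spec_measur_meas_derivate}) as the distributional inverse of (\ref{eq:spec_measur_meas}) by the fundamental theorem of calculus extended to include delta-function contributions.

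First, I would substitute the definition of $|\alpha(E)|^2$ from (\ref{eq:spec_meas_def}) into $\int_I |\alpha(E)|^2 dE$. By linearity of the spectral integral, the sum splits into a discrete part and a continuous part. For the discrete part, the sifting rule $\int_{\sigma_p} \mathbb{1}_I(E)\delta(E-\lambda_k)dE = \mathbb{1}_{I\cap\sigma_p}(\lambda_k)$ from Definition \ref{def:spec_meas} collapses the sum to $\sum_{\lambda_k \in I\cap \sigma_p}\sum_{l=1}^M |a_{lk}|^2$. For the continuous part, the Riemann integration against $\delta(E-\lambda)$ over $\sigma_c$ collapses the outer variable, giving $\int_{I\cap \sigma_c}\int_{\Gamma_\lambda}\sum_{l=1}^M|a_l(k)|^2 dk\, d\lambda$.

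Next, I would compute $\operatorname{Tr}[\matr{V}^2 I]$ directly. By Definition \ref{def:GEP_com} and the notation of Section \ref{sec:op_val_func}, the columns of $\matr{\hat V}I$ are the projections of the guess vectors $v_l$ onto the spectral subspace associated with $I$. Since $\matr{V}^2 I = (\matr{V}I)^\dagger (\matr{V}I)$, the trace equals $\sum_{l=1}^M \|\matr{\hat V}I\, e_l\|^2$. Using the orthogonality relation $\langle \varphi_k|\varphi_{k'}\rangle = \delta(k-k')$ inherited from the completeness relation (\ref{eq:completeness}), this norm squared expands to exactly the same discrete plus continuous expression derived above, establishing (\ref{eq:spec_measur_meas}).

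For (\ref{eq:spec_measur_meas_derivate}), I would specialise (\ref{eq:spec_measur_meas}) to $I = (-\infty, E']$, producing a cumulative distribution $F(E') = \operatorname{Tr}[\matr{V}^2(-\infty, E']] = \int_{-\infty}^{E'} |\alpha(E)|^2 dE$. Differentiating in $E'$, the continuous part yields $|\alpha(E)|^2$ on $\sigma_c$ by the fundamental theorem of calculus, while the jump discontinuities of $F$ at each $\lambda_k \in \sigma_p$, of height $\sum_l |a_{lk}|^2$, contribute delta terms $\delta(E-\lambda_k)\sum_l |a_{lk}|^2$ in the distributional sense, matching (\ref{eq:spec_meas_def}) term by term.

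The main subtlety is the interpretation of the derivative in (\ref{eq:spec_measur_meas_derivate}): $F$ is only piecewise smooth with jumps at the eigenvalues of $\sigma_p$, so the derivative must be read distributionally. However, Definition \ref{def:spec_meas} already treats $|\alpha(E)|^2$ as a mixed object consisting of a Riemann-integrable density on $\sigma_c$ plus a linear combination of delta functions on $\sigma_p$, so the identification is self-consistent. Beyond this interpretive point the argument is a direct bookkeeping exercise, and the proof reduces to carefully matching the two sides term by term.
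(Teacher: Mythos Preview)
Your proposal is correct and follows the same route as the paper, which states that the proposition is ``an immediate consequence of the definition'' and does not spell out a formal proof beyond the motivating computation preceding Definition~\ref{def:spec_meas}. Your argument is precisely the careful unfolding of definitions that this phrase gestures at: matching the discrete and continuous pieces of the spectral integral against the trace via the orthogonality $\langle \varphi_k|\varphi_{k'}\rangle=\delta(k-k')$, and then reading the derivative distributionally to recover the delta spikes on $\sigma_p$.
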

Equation (\ref{eq:spec_measur_meas}) and (\ref{eq:spec_measur_meas_derivate}) show that the spectral measure is analogous to a probability measure.
It describes the distribution of the guess vectors with respect to an energy decomposition. It is simultaneously a mixture of discrete and continuous measures and
the usage of delta distributions in the definition of $|a( E )|^2$ is merely a technicality to embed this joint nature. 
The function $\operatorname{Tr}[\matr{V}^2(-\infty, E']]$ is a monotone increasing function in $E$ as seen in Lemma \ref{lem:monotone_gram_eigenvalues}. 
Therefore it is indeed a valid cumulative distribution function, after we allow for delta pikes as derivatives at discontinuities.\\

We immediately obtain the following identities for the spectral measure. 
\begin{corollary}
 For a GEP and a spectral measure as in Proposition \ref{prop:spec_meas} we have
    \label{cor:spec_meas}
    \begin{align}
        \int_{-\infty}^{\infty} |\alpha(E)|^2 dE & = \operatorname{Tr}[\matr{V}^2]\\
        \lambda_1(\matr{V}^2 I) & \leq  \int_{I} |\alpha(E)|^2 dE \label{eq:spec_meas_ineq}
    \end{align}
 If the GEP has a signal-noise decomposition $\matr{V} = \matr{B} + \matr{N}$ to the spectral subspace $\mathcal{E}_{(E_a,E_b)}$, then the spectral measure has,
    \begin{align}
        \int_{-\infty}^{E_a} |\alpha(E)|^2 dE + \int^{\infty}_{E_b} |\alpha(E)|^2 dE = \operatorname{Tr}[\matr{N}^2]. \label{eq:spec_meas_noise}
    \end{align}
\end{corollary}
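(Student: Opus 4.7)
The plan is to derive the three identities of Corollary~\ref{cor:spec_meas} as short consequences of Proposition~\ref{prop:spec_meas}, which has already identified $\int_I |\alpha(E)|^2\,dE$ with $\operatorname{Tr}[\matr{V}^2 I]$, together with the definitional property $\matr{\hat V}\mathbb{R}=\matr{\hat V}$ and the orthogonality properties of the signal-noise decomposition from Definition~\ref{def:ortho_decomp}.

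For the first identity I would just take $I=\mathbb{R}$ in Proposition~\ref{prop:spec_meas}. Since $\matr{\hat V}\mathbb{R}=\matr{\hat V}$ (the full sum over $\sigma_p$ and the full integral over $\sigma_c$ are retained), the matrix representation satisfies $\matr{V}^2\mathbb{R}=\matr{V}^\dagger\matr{V}=\matr{V}^2$, giving $\int_{-\infty}^{\infty}|\alpha(E)|^2\,dE=\operatorname{Tr}[\matr{V}^2]$.

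For the inequality $\lambda_1(\matr{V}^2 I)\le\int_I|\alpha(E)|^2\,dE$, the point is that $\matr{V}^2 I=(\matr{V}I)^\dagger(\matr{V}I)$ is a Gram matrix and hence positive semidefinite, so its eigenvalues $\lambda_1(\matr{V}^2 I)\ge\lambda_2(\matr{V}^2 I)\ge\cdots\ge 0$ are nonnegative. Thus $\lambda_1(\matr{V}^2 I)\le\sum_j\lambda_j(\matr{V}^2 I)=\operatorname{Tr}[\matr{V}^2 I]$, and the right-hand side equals $\int_I|\alpha(E)|^2\,dE$ by Proposition~\ref{prop:spec_meas}.

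The third identity is the one that requires a little work, and I expect this to be the main (mild) obstacle, since we must carefully track how the spectral decomposition $\matr{V}\ast$ interacts with the signal-noise decomposition $\matr{V}=\matr{B}+\matr{N}$. Since $\operatorname{span}(\matr{B})\subset\mathcal{E}_{(E_a,E_b)}$, the eigenfunction expansion coefficients of each $b_l$ vanish outside $(E_a,E_b)$, so $\matr{\hat B}(-\infty,E_a]=\matr{\hat B}[E_b,\infty)=0$; symmetrically, because $\operatorname{span}(\matr{N})\subset\mathcal{E}_{(E_a,E_b)}^\perp$, we have $\matr{\hat N}(E_a,E_b)=0$ and $\matr{\hat V}(-\infty,E_a]=\matr{\hat N}(-\infty,E_a]$, $\matr{\hat V}[E_b,\infty)=\matr{\hat N}[E_b,\infty)$. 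Applying Proposition~\ref{prop:spec_meas} to these two disjoint intervals and adding, I obtain
\begin{align*}
\int_{-\infty}^{E_a}|\alpha(E)|^2\,dE+\int_{E_b}^{\infty}|\alpha(E)|^2\,dE
&=\operatorname{Tr}[\matr{N}^2(-\infty,E_a]]+\operatorname{Tr}[\matr{N}^2[E_b,\infty)].
\end{align*}
Finally, since $\matr{\hat N}(E_a,E_b)=0$, the intervals $(-\infty,E_a]$ and $[E_b,\infty)$ together carry all of $\matr{N}$, i.e.\ $\matr{N}^\dagger\matr{N}=\matr{N}^2(-\infty,E_a]+\matr{N}^2[E_b,\infty)$, so the right-hand side collapses to $\operatorname{Tr}[\matr{N}^2]$, completing the proof.
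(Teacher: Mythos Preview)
Your proposal is correct and matches the paper's own argument essentially line for line: the first identity by taking $I=\mathbb{R}$, the inequality via $\lambda_1(\matr{V}^2I)\le\sum_j\lambda_j(\matr{V}^2I)=\operatorname{Tr}[\matr{V}^2I]$ for the positive semidefinite Gram matrix, and the noise identity from $\matr{N}^2=\matr{V}^2(-\infty,E_a]+\matr{V}^2[E_b,\infty)$. The only difference is cosmetic—you route the last step through $\matr{\hat V}I=\matr{\hat N}I$ on the two outer intervals before summing, whereas the paper states the resulting decomposition of $\matr{N}^2$ directly.
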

The overall normalization of the spectral measure can thus directly be computed from the Gram matrix $\matr{V}^2$ that 
is accessible to the computer. Inequality (\ref{eq:spec_meas_ineq}) follows from 
the simple estimate 
\begin{align*}
 \operatorname{Tr}[\matr{V}^2I] = \lambda_1( \matr{V}^2I) + \cdots + \lambda_M(\matr{V}^2I) \geq \lambda_1(\matr{V}^2I).
\end{align*}

To illustrate the accuracy of (\ref{eq:spec_meas_ineq}) we consider for simplicity the example of an interval $I$ that 
has no overlap with the continuous spectrum $I\cap \sigma_c = \varnothing$ and only contains discrete eigenvalues. 
The rank of $\matr{V}^2I$ is bounded by the number of eigenvalues in $I$, 
\begin{align*}
 \operatorname{rank}(\matr{V}^2I) \leq | \{ \lambda_k \in I \} |.
\end{align*}
If $I = (\lambda_k - dE ,\lambda_k +dE)$ and $dE$ is small enough the rank of $\matr{V}^2(I)$ will eventually be bounded by 
the multiplicity of the eigenvalue $\lambda_k$. Thus, the estimate $\lambda_1( \matr{V}^2I) \leq \operatorname{Tr}[\matr{V}^2I]$
becomes in the limit $\operatorname{len}(I) \to 0$ up to the multiplicity of $\lambda_k$ exact.\\

Identity (\ref{eq:spec_meas_noise}) follows immediately from Proposition \ref{prop:spec_meas} and the definition of the noise matrix $\matr{N}$,
\begin{align*}
 \matr{N}^2 = \matr{V}^2(-\infty,E_a] + \matr{V}^2[E_b,\infty).
\end{align*}

Theorem \ref{thm:spectral_inequalities} and inequality (\ref{eq:spec_meas_ineq}) are the main tools
from which we derive an integrated spectral inequality. 

\subsection{Integrated Spectral Inequalities}
To give the main result, some mild definitions are used to compress the notation. 
\subsubsection{Signal GEPs and Certifiable Signal GEPs}
\begin{definition}[signal GEP]
    \label{def:signal_GEP}
 A hermitian GEP $(\matr{H},\matr{V})$, 
 will be called a \emph{signal} to an energy interval $(E_a,E_b)$, if the guess vector Gramm matrix $\matr{V}^2$ 
 satisfies 
    \begin{align*}
        \int_{E_b}^{\infty} \frac{|\alpha(E)|^2 }{\lambda_{m^*}^\uparrow(\matr{V}^2)(E)} dE \leq 1 \qquad \text{and} \qquad \int_{-\infty}^{E_a} \frac{|\alpha(E)|^2 }{\lambda_{m^*}^\downarrow(\matr{V}^2)(E)} dE \leq 1.
    \end{align*}
 For such signal-GEPs, let $\matr{B} = \matr{V}(E_a,E_b)$ denote the pure signal contributions of the guess vectors.
\end{definition}
Recall that $\lambda_{m^*}^\uparrow(\matr{V}^2)(E)$ returns the smallest non-zero eigenvalue of the Gram matrix $\matr{V}^\uparrow(E)^2$.
Definition \ref{def:signal_GEP} is very mild. Any Gramm matrix 
that attempts to approximate the eigenvalues in $(E_a,E_b)$ with a precision guarantee should easily satisfy the prerequisites of 
a signal GEP. On the contrary, a signal GEP that does not satisfy the prerequisites of Definition \ref{def:signal_GEP} cannot be expected to deliver 
useful approximations to the eigenvalues in $(E_a,E_b)$. A slightly stronger but still mild condition is given in the following definition.
\begin{definition}[certifiable signal GEP]
    \label{def:certified_signal_GEP}
 Consider a signal GEP $(\matr{H},\matr{V})$ to an interval $(E_a,E_b)$ as
 in Definition \ref{def:certified_signal_GEP}. Let $m$ denote the rank of $\matr{V}^2$. The signal GEP is called \emph{certifiable}, if the guess vector Gram matrix $\matr{V}^2$
 satisfies
    \begin{align*}
        0\leq \frac{\int_{E_b}^{\infty} |\alpha(E)|^2 dE}{\lambda_{m}(\matr{V}^2)- \lambda_{1}^\downarrow(\matr{V}^2)[E_b]} \leq 1 \qquad \text{and} \qquad  0\leq \frac{\int_{-\infty}^{E_a} |\alpha(E)|^2 dE}{\lambda_{m}(\matr{V}^2) - \lambda^\uparrow_1(\matr{V})[E_a]} \leq 1
    \end{align*}
 Here, the notation $\lambda_{1}^\downarrow(\matr{V}^2)[E_b] := \lambda_{1}(\matr{V}[E_b, \infty)^2)$ and $\lambda_{1}^\uparrow(\matr{V}^2)[E_a] := \lambda_{1}(\matr{V}(-\infty, E_a]^2)$ was used.
\end{definition}
Given a GEP $(\matr{H}, \matr{V})$ of rank $m$ and an estimate on the leading eigenvalue $\lambda_1(\matr{N}^2)$ of the noise Gram matrix,
a sufficient condition to ensure that the GEP is a certifiable signal GEP is given by 
\begin{align}
    \lambda_m(\matr{V}^2) \geq (m+1) \lambda_1(\matr{N}^2). \label{eq:certifiable_condition}
\end{align}
This is easily seen from (\ref{eq:spec_meas_noise}) of Corollary \ref{cor:spec_meas},
\begin{align*}
    \int_{E_b}^{\infty} |\alpha(E)|^2 dE \leq \operatorname{Tr}[\matr{N}^2] \leq m \lambda_1(\matr{N}^2).
\end{align*}
In particular,
\begin{align*}
 \frac{\int_{E_b}^{\infty} |\alpha(E)|^2 dE}{\lambda_{m}(\matr{V}^2)- \lambda_{1}^\downarrow(\matr{V}^2)[E_b]} \leq \frac{m \lambda_1(\matr{N}^2)}{(m+1 -1 ) \lambda_1(\matr{N}^2)} \leq 1.
\end{align*}
Thus, the condition (\ref{eq:certifiable_condition}) shows that certifiable signal GEPs are indeed certifiable.
A GEP that attempts to give approximations of high precision, should easily satisfy (\ref{eq:certifiable_condition}).\footnote{
In these statements, we have ignored the so-called singularity problem of the Gram matrix $\matr{V}^2$, which is known in 
the numerical literature. It is often based on overestimating the number of eigenvalues in $(E_a,E_b)$
to "better span" the target subspace, but also leads to the disadvantage of singular Gram matrices.
However, as we will elaborate in Section \ref{sec:NumTheoSimbiosis}, it 
can be completely avoided, while still maintaining the advantage of slight overestimation of the number of eigenvalues in $(E_a,E_b)$.}

\subsubsection{Main Result}
\label{sec:IntSpecResults}

We are ready to give an integrated spectral inequality on the precision
of projection-based GEPs as a dimension reduction scheme. 
\begin{theorem}
    \label{thm:integrated_spectral_stability}
 Let the hermitian GEP $(\matr{H},\matr{V})$ be a signal to the energy interval $(E_a,E_b)$ as in Definition \ref{def:signal_GEP}.
 For all $i$ the following inequalities hold,
    \begin{align}
        \mu_i(\matr{V}) - \mu_i(\matr{B}) & \leq \int_{E_b}^{\infty} \frac{|\alpha(E)|^2 }{\lambda_{m^*}^\uparrow(\matr{V}^2)(E)} \left(E -  \mu_i(\matr{B}) \right) dE \label{eq:integrated_spec_ineq_up_thm}\\
        \mu_i(\matr{V}) - \mu_i(\matr{B}) & \geq \int_{-\infty}^{E_a} \frac{|\alpha(E)|^2 }{\lambda_{m^*}^\downarrow(\matr{V}^2)(E)} \left(E -  \mu_i(\matr{B}) \right) dE.  \label{eq:integrated_spec_ineq_down_thm}
    \end{align}
\end{theorem}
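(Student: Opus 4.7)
The approach is to discretize the upper noise region $[E_b,\infty)$ into fine spectral slices, apply Theorem~\ref{thm:spectral_inequalities} iteratively to absorb one slice at a time, and then pass to a continuum limit. Inequality~\eqref{eq:integrated_spec_ineq_down_thm} will follow by applying the same argument with $\matr{H}$ replaced by $-\matr{H}$ and partitioning $(-\infty,E_a]$.

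Concretely, fix a partition $E_b = F_0 < F_1 < \cdots < F_J$ of $[E_b,\infty)$ and set $\matr{V}_j = \matr{V}^\uparrow(F_j)$. Because $\operatorname{span}(\matr{V}_{j-1})$ and $\operatorname{span}(\matr{V}[F_{j-1},F_j))$ sit in disjoint spectral subspaces of the self-adjoint operator $\matr{H}$, the pair $[\matr{V}_{j-1},\matr{V}[F_{j-1},F_j)]$ is an orthogonal decomposition of $\matr{V}_j$ in the sense of Definition~\ref{def:ortho_decomp}. Applying Theorem~\ref{thm:spectral_inequalities} with indices $i=j$ (the original $i$), $l=1$, $k=m_j=\operatorname{rank}(\matr{V}_j^2)$, and noting that $s=+1$ because $\mu_1(\matr{H},\matr{V}[F_{j-1},F_j)) \geq F_{j-1} \geq \mu_i(\matr{V}_{j-1})$, yields the one-step increment
\begin{align*}
\mu_i(\matr{V}_j) - \mu_i(\matr{V}_{j-1}) \leq \frac{\lambda_1\bigl(\matr{V}[F_{j-1},F_j)^2\bigr)}{\lambda_{m^*}^\uparrow(\matr{V}^2)(F_j)}\bigl(\mu_1(\matr{H},\matr{V}[F_{j-1},F_j)) - \mu_i(\matr{V}_{j-1})\bigr).
\end{align*}
By Fact~\ref{fact:spec_range} applied inside the slice one has $\mu_1(\matr{H},\matr{V}[F_{j-1},F_j))\leq F_j$, and by inequality~\eqref{eq:spec_meas_ineq} of Corollary~\ref{cor:spec_meas} one has $\lambda_1(\matr{V}[F_{j-1},F_j)^2) \leq \int_{F_{j-1}}^{F_j}|\alpha(E)|^2dE$.

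The initial eigenvalue $\mu_i(\matr{V}_0)$ with $\matr{V}_0=\matr{V}(-\infty,E_b)$ equals $\mu_i(\matr{B})$ for every proper index $i\leq\operatorname{rank}(\matr{B})$. Indeed, since $\matr{V}(-\infty,E_a]$ and $\matr{B}$ span orthogonal spectral subspaces of $\matr{H}$, the GEP $(\matr{H},\matr{V}_0)$ decouples into two independent sub-GEPs whose eigenvalues lie in $(-\infty,E_a]$ and $(E_a,E_b)$ respectively, so the top $\operatorname{rank}(\matr{B})$ eigenvalues come entirely from the signal block. An analogous monotone-interlacing argument along the iteration shows $\mu_i(\matr{V}_{j-1})\geq\mu_i(\matr{B})$, hence $F_j-\mu_i(\matr{V}_{j-1})\leq F_j-\mu_i(\matr{B})$. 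Telescoping the increments and refining the partition produces a Riemann sum that converges to the integral in~\eqref{eq:integrated_spec_ineq_up_thm}.

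The main obstacle is controlling the continuum limit. The recurrence $\delta_j \leq (1-c_j)\delta_{j-1} + c_j(F_j-\mu_i(\matr{B}))$ with $c_j=\lambda_1(\matr{V}[F_{j-1},F_j)^2)/\lambda_{m^*}^\uparrow(\matr{V}^2)(F_j)$ produces multiplicative correction factors $\prod_k(1-c_k)$ when expanded; one uses the signal-GEP hypothesis $\int_{E_b}^\infty|\alpha(E)|^2/\lambda_{m^*}^\uparrow(\matr{V}^2)(E)\,dE\leq 1$ together with the fact that $c_j\to 0$ as the mesh is refined to show these corrections are negligible and that the discrete Riemann sum converges against the spectral measure to the stated integral. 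The lower bound~\eqref{eq:integrated_spec_ineq_down_thm} is then obtained verbatim by swapping $\matr{H}\leftrightarrow-\matr{H}$ and $\lambda_{m^*}^\uparrow\leftrightarrow\lambda_{m^*}^\downarrow$ and partitioning the lower-noise region $(-\infty,E_a]$.
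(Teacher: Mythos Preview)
Your telescoping/discretization skeleton matches the paper's Proposition~\ref{prop:integrated_spectral_inequality}, but the step where you pass from $\mu_i(\matr{V}^\uparrow(E_b))$ to $\mu_i(\matr{B})$ contains a genuine error. The claim that the GEP $(\matr{H},\matr{V}_0)$ with $\matr{V}_0=\matr{V}(-\infty,E_b)$ ``decouples into two independent sub-GEPs'' is false: the orthogonal decomposition of Definition~\ref{def:ortho_decomp} is an orthogonality in $\Hil$ (the column space), not in $\mathbb{C}^M$. The matrices $\matr{B}^2$ and $\matr{V}(-\infty,E_a]^2$ are both full $M\times M$ matrices that \emph{add} to $\matr{V}_0^2$; they do not form a block-diagonal structure. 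A two-dimensional counterexample: take $\matr{H}=\operatorname{diag}(3,2,0)$, $(E_a,E_b)=(1,4)$, and guess vectors $v_1=\varphi_1+\varphi_3$, $v_2=\varphi_2+\varphi_3$. Then $\mu_1(\matr{B})=3$ but $\mu_1(\matr{V}_0)=(5+\sqrt{7})/3\approx 2.55$. So in general $\mu_i(\matr{V}_0)\leq\mu_i(\matr{B})$ with strict inequality, and your subsequent claim $\mu_i(\matr{V}_{j-1})\geq\mu_i(\matr{B})$ --- which you need for the direct substitution $F_j-\mu_i(\matr{V}_{j-1})\leq F_j-\mu_i(\matr{B})$ --- fails at $j=1$ already.

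The paper repairs this not by decoupling but by a rearrangement trick. After telescoping one obtains $\mu_i(\matr{V})-\mu_i(\matr{V}^\uparrow(E_b))\leq\int_{E_b}^\infty\frac{|\alpha(E)|^2}{\lambda_{m^*}^\uparrow(\matr{V}^2)(E)}(E-\mu_i(\matr{V}^\uparrow(E_b)))\,dE$, using only the monotonicity $\mu_i(\matr{V}^\uparrow(E))\geq\mu_i(\matr{V}^\uparrow(E_b))$ for $E\geq E_b$ in the integrand. Writing $\delta=\int_{E_b}^\infty|\alpha(E)|^2/\lambda_{m^*}^\uparrow(\matr{V}^2)(E)\,dE$, this becomes $\mu_i(\matr{V})-(1-\delta)\,\mu_i(\matr{V}^\uparrow(E_b))\leq\int E\cdot(\ldots)\,dE$. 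Now the signal hypothesis gives $1-\delta\geq 0$, and the \emph{correct} inequality $\mu_i(\matr{V}^\uparrow(E_b))\leq\mu_i(\matr{B})$ (from Corollary~\ref{cor:monotone_gen_eigenvalues}) lets you replace $\mu_i(\matr{V}^\uparrow(E_b))$ by $\mu_i(\matr{B})$ on the left, after which rearranging yields~\eqref{eq:integrated_spec_ineq_up_thm}. Your recurrence $\delta_j\leq(1-c_j)\delta_{j-1}+c_j(F_j-\mu_i(\matr{B}))$ is algebraically correct and could in principle be unrolled using $\delta_0\leq 0$, but you would then need $c_j\leq 1$ for every slice to keep the sign of $\prod_k(1-c_k)\,\delta_0$ under control, and this is not guaranteed at point-spectrum contributions where $c_j$ need not be small; the paper's $(1-\delta)$ rearrangement sidesteps this issue entirely.
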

Compared to Theorem \ref{thm:specral_stability} of Section \ref{sec:SpecIneq}, the integrated spectral inequalities are of much more smooth nature. 
The precision guarantee of Theorem \ref{thm:integrated_spectral_stability} has no longer 
an explicit dependence on the target operator $\matr{H}$, that is subject to study. 
Instead, all the high-dimensional structure of the original problem is 
efficiently compressed into the spectral measure $|\alpha(E)|^2$ which is a one-dimensional function.  The inequalities (\ref{eq:integrated_spec_ineq_up_thm}) and (\ref{eq:integrated_spec_ineq_down_thm})
show that if the spectral measure is concentrated around the target interval $(E_a,E_b)$ and decays sufficiently fast at the tails, 
a sharp precision guarantee can be given.
As the spectral measure is specific to the guess vectors, a generating protocol 
also has the opportunity to derive a concentration guarantee on $|\alpha(E)|^2$.\\

We note that in the denominator $\lambda_{m^*}^\uparrow(\matr{V}^2)(E)$ of the bound in Theorem \ref{thm:integrated_spectral_stability} there, is no longer the smallest eigenvalue of the full Gram 
matrix $\matr{V}^2$, but the smallest non-vanishing eigenvalue of $\matr{V}^\uparrow(E)^2$. This can provide 
an explanation on numerical observations that have been made in the literature, as we will further elaborate in Section \ref{sec:NumericalPractices}.
However,  $\lambda_{m^*}^\uparrow(\matr{V}^2)(E)$ is not exactly accessible information in practice. 
Below we give a corollary which we believe is the most user-friendly implication of Theorem \ref{thm:integrated_spectral_stability}.

\begin{corollary}
    \label{cor:integrated_spectral_stability}
 Let the hermitian GEP $(\matr{H},\matr{V})$ be a certifiable signal to the energy interval $(E_a,E_b)$ as in Definition \ref{def:certified_signal_GEP}.
 For all $i$ the following inequalities hold,
    \begin{align}
        \mu_i(\matr{V}) - \mu_i(\matr{B}) & \leq \frac{\int_{E_b}^{\infty} \left(E -  \mu_i(\matr{B}) \right) |\alpha(E)|^2 dE   }{\lambda_{m}(\matr{V}^2)- \lambda_{1}^\downarrow(\matr{V}^2)[E_b]},  \label{eq:integrated_spec_ineq_up_cor} \\
        \mu_i(\matr{V}) - \mu_i(\matr{B}) & \geq \frac{\int_{-\infty}^{E_a}\left(E -  \mu_i(\matr{B}) \right)|\alpha(E)|^2 dE}{\lambda_{m}(\matr{V}^2) - \lambda^\uparrow_1(\matr{V})[E_a]}.   \label{eq:integrated_spec_ineq_down_cor}
    \end{align}
\end{corollary}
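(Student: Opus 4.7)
The plan is to derive Corollary \ref{cor:integrated_spectral_stability} directly from Theorem \ref{thm:integrated_spectral_stability} by pulling the $E$-dependent denominator out of the integral with a uniform lower bound on $\lambda_{m^*}^\uparrow(\matr{V}^2)(E)$ valid for $E \geq E_b$ (and analogously $\lambda_{m^*}^\downarrow(\matr{V}^2)(E)$ for $E \leq E_a$). First I would note that Fact \ref{fact:gen_eig_to_normal} guarantees $\mu_i(\matr{B}) \in (E_a, E_b)$, so the weight $(E - \mu_i(\matr{B}))$ is positive throughout the region $E > E_b$ (and negative throughout $E < E_a$); together with the positivity of the spectral measure this means replacing $\lambda_{m^*}^\uparrow(\matr{V}^2)(E)$ by any smaller positive quantity only enlarges the upper integral, exactly as needed.

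Next I would establish the key estimate
\begin{align*}
\lambda_{m^*}^\uparrow(\matr{V}^2)(E) \;\geq\; \lambda_{m}(\matr{V}^2) - \lambda_{1}^\downarrow(\matr{V}^2)[E_b], \qquad E \geq E_b.
\end{align*}
The starting point is the orthogonal decomposition $\matr{V}^2 = \matr{V}^\uparrow(E)^2 + \matr{V}^\downarrow(E)^2$, which follows because the spectral projections $P_{(-\infty,E)}$ and $P_{[E,\infty)}$ of the complete operator $\matr{H}$ are orthogonal, so the cross terms $\matr{V}^\uparrow(E)^\dagger \matr{V}^\downarrow(E)$ and its adjoint vanish. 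Weyl's inequality for sums of Hermitian matrices then gives $\lambda_m(\matr{V}^2) \leq \lambda_m(\matr{V}^\uparrow(E)^2) + \lambda_1(\matr{V}^\downarrow(E)^2)$, and since $\matr{V}^\uparrow(E)^2$ has rank at most $m$ its smallest non-vanishing eigenvalue satisfies $\lambda_{m^*}^\uparrow(\matr{V}^2)(E) \geq \lambda_m(\matr{V}^\uparrow(E)^2)$.

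The remaining piece is the monotonicity $\lambda_1(\matr{V}^\downarrow(E)^2) \leq \lambda_1^\downarrow(\matr{V}^2)[E_b]$ for $E \geq E_b$. I would obtain this by writing $\matr{V}^\downarrow(E)^2 = \matr{V}^\dagger P_{(E,\infty)} \matr{V}$ and $\matr{V}[E_b,\infty)^2 = \matr{V}^\dagger P_{[E_b,\infty)} \matr{V}$; since $P_{(E,\infty)} \preceq P_{[E_b,\infty)}$ in Loewner order, the two Gram matrices satisfy $\matr{V}^\downarrow(E)^2 \preceq \matr{V}[E_b,\infty)^2$, and monotonicity of the top eigenvalue under the Loewner order gives the claim. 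Combining these three bounds yields exactly $\lambda_{m^*}^\uparrow(\matr{V}^2)(E) \geq \lambda_m(\matr{V}^2) - \lambda_1^\downarrow(\matr{V}^2)[E_b]$; the denominator is positive by the certifiability hypothesis of Definition \ref{def:certified_signal_GEP}, so the inversion is legitimate and pulling out the constant from the integral in Theorem \ref{thm:integrated_spectral_stability} produces (\ref{eq:integrated_spec_ineq_up_cor}).

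The lower inequality (\ref{eq:integrated_spec_ineq_down_cor}) follows by the mirror argument applied to $\matr{V}^\downarrow(E)^2$ and $\matr{V}(-\infty, E_a]^2$, noting that now $E - \mu_i(\matr{B}) < 0$ for $E < E_a$, so shrinking the denominator makes the integrand \emph{more} negative and preserves the direction of the inequality from Theorem \ref{thm:integrated_spectral_stability}. The main obstacle I anticipate is the bookkeeping around the $m^*$ versus $m$ distinction: the rank of $\matr{V}^\uparrow(E)^2$ may genuinely be smaller than $m$, and one has to argue that the smallest non-vanishing eigenvalue still dominates $\lambda_m(\matr{V}^\uparrow(E)^2)$; this is immediate from the convention that zero eigenvalues occupy the tail of the sorted list, but it is worth stating explicitly to avoid confusion with the certifiability condition that rules out pathological rank collapse.
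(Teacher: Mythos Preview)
Your proposal is correct and follows essentially the same route as the paper's proof: apply Theorem \ref{thm:integrated_spectral_stability}, then uniformly lower-bound the denominator via the orthogonal splitting $\matr{V}^2 = \matr{V}^\uparrow(E)^2 + \matr{V}^\downarrow[E]^2$, Weyl's inequality, and the monotonicity $\lambda_1(\matr{V}^\downarrow[E]^2) \leq \lambda_1^\downarrow(\matr{V}^2)[E_b]$ for $E \geq E_b$; the paper obtains the last step by a second Weyl decomposition rather than your Loewner-order argument, but the content is identical. One small correction: to justify $\mu_i(\matr{B}) \in [E_a,E_b]$ you should cite Fact \ref{fact:proper_eigenvalues_interval} (or Fact \ref{fact:spec_range}) rather than Fact \ref{fact:gen_eig_to_normal}, since $\matr{B} = \matr{V}(E_a,E_b)$ need not span the full spectral subspace $\mathcal{E}_{(E_a,E_b)}$ and hence Fact \ref{fact:gen_eig_to_normal} does not apply as stated.
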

We can further estimate with $\lambda^\uparrow_1(\matr{V})[E_a], \lambda_{1}^\downarrow(\matr{V}^2)[E_b] \leq \lambda_1(\matr{N}^2)$
such that the inequalities only require control over $\lambda_1(\matr{N}^2)$ and $|\alpha(E)|^2$ to give an estimate. \\

The integrated spectral inequalities are applied to give a precision guarantee for Filter Diagonalization, which is subject to the next chapter. \\

\subsection{Proof Overview}

The following is the main Proposition from which Theorem \ref{thm:integrated_spectral_stability} will be derived.
\begin{proposition} 
 Let $(\matr{H},\matr{V})$ be a hermitian GEP. \\
 Let $E_b \in \mathbb{R}$ be such that $\matr{V}^\uparrow(E_b) \neq 0$. Then 
    \label{prop:integrated_spectral_inequality}
    \begin{align}
        \mu_i(\matr{V}) - \mu_i(\matr{V}^\uparrow(E_b)) & \leq \int_{E_b}^{\infty} \frac{|\alpha(E)|^2 dE}{\lambda_{m^*}^\uparrow(\matr{V}^2)(E)}\bigl( E - \mu_i( \matr{V}^\uparrow(E)) \bigr) dE.    \label{eq:integrated_spec_ineq_up}
    \end{align}
 Let $E_a \in \mathbb{R}$ be such that $\matr{V}^\downarrow(E_a) \neq 0$. Then 
    \begin{align}
        \mu_i(\matr{V}) - \mu_i(\matr{V}^\downarrow(E_a)) & \geq \int_{-\infty}^{E_a} \frac{|\alpha(E)|^2 dE}{\lambda_{m^*}^\downarrow(\matr{V}^2)(E)}\bigl( E - \mu_i( \matr{V}^\downarrow(E)) \bigr) dE. \label{eq:integrated_spec_ineq_down}
    \end{align}
\end{proposition}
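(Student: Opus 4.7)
The plan is to telescope the desired integrated bound by applying the pointwise spectral inequalities of Theorem \ref{thm:spectral_inequalities} to a partition of $[E_b,\infty)$ into thin spectral slices of $\matr{V}$, and then to pass to a Riemann-integral limit. Concretely, I partition $[E_b, \infty)$ as $E_b = E_0 < E_1 < \cdots < E_N$ (with mesh $\delta := \max_n (E_{n+1}-E_n) \to 0$ and $E_N \to \infty$), and define $\matr{W}_n := \matr{V}^\uparrow(E_n)$ and the slice matrices $\matr{S}_n := \matr{V}[E_n, E_{n+1})$. The definition of the operator-valued function $\matr{V}\ast$, together with the invariance of spectral subspaces of $\matr{H}$, makes $[\matr{W}_n, \matr{S}_n]$ an orthogonal decomposition of $\matr{W}_{n+1}$ in the sense of Definition \ref{def:ortho_decomp}.

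To each such decomposition I apply Theorem \ref{thm:spectral_inequalities} with $j=i$, $l=1$, $k = m_{n+1}$ where $m_{n+1}$ is the rank of $\matr{W}_{n+1}^2$. Since $\matr{S}_n$ is supported in the spectral subspace of $\matr{H}$ with energies in $[E_n, E_{n+1})$, a Rayleigh quotient argument (as in the proof of Fact \ref{fact:spec_range}) gives $\mu_1(\matr{H}, \matr{S}_n) \in [E_n, E_{n+1}]$ and $\mu_i(\matr{W}_n) \leq E_n$, so the parenthetical factor is non-negative, forcing $s = 1$ and the shifted index $k + sp = 1$. This yields the per-slice inequality
\begin{equation*}
\mu_i(\matr{W}_{n+1}) - \mu_i(\matr{W}_n) \;\leq\; \frac{\lambda_1(\matr{S}_n^2)}{\lambda_{m_{n+1}}(\matr{W}_{n+1}^2)}\bigl(\mu_1(\matr{H}, \matr{S}_n) - \mu_i(\matr{W}_n)\bigr).
\end{equation*}
Two elementary mesh estimates now close the bound: by Proposition \ref{prop:spec_meas}, $\lambda_1(\matr{S}_n^2) \leq \operatorname{Tr}(\matr{S}_n^2) = \int_{E_n}^{E_{n+1}} |\alpha(E)|^2 \, dE$, and $\mu_1(\matr{H}, \matr{S}_n) \leq E_{n+1}$. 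Telescoping over $n$ rewrites the left-hand side as $\mu_i(\matr{V}) - \mu_i(\matr{V}^\uparrow(E_b))$ in the limit $E_N \to \infty$, while the right-hand side becomes a Riemann sum that converges, as $\delta \to 0$, to the integral in \eqref{eq:integrated_spec_ineq_up}.

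The main obstacle is justifying this limit at points $E$ where the rank of $\matr{V}^\uparrow(E)^2$ jumps, namely at discrete eigenvalues of $\matr{H}$ with non-trivial guess-vector overlap. At such $E$ the denominator $\lambda_{m^*}^\uparrow(\matr{V}^2)(E)$ is discontinuous, but the spectral measure $|\alpha|^2$ carries a matched delta mass at exactly the same energy; by placing a partition endpoint immediately above each such point, the corresponding slice contribution becomes a single Theorem~\ref{thm:spectral_inequalities} step that agrees term-by-term with the way the spectral integral of Definition \ref{def:spec_meas} treats delta peaks. Between these points, $\lambda_{m^*}^\uparrow(\matr{V}^2)(E)$ is continuous and $E \mapsto \mu_i(\matr{V}^\uparrow(E))$ depends continuously on $E$ via continuity of eigenvalues under rank-preserving perturbations, so the standard Riemann-sum convergence applies. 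The lower bound \eqref{eq:integrated_spec_ineq_down} follows by an identical slicing argument applied to the negative operator $-\matr{H}$, with the partition running downward from $E_a$ and $\matr{V}^\downarrow$ playing the role of $\matr{V}^\uparrow$.
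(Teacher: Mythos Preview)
Your proposal is correct and follows essentially the same approach as the paper: both arguments telescope over a fine partition of $[E_b,\infty)$, apply Theorem~\ref{thm:spectral_inequalities} (upper inequality with $j=i$, $l=1$, $k=m$, $s=1$) to the orthogonal decomposition $\matr{V}^\uparrow(E_{n+1}) = \matr{V}^\uparrow(E_n) + \matr{V}[E_n,E_{n+1})$, bound $\lambda_1(\matr{S}_n^2)$ by $\int_{E_n}^{E_{n+1}}|\alpha|^2$ via the spectral-measure/trace identity, estimate $\mu_1(\matr{H},\matr{S}_n)\le E_{n+1}$, and pass to the Riemann limit. The only cosmetic differences are that the paper uses an equispaced partition on a finite window $[E_b,E^*]$ followed by $E^*\to\infty$, and handles the delta-peak/discontinuity issue in a post-proof remark rather than by refining the partition as you do.
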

\noindent We recall that in our notation $\lambda_{m^\star}(\matr{V}^2)(E)$ denotes the smallest non-zero eigenvalue of $\matr{V}^2(E)$.
The proof of Proposition \ref{prop:integrated_spectral_inequality} relies on a telescope sum 
and applications of the spectral inequalities of Theorem \ref{thm:spectral_inequalities} and is given in Section \ref{sec:proof_prop_integrated_spectral_inequality}.\\

The matrix valued function $\matr{V}\ast$ induces an interpretation of 
the eigenvalues $\mu_i(\matr{V})\ast$ as real-valued functions defined on $\mathcal{P}(\mathbb{R})$.
They are found to have a monotonicity property, that is given in Corollary \ref{cor:monotone_gen_eigenvalues}.

\begin{corollary}
    \label{cor:monotone_gen_eigenvalues}
 For an hermitian GEP $(\matr{H}, \matr{V})$ consider real-valued functions \\
    \begin{minipage}{0.5\textwidth}
        \begin{align*}
            &\mu_{i}^{\uparrow}(\matr{V})(\ast): \mathbb{R} \to \mathbb{R},\\
            &\mu_{i}^\uparrow(\matr{V})(E) = \mu_i(\matr{V}^\uparrow(E)),
        \end{align*}
        \end{minipage} \text{ and }
        \begin{minipage}{0.5\textwidth}
        \begin{align*}
            &\mu_{i}^{\downarrow}(\matr{V})(\ast): \mathbb{R} \to \mathbb{R},\\
            &\mu_{i}^\downarrow(\matr{V})(E) = \mu_i(\matr{V}^\downarrow(E)).
        \end{align*}
    \end{minipage}
 The function $\mu_{i}^\uparrow(\matr{V})(E)$ is monotonically increasing and $\mu_{i}^\downarrow(\matr{V})(E)$ is monotonically decreasing in $E$.
\end{corollary}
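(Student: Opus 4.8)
The plan is to reduce the monotonicity of $E \mapsto \mu_i^\uparrow(\matr{V})(E)$ to a monotonicity statement about the generalized Rayleigh quotient combined with the min-max characterization of Theorem \ref{thm:min_max_GEP}. First I would fix $E \leq E'$ and observe that, by definition of the matrix-valued function $\matr{V}\ast$, the guess vector operators satisfy the orthogonal decomposition
\begin{align*}
 \matr{\hat V}^\uparrow(E') = \matr{\hat V}^\uparrow(E) + \matr{\hat V}[E, E'),
\end{align*}
where the two summands have images in mutually orthogonal spectral subspaces of $\matr{H}$ (eigenvalues $< E$ versus eigenvalues in $[E, E')$). Since $\matr{H}$ is self-adjoint and does not mix its eigensubspaces, this yields, exactly as in the derivation preceding Definition \ref{def:ortho_decomp},
\begin{align*}
 (\matr{V}^\uparrow(E'))^2 = (\matr{V}^\uparrow(E))^2 + (\matr{V}[E,E'))^2, \qquad \matr{H}_{\matr{V}^\uparrow(E')} = \matr{H}_{\matr{V}^\uparrow(E)} + \matr{H}_{\matr{V}[E,E')},
\end{align*}
with the cross terms vanishing. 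Moreover the operator $\matr{H}$ restricted to the range of $\matr{\hat V}[E,E')$ has all its eigenvalues $\geq E$, so that $\langle x, \matr{H}_{\matr{V}[E,E')}\, x\rangle \geq E\, \langle x, (\matr{V}[E,E'))^2 x\rangle$ for every $x$.

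Next I would feed this into the min-max formula. Writing $r(\matr{W}, x) = \langle x, \matr{H}_{\matr{W}} x\rangle / \langle x, \matr{W}^2 x\rangle$ for the generalized Rayleigh quotient of a GEP $(\matr{H},\matr{W})$, the key pointwise estimate is that for any $x$ not in $\operatorname{Ker}(\matr{V}^\uparrow(E))$,
\begin{align*}
 r(\matr{V}^\uparrow(E'), x) = \frac{\langle x, \matr{H}_{\matr{V}^\uparrow(E)} x\rangle + \langle x, \matr{H}_{\matr{V}[E,E')} x\rangle}{\langle x, (\matr{V}^\uparrow(E))^2 x\rangle + \langle x, (\matr{V}[E,E'))^2 x\rangle} \geq \min\left\{ r(\matr{V}^\uparrow(E), x),\ E \right\},
\end{align*}
using the elementary mediant inequality $\frac{a+c}{b+d} \geq \min\{\frac ab, \frac cd\}$ for positive denominators together with $\langle x, \matr{H}_{\matr{V}[E,E')} x\rangle / \langle x, (\matr{V}[E,E'))^2 x\rangle \geq E$. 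Applying the $\max_{S_i}\min_{x\in S_i}$ characterization of $\mu_i$ from Theorem \ref{thm:min_max_GEP} on the larger space, and restricting the inner minimization to subspaces on which $r(\matr{V}^\uparrow(E), \cdot) \geq \mu_i(\matr{V}^\uparrow(E))$ (which exist by the same theorem, as the span of the top-$i$ generalized eigenvectors), gives $\mu_i^\uparrow(\matr{V})(E') \geq \min\{\mu_i^\uparrow(\matr{V})(E), E\}$. The remaining point is to remove the $\min$ with $E$: since $\mu_i(\matr{V}^\uparrow(E))$ is a Rayleigh quotient of $\matr{H}$ restricted to the range of $\matr{\hat V}^\uparrow(E)$, whose eigenvalues are all $< E$, one has $\mu_i^\uparrow(\matr{V})(E) \leq E$ (this is the relevant half of Fact \ref{fact:spec_range} applied to the truncated operator), so $\min\{\mu_i^\uparrow(\matr{V})(E), E\} = \mu_i^\uparrow(\matr{V})(E)$ and monotone increase follows. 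The statement for $\mu_i^\downarrow(\matr{V})(E)$ is obtained by applying the result just proved to $-\matr{H}$, under which $\matr{\hat V}^\downarrow(E)$ for $\matr{H}$ becomes $\matr{\hat V}^\uparrow(-E)$ for $-\matr{H}$ and generalized eigenvalues flip sign and order, turning ``increasing in $E$'' into ``decreasing in $E$''.

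The main obstacle I anticipate is the careful bookkeeping around kernels and ranks: the Gram matrices $(\matr{V}^\uparrow(E))^2$ and $(\matr{V}^\uparrow(E'))^2$ generally have different ranks, so ``$\mu_i$'' refers to proper eigenvalues of differently-sized effective problems, and one must check that the min-max argument is applied only over the non-spurious part and that the index $i$ stays in range (if $i$ exceeds the rank of the smaller Gram matrix, the corresponding $\mu_i$ is spurious and set to zero by convention, and one should verify the inequality still reads correctly, or simply restrict to $i$ within the smaller rank). This is exactly the kind of subtlety flagged in the discussion after Theorem \ref{thm:gen_specral_theorem}, and handling it cleanly — ideally by passing to the induced inner product $\langle \cdot,\cdot\rangle_V$ on the orthogonal complement of the kernel, as suggested in Section \ref{sec:proof_specral_theorem2} — is where most of the real work lies; the Rayleigh-quotient mediant estimate itself is short.
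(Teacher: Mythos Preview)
Your proposal is correct and the mediant-inequality step is clean. It is, however, a different route from the paper's. The paper does not argue directly from the Rayleigh quotient; instead it invokes Theorem~\ref{thm:spectral_inequalities} (the spectral inequalities) as a black box: with $\matr{V}=\matr{V}^\uparrow(E_2)$, $\matr{B}=\matr{V}^\uparrow(E_1)$, $\matr{N}=\matr{V}[E_1,E_2)$ and $i=j$, Fact~\ref{fact:proper_eigenvalues_interval} forces the sign $s=1$, and the lower bound (\ref{eq:lower_spectral_Weyl}) immediately yields the \emph{quantitative} estimate
\[
\mu_i(\matr{V}^\uparrow(E_2))-\mu_i(\matr{V}^\uparrow(E_1))\;\geq\;\frac{\lambda_m(\matr{V}^2[E_1,E_2))}{\lambda_1^\uparrow(\matr{V}^2)(E_2)}\bigl(\mu_{m^*}(\matr{V}[E_1,E_2))-\mu_i(\matr{V}^\uparrow(E_1))\bigr)\;\geq\;0,
\]
recorded as Corollary~\ref{cor:monotone_gen_eigenvalues2}. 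Your argument is essentially the specialisation of the \emph{proof} of Theorem~\ref{thm:spectral_inequalities} to this situation (your mediant inequality $\tfrac{a+c}{b+d}\geq\min\{\tfrac ab,\tfrac cd\}$ is the qualitative shadow of the identity $\tfrac{a+c}{b+d}=\tfrac ab+\tfrac{d}{b+d}(\tfrac cd-\tfrac ab)$ used there). So your approach is more self-contained and avoids the index bookkeeping of Theorem~\ref{thm:spectral_inequalities}, while the paper's approach is shorter given that theorem and delivers a sharper, quantitative statement that is later reused. The kernel/rank caveat you flag is real but harmless here: choosing $S_i$ as the span of the top $i$ \emph{proper} generalized eigenvectors of $(\matr{H},\matr{V}^\uparrow(E))$ keeps every nonzero $x\in S_i$ out of $\operatorname{Ker}(\matr{V}^\uparrow(E))\supset\operatorname{Ker}(\matr{V}^\uparrow(E'))$, so the Rayleigh quotients are all well defined.
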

\noindent Corollary \ref{cor:monotone_gen_eigenvalues} is proven by an application of the spectral inequalities of Theorem \ref{thm:spectral_inequalities}.
An analogous monotonicity property holds for the eigenvalues of the Gram matrix $\matr{V}^2$ as seen in Corollary \ref{lem:monotone_gram_eigenvalues}.
\begin{corollary}
    \label{lem:monotone_gram_eigenvalues}
 For an hermitian GEP $(\matr{H}, \matr{V})$ consider real valued functions \\
    \begin{minipage}{0.5\textwidth}
        \begin{align*}
            &\lambda_{i}^{\uparrow}(\matr{V}^2)(\ast): \mathbb{R} \to \mathbb{R},\\
            &\lambda_{i}^\uparrow(\matr{V}^2)(E) = \lambda_i(\matr{V}^\uparrow(E)^2),
        \end{align*}
        \end{minipage} \text{ and }
        \begin{minipage}{0.5\textwidth}
        \begin{align*}
            &\lambda_{i}^{\downarrow}(\matr{V}^2)(\ast): \mathbb{R} \to \mathbb{R},\\
            &\lambda_{i}^\downarrow(\matr{V}^2)(E) = \lambda_i(\matr{V}^\downarrow(E)^2),
        \end{align*}
    \end{minipage}
 The function $\lambda_i^\uparrow(\matr{V}^2)(E)$ is monotonically increasing and $\lambda_i^\downarrow(\matr{V}^2)(E)$ is monotonically decreasing in $E$.
\end{corollary}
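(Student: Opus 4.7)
The plan is to reduce the monotonicity of $\lambda_i^\uparrow(\matr{V}^2)(E)$ to a standard fact: for hermitian positive semi-definite matrices $A$ and $C \succeq 0$, one has $\lambda_i(A+C) \geq \lambda_i(A)$ for every $i$ (this is immediate from the min-max characterization). So the whole proof amounts to showing that the difference $\matr{V}^\uparrow(E_2)^2 - \matr{V}^\uparrow(E_1)^2$ is positive semi-definite whenever $E_1 < E_2$.

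For this, I would first establish the additive decomposition of the guess vector operator at the level of $\matr{\hat V}\ast$: since $(-\infty, E_2) = (-\infty, E_1) \sqcup [E_1, E_2)$, linearity of the defining expansion in Definition \ref{def:GEP_com} gives $\matr{\hat V}^\uparrow(E_2) = \matr{\hat V}^\uparrow(E_1) + \matr{\hat V}[E_1, E_2)$, and hence the same identity for matrix representations, $\matr{V}^\uparrow(E_2) = \matr{V}^\uparrow(E_1) + \matr{V}[E_1, E_2)$. The key structural observation is that the columns of $\matr{V}^\uparrow(E_1)$ lie in the spectral subspace of $\matr{H}$ associated with eigenvalues below $E_1$, while the columns of $\matr{V}[E_1, E_2)$ lie in the spectral subspace associated with eigenvalues in $[E_1, E_2)$. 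These two spectral subspaces are orthogonal in $\Hil$ by orthogonality of (possibly generalized) eigenfunctions of the hermitian $\matr{H}$, so the cross-terms vanish:
\begin{align*}
\matr{V}^\uparrow(E_1)^\dagger \matr{V}[E_1, E_2) = 0 = \matr{V}[E_1, E_2)^\dagger \matr{V}^\uparrow(E_1).
\end{align*}
Expanding $\matr{V}^\uparrow(E_2)^\dagger \matr{V}^\uparrow(E_2)$ using these identities yields
\begin{align*}
\matr{V}^\uparrow(E_2)^2 = \matr{V}^\uparrow(E_1)^2 + \matr{V}[E_1, E_2)^2,
\end{align*}
and the second summand is a Gram matrix, hence positive semi-definite. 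Invoking the PSD monotonicity of eigenvalues gives $\lambda_i^\uparrow(\matr{V}^2)(E_2) \geq \lambda_i^\uparrow(\matr{V}^2)(E_1)$, which is the claimed monotonicity.

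The downward direction proceeds in exactly the same way: for $E_1 < E_2$ one writes $\matr{V}^\downarrow(E_1) = \matr{V}[E_1, E_2) + \matr{V}^\downarrow(E_2)$ with the two summands again having columns in orthogonal spectral subspaces, leading to $\matr{V}^\downarrow(E_1)^2 = \matr{V}[E_1, E_2)^2 + \matr{V}^\downarrow(E_2)^2$. PSD monotonicity then yields $\lambda_i^\downarrow(\matr{V}^2)(E_1) \geq \lambda_i^\downarrow(\matr{V}^2)(E_2)$, i.e.\ $\lambda_i^\downarrow(\matr{V}^2)$ is monotonically decreasing.

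The only step requiring a little care is the vanishing of the cross-terms when $\matr{H}$ has both point and continuous spectra: one needs $\langle \varphi_k | \varphi_{k'} \rangle = \delta(k-k')$ from the completeness relation \eqref{eq:completeness}, so that the integrand defining, e.g., the $(l,l')$-entry of $\matr{V}^\uparrow(E_1)^\dagger \matr{V}[E_1, E_2)$ is supported on a set of measure zero. That, I expect, is the only technical point; the rest is the standard PSD-perturbation argument.
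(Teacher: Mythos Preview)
Your proof is correct and essentially identical to the paper's: both decompose $\matr{V}^\uparrow(E_2) = \matr{V}^\uparrow(E_1) + \matr{V}[E_1,E_2)$, use orthogonality of the spectral subspaces to obtain $\matr{V}^\uparrow(E_2)^2 = \matr{V}^\uparrow(E_1)^2 + \matr{V}[E_1,E_2)^2$, and then invoke Weyl's inequality (your ``PSD monotonicity'' is exactly Weyl with the lower bound $\lambda_m(\matr{N}^2)\geq 0$). One tiny boundary slip: in the downward case $(E_1,\infty) = (E_1,E_2]\sqcup(E_2,\infty)$, so the middle piece is $\matr{V}(E_1,E_2]$ rather than $\matr{V}[E_1,E_2)$, but this does not affect the argument.
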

The monotonicity properties of Corollary \ref{lem:monotone_gram_eigenvalues} and Corollary \ref{cor:monotone_gen_eigenvalues} are used in the proof of Proposition \ref{prop:integrated_spectral_inequality}
and Theorem \ref{thm:integrated_spectral_stability}.
They also ensure that the integrands of Proposition \ref{prop:integrated_spectral_inequality} are indeed Riemann integrable, as is 
seen in Claim \ref{claim:riemann_integrability}.

\begin{claim}
    \label{claim:riemann_integrability}
 Let $(\matr{H},\matr{V})$ be an hermitian GEP. Define the minimal energy $E_{\min}$ as,
    \begin{align*}
 E_{\min}= \inf E \quad \text{ s.t. } \quad \lambda_{m^*}^\uparrow(\matr{V}^2)(E) > 0.
    \end{align*}
 Then the function $f^\uparrow(E)$ given by
    \begin{align*}
 f^\uparrow(E)= \frac{E - \mu_i^\uparrow(\matr{V})(E)}{\lambda_{m^*}^\uparrow(\matr{V}^2)(E)},
    \end{align*}
 is Riemann integrable over any compact interval $[E_1,E_2] \subset \mathbb{R}$
 with $E_2 > E_1 > E_{\min}$.
\end{claim}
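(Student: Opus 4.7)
The plan is to verify Riemann integrability via Lebesgue's criterion: on a compact interval, a bounded function is Riemann integrable iff its set of discontinuities has Lebesgue measure zero. I will show $f^\uparrow$ is bounded on $[E_1,E_2]$ and has at most countably many discontinuities by decomposing it as a product of two such functions, exploiting the monotonicity results in Corollaries \ref{cor:monotone_gen_eigenvalues} and \ref{lem:monotone_gram_eigenvalues}.

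For the numerator $E - \mu_i^\uparrow(\matr{V})(E)$, Corollary \ref{cor:monotone_gen_eigenvalues} gives that $\mu_i^\uparrow(\matr{V})$ is monotonically increasing in $E$. Hence the numerator is the difference of a continuous function and a monotone function on the compact interval $[E_1,E_2]$. Monotone functions on a compact interval are bounded (by their endpoint values) and have at most countably many discontinuities, so the numerator is a bounded, Riemann-integrable function.

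The denominator requires more care, since the index $m^{*}$ itself varies with $E$. Since $\matr{V}$ has at most $M$ columns, the rank $m^{*}(E) = \operatorname{rank}(\matr{V}^\uparrow(E)^2)$ takes values in $\{0,1,\dots,M\}$, and because $\matr{V}^\uparrow(E)^2$ is non-decreasing in the positive semi-definite order as $E$ grows, $m^{*}$ is non-decreasing in $E$. Therefore $[E_1,E_2]$ partitions into at most $M+1$ subintervals on each of which $m^{*}\equiv r$ is constant, and on each such piece $\lambda_{m^{*}}^{\uparrow}(\matr{V}^2)(E)=\lambda_r^{\uparrow}(\matr{V}^2)(E)$, a monotonically increasing function by Corollary \ref{lem:monotone_gram_eigenvalues}. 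So $\lambda_{m^{*}}^{\uparrow}(\matr{V}^2)$ is piecewise monotone with finitely many jumps, in particular of bounded variation.

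It remains to bound the denominator away from zero on $[E_1,E_2]$. By definition of $E_{\min}$ and the hypothesis $E_1>E_{\min}$, there exists some $E'\in(E_{\min},E_1)$ with $\lambda_{m^{*}}^{\uparrow}(\matr{V}^2)(E')>0$; on the rightmost piece beginning at or before $E'$, monotonicity of $\lambda_r^{\uparrow}$ propagates a positive lower bound rightward. On each subsequent subinterval where the rank jumps from $r$ to $r+1$ at some $E^{*}$, the new smallest non-zero eigenvalue is $\lambda_{r+1}^{\uparrow}(\matr{V}^2)(E^{*})>0$ by constancy of the rank immediately to the right of $E^{*}$, and monotonicity again propagates this value as a lower bound within that piece. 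Taking the minimum over the finitely many pieces yields a constant $c>0$ with $\lambda_{m^{*}}^{\uparrow}(\matr{V}^2)(E)\geq c$ throughout $[E_1,E_2]$. Consequently $1/\lambda_{m^{*}}^{\uparrow}(\matr{V}^2)$ is bounded above by $1/c$ and inherits the bounded-variation / piecewise-monotone structure, so it is Riemann integrable. Since the product of two bounded Riemann integrable functions is Riemann integrable, $f^\uparrow$ is Riemann integrable on $[E_1,E_2]$.

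The main obstacle is the third step: carefully controlling the denominator across the finitely many rank-change points where $\lambda_{m^{*}}^{\uparrow}(\matr{V}^2)$ is genuinely discontinuous. The book-keeping across these jumps, and verifying that the hypothesis $E_1>E_{\min}$ is exactly what is needed to rule out the accumulation of arbitrarily small positive values of $\lambda_{m^{*}}^{\uparrow}(\matr{V}^2)$ within $[E_1,E_2]$, is the only nontrivial part of the argument.
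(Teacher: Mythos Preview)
Your proposal is correct and follows essentially the same route as the paper's own proof: both invoke Lebesgue's criterion, use Corollary \ref{cor:monotone_gen_eigenvalues} for the numerator, partition $[E_1,E_2]$ into finitely many constant-rank pieces on which Corollary \ref{lem:monotone_gram_eigenvalues} makes $\lambda_{m^*}^\uparrow(\matr{V}^2)$ monotone, and then use $E_1>E_{\min}$ to extract a uniform positive lower bound $\kappa$ on the denominator. The only cosmetic difference is that you package the conclusion as ``product of two bounded Riemann integrable functions,'' whereas the paper argues directly that the discontinuity set of $f^\uparrow$ is contained in the union of those of the numerator and denominator; these are equivalent.
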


\noindent The statement is a direct consequence of the fact that smooth functions of Riemann integrable functions are Riemann integrable.
 An analogous integrability statement holds for suitably defined $f^\downarrow(E)$ and $E_{\max}= \sup E  \text{ s.t. } \lambda_{m^*}^\downarrow(\matr{V}^2)(E) > 0$.
A more detailed derivation of Claim is given in Section \ref{sec:riemann_integrability}. \\

In Section \ref{sec:proof_prop_integrated_spectral_inequality}
we will also elaborate on how the delta distributions of the spectral measure are evaluated at discontinuities. 
For continuous spectra, $|\alpha(E)|^2$ will be sufficiently regular, if the coefficients $a_l(k)$ of the guess vectors are assumed to be sufficiently regular
and the energy shells $\Gamma_\lambda$ of $\matr{H}$ vary smoothly in $\lambda$.\footnote{E.g. $\matr{H} = \matr{P}^2$ has energy shells corresponding to spheres in the phase space variable $p$. }
Throughout the derivations, we will also make use of the following fact on the eigenvalues of hermitian GEPs.

\begin{fact}
    \label{fact:proper_eigenvalues_interval}
 Proper eigenvalues of a hermitian GEP have 
    \begin{align*}
 E_1 \leq \mu_i(\matr{V})(E_1,E_2) \leq E_2.
    \end{align*}
\end{fact}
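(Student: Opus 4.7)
The plan is to view this fact as a direct consequence of Fact \ref{fact:spec_range} once we recognise that $\matr{V}(E_1,E_2)$ by construction maps $\mathbb{C}^M$ into the spectral subspace $\mathcal{E}_{(E_1,E_2)}(\matr{H})$. From the definition of $\matr{\hat V}(E_1,E_2)$ only the eigenfunctions $\varphi_k$ with $\lambda_k \in (E_1,E_2) \cap \sigma_p$ and the generalised eigenfunctions parametrised over $\Gamma_\lambda$ with $\lambda \in (E_1,E_2) \cap \sigma_c$ contribute, so every column of $\matr{V}(E_1,E_2)$ lies in $\mathcal{E}_{(E_1,E_2)}(\matr{H})$.

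First I would bound the associated generalised Rayleigh quotient. For any $x \in \mathbb{C}^M$ outside the kernel of $\matr{V}(E_1,E_2)$, the vector $w := \matr{\hat V}(E_1,E_2)\, x$ admits an expansion purely over eigenfunctions with eigenvalues in $(E_1,E_2)$. Computing $\langle w, \matr{H} w\rangle$ using the completeness relation (Definition \ref{def:com_op}) gives
\begin{align*}
\langle w, \matr{H} w \rangle = \sum_{\lambda_k \in (E_1,E_2) \cap \sigma_p} \lambda_k |c_k|^2 + \int_{(E_1,E_2) \cap \sigma_c} \lambda \int_{\Gamma_\lambda} |c(k)|^2\, dk\, d\lambda,
\end{align*}
with $\|w\|^2$ given by the analogous expression without the factor $\lambda$. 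Since each $\lambda \in (E_1,E_2)$ satisfies $E_1 \leq \lambda \leq E_2$, pulling the bounds out of the sum and integral yields
\begin{align*}
E_1 \leq \frac{\langle w, \matr{H} w \rangle}{\langle w, w \rangle} = r(\matr{H}, \matr{V}(E_1,E_2), x) \leq E_2.
\end{align*}

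Next I would invoke the min-max characterisation from Theorem \ref{thm:min_max_GEP}: every proper eigenvalue $\mu_i(\matr{V})(E_1,E_2)$ is realised as a max-min (equivalently min-max) over generalised Rayleigh quotients of $(\matr{H}, \matr{V}(E_1,E_2))$. Since every such quotient is sandwiched in $[E_1,E_2]$, so is the optimum. This closes the argument for the proper spectrum; spurious eigenvalues lying in the kernel of $\matr{V}(E_1,E_2)^2$ are formally set to zero by convention and fall outside the statement.

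I do not foresee a genuine obstacle here, as the proof essentially repeats the logic of Fact \ref{fact:spec_range} with $\matr{H}$ replaced by its restriction to $\mathcal{E}_{(E_1,E_2)}$. The only mild care needed is bookkeeping in the mixed discrete/continuous setting: one must verify that the completeness relation (\ref{eq:completeness}) legitimately permits splitting $\langle w, \matr{H} w\rangle$ into a discrete sum and a phase-space integral and that the pointwise bound $E_1 \leq \lambda \leq E_2$ can be pulled through both — this is immediate from the definition of the spectral integral in Definition \ref{def:spec_meas} applied to $w$'s spectral decomposition.
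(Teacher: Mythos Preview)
Your proposal is correct and matches the paper's treatment: the paper does not give a separate proof but simply remarks that Fact~\ref{fact:proper_eigenvalues_interval} is essentially equivalent to the spectral-range statement of Fact~\ref{fact:spec_range} and ``does not require a proof.'' Your argument spells out precisely this equivalence via the Rayleigh quotient, with the only addition being the explicit bookkeeping for the mixed discrete/continuous spectrum.
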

Fact \ref{fact:proper_eigenvalues_interval} is essentially equivalent to 
spectral range statement of Fact \ref{fact:spec_range} in Section \ref{sec:RayleighQuotient} and 
does not require a proof. Finally, it is worth recalling Weyl's inequalities. For hermitian matrices $\matr{A}$ and $\matr{B}$
acting on a Hilbert space of dimensionality $m$ holds, 
\begin{align}
    \lambda_{i+j+1}(\matr{A}+\matr{B}) \leq \lambda_{i}(\matr{A}) + \lambda_{j}(\matr{B}) \leq \lambda_{i+j-m}(\matr{A} + \matr{B}). \label{eq:Weyl_classical}
\end{align}

In the next Section \ref{sec:derivations_integrated_spectral_stability}, the derivations are presented in an order consistent with logical dependencies.

\subsection{Derivations}
\label{sec:derivations_integrated_spectral_stability}

\subsubsection{Generalized Eigenvalues as Monotone Functions}

A slightly stronger version of Corollary \ref{cor:monotone_gen_eigenvalues} is proven.
\begin{corollaryprime}
    \label{cor:monotone_gen_eigenvalues2}
 Consider a hermitian GEP $(\matr{H},\matr{V})$ and $E_2\geq E_1$.\\
 Denote with  $m$ the rank of $\matr{V}^\uparrow(E_2)$. Then
    \begin{align*}
        \mu_{i}(\matr{V}^\uparrow(E_2)) - \mu_{i}(\matr{V}^\uparrow(E_1)) \geq \frac{\lambda_{m}(\matr{V}^2)[E_1,E_2)}{\lambda_1^\uparrow(\matr{V}^2)(E_2)} \left(\mu_{m*}({\matr{V}}[E_1,E_2))  - \mu_i(\matr{V}^\uparrow(E_1)) \right) \geq 0.
    \end{align*}
 Denote with  $m$ the rank of $\matr{V}^\uparrow(E_2)$. Then
    \begin{align*}
        \mu_{i}(\matr{V}^\downarrow(E_1)) - \mu_{i}(\matr{V}^\downarrow(E_2)) \leq \frac{\lambda_{m}(\matr{V}^2)(E_1,E_2]}{\lambda_1^\downarrow(\matr{V}^2)(E_1)} \left(\mu_{1}({\matr{V}}(E_1,E_2])  - \mu_i(\matr{V}^\downarrow(E_1)) \right) \leq 0.
    \end{align*}
 In particular, $\mu_{i}(\matr{V}^\uparrow(E))$ is a monotonically increasing function and $\mu_{i}(\matr{V}^\downarrow(E))$ is a monotonically decreasing function in $E$.
\end{corollaryprime}

\begin{proof}[\textit{Proof of Corollary \ref{cor:monotone_gen_eigenvalues2}}]
 Let $E_2\geq E_1$. The first inequality is proven first. The vector matrices $\matr{V}^\uparrow(E_1)$ and $\matr{V}[E_1,E_2)$ are spanned by eigenvectors of $\matr{H}$ to different eigenvalues
 and form in particular an orthogonal decomposition of $\matr{V}^\uparrow(E_2)$.
 Theorem \ref*{thm:spectral_inequalities} can be applied with $\matr{V} = \matr{V}^\uparrow(E_2)$, $\matr{B} = \matr{V}^\uparrow(E_1)$ and $\matr{N} = \matr{V} - \matr{B} =\matr{V}[E_1,E_2)$
 and $i=j$. 
 Let $m^*$ denote the rank of $\matr{V}[E_1,E_2)$. As $\matr{V}[E_1,E_2)$ is a projection of $\matr{V}^\uparrow(E_2)$, we have $m^* \leq m$.
 By Fact \ref{fact:proper_eigenvalues_interval} all proper eigenvalues $(\matr{H}, \matr{N})$ are in $[E_1,E_2)$. 
 Spurious eigenvalues $\mu_{j}( \matr{V})(E_1,E_2)$ can WLOG be set to $\mu_{m^*}( \matr{N}) \in[E_1,E_2)$.
 Thus we have for all eigenvalues $\mu_{j}( \matr{N}) \in[E_1,E_2)$. 
 Similarly, $\mu_{i}^\uparrow(\matr{V})(E_1) < E_1$ for all $i$ and $\mu_{m}( \matr{N}) - \mu_{i}(\matr{B}) > 0$.
 Therefore, Theorem \ref{thm:spectral_inequalities} can be applied with $s=1$ and inequality (\ref{eq:lower_spectral_Weyl}) gives 
    \begin{align*}
        \mu_{i}(\matr{V}^\uparrow(E_2)) - \mu_{i}(\matr{V}^\uparrow(E_1)) \geq \frac{\lambda_{m}(\matr{V}^2)(E_1,E_2)}{\lambda_1^\uparrow(\matr{V}^2)(E_2)} \left(\mu_{m*}({\matr{V}}(E_1,E_2))  - \mu_i(\matr{V}^\uparrow(E_1)) \right) \geq 0.
    \end{align*}
 For the second inequality Theorem \ref{thm:spectral_inequalities} is applied with $\matr{V} = \matr{V}^\downarrow(E_1)$, $\matr{B} = \matr{V}^\downarrow(E_2)$ and $\matr{N} = \matr{V} - \matr{B} =\matr{V}(E_1,E_2]$.
 Fact \ref{fact:proper_eigenvalues_interval} has  $\mu_{1}( \matr{V})(E_1,E_2] - \mu_{i}^\downarrow(\matr{V})(E_2) \leq 0$ for all $i$. Inequality (\ref{eq:upper_spectral_Weyl}) of Theorem \ref{thm:spectral_inequalities} can be 
 applied with $s=-1$ and gives at $i=j$,
    \begin{align*}
        \mu_{i}(\matr{V}^\downarrow(E_1)) - \mu_{i}(\matr{V}^\downarrow(E_2)) \leq \frac{\lambda_{m}(\matr{V}^2)(E_1,E_2]}{\lambda_1^\downarrow(\matr{V}^2)(E_1)} \left(\mu_{1}({\matr{V}}(E_1,E_2])  - \mu_i(\matr{V}^\downarrow(E_1)) \right) \leq 0.
    \end{align*}
\end{proof}

\subsubsection{Gram Matrix Eigenvalues as Monotone Functions}
A slightly stronger version of Corollary \ref{lem:monotone_gram_eigenvalues} is given. The proof is a simple application
of the classical Weyl inequalities on eigenvalues of hermitian matrices.
\begin{corollaryprime}
    \label{lem:monotone_gram_eigenvalues2}
 Consider a hermitian GEP $(\matr{H}, \matr{V})$. For $E_1 \leq E_2$ following inequalities hold,
    \begin{align*}
        \lambda_{i}^\uparrow(\matr{V}^2)(E_2) - \lambda_i^\uparrow(\matr{V}^2)(E_1) & \geq \lambda_m(\matr{V}^2[E_1,E_2) )\geq 0,\\
        \lambda_i^\downarrow(\matr{V}^2)(E_1) - \lambda_i^\downarrow(\matr{V}^2)(E_2)& \geq \lambda_m(\matr{V}^2(E_1,E_2] )\geq 0.
    \end{align*}
 In particular, $\lambda_i^\uparrow(\matr{V}^2)(E)$ is monotonically increasing and $\lambda_i^\downarrow(\matr{V}^2)(E)$ is monotonically decreasing in $E$.
\end{corollaryprime}

\begin{proof}[\textit{Proof of Corollary \ref{lem:monotone_gram_eigenvalues2}}]
 Only the first inequality is proven as the proof of the second inequality is analogous.\\
 We write $\matr{V}_1 = (\matr{V}^\uparrow(E_1))$, $\matr{V}^2_2 = (\matr{V}^\uparrow(E_2))$ and $\matr{N} = \matr{V}_2 -\matr{V}_1$. 
 By construction $[ \matr{V}_1,\matr{N}]$ is an orthogonal decomposition of $\matr{V}_2$, that is $\matr{V}_2 = \matr{V}_1 + \matr{N}$
 and $\matr{V}_1 \matr{N} = \matr{N} \matr{V}_1 = 0$.
 From the classical Weyl inequalities (\ref{eq:Weyl_classical}) we have
    \begin{align*}
        \lambda_i(\matr{V}_2^2)= \lambda_i(\matr{V}_1^2 + \matr{N}^2 )\geq \lambda_i(\matr{V}_1^2)  + \lambda_m(\matr{N}^2 ).
    \end{align*}
 Since $\matr{N}^2 = \matr{V}^2[E_1,E_2)$, the first inequality follows. The second inequality is analogously 
 derived with
 an orthogonal decomposition of $\matr{V}^\downarrow(E_1)$. 
\end{proof}

\subsubsection{Proof of Proposition \ref{prop:integrated_spectral_inequality}}
\label{sec:proof_prop_integrated_spectral_inequality}
The assumption imposed on $E_b$ and $E_a$ in Proposition \ref{prop:integrated_spectral_inequality} merely ensures that we are not dividing by zero in the integrals.

\begin{proof}[\textit{Proof of Proposition \ref{prop:integrated_spectral_inequality}}]
 Only the proof of the first inequality is given, as the proof of the second inequality is analogous. Let $(\matr{H},\matr{V})$ and $E_b$ be as in 
 Proposition \ref*{prop:integrated_spectral_inequality}.\\

 Let $dE>0$. With properties of the spectral measure from Proposition \ref{prop:spec_meas} and Corrolary \ref{cor:spec_meas} we have
    \begin{align}
        \lambda_1(\matr{V}^2)[E,E+  dE) \leq \operatorname*{Tr}[\matr{V}^2[E,E+  dE] ]= \int_{E}^{E+ d E} |\alpha(E)|^2 dE.  \label{eq:Riemann_int_enstimate}
    \end{align}
 The Fundamental Theorem of Calculus has 
    \begin{align}
 \lim_{ d E \to 0} \frac{1 }{d  E}  \int_{E}^{E+ d E} |\alpha(E)|^2 dE  = |\alpha(E)|^2. \label{eq:Fundamental_Theorem_limit}
    \end{align}

 Let $E^* > E_b$ and $N\in \mathbb{N}$. Set $d E = (E^* - E_b)/N$ and define $E_k = E_b + k d E$ for $k = 0, \cdots N-1$.
 The guess vector matrices $\matr{V}^\uparrow(E_{k})$ and $\matr{V}[E_k,E_{k+1})$ are an orthogonal decomposition of $\matr{V}^\uparrow(E_{k+1})$.
 Thus, the upper inequality (\ref{eq:upper_spectral_Weyl}) of Theorem \ref{thm:spectral_inequalities} can be applied with $\matr{V}=\matr{V}^\uparrow(E_{k+1})$,
    $\matr{B}=\matr{V}^\uparrow(E_{k})$ and $\matr{N}=\matr{V}[E_k,E_{k+1})$, where $i=j$ is taken. 
 Fact \ref{fact:proper_eigenvalues_interval} allows the estimate $\mu_1(\matr{V}(E_k,E_{k+1})) \leq E_{k+1}$.
 Fact \ref{fact:proper_eigenvalues_interval} also has  $\mu_i^\uparrow(\matr{V})(E_{k}) \leq E_k$ and, in particular,
    $ E_{k+1} - \mu_i^\uparrow( \matr{V})(E_{k}) \geq 0$.
 Thus (\ref{eq:upper_spectral_Weyl}) can be applied with $s=1$ and has
    \begin{align*}
        \mu_i^\uparrow(\matr{V})(E_{k+1}) - \mu_i^\uparrow(\matr{V})(E_k) & \leq \frac{\lambda_1(\matr{V}^2)[E_k, dE)}{\lambda_{m^*}^\uparrow(\matr{V}^2)(E_{k+1})}\bigl( E_{k+1} - \mu_i^\uparrow( \matr{V})(E_{k}) \bigr).
    \end{align*}
 Writing a telescope sum gives 
    \begin{align}
        \mu_i^\uparrow(\matr{V})(E^*) - \mu_i^\uparrow(\matr{V})(E_b) & = \sum_{k=0}^{N-1} \mu_i^\uparrow(\matr{V})(E_{k+1})- \mu_i^\uparrow(\matr{V})(E_{k})   \notag \\  
        & \leq \sum_{k=0}^{N-1} \frac{\lambda_1(\matr{V}^2)[E_k, dE)}{\lambda_{m^*}^\uparrow(\matr{V}^2)(E_{k+1})}\bigl( E_{k+1} - \mu_i^\uparrow( \matr{V})(E_{k}) \bigr)     \label{eq:Riemann_int_sum} \\
        & \leq \sum_{k=0}^{N-1}  \frac{\int_{E}^{E+ d E} |\alpha(E')|^2 dE' }{d  E}  \frac{ dE }{\lambda_{m^*}^\uparrow(\matr{V}^2)(E_{k+1})} \bigl( E_{k+1} - \mu_i^\uparrow( \matr{V})(E_{k}) \bigr).   \notag
    \end{align}
 In the last line (\ref{eq:Riemann_int_enstimate}) was used. Taking the limit $N \to \infty$ and using (\ref{eq:Fundamental_Theorem_limit}) gives
    \begin{align*}
        \mu_i^\uparrow(\matr{V})(E^*) - \mu_i^\uparrow(\matr{V})(E_b) &  \leq \int_{E_b}^{E^*} \frac{|\alpha(E)|^2}{\lambda_{m^*}^\uparrow(\matr{V}^2)(E)}\bigl( E - \mu_i^\uparrow( \matr{V})(E) \bigr) dE.
    \end{align*}
 The limit $E^* \to \infty$ yields the desired result,
    \begin{align*}
        \mu_i^\uparrow(\matr{V}) - \mu_i^\uparrow(\matr{V})(E_b) &  \leq \int_{E_b}^{\infty} \frac{|\alpha(E)|^2}{\lambda_{m^*}^\uparrow(\matr{V}^2)(E)}\bigl( E - \mu_i^\uparrow( \matr{V})(E) \bigr) dE.
    \end{align*}
\end{proof}
The Fundamental Theorem of Calculus as used in (\ref{eq:Fundamental_Theorem_limit}) indeed also applies 
to the delta distributions that appear in the definition of the spectral measure.
Let $\Theta$ denote the Heaviside step function, then we have in a distributional sense
\begin{align*}
 \lim_{ d E \to 0} \frac{1 }{d  E}  \int_{E-dE}^{E+ d E}\delta(E'-A) dE' = \lim_{ d E \to 0} \frac{1 }{d  E} \left( \Theta( E +d E - A) - \Theta(E- d E -A ) \right) = \delta(E-A).
\end{align*}
The continuous spectrum is of even less concern. Typically $\Gamma_\lambda$ is a manifold that varies smoothly in $\lambda$
and reasonable regularity assumptions on the guess vectors $a_l(k)$ 
result in a sufficiently regular energy measure $|\alpha(E)|^2$ within $E \in \sigma_c$. \\

Regarding the point spectrum, one issue is the evaluation of delta distributions at discontinuities.
In the point spectrum $|\alpha(E)|^2$ will consist of delta distributions centered at the eigenvalues $E = \lambda_i$ of the operator $\matr{H}$
and $\mu_i^\uparrow( \matr{V})(E)$ and $\lambda_m^\uparrow(\matr{V}^2)(E)$ are constant functions in between the delta peaks.
In particular, the delta distributions have to be evaluated at discontinuities.
There is no universal convention on the evaluation of delta distributions at discontinuities,
but a typical practice is to average the left- and right-hand sides. However, (\ref{eq:Riemann_int_sum}) has that 
the correct choice to maintain a rigorous inequality in (\ref{eq:integrated_spec_ineq_up}) is to evaluate the delta distributions at the righthand side of the discontinuity.
Analogously in the lower estimate (\ref{eq:integrated_spec_ineq_down}) the correct choice to evaluate a delta distribution 
at a discontinuity is the left-hand side. \\

Fortunately, we do not have to bother too much with these formalities.
In practice, the denominator $\lambda_m^\uparrow(\matr{V}^2)(E)$ is estimated by a term independent of the integration variable $E$
as seen in Corrolary \ref{cor:integrated_spectral_stability}.

\subsubsection{Proof of Theorem \ref{thm:integrated_spectral_stability}}

\begin{proof}[\textit{Proof of Theorem \ref{thm:integrated_spectral_stability}}]
 Let $(\matr{H},\matr{V})$ and $(E_a,E_b)$ be as in Theorem \ref{thm:integrated_spectral_stability}. Only the first inequality $(\ref{eq:integrated_spec_ineq_up_thm})$ is proven, as the proof for the second inequality is given analogously.\\
 Denote $\matr{B}^\uparrow = \matr{V}^\uparrow(E_b)$. 
 By Corollary \ref{cor:monotone_gen_eigenvalues} $\mu^\downarrow_i(\matr{B}^\uparrow)(E)$ is a monotone decreasing function in $E$. 
 In particular,
    \begin{align*}
        \mu^\downarrow_i(\matr{B}^\uparrow)(E_a) \leq \lim_{E^* \to \infty}\mu^\downarrow_i(\matr{B}^\uparrow)(- E^*) =\mu_i(\matr{B}^\uparrow).
    \end{align*}
 Noting $(\matr{B}^\uparrow)^\downarrow(E_a) =\matr{V}(E_a,E_b)=\matr{B} $ and multiplying with $-1$ gives
    \begin{align}
 - \mu_i(\matr{V}^\uparrow(E_b)) \geq - \mu_i(\matr{B}).  \label{eq:proof_4}
    \end{align}
 Corollary \ref{cor:monotone_gen_eigenvalues} also has $- \mu_i(\matr{V}^\uparrow(E)) \leq - \mu_i(\matr{V}^\uparrow(E_b)) $ for $E \leq E_b$. 
 Applying this in (\ref{eq:integrated_spec_ineq_up}) of Proposition \ref{prop:integrated_spectral_inequality} gives,
    \begin{align*}
        \mu_i(\matr{V}) - \mu_i(\matr{V}^\uparrow(E_b)) & \leq \int_{E_b}^{\infty} \frac{|\alpha(E)|^2 }{\lambda_{m^*}^\uparrow(\matr{V}^2)(E)}\bigl( E - \mu_i( \matr{V}^\uparrow(E)) \bigr) dE. \\
                                                        & \leq \int_{E_b}^{\infty} \frac{|\alpha(E)|^2}{\lambda_{m^*}^\uparrow(\matr{V}^2)(E)}\bigl( E - \mu_i( \matr{V}^\uparrow(E_b)) \bigr) dE.      
    \end{align*}
 Bringing the $\mu_i(\matr{V}^\uparrow(E_b)$-term to the left-hand side 
    \begin{align*}
        \mu_i(\matr{V}) - \mu_i(\matr{V}^\uparrow(E_b))  & \underbrace{\left(1 - \int_{E_b}^{\infty} \frac{|\alpha(E)|^2}{\lambda_{m^*}^\uparrow(\matr{V}^2)(E)} dE  \right)}_{:=1-\delta}  \leq \int_{E_b}^{\infty} \frac{|\alpha(E)|^2 }{\lambda_m^\uparrow(\matr{V}^2)(E)} E dE. 
    \end{align*}
 By assumption $\matr{V}$ is a signal to $(E_a,E_b)$, and thus $1-\delta \geq 0$.
 In particular, (\ref{eq:proof_4}) can be applied to obtain a lower estimate
    \begin{align*}
        \mu_i(\matr{V}) - \mu_i(\matr{B}) \left(1 - \int_{E_b}^{\infty} \frac{|\alpha(E)|^2 }{\lambda_{m^*}^\uparrow(\matr{V}^2)(E)} dE  \right) \leq \int_{E_b}^{\infty} \frac{|\alpha(E)|^2 }{\lambda_{m^*}^\uparrow(\matr{V}^2)(E)} E dE. 
    \end{align*}
 Finally arranging back,
    \begin{align*}
        \mu_i(\matr{V}) - \mu_i(\matr{B}) \leq \int_{E_b}^{\infty} \frac{|\alpha(E)|^2 }{\lambda_{m^*}^\uparrow(\matr{V}^2)(E)} \left(E -  \mu_i(\matr{B}) \right) dE.
    \end{align*}
 The second inequality is proven analogously.
\end{proof}

\subsubsection{Corollary \ref*{cor:integrated_spectral_stability} of Theorem \ref{thm:integrated_spectral_stability}}

\begin{proof}[\textit{Proof of Corollary \ref{cor:integrated_spectral_stability}}]
 Let $(\matr{H},\matr{V})$ and $(E_a,E_b)$ be as in Corollary \ref{cor:integrated_spectral_stability}. Only the first inequality (\ref{eq:integrated_spec_ineq_up_cor}) is proven, as the proof for the second inequality is given analogously.
 Every certifiable signal is found to be a signal, as seen below. Theorem \ref{thm:integrated_spectral_stability} has 
    \begin{align*}
        \mu_i(\matr{V}) - \mu_i(\matr{B}) \leq \int_{E_b}^{\infty} \frac{|\alpha(E)|^2}{\lambda_{m^*}^\uparrow(\matr{V}^2)(E)} \left(E -  \mu_i(\matr{B}) \right) dE.
    \end{align*}
 The guess vector matrix $\matr{V}$ can be written as 
 orthogonal decomposition $\matr{V} = \matr{V}^\uparrow(E) + \matr{V}^\downarrow[E]$. 
 Therefore $\matr{V}^\uparrow(E)^2 = \matr{V}^2 - \matr{V}^\downarrow[E]^2$ and 
 Weyl's inequality allows for the estimate 
    \begin{align*}
        \lambda_{m^*}(\matr{V}^\uparrow(E)^2) & \geq \lambda_m(\matr{V}^2) - \lambda_{1+m - m^*}(\matr{V}^\downarrow[E]^2) \\
        &  \geq \lambda_m(\matr{V}^2) - \lambda_{1}(\matr{V}^\downarrow[E]^2).
    \end{align*}
 In the last line $m-m^* \geq 0$ was used. For $E \geq E_b$ a decomposition has $\matr{V}^\downarrow[E]^2 = \matr{V}^2[E_b] - \matr{V}^2[E_b,E)$
 and permits through Weyl's inequality the estimate
    \begin{align*}
        \lambda_{1}(\matr{V}^\downarrow[E]^2) & \geq \lambda_1(\matr{V}^2[E_b]) - \lambda_{m}(\matr{V}^2[E_b,E)) \geq \lambda_1(\matr{V}^2[E_b]).
    \end{align*}
 Therefore,
    \begin{align}
        \lambda_{m^*}(\matr{V}^\uparrow(E)^2) & \geq \lambda_m(\matr{V}^2) - \lambda_{1}(\matr{V}^\downarrow[E]^2)  \label{eq:pluging2}\\
        & \geq \lambda_m(\matr{V}^2) - \lambda_{1}(\matr{V}^2[E_b]) \geq 0. \notag
    \end{align}
 In the last line the assumption that $\matr{V}$ is a certifiable signal to $(E_a,E_b)$ was used. The inequality
 also implies, that every certifiable signal is a signal.
 Plugging $(\ref{eq:pluging2})$ into the integral gives
    \begin{align}
        \mu_i(\matr{V}) - \mu_i(\matr{B}) & \leq \int_{E_b}^{\infty} \frac{|\alpha(E)|^2}{\lambda_m(\matr{V}^2) - \lambda_{1}(\matr{V}^\downarrow[E]^2)} \left(E -  \mu_i(\matr{B}) \right) dE,  \label{eq:cor_other_ineq}\\
        & \leq \int_{E_b}^{\infty} \frac{|\alpha(E)|^2}{\lambda_m(\matr{V}^2) - \lambda_{1}(\matr{V}^\downarrow[E_b]^2)} \left(E -  \mu_i(\matr{B}) \right) dE. \label{eq:cor_other_ineq_given}
    \end{align}
 As the second inequality is proven analogously, the proof is complete.
\end{proof}
In the proof, inequality (\ref{eq:cor_other_ineq}) is a sharper version of the inequality that is actually given in Corollary \ref{cor:integrated_spectral_stability}. It can
 also access the known Gram matrix $\matr{V}^2$ and $\lambda_{1}(\matr{V}^\downarrow[E]^2)$ can be controlled from a guess vector envelope. 
However, we believe that the loss in precision from (\ref{eq:cor_other_ineq}) to (\ref{eq:cor_other_ineq_given}) 
is for most application non essential and (\ref{eq:cor_other_ineq_given}) is more user friendly. \\

Wely's inequality also allows one to give the denominator of Theorem \ref{thm:integrated_spectral_stability} in terms of the signal Gramm matrix $\matr{B}^2$.
For this, we assume that the rank of the Gram matrix $\matr{V}^2$ is equal to the rank of the signal Gram matrix $\matr{B}^2$.
Then 
\begin{align*}
    \lambda_{m^*}(\matr{V}^\uparrow(E)^2) & = \lambda_{m}(\matr{V}^\uparrow(E)^2)= \lambda_{m}( \matr{B}^2 +  \matr{V}[E_b,E]^2 + \matr{V}^\uparrow[E_a]^2) \\
    & \geq \lambda_{m}( \matr{B}^2)  + \lambda_m(\matr{V}[E_b,E]^2 + \matr{V}^\uparrow[E_a]^2) \geq \lambda_{m}( \matr{B}^2).
\end{align*}
Protocols that can establish a lower bound on the signal strength of the guess matrix they generate could 
perhaps use this alternative to give guarantees that are independent from algorithmic convergence.

\newpage 

\section{Algorithmic Implications}
\label{sec:NumTheoSimbiosis}

Here, we will address the occurrence of singular Gramm matrices, that has repeatedly caused confusion and ambiguities
in the numerical literature. A simple solution to the problem is proposed, that 
effectively refines the guess vector space and improves the quality of the estimates of the spectral inequalities.
We will also propose a \emph{dimension detection} scheme for a
spectral subspace $\mathcal{E}$ of interest, that is motivated by the rigorous spectral inequalities of the 
previous two sections.

\subsection{The Singularity Problem }
In the literature of computational chemistry, many algorithms based on GEPs report the occurrence of complex eigenvalues in the solution of Hermitian GEPs.
Complex eigenvalues to a hermitian operator are indeed concerning. 
However, a clear explanation for their occurrence is often omitted, and the proposed practices on how to handle them lead to ambiguities.
For a concrete example, see the discussion of Black-Davidson-inspired algorithms in \cite{Zuev}.\footnote{In particular page 3.}
Another example in the context of Filter Diagonalization is found in Section 7.2 of \cite{Mandelshtam}.\\

In fact, the explanation for complex eigenvalues is quite straightforward. Often, a singular Gram matrix is the cause of the problem. 
Vectors in the kernel of the basis matrix $\matr{V}$ do not relate to the hermitian operator $\matr{H}$ and 
lead to arbitrary eigenvalues.\\

In the context of mathematical data science, GEPs occur in data dimension reduction and classification, 
commonly referred to as Generalized Linear Discriminant Analysis (Generalized LDA). 
In this context, the problem of singular Gram matrices in computational routines is more explicitly addressed 
and referred to as the \emph{singularity problem} \cite{GenLDA}.\\ 

A problem that is closely related to the singularity problem is that the true dimensionality of a target spectral subspace is often unknown.
A particular advanced example of a GEP-based protocol that is applied in the context of quantum chemistry 
is FEAST \cite{Baiardi_2021}.\footnote{The mathematical machinery of FEAST, that produces the guess vectors 
for a target spectral subspace $\mathcal{E}_{(a,b)}(\matr{H})$ is based on Cauchy's integral formula 
and numerical evaluation of contour integrals.} Numerically it is found that the attained precision 
of the method is best when the number of guess vectors is slightly larger than the true dimensionality of the target spectral subspace (overshooting).
An explanation of why slight overshooting attains better approximations is that it is easier to 
better "span" the target spectral subspace with more guess vectors.
However, first, the true dimensionality of the target spectral subspace has to be guessed through trial and 
error and after overshooting more iterations are needed to distinguish spurious from true eigenvalues.\\

Based on the theory of the preceding sections, simple numerical practices are proposed that ultimately come along with the 
following advantages:
\begin{enumerate}
    \item Avoidance of singular Gram matrices
    \item Maintaining the advantages of slight overshooting
    \item Efficient dimension detection with a guarantee
    \item Access to rigorous precision guarantees
\end{enumerate}
We assume that a GEP-based protocol $\mathfrak{P}$ provides a bound
on the noise $\lambda_1(\matr{N}^2)$ of the guess vector matrix that it generartes.

\subsection{An $\varepsilon$-Nullspace to the Rescue}

\label{sec:epsilon_nullspace}
In the theoretical treatment of the previous sections, it was pointed out, that singular Gram matrices are not a concern for the spectral analysis of GEPs
as the nullspace of the Gram matrix can be ignored. It was the rank of the Gram matrix and the smallest non-zero eigenvalues, that was of interest.\\

As singular Gram matrices have been of some concern in the numerical literature, it is instructive to transfer this insight to numerical applications.
For matrices that are subject to noise, 
the strict definition of the kernel of a matrix is no longer suitable and requires a relaxed notion.
A \emph{$\varepsilon$-nullspace} is defined as 
\begin{align*}
 \operatorname{Ker}_\varepsilon (\matr{V}^2)= \left\{ x \in \mathbb{C}^M \quad s.t. \quad \left|\frac{\langle x, \matr{V}^2 x \rangle}{\langle x, x\rangle } \right|\leq \varepsilon \right\}.
\end{align*}
It is important to note that the $\varepsilon$-nullspace is not a vector space.\footnote{This means
 $x_1,x_2  \in \operatorname{Ker}_\varepsilon (\matr{V}^2)$ does not imply $ \alpha_1 x_1 + \alpha_2 x_2 \in \operatorname{Ker}_\varepsilon (\matr{V}^2)$
for all $\alpha_1, \alpha_2 \in \mathbb{C}$.} 
The $\varepsilon$-nullspace of a hermitian matrix is only related to the spectral subspace 
associated with the interval $[-\varepsilon, \varepsilon]$,
\begin{align*}
 \mathcal{E}_{[-\varepsilon, \varepsilon]}(\matr{A}) \subset \operatorname{Ker}_\varepsilon(\matr{A}).
\end{align*}
But the converse of the inclusion is a priori not true. However, for Gram matrices a related substitute property for the converse holds,
\begin{align}
 \mathcal{E}_{[0, \varepsilon]}(\matr{V}^2)^\perp \cap \operatorname{Ker}_\varepsilon(\matr{V}^2) =\varnothing.  \label{eq:epsilon_nullspace_orthogonal}
\end{align}
While the vector space property of a nullspace is not stable under $\varepsilon$ relaxations, the orthogonality property expressed 
in (\ref{eq:epsilon_nullspace_orthogonal}) is. It is also this orthogonality relation that is of greater numerical relevance. 
It ensures that after we have projected onto the spectral subspace $\mathcal{E}_{>\varepsilon}(\matr{V}^2)$ of well-conditioned eigenvectors, the Gram matrix is no longer singular.
We use the concept of an $\varepsilon$-nullspace to motivate practices that allow us to distinguish noise from signal in the guess vector matrix $\matr{V}$ and that result 
in better approximations. \\

Consider a protocol $\mathfrak{P}$ that generates guess vectors for a target spectral subspace $\mathcal{E}$.
From now on the number of guess vectors $M$ is acknowledged as an important parameter and explicitly denoted.
For example, $\matr{V}_M = [v_1, v_2, \cdots, v_M]$ and 
accordingly for a signal-noise decomposition as in Definition \ref{def:ortho_decomp}, $\matr{V}_M = [\matr{B}_M, \matr{N}_M]$.
It is assumed that the protocol $\mathfrak{P}$ provides bounds on
the noise of the Gram matrices that it generates,
\begin{align*}
    \lambda_1(\matr{N}_M^2) \leq \varepsilon_M, \qquad \text{for all} \quad M.
\end{align*}
Naturally, the sequence of noise estimates $\{\varepsilon_M \}$ is expected to be increasing in $M$. 

\subsubsection{Dimension Detection}
\label{sec:dimension_detection}
In the following, we illustrate how the spectral inequalities of the previous two Sections can be accessed, after the matrices $\matr{V}_M^\dagger \matr{H} \matr{V}_M, \matr{V}_M \in \mathbb{C}^{m\times m}$
have been obtained from the protocol $\mathfrak{P}$, in order to give precision guarantees on 
eigenvalue approximations. We will also suggest a \emph{guess vector space refinement} 
that maintains the advantages of slight overshooting and simultaneously avoids unfavorable precision corruption due to ill-conditioned eigenvectors.\\

The eigenvalues of the Gram matrix $\matr{V}_M^2$ are computed and listed in 
descending order $\lambda_1(\matr{V}_M^2) \geq \lambda_2(\matr{V}_M^2) \geq \cdots \geq \lambda_M(\matr{V}_M^2)$.
The detected dimensionality of the target spectral subspace is defined as 
\begin{align}
 m_{\text{detect}}= \min_m \lambda_m(\matr{V}_M^2) \quad \text{s.t.} \quad \lambda_m(\matr{V}_M^2) \geq \varepsilon_M. \label{eq:detected_dimension}
\end{align}
The following claim justifies the suggestion on how to detect the dimensionality of the target spectral subspace $\mathcal{E}$.
\begin{claim}
    \label{claim:detected_dimension}
 The detected dimensionality (\ref{eq:detected_dimension}) is lower than or equal to the true dimensionality of the target spectral subspace.
\end{claim}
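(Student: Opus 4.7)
The plan is to exploit the signal-noise decomposition $\matr{V}_M = \matr{B}_M + \matr{N}_M$ from Definition \ref{asump:protocol}, together with Weyl's inequality for sums of Hermitian matrices, to control the eigenvalues of $\matr{V}_M^2$ that can possibly exceed the noise threshold $\varepsilon_M$. The key observation is that the signal contribution $\matr{B}_M$ lives inside the target spectral subspace $\mathcal{E}$, so its rank is bounded by $\dim(\mathcal{E})$; the Gram matrix $\matr{V}_M^2$ then has at most $\dim(\mathcal{E})$ eigenvalues that can be elevated above the noise floor.

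First, I would unfold $\matr{V}_M^2 = (\matr{B}_M + \matr{N}_M)^\dagger (\matr{B}_M + \matr{N}_M)$ and use $\matr{B}_M^\dagger \matr{N}_M = \matr{N}_M^\dagger \matr{B}_M = 0$ (which follows because $\operatorname{span}(\matr{B}_M) \subseteq \mathcal{E}$ and $\operatorname{span}(\matr{N}_M) \subseteq \mathcal{E}^\perp$) to conclude that
\begin{align*}
 \matr{V}_M^2 = \matr{B}_M^2 + \matr{N}_M^2.
\end{align*}
Set $m_B = \operatorname{rank}(\matr{B}_M)$. Since $\operatorname{span}(\matr{B}_M) \subseteq \mathcal{E}$, we have $m_B \leq \dim(\mathcal{E})$, and consequently $\lambda_{m_B+1}(\matr{B}_M^2) = 0$.

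Next, I would apply the classical Weyl inequality (\ref{eq:Weyl_classical}) with indices $i = m_B + 1$ and $j = 1$ to the decomposition $\matr{V}_M^2 = \matr{B}_M^2 + \matr{N}_M^2$,
\begin{align*}
 \lambda_{m_B + 1}(\matr{V}_M^2) \leq \lambda_{m_B+1}(\matr{B}_M^2) + \lambda_1(\matr{N}_M^2) = \lambda_1(\matr{N}_M^2).
\end{align*}
Combining this with the strict $\varepsilon$-guarantee $\lambda_1(\matr{N}_M^2) < \varepsilon_M$ from (\ref{eq:eps_guarantee}) yields $\lambda_{m_B+1}(\matr{V}_M^2) < \varepsilon_M$. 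Hence the defining condition $\lambda_m(\matr{V}_M^2) \geq \varepsilon_M$ fails for $m = m_B + 1$, and by the monotone ordering of the eigenvalues it also fails for every larger index. Therefore
\begin{align*}
 m_{\text{detect}} \leq m_B \leq \dim(\mathcal{E}),
\end{align*}
which is the claim.

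The proof is essentially a one-step application of Weyl plus the orthogonality of the signal-noise decomposition, so there is no real obstacle; the only point to watch is interpreting the min/arg correctly in (\ref{eq:detected_dimension}), namely that $m_{\text{detect}}$ counts indices $m$ for which $\lambda_m(\matr{V}_M^2) \geq \varepsilon_M$ and the strict inequality in the $\varepsilon$-guarantee is what seals the argument.
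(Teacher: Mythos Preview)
Your proof is correct and follows essentially the same approach as the paper: both use the orthogonal splitting $\matr{V}_M^2 = \matr{B}_M^2 + \matr{N}_M^2$ together with Weyl's inequality and the strict $\varepsilon$-guarantee. The paper argues directly that $\lambda_{m_{\text{detect}}}(\matr{B}_M^2) \geq \lambda_{m_{\text{detect}}}(\matr{V}_M^2) - \lambda_1(\matr{N}_M^2) > 0$, hence $\operatorname{rank}(\matr{B}_M) \geq m_{\text{detect}}$, whereas you phrase the contrapositive via $\lambda_{m_B+1}(\matr{V}_M^2) < \varepsilon_M$; these are two formulations of the same step.
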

The proof is a simple application of Weyl's inequality (\ref{eq:Weyl_classical}) and the sufficient condition is that $\varepsilon_M$ is larger than $\lambda_1(\matr{N}_M^2)$.
\begin{proof}
Weyl's inequality on eigenvalues of hermitian matrices has
\begin{align*}
    \lambda_{ m_{\text{detect}}}(\matr{V}^2) & = \lambda_{ m_{\text{detect}}}(\matr{B}^2 + \matr{N}^2) \leq \lambda_{ m_{\text{detect}}}(\matr{B}^2) + \lambda_{1}(\matr{N}^2).
\end{align*}
Solving for $\lambda_{ m_{\text{detect}}}(\matr{B}^2)$ and using the assumption (\ref{eq:detected_dimension}) gives 
\begin{align*}
    \lambda_{ m_{\text{detect}}}(\matr{B}^2) \geq \lambda_{ m_{\text{detect}}}(\matr{V}^2) - \lambda_{1}(\matr{N}^2) >0.
\end{align*}
The last inequality implies that the rank of $\matr{B}^2$ is at least $M_{\text{detect}}$. As the column vectors of $\matr{B}$ are in $\mathcal{E}$,
$$\operatorname*{span}(\matr{B}) \subset \mathcal{E}$$
this implies in particular that the space $\mathcal{E}$ is at least $M_{\text{detect}}$ dimensional.
\end{proof}

If for example, $m_{\text{detect}} = M$, it is proven, that the target space $\mathcal{E}$ is at least $M$-dimensional and
the guessed dimensionality needs to be increased. In the ideal scenario, $m_{\text{detect}}$ is smaller than $M$
and one observes a significant drop in the spectrum of the Gram matrix, 
\begin{align}
    \lambda_{M_{\text{detect}} }(\matr{V}_M^2) \gg  \lambda_{M_{\text{detect}} +1}(\matr{V}_M^2).  \label{eq:significant_drop}
\end{align}
A sophisticated protocol for guess vector generation motivates at least some kind of independence among the guess vectors
that can be maintained as long as the number of guess vectors is smaller than the true dimensionality of the target spectral subspace.
Then, (\ref{eq:significant_drop}) allows great confidence that the true dimensionality has correctly been detected.

\subsubsection{Advantage of Overshooting}
\label{sec:overshooting}
Using more guess vectors than the true dimensionality of the target spectral subspace
allows for more freedom to represent the true eigenvectors within the target spectral subspace
and can yield better approximations. \\

This can also be formalized through the conditioning of the eigenvalue $\lambda_{m_{\text{true}}}(\matr{V}_M^2)$
of the Gram matrix and Weyl inequalities. Here $m_{\text{true}}$ denotes the true dimensionality of the target spectral subspace $\mathcal{E}$.
We will denote with $\matr{V}_{M_\text{true}}^2$ the Gram matrix, that is obtained from the first $m_{\text{true}}$ guess vectors
of $v_1, v_2, \cdots,v_{m_\text{true}}, \cdots , v_M$ and is then through zero padding embedded in $\mathbb{C}^{M \times M}$.
Weyl inequalities have
\begin{align*}
    \lambda_{m_{\text{true}}}(\matr{V}_{M}^2) = \lambda_{m_{\text{true}}}(\matr{V}_{m_\text{true}}^2 + \matr{V}_{M}^2 - \matr{V}_{m_\text{true}}^2 ) \geq \lambda_{m_{\text{true}}}(\matr{V}_{m_\text{true}}^2) + \lambda_M(\matr{V}_{M}^2 - \matr{V}_{m_\text{true}}^2 ). 
\end{align*}
By construction, we have $\matr{V}_{M} \matr{V}_{m_\text{true}}^\dagger = \matr{V}_{m_\text{true}}^\dagger \matr{V}_{m_\text{true}}$ due to the zero padding. 
In particular,
\begin{align*}
 (\matr{V}_{M} - \matr{V}_{m_\text{true}})^2 & =   \matr{V}_{M}^2 - \matr{V}_{M} \matr{V}_{m_\text{true}}^\dagger - \matr{V}_{m_\text{true}}\matr{V}_{M}^\dagger + \matr{V}_{m_\text{true}}^2 \\
    & = \matr{V}_{M}^2 - \matr{V}_{m_\text{true}}^2 \geq 0. 
\end{align*}
Therefore, $\lambda_{m_{\text{true}}}(\matr{V}_{M}^2) \geq \lambda_{m_{\text{true}}}(\matr{V}_{M_\text{true}}^2) $, and the conditioning of 
the $m_{\text{true}}$-th eigenvalue of the Gram matrix can only be improved by a larger number of guess vectors.\\

These considerations suggest a numerical practice that we will refer to as \emph{guess vector space refinement}.
It includes the advantages of 
dimension detection and overshooting while it, nevertheless, avoids the singularity problem. 

\subsubsection{A Guess Vector Space Refinement}
\label{sec:guess_vector_space_refinement}

After the Gram matrix $\matr{V}_M^2$ has been obtained from the protocol $\mathfrak{P}$, 
we suggest, to diagonalize $\matr{V}_M^2$ and to obtain the eigenvalues 
$\lambda_1(\matr{V}_M^2) \geq \lambda_2(\matr{V}_M^2) \geq \cdots \geq \lambda_M(\matr{V}_M^2)$ in a descending order.
The corresponding eigenvectors are stored as columns in a unitary matrix $\matr{U}_M=[u_1, u_2, \cdots, u_M]$.
\begin{align*}
 \matr{U}^\dagger \matr{V}_M^\dagger \matr{V}_M \matr{U} = \operatorname{diag}(\lambda_1(\matr{V}_M^2), \cdots, \lambda_M(\matr{V}_M^2)).
\end{align*}
The geometrical meaning of the guess vector matrix $\matr{V}_M \matr{U} = [\tilde v_1, \cdots , \tilde v_M]$ is that the original guess vectors $v_1, v_2, \cdots, v_M$ have 
been rearranged as linear combinations into new vectors $\tilde v_i$. These vectors are maximally well conditioned, in descending order as can 
be seen from the min-max characterization of eigenvalues. \\

The suggestion is to only consider the first $m_{\text{detect}}$ reshuffled vectors $\tilde v_1, \tilde v_2, \cdots, \tilde v_{m_{\text{detect}}}$, 
for the GEP, which aims to approximate the target spectral subspace $\mathcal{E}$.
The vecotors $\tilde v_j = \matr{V} u_j$ with $\lambda_j(\matr{V}_M^2) \leq \varepsilon_M$ span the $\varepsilon_M$-kernel of the guess vector Gram matrix.
It is instructive to remove them from further analysis, as they carry very little information about the signal,
\begin{align*}
    \varepsilon_M \geq u_i^\dagger \matr{V}^2 u_i = x_i^\dagger (\matr{B}^2 + \matr{N}^2) x_i \geq x_i^\dagger \matr{B}^2 x_i.
\end{align*}
As previously pointed out, $\mathcal{E}_{[0, \varepsilon]}(\matr{V}^2) \subset \operatorname{Ker}_{\varepsilon}(\matr{V}^2)$. Therefore, all linear combinations of ill-conditioned vectors 
will remain ill-conditioned and carry negligible information about the signal.
These ill-conditioned vectors can also have unfavorable effects on the well-conditioned eigenvectors, as illustrated in Section 3 of Ref.\cite{stewart_perturbation_1978}.\\

The new GEP that corresponds to the refined guess vector space $\matr{V}_* = [\tilde v_1, \tilde v_2, \cdots, \tilde v_{m_{\text{detect}}}]$
is easily obtained by transforming the original left-hand side matrix $\matr{H}_{V_M}$ that a protocol $\mathfrak{P}$ has provided
with $\matr{U}$ and then to only consider the upper left $m_{\text{detect}} \times m_{\text{detect}}$ block,
\begin{align*}
 \matr{H}_{\matr{V}_*} = \left[\matr{U}^\dagger \matr{H}_{V_M} \matr{U}\right]_{i,j =1}^{m_{\text{detect}}}.
\end{align*}
The matrix $\matr{V}_*$ is already diagonalized after the unitary transformation, and we denote $\matr{\Lambda}_{\matr{V}_*} = \operatorname{diag}(\lambda_1(\matr{V}_M^2), \cdots, \lambda_{m_{\text{detect}}}(\matr{V}_M^2))$.\footnote{The proof 
of the Generalized Spectral Theorem in Section \ref{sec:proof_specral_theorem} shows how unitary transformations are used to diagonalize GEPs.}
The generalized eigenvalues of the GEP $(\matr{H}_{\matr{V}_*})$ are then obtained as the eigenvalues of 
\begin{align*}
 \matr{\Lambda}_{\matr{V}_*}^{-1} \matr{H}_{\matr{V}_*}.
\end{align*}
After such a guess vector space refinement, the approximations of the target spectral subspace $\mathcal{E}$ have 
been obtained from a well-conditioned GEP $(\matr{H}, \matr{V}_*)$, and the spectral inequalities of Sections \ref{sec:SpecIneq} and \ref{sec:IntegratedSpectralInequalities} can be accessed
to obtain precision guarantees.\\

We denote the signal noise decomposition of $\matr{V}_*$ as $[\matr{B}_*, \matr{N}_*]$. While 
$\matr{N}_*$ is only $m_{\text{detect}}$ dimensional, it has been generated from the $M$-dimensional noise matrix $\matr{N}_M$,
which the protocol initially provided.
In particular, the correct way to estimate $\lambda_1(\matr{N}_*^2)$ is to use the noise bound $\varepsilon_M$ and not $\varepsilon_{m_{\text{detect}}}$.
Naturally, the noise bounds $\varepsilon_M$ are increasing with $M$.\\

However, the conditioning of the Gram matrix $\matr{V}_*^2$ can only improve with increasing number of guess vectors as we have seen in Section \ref{sec:overshooting}.
However, eventually, there is a turning point, where the increase in the noise bound $\varepsilon_M$ outweighs the improvement in the conditioning of the Gram matrix $\matr{V}_*^2$.
Therefore, we expect that for most applications the used number of guess vectors $M$ that reaches optimal precisions is only slightly larger than the true dimensionality of the target spectral subspace
or has a threshold.\\

In summary, the simple guess vector space refinement enables us to incorporate information from a number of guess vectors that exceeds the true dimensionality of the target spectral subspace, while still avoiding the singularity problem.
A drop in the spectrum of Gram matrix $\matr{V}_M^2$ is used to detect the dimensionality of the target spectral subspace
and Claim \ref{claim:detected_dimension} guarantees that the true dimensionality is not underestimated. With the well-conditioned GEP $(\matr{H}_{\matr{V}_*}, \matr{V}_*)$
and $\varepsilon_M$ precision guarantees can be obtained from the spectral inequalities of the previous sections.\\

By now all results of Section \ref{sec:ResHigh} have been established. The proposed guess vector space refinement differs from the current practices in the literature.
In the last subsection, we will briefly elaborate on the current practice of post-removing spurious eigenvalues.\\
 We will also give a small example that shows how a GEP-based approach with multiple guess vectors can compare favorably to a single Rayleigh quotient optimisation for estimating extreme eigenvalues.

\subsection{Additional Remarks on Numerical Practices}

\subsubsection{Argument against Filtering Spurious Eigenvalues}
\label{sec:NumericalPractices}

The practice of the guess vector space refinement is in contrast to the current practice of filtering spurious eigenvalues from the GEP $(\matr{H},\matr{V}_M)$.
Here, one accepts that the guess vector matrix $\matr{V}_M$ becomes singular after the true dimensionality of the target spectral subspace has been overestimated.
Then, one also obtains spurious eigenvalues, infiltrates the solutions, and does not relate to the target spectral subspace of interest. 
Either through suspicious complex contributions in the calculated eigenvalues or through extra iterations that detect unstable oscillations, the spurious eigenvalues 
are heuristically identified and removed.\\

We argue against this practice. Even after the spurious eigenvalues have been removed, they have already corrupted the precision of the approximation.
In Section 3 of \cite{stewart_perturbation_1978} it is illustrated in an example, that an ill-conditioned 
eigenvector of a GEP can cause drastic noise amplification on a well-conditioned eigenvector.\\

This is also reflected in the spectral inequalities of Section \ref{sec:SpecIneq}, which give 
precision statemnts with a factor $\lambda_m(\matr{V}^2)^{-1}$. However,
we also note that the integrated spectral inequality of Theorem \ref{thm:integrated_spectral_stability}
softens the impact of spurious eigenvalues in comparison,
\begin{align*}
    \mu_i(\matr{V}) - \mu_i(\matr{B}) & \leq \int_{E_b}^{\infty} \frac{|\alpha(E)|^2 }{\lambda_{m^*}^\uparrow(\matr{V}^2)(E)} \left(E -  \mu_i(\matr{B}) \right) dE.
\end{align*}
In particular, here the condition factor is not given by the lowest eigenvalue of the full Gram matrix $\lambda_m(\matr{V}^2)$, but by the smallest non-vanishing eigenvalue of $\matr{V}^\uparrow(E)^2$.
This softens the impact of spurious Gram matrices in the context of projection-based GEPs.
It can partially explain why very slight overshooting of the true dimensionality has, despite the singularity problem, reached 
satisfying approximations in the literature. \\

However, we nevertheless suggest abandoning the practice of over-estimate and post-select.
A guess vector space refinement as proposed in Section \ref{sec:guess_vector_space_refinement} is simple to implement and 
enables rigorous precision guarantees. To our knowledge, the freedom to select a larger number of guess vectors and to
refine the guess vector space has not been explored in the literature. As discussed it enables more efficient dimension detection and 
attained precision in the eigenvalues.

\subsubsection{Extremal Eigenvalues through Projection}

We recall that the generalized eigenvalues of a GEP $(\matr{H}, \matr{V})$ are contained within the spectral range of $\matr{H}$ (Fact \ref{fact:spec_range}).
Equivalently formulated, a generalized eigenvalue $\mu_i$ is contained within the range of the Rayleigh quotient of $\matr{H}$ on a Hilbert space $\Hil$ that contains the guess vectors of $\matr{V}$.
But from the other direction, the generalized eigenvalues of a GEP do
not relate in a similar critical manner to the individual Rayleigh quotients of the guess vectors.\\

Consider the simple example 
\begin{align*}
 \matr{H} = \begin{bmatrix}
 -1 & 0 & 0 \\
        0 & 0 & 0 \\
        0 & 0 & 1
        \end{bmatrix}, \qquad \matr{V} = [v_1,v_2] = \begin{bmatrix}
            1 & 1 \\
            1 & 0 \\
            1 & -1
        \end{bmatrix}.
\end{align*}
Both $v_1$ and $v_2$ have individual Rayleigh quotients equal to zero. Therefore not particularly close 
to the lowest eigenvalue of $\matr{H}$. The GEP $(\matr{H}, \matr{V})$ has 
\begin{align*}
    \begin{bmatrix}
        0 & -2  \\
 -2 & 0 
        \end{bmatrix} x = \mu   \begin{bmatrix}
            3 & 0  \\
            0 & 2 
            \end{bmatrix}x.
\end{align*}
The solutions are readily obtained at $\mu_{\pm} = \pm \sqrt{\frac{2}{3}}$. The approximation of the lowest eigenvalue
is relative to the poor Rayleigh quotient of the guess vectors encouraging. \\

This small example illustrates, that even if one only aims to approximate the lowest eigenvalue of a Hamiltonian, more precision can 
be obtained by using multiple guess vectors. Right away aiming for a slightly larger number of lowest-lying eigenvalues
allows spending fewer resources on individual Rayleigh quotient optimization, which has a tendency to quickly flatten in precision gain.

\section{Concluding Remarks}

Throughout this chapter, we have developed a novel approximation theory for spectral analysis, inspired by the extensive numerical literature in Quantum Chemistry.
Numerical experiments and heuristic folk theorems 
are vital to scientific progress. Eventually, however, a firm mathematical ground becomes invaluable to formulate a clearer vision of the next 
challenges ahead.\\

The rigorous error estimates that we have developed here, facilitate a more comprehensive analysis of the optimal allocation of computational resources.
Many theoretical models that give rise to high-dimensional computational challenges are themselves only approximations of nature.
Without a clear uncertainty quantification, it remains unclear whether the accuracy of an approximate solution exceeds the inherent precision of the model itself. 
Therefore, the spectral inequalities presented here enable a more informed and efficient distribution of resources. \\

In addition, we found that error estimates are also of algorithmic relevance. They allow us to distinguish between noise and signal in the computational data
and to determine the true dimensionality of the subspace of interest. \\

Finally, we would like to emphasise the generality of the framework. Many scientific problems can be formulated as spectral problems.
Therefore, the theoretical framework promises to be relevant in a broad range of scientific disciplines.
In Chapter \ref{chap:FilterDiag}, we will apply the presented approximation theory to a signal processing routine, and thus demonstrate its practical utility.
\newpage


\section{Appendix}

\subsection{Generalized Spectral Theorem \ref{thm:gen_specral_theorem}}
\label{sec:proof_specral_theorem}

We give two proofs for the Generlaized Spectral Theorem \ref{thm:gen_specral_theorem}. 
The first proof is based on pure linear algebra and gives insights to computational aspects on how the eigendecomposition is obtained.
The second proof is shorter and relies on usual spectral theorem with respect to an inner product space that is induced by the GEP. 

\subsubsection{Proof I: Pure Linear Algebra}
We assume first that the Gramm matrix $\matr{V}^2$ is positive definite and argue for the singular case afterwards.
We start by establishing that the eigenvalues of a hermitian GEP are real and that their eigenvectors satisfy an orthogonality relation.\\

Let $\{ (\mu_1,x_1), \cdots (\mu_k,x_k)\}$ be a collection of eigenpairs that solve 
\begin{align*}
    \matr{H}_V x_i = \mu_i \matr{V}^2 x_i. 
\end{align*}
Left multiplying the equation with $x_j^\dagger$ and using that $\matr{H}_V$ is self adjoint gives
\begin{align}
    \mu_i \langle x_j, \matr{V}^2 x_i \rangle = \langle x_j, \matr{H}_V x_i \rangle = \langle \matr{H}_V x_j, x_i \rangle = \conj{\mu}_j \langle x_j, \matr{V}^2 x_i \rangle. \label{eq:spec_ortho}
\end{align}
For $i =j$ this implies  $\mu_i = \conj{\mu_i}$. Therefore all eigenvalues $\mu_i$ of a hermitian and non singular GEP must be real.
Here the assumption $\langle x,\matr{V}^2 x \rangle >0$ was used.  
If $\mu_i \neq \mu_j$ eq.\ref{eq:spec_ortho} implies that $\langle x_i, \matr{V}^2 x_j \rangle =0$. 
Thus eigenvectors to different eigenvalues are orthogonal with respect to the inner product $\langle \cdot, \cdot \rangle_V$,
that is induced by the basis matrix. 
For degenerate eigenvalues, the eigenvectors can be orthonormalized with respect to $\langle \cdot, \cdot \rangle_V$ by a Gramm schmidt procedure. 
In summary all eigenvectors of a hermitian non degenerate GEP can be orthonormalized such that we have
\begin{align}
    \langle x_i, \matr{V}^2 x_j \rangle = \delta_{ij} \label{eq:specral_ortho2}.
\end{align}

Next we ask for an analog of a complete eigendecomposition a hermitian GEP.
That is, the question of wether every hermitian GEP of rank $m$ has $m$ solutions with eigenvectors 
orthonormal in the sense of eq.\ref{eq:specral_ortho2}. The question formalizes in matrix from as 
\begin{align}
    \matr{V}^\dagger \matr{H} \matr{V} \matr{\Phi} = \matr{V}^\dagger \matr{V} \matr{\Phi} \matr{\Lambda} \quad \label{eq:GEP_matForm}
\end{align}
where $\matr{\Phi} \in \mathbb{C}^{m \times m}$ contains the generalized eigenvectors in its columns
and $\matr{\Lambda}$ is a diagonal matrix containing the eigenvalues of the GEP. The $\matr{V}^2$-orthonormality relation expressed in
eq.\ref{eq:specral_ortho2} imposes on $\matr{\Phi}$ the constraint
\begin{align*}
    \matr{\Phi}^\dagger \matr{V}^\dagger \matr{V} \matr{\Phi} = \matr{I}.
\end{align*}

Left multiplying eq.\ref{eq:GEP_matForm} with $\matr{\Phi}^\dagger$ gives 
following formulation of the spectral problem,

\begin{equation*}
    \begin{aligned}
        \matr{\Phi}^\dagger \matr{V}^\dagger \matr{H} \matr{V} \matr{\Phi} &= \matr{\Phi}^\dagger \matr{V}^\dagger \matr{V} \matr{\Phi} \matr{\Lambda}  \quad
        &\begin{aligned}
            & \text{ with } && \matr{\Phi}\in \mathbb{C}^{m \times m}, \matr{\Lambda}\in \mathbb{R}^{m\times m} \text{ diagonal.}\\
            & \text{ subject to} && \matr{\Phi}^\dagger \matr{V}^\dagger \matr{V} \matr{\Phi} = \matr{I}.
        \end{aligned}
    \end{aligned}
\end{equation*}

To ensure the constraint on $\matr{\Phi}$ the eigen decomposition of the Gram matrix is considered
\begin{align*}
    \matr{\Phi}^\dagger \matr{V}^\dagger \matr{V} \matr{\Phi}  = \matr{\Phi}^\dagger \matr{U}_V \matr{\Lambda}_V  \matr{U}_V^\dagger  \matr{\Phi}.
\end{align*}
Here $\matr{U}_V$ is a unitary matrix and $\matr{\Lambda}_V$ is a diagonal matrix containing the eigenvalues of $\matr{V}^\dagger \matr{V}$.
We pick $\matr{\Phi} = \matr{U}_V \matr{\Lambda}_V^{-1/2} \matr{U}$ where $\matr{U}$ is some unitary matrix to be chosen later. 
Then we have 
\begin{align*}
    \matr{\Phi}^\dagger \matr{V}^\dagger \matr{V} \matr{\Phi}  = \matr{U}^\dagger \matr{\Lambda}_V^{-1/2} \matr{\Lambda}_V \matr{\Lambda}_V^{-1/2} \matr{U} = \matr{I},
\end{align*}
where we used in the last equality that $\matr{U}$ is a unitary matrix.\\ 
With this our problem reduces to 
\begin{align*}
    \matr{\Phi}^\dagger \matr{V}^\dagger \matr{H} \matr{V} \matr{\Phi} = \matr{\Lambda}.
\end{align*}
With $\matr{\Phi} = \matr{U}_V \matr{\Lambda}_V^{-1/2} \matr{U}$ we have 
\begin{align*}
    \matr{\Phi}^\dagger \matr{V}^\dagger \matr{H} \matr{V} \matr{\Phi} = \matr{U}^\dagger \left( \matr{\Lambda}_V^{-1/2} \matr{U}_V^\dagger \matr{V}^\dagger \matr{H} \matr{V} \matr{U}_V \matr{\Lambda}_V^{-1/2} \right) \matr{U}
\end{align*}
We denote the matrix in brace as $\matr{A}$ and observe that it is self adjoint and has in particular an eigen decomposition.
Therefore we can write
\begin{align*}
    \matr{\Phi}^\dagger \matr{V}^\dagger \matr{H} \matr{V} \matr{\Phi} = \matr{U}^\dagger \matr{U}_A \matr{\Lambda}_A \matr{U}^\dagger_A  \matr{U} = \matr{\Lambda}
\end{align*}
From this equation we see that picking  $\matr{U} = \matr{U}_A$ gives a valid solution. The hermitian and non singular generalized eigenvalue problem is thus solved by 
$\matr{\Phi} = \matr{U}_V \matr{\Lambda}_V^{-1/2} \matr{U}_A^\dagger$ and $\matr{\Lambda} = \matr{\Lambda}_A$.\\

If now $\matr{V}^2$ is not positive definite, the GEP can be be brought into block diagonal form by an unitary transformation $\matr{\tilde U}$, 
according to the decomposition $\operatorname*{Ker}(\matr{V}^2)^\perp \oplus \operatorname*{Ker}(\matr{V}^2) $. 
Since we have $\operatorname*{Ker}(\matr{V}^2) \subset \operatorname*{Ker}(\matr{H}_V)$, there exists indeed a unitary matrix $\matr{\tilde U}$ such that
\begin{align*}
    \matr{\tilde U}^\dagger \matr{H}_V \matr{\tilde U} = \begin{bmatrix}
        \matr{\tilde H}_V & 0 \\
        0 & 0
    \end{bmatrix} \quad \text{and} \quad \matr{\tilde U}^\dagger \matr{V}^2 \matr{\tilde U} =\begin{bmatrix}
        \matr{\tilde H}_V & 0 \\
        0 & 0
    \end{bmatrix}
\end{align*}
Left multiplying the GEP in matrix form then with $\matr{\tilde U}^\dagger$ and inserting the identity matrix in the form of $\matr{\tilde U} \matr{\tilde U}^\dagger$ gives
\begin{align}
    \matr{H}_V \matr{\Phi} & =  \matr{V}^2  \matr{\Phi}  \matr{ \Lambda}  \label{eq:GEP_degen_Matform} \\
    \Leftrightarrow \quad \matr{\tilde U}^\dagger  \matr{H}_V \matr{\tilde U}  \matr{\tilde U}^\dagger \matr{\Phi} & = \matr{\tilde U}^\dagger \matr{V}^2 \matr{\tilde U} \matr{\tilde U}^\dagger  \matr{\Phi}  \matr{ \Lambda} \notag \\
    \begin{bmatrix}
        \matr{\tilde H}_V & 0 \\
        0 & 0 
    \end{bmatrix} \matr{\tilde U}^\dagger \matr{\Phi}  & =
        \begin{bmatrix}
            \matr{ \tilde V}^2 & 0 \\
            0 & 0
        \end{bmatrix} \matr{\tilde U}^\dagger  \matr{\Phi}  \matr{ \Lambda}.  \label{eq:GEP_degen_block}
\end{align}
We carry out the previous procedure on the non degenerate part of the generalized eigenvalue problem $(  \matr{\tilde  H}_V , \matr{\tilde  V}^2)$ that is of dimensionality $m$ and
find $m$ eigenvectors $\tilde{x}_i$ that are orthonormal with respect to $\langle \cdot, \cdot \rangle_{\tilde V}$. 
The smaller dimensional eigenvectors $\tilde{x}_i$ are trivially embedded into the full space $\mathbb{C}^M$ with $x_i = \tilde{x}_i \oplus 0$.
We choose an arbitrary basis $\{n_1, \cdots , n_{M-m}\}$ of the null space of $\matr{V}^2$ and
denote $$\matr{\tilde \Phi} = [x_1, \cdots, x_m, n_1, \cdots , n_{M-m}] \in \mathbb{C}^{m \times m}.$$
The choice $\matr{\tilde U}^\dagger \matr{\Phi} =\matr{\tilde \Phi} $ and $\matr{\Lambda} = \operatorname*{diag}( \mu_1, \cdots, \mu_m, 0, \cdots, 0)$
solves then eq. \ref{eq:GEP_degen_block}. Therefore the solution to the original GEP given in eq.\ref{eq:GEP_degen_Matform} is then given by  $ \matr{\Phi} = \matr{\tilde U} \matr{\tilde \Phi}$.
For convenience, 
\begin{align*}
    \matr{H}_V \matr{\Phi} &  = \matr{H}_V \matr{\tilde U} \matr{\tilde \Phi} = \matr{V}^2 \matr{\tilde U} \matr{\tilde \Phi} \matr{\Lambda} = \matr{V}^2 \matr{\Phi} \matr{\Lambda}.
\end{align*}
In the third equality we used that $(\matr{\tilde \Phi}, \matr{\Lambda})$ is a solution 
of to the GEP of $(\matr{H}_V \matr{\tilde U}, \matr{\tilde V}^2 \matr{\tilde U})$ in matrix form as can be seen from left multiplying eq. \ref{eq:GEP_degen_block} with $\matr{\tilde U}$.\\
We also have the generalized unitary relation
\begin{align*}
    \matr{\Phi}^\dagger \matr{V}^2 \matr{\Phi} & = \matr{\tilde \Phi}^\dagger \matr{\tilde U}^\dagger \matr{V}^2 \matr{\tilde U} \matr{\tilde \Phi} = \matr{\tilde \Phi}^\dagger \begin{bmatrix}
        \matr{ \tilde V}^2 & 0 \\
        0 & 0
    \end{bmatrix}\matr{\tilde \Phi} = \matr{I}_m.
\end{align*}
With this the spectral theorem for hermitian Generalized Eigenvalue Problems is proven.
\qed

The proof also shows, that even in the degenerate case the columns of $\matr{\Phi}$ can be chosen, such that they are linearly independent. 
Therefore there exists a basis of $\mathbb{C}^M$ consisting of eigenvectors of the hermitian GEP.\\

\subsubsection{Proof II: GEP specific Hilbert Space}
\label{sec:proof_specral_theorem2}
    Let $(\matr{H}, \matr{V})$ be a hermitian GEP such as in Theorem \ref{thm:gen_specral_theorem}.
    Assume that $\matr{V}^2$ is positive definite, the singular case follows trivially. 
    Let $\langle \cdot, \cdot \rangle_\Hil$ denote inner product on $\Hil$. We consider the Hilberspace $\mathbb{C}^m_V$ with inner product $\langle \cdot, \cdot \rangle_V$ given by
    \begin{align*}
        \langle x, y \rangle_V = \langle \matr{V} x, \matr{V} y \rangle_\Hil = x^\dagger \matr{V}^\dagger \matr{V} y.
    \end{align*}
    Since $\matr{H}$ is by assumption hermitian in $\Hil$, it is also hermitian in $\mathbb{C}^m_V$ in the sense of the quantum mechanical convention,\footnote{Here the operator acts after the bra on the right. Formally the quantum mechanical convention 
    can be used to define the action of $\matr{H}$ in $\mathbb{C}^m_V$ through its action of on a orthonormal basis. Then it is found to be indeed linear and hermitian such that the spectral theorem applies.}
    \begin{align*}
        \langle x| \matr{H}| y \rangle_V := \langle \matr{V} x, \matr{H} \matr{V} y \rangle_\Hil = \langle \matr{H} \matr{V}  x,  \matr{V} y \rangle_\Hil.
    \end{align*}
    By the spectral theorem for finite dimensional Hilbert spaces, there exists a orthonormal basis of $\mathbb{C}^m_V$ consisting of eigenvectors of $\matr{H}$ acting 
    in $\mathbb{C}^m_V$. Let $\{e_i\}$ be the canonical coordinate system in Euclidean space $\mathbb{C}^m$ and let $\phi_i$ denote the representation of the eigenbasis of with respect to $\{e_i\}$.
    Let $\matr{\Phi}$ be the matrix containing the eigenvectors $\phi_i$ in its columns and let $\matr{\Lambda}$ be the diagonal matrix containing the eigenvalues $\mu_i$,
    such that we have  $\matr{\Phi} e_i =\phi_i$ and $\matr{\Lambda} e_i = \mu_i e_i$. 
    The orthonormality of the eigenvectors in $\mathbb{C}^m_V$ implies
    \begin{align*}
        \matr{\Phi}^\dagger \matr{V}^\dagger \matr{V} \matr{\Phi} = \matr{I}.
    \end{align*}
    For the representation of $ \matr{V}^\dagger  \matr{H} \matr{V} \matr{\Phi}$  and $ \matr{V}^\dagger \matr{V} \matr{\Phi} \matr{\Lambda}$ in Euclidean space with respect to the basis $\{e_i\}$ we have
    \begin{align*}
        \left(\matr{V}^\dagger  \matr{H} \matr{V} \matr{\Phi} \right)_{ij} & = e^\dagger_i \matr{V}^\dagger  \matr{H} \matr{V} \matr{\Phi} e_j  = \langle \matr{V} e_i ,    \matr{H} \matr{V} \phi_j \rangle_\Hil = \langle e_i| \matr{H}|  \phi_j \rangle_V = \mu_j \langle e_i, \phi_j \rangle_V \\
        & = \mu_j  e_i^\dagger \matr{V}^\dagger \matr{V} \matr{\Phi} e_j =   e_i^\dagger \matr{V}^\dagger \matr{V} \matr{\Phi}  \matr{\Lambda} e_j = \left(\matr{V}^\dagger \matr{V} \matr{\Phi} \matr{\Lambda}\right)_{ij}. 
    \end{align*}
    In the singular case where $\matr{V}^2$ is only positive semi definite, the same argument can be applied to the non degenerate part of the GEP, while the degnerate part can be ignored after a block diagonalization.
\qed

\subsection{Fact \ref{fact:spec_range}, spectral range}
\label{sec:proof_spec_range}
    Let $(\matr{H}, \matr{V})$ be a hermitian and compact GEP as in Fact \ref{fact:spec_range}. 
    By the Generalized Spectral Theorem \ref*{thm:gen_specral_theorem} we can express all proper eigenvalues $\mu_i$ as a Rayleigh quotient of the eigenvectors $x_i$,
    \begin{align*}
        \mu_i(\matr{H}, \matr{V}) = \frac{x_i^\dagger \matr{V}^\dagger \matr{H} \matr{V} x_i }{x_i^\dagger \matr{V}^\dagger \matr{V} x_i}. 
    \end{align*}
    But this is the classical Rayleigh quotient of the operator $\matr{H}$ to the vector $\matr{V} x_i$. 
    By the variational principle the Rayleigh quotient of a compact operator is bounded by its maximal and minimal eigenvalue. Therefore,
    \begin{align*}
        \lambda_{\text{min}}(\matr{H}) \leq \mu_m(\matr{H}, \matr{V}) \leq  \mu_1(\matr{H}, \matr{V}) \leq \lambda_{\text{max}}(\matr{H}) 
    \end{align*}
    and the fact follows.\\

    A second more explicit proof that some readers might prefer: By the classical spectral theorem for hermitian and compact operators, the exist an orthonormal basis $\varphi_k$ of $\Hil$ consisting of 
    eigenvectors of $\matr{H}$. Therefore we can expand any $\matr{V} x$ in terms of the eigenvectors $\varphi_k$,
    \begin{align*}
        \matr{V} x = \sum_{k} \alpha_k \varphi_k.
    \end{align*}
    Plugging into the Rayleigh quotient of the leading eigenvalue $\mu_1$ we have
    \begin{align*}
        \mu_1(\matr{H}, \matr{V}) = \frac{\sum_k \lambda_k |\alpha_k|^2  }{\sum_k |\alpha_k|^2 } \leq \lambda_{\text{max}} \frac{\sum_k |\alpha_k|^2  }{\sum_k |\alpha_k|^2 } =\lambda_{\text{max}}
    \end{align*}
    Here we also used that proper eigenvectors have $\sum_k |\alpha_k|^2 >0$. Analogously we can lower bound lowest eigenvalue $\mu_m$ by $\lambda_{\text{min}}$. With
    this 
    \begin{align*}
        \mu(\matr{H}, \matr{V}) \subset [\lambda_{\text{min}}(\matr{H}), \lambda_{\text{max}}(\matr{H})].
    \end{align*}
    follows.
\subsection{Fact \ref{fact:gen_eig_to_normal}, GEP eigenvalue correspondence}
\label{sec:ProofFact}
Since we have $\operatorname*{span}\{b_1, \cdots, b_M \} = \mathcal{E}_{\mathcal{I}}$,
we can expand eigenvectors $\varphi_k$ with $k\in \mathcal{I}$ in terms of the guess vectors $b_i$,
\begin{align*}
    \varphi_k &= \sum_{i=1}^m z_{ki} b_i = \matr{B} z_k.
\end{align*}
The eigenpairs $(\mu_i, x_i)$ of the GEP $(\matr{H}, \matr{B})$ have
\begin{align}
    \matr{B}^\dagger \matr{H}  \matr{B} x_i = \mu_i \matr{B}^\dagger \matr{B} x_i \label{eq:FactProof1}
\end{align}
Left multiplying eq.\ref{eq:FactProof1} with $z_k^\dagger$ we find for the left and right hand side,
\begin{align*}
    z_k^\dagger \matr{B}^\dagger \matr{H}  \matr{B} x_i & = \varphi^\dagger  \matr{H}  \matr{B} x_i = \lambda_k \varphi^\dagger_k  \matr{B} x_i\\
    \mu_i z_k^\dagger \matr{B}  \matr{B}^\dagger x_i & = \mu_i \varphi^\dagger_k  \matr{B} x_i.
\end{align*}
Equality of the left and right hand side implies that either $\mu_i = \lambda_k$ or $\varphi_k^\dagger  \matr{B} x_i =0$.
For $x_i \not \in \operatorname*{Ker}(\matr{B})$ at least one eigenvector $\varphi_k$ must have non zero overlap with $\matr{B} x_i$, since $\{{\varphi}_{k\in\mathcal{I}} \}$ is 
a ONB of $\operatorname*{span}\{b_1, \cdots, b_M \}$. 
Therefore we conclude that all proper generalized eigenvalues $\mu_i$ of $(\matr{H}, \matr{B})$ must coincide with an eigenvalue $\lambda_k$ of $\matr{H}$.

\subsection{Theorem \ref{thm:min_max_GEP}, min-max characterization}
\label{sec:proof_min_max_GEP}
The following is an adaption of Poincare's lemma for generalized eigenvalue problems.

\begin{lemma}[Poincare]
    Let $\mathcal{S}_k$ be a subspace of $\mathbb{C}^m$ of dimension dimension $k$. Then there exist vectors $x\in \mathcal{S}_k$ and $y \in \mathcal{S}_k^\perp $ such that
    \begin{alignat*}{2}
       \frac{\langle x, \matr{H}_V x \rangle}{ \langle x, \matr{V}^2 x\rangle }  & \leq \mu_k, \quad \text{ and } \qquad \frac{\langle y, \matr{H}_V y \rangle}{ \langle y, \matr{V}^2 y\rangle }& \geq \mu_{k+1}.
    \end{alignat*}
    \label{lem:Poincare}
\end{lemma}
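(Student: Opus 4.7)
The plan is to apply the Generalized Spectral Theorem \ref{thm:gen_specral_theorem} to produce $\matr{V}^2$-orthonormal generalized eigenvectors $\phi_1, \ldots, \phi_m$ satisfying $\matr{H}_V \phi_i = \mu_i \matr{V}^2 \phi_i$ with $\mu_1 \geq \mu_2 \geq \cdots \geq \mu_m$, and then to exhibit the required vectors $x$ and $y$ by a dimension-counting argument in carefully chosen spectral subspaces. The key computational observation to keep in mind is that because of $\matr{V}^2$-orthonormality, a vector $z = \sum_j c_j \phi_j$ has
\begin{align*}
 \langle z, \matr{V}^2 z\rangle = \sum_j |c_j|^2, \qquad \langle z, \matr{H}_V z\rangle = \sum_j \mu_j |c_j|^2,
\end{align*}
so that its generalized Rayleigh quotient is a convex combination of the $\mu_j$ whose indices $j$ appear in the expansion.

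For the upper bound, I would introduce the trailing spectral subspace $\mathcal{T}_k := \operatorname{span}\{\phi_k, \phi_{k+1}, \ldots, \phi_m\}$, which has dimension $m-k+1$. Since $\dim \mathcal{S}_k + \dim \mathcal{T}_k = k + (m-k+1) = m+1 > m$, the intersection $\mathcal{S}_k \cap \mathcal{T}_k$ contains a nonzero vector $x$. Expanding $x = \sum_{i=k}^{m} c_i \phi_i$ and using the identities above, the Rayleigh quotient becomes a convex combination of $\mu_k, \mu_{k+1}, \ldots, \mu_m$, all of which are $\leq \mu_k$, giving the first inequality.

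For the lower bound, I read $\mathcal{S}_k^\perp$ in the $\matr{V}^2$-orthogonal sense (this is the natural inner product in the GEP setting and matches the convention implicitly used in Theorem \ref{thm:min_max_GEP}), so that $\dim \mathcal{S}_k^\perp = m-k$. I would then take the leading spectral subspace $\mathcal{T}'_{k+1} := \operatorname{span}\{\phi_1, \ldots, \phi_{k+1}\}$ of dimension $k+1$; dimension counting again gives $(m-k) + (k+1) = m+1 > m$, producing a nonzero $y \in \mathcal{S}_k^\perp \cap \mathcal{T}'_{k+1}$. Expanding $y = \sum_{i=1}^{k+1} c_i \phi_i$, the Rayleigh quotient is a convex combination of $\mu_1, \ldots, \mu_{k+1}$, all $\geq \mu_{k+1}$.

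The main obstacle is bookkeeping around the possibly singular case in which $\matr{V}^2$ is only positive semi-definite, since then $\langle \cdot, \matr{V}^2 \cdot\rangle$ is merely a semi-inner product on $\mathbb{C}^m$ and both the notion of $\matr{V}^2$-orthogonality and the Rayleigh quotient are ill-behaved on $\operatorname{Ker}(\matr{V}^2)$. I would handle this exactly as in the proof of the Generalized Spectral Theorem: block-diagonalize via the decomposition $\mathbb{C}^m = \operatorname{Ker}(\matr{V}^2)^\perp \oplus \operatorname{Ker}(\matr{V}^2)$, restrict attention to $\operatorname{Ker}(\matr{V}^2)^\perp$ where $\matr{V}^2$ is genuinely positive definite, and carry out the dimension-counting argument there (replacing $m$ by the rank). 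Spurious directions in $\operatorname{Ker}(\matr{V}^2)$ contribute nothing to either side of the Rayleigh quotient and can be absorbed into the chosen vectors $x, y$ without affecting the stated bounds.
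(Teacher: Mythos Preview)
Your proposal is correct and follows essentially the same approach as the paper's proof: both construct the trailing spectral subspace $\operatorname{span}\{\phi_k,\ldots,\phi_m\}$ (resp.\ the leading subspace $\operatorname{span}\{\phi_1,\ldots,\phi_{k+1}\}$), intersect it with $\mathcal{S}_k$ (resp.\ $\mathcal{S}_k^\perp$) via dimension counting, and then evaluate the generalized Rayleigh quotient as a convex combination of the relevant $\mu_j$. Your explicit handling of the singular case and your remark that $\perp$ is to be read in the $\matr{V}^2$-sense are more careful than the paper, which tacitly works in the positive-definite setting here.
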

\begin{proof}[\textit{Proof of Lemma \ref{lem:Poincare}}]
    Let $v_1, \cdots v_m$ be the eigenvectors of the GEP, with eigenvalues $\mu_i$ in descending order. The subspace
    $\mathcal{N} = \operatorname{span}\{ v_{k},v_{k+1}  \ldots,  v_{m}\}$ has codimension $k-1$ and 
    $\mathcal{S}_k$  has by assumption dimension $k$. Thus the two subspaces have non trivial overlap. 
    Choose $x\in \mathcal{K} \cap \mathcal{N}$, normalized with respect to $\matr{V}^2$. We have 
    \begin{alignat*}{2}
        x & = \sum_{i=k}^m \alpha_i v_i, \quad && \text{ since } x \in \mathcal{N} \\
        1 & = \sum_{i=k}^m |\alpha_i|^2 \quad && \matr{V}^2\text{-normalization}
    \end{alignat*}
    From the eigenvector properties of $v_i$ and $\matr{V}^2$ orthogonality we have
    \begin{align*}
        x^\dagger \matr{H}_V x = x^\dagger \left(\sum_{i=k}^M \alpha_i \mu_i  \matr{B} v_i \right)= \sum_{j,i=k}^M \conj{\alpha_j}\alpha_i \mu_i  v_j^\dagger \matr{B} v_i = \sum_{i=k}^M |\alpha_i|^2 \mu_i \leq \mu_k.
    \end{align*}
    For the second inequality we consider $\mathcal{N} = \operatorname{span}\{ v_{1},  \ldots, v_{k+1} \}$ which has dimension $k+1$ while 
    $\mathcal{S}^\perp_k$ has codimension $k$. 
    For $\matr{V}^2$-normalized $y \in \mathcal{K}^\perp \cap \mathcal{N}$ we find
    \begin{align*}
        y^\dagger \matr{H}_V y = \sum_{i=1}^{k+1} \mu_i |\beta_i|^2 \geq \mu_{k+1}.
    \end{align*}
\end{proof}

\begin{proof}[\textit{Proof of Theorem \ref{thm:min_max_GEP}}]
    Focusing on the first equality. we consider an arbitrary $k$ dimensional subspace $\mathcal{S}_k$ of $\mathbb{C}^m$. 
    Poincare's lemma for GEPs gives a bound on the right hand side by $\mu_k$,
    \begin{align*}
        \operatorname*{max}_{\substack{ S_k }} \operatorname*{min}_{\substack{ x \in S_k }} \frac{\langle x, \matr{V}^\dagger \matr{H} \matr{V} x \rangle}{   \langle  x , \matr{V}^\dagger\matr{V} x \rangle} \leq \mu_k.
    \end{align*}
    The bound is achieved by choosing by choosing $\mathcal{S}_k = \operatorname*{span}\{ v_1, \cdots v_k\}$.
    The second part follows analogously from the second part of Poincare's lemma.
\end{proof}

\subsection{Riemann Integrability for Integrated Spectral Inequality}
\label{sec:riemann_integrability}

Riemann integrability of $f^\uparrow(E)= \frac{E - \mu_i^\uparrow(\matr{V})(E)}{\lambda_{m^*}^\uparrow(\matr{V}^2)(E)}$ is established from the fact that monotone functions are Riemann integrable
and that $f^\uparrow(E)$ is a regular function of Riemann integrable functions. For completeness, a formal proof is included.

\begin{proof}[\textit{Proof of Claim \ref{claim:riemann_integrability}}]
Let $f^\uparrow(E)$ and $E_{\min}$ as in Claim \ref{claim:riemann_integrability}.
Let $E_1,E_2 \in \mathbb{R}$ be such that $E_2 > E_1 > E_{\min}$ and consider the interval $[E_1,E_2]$.
We show that the function $f^\uparrow(E)$ is bounded on $[E_1,E_2]$ and continuous almost everywhere, and thus, Riemann integrable.\\

By Fact \ref{fact:proper_eigenvalues_interval}, $\mu_i^\uparrow(\matr{V})(E) \leq E$ and therefore
\begin{align*}
 E_1 -E_2 \leq E - \mu_i^\uparrow(\matr{V})(E),
\end{align*}
for all $E \in [E_1,E_2]$. By Corollary \ref{cor:monotone_gen_eigenvalues} $\mu_i^\uparrow(\matr{V})(E)$ is a monotonically increasing function in $E$ and thus
\begin{align*}
 E - \mu_i^\uparrow(\matr{V})(E) \leq E_2 -\mu_i^\uparrow(\matr{V})(E_1).
\end{align*}
Eigenvalues of finite-dimensional matrix problems are finite, $|\mu_i^\uparrow(\matr{V})(E_1)|<\infty$.\footnote{There are multiple analogs of
classical bounds on eigenvalues in the setting of Generalized Eigenvalue Problems, such as the Gerschgorin Circle Theorem \cite{stewart_perturbation_1978}}
Let $m_1$ denote the rank of $\matr{V}^\uparrow(E_1)$ and $m_2$ the rank of $\matr{V}^\uparrow(E_2)$.
There are at most $m_2 - m_1$ discontinuities $\lambda_{m^*}^\uparrow(\matr{V}^2)(E)$ on $[E_1,E_2]$, due 
to a changing rank of $\matr{V}^\uparrow(E)$.\footnote{ Here we also used that 
the rank of $\matr{V}^\uparrow(E_1)$ is a monotone increasing function, as seen e.e. from Lemma \ref{lem:monotone_gram_eigenvalues}. Therefore $m_2-m_1 \geq 0$.} Denote with $\mathcal{P}=\{[ a_0=E_1,a_1),[a_1,a_2),\dots,[a_{m_2-m_1},E_2]\}$ the partition of $[E_1,E_2]$ 
into intervals on which the rank of $\matr{V}^\uparrow(E_1)$ is constant. 
By Corollary \ref{cor:monotone_gen_eigenvalues} $\lambda_{m^*}^\uparrow(\matr{V}^2)(E)$ is a monotone increasing function in $E$
on all intervals $[a_k,a_{k+1})$ of the partition $\mathcal{P}$. In particular $\lambda_{m^*}^\uparrow(\matr{V}^2)(E)$ attains its minimum on $[E_1,E_2]$ at
one of the points $a_k$. We write $\kappa =\min\limits_{E\in[E_1,E_2]}\lambda_{m^*}^\uparrow(\matr{V}^2)(E)$ and have by definition $\kappa > 0$.
Here, we have also used the assumption that $E_1 > E_{\min}$, which ensures that $E$ is not so low that $\matr{V}^\uparrow(E)$ becomes equal to zero. In summary
\begin{align*}
 \frac{E_1 -E_2 }{\kappa} \leq f^\uparrow(E) \leq \frac{ E_2 -\mu_i^\uparrow(\matr{V})(E_1)}{\kappa}
\end{align*}
and $f^\uparrow(E)$ is bounded on $[E_1,E_2]$. \\

The generalized eigenvalue $\mu_i^\uparrow(\matr{V})(E)$ is a monotone increasing function in $E$ and bounded on the interval $[E_1,E_2]$.
In particular the set of discontinuities of $\mu_i^\uparrow(\matr{V})(E)$ has measure zero on $[E_1,E_2]$. 
The function $\lambda_{m^*}^\uparrow(\matr{V}^2)(E)$ is, due to discontinuities in the rank of $\matr{V}^\uparrow(E)$, 
only a piecewise monotone functions in $E$, but its set of discontinuities has nevertheless measure zero.
The set of discontinuities of $f^\uparrow(E)$ is contained in the union of discontinuities of $\mu_i^\uparrow(\matr{V})(E)$ and $\lambda_{m^*}^\uparrow(\matr{V}^2)(E)$
and has, in particular measure zero. Therefore $f^\uparrow(E)$ is continuous almost everywhere on $[E_1,E_2]$.
We have thus established that $f^\uparrow(E)$ is Riemann integrable on $[E_1,E_2]$. Since $E_2, E_1$ was up to $E_2 > E_1 > E_{\min}$ arbitrary
the claim follows.
\end{proof}

Analogously it shown, that $f^\downarrow(E)$ defined as
\begin{align*}
 f^\downarrow(E)= \frac{E - \mu_i^\downarrow(\matr{V})(E)}{\lambda_{m^*}^\downarrow(\matr{V}^2)(E)}
\end{align*}
is Riemann integrable on any compact interval $[E_1,E_2] \subset \mathbb{R}$ with $E_1 < E_2 \leq E_{\max}$.


\chapter{Prolate Filter Diagonalization}
\label{chap:FilterDiag}

\section{Signal Processing through a Symmetry between Harmonic Analysis and Quantum Mechanics}
\subsection{Introduction}
In Chapter \ref{chap:DimRedSepAna}, we emphasized the significance of spectral analysis of operators in the natural sciences. 
Equally important, however, is spectral analysis in the context of frequency decompositions of signals.
Our ability to accurately perceive and interpret the world—whether through high-precision experiments or long-distance communication—
depends on our capacity to decompose signals into their frequency components.
A major limitation for the degree of precision with which the spectrum of a signal can be determined is the fact that signals can only be measured for a finite time.\\

Fortunately, the problems of determining the spectral decompositions of operators, in terms of their eigenvalues and eigenvectors, 
and the spectral decompositions of functions, in terms of their Fourier transforms, are fundamentally interconnected. 
A deeper dualism between harmonic analysis and functional analysis has long been recognized across different fields and has lead to various breakthroughs \cite{Neumann, ProI, Cramér1992}.\\

An early anticipation of this deeper symmetry can be found in the work of Bochner \cite{Bochner1932}. 
Bochner saw, that functions whose Fourier transform is a probability measure, induce a positive definite convolution operator.
A familiar example where Bochners Theorem can be applied 
to show  positivity of a convolution operator is given by the prolate spheroidal integral operator $\BL_\bw \TL_T$. \\

A more recent insight into the connection between harmonic analysis and quantum mechanics can be recognized within 
numerical quantum chemistry, and is of significant computational interest. 
Through a quantum mechanical interpretation, the Harmonic Inversion Problem (HIP) of functions whose Fourier transform is a probability measure 
can be mapped to a hermitian generalized eigenvalue problem. In particular, the 
problem of determining the frequency decomposition of a signal $C(t)$ can access the approximation theory 
that has been developed in Chapter \ref{chap:DimRedSepAna}.\\

We will briefly elaborate, on how Filter Diagonalization has been developed in the context of quantum chemistry.

\subsection{Related Work}
While Filter Diagonalization (FD) in the end turned out as a pure signal processing method, it is of no surprise 
that it has been developed in the context of quantum chemistry, as here signals have 
indeed the interpretation of Hamiltonian time evolution.\\

Much in the spirit of the intuition we outlined in  Section \ref{sec:intuition_GEP},
Neuhauser proposed to project the spectral problem of a high-dimensional Hamiltonian onto a low-dimensional subspace through 
time-evolved states that are processed through filter functions \cite{Neuhauser1990BoundSE}. It was later noticed, that all the information necessary
to compute the matrix elements of the GEP is contained in the signal $C(t)$, such that there is 
actually no need to access the generating Hamiltonian \cite{wall_extraction_1995}. Noteworthy refinements and 
further adaptations of the numerical method have been developed by Manndelshtam \cite{Mandelshtam}. Levitina and Brändas have already 
argued for prolates as the optimal choice of filter functions \cite{LEVITINA20091448}. Here we also first observed the usage of Chebyshev systems 
to detect the number of frequencies in a band. \\

Unfortunately, the truncation error estimates 
for integrals derived in \cite{LEVITINA20091448} are found to be incorrect and contradict the optimality properties of the prolates. It also appears
that the authors were unaware of the catastrophic cancellation effects in the generation of prolates and their eigenvalues, and their numerical demonstration was not convincing.
However, the problem of the cancellation effects was already addressed and solved by then in \cite{Buren2002AccurateCO}.

\subsection{Result Highlight and Overview}

In this chapter, we establish Filter Diagonalization as an approximation method, that can access the 
theoretical framework of Chapter \ref{chap:DimRedSepAna}. We derive a precision guarantee for 
Filter Diagonalization through prolate spheroidal wave functions.

\begin{theorem}
    \label{thm:PFD_precision}
 Consider the signal $C(t) = \sum_k |a_k|^2 e^{i\omega_k t}$ which consists of discrete frequencies $\omega_k \in \mathbb{R}$
 and the interval $B = [ \bw - \omega^*,  \bw +\omega^* ]$.
 Consider the protocol, that generates a GEP with matrix elements,
    \begin{align*}
 \matr{V}^2_{sl} & = \int_{-T}^{T}  \int_{-T}^{T} e^{-i \omega^*(\tau -t)} \prlt_s(\tau) \prlt_l(t) C(\tau - t) d\tau dt\\
 (\matr{H}_V)_{sl} & = i \int_{-T}^{T}  \int_{-T}^{T} e^{-i \omega^*(\tau -t)} \prlt_s(\tau) \prlt_l(t) \partial_t C(\tau - t) d\tau dt,
    \end{align*}
 where $\{\prlt_n\}$ is the sequence of $\bw T$-prolates. Let $m$ be the number of frequencies $\omega_k$ in the band $B$. 
 As in Algorithm \ref{alg:1} obtain a refined $m$ dimensional 
 GEP $(\matr{H}_{V_*}, \matr{V}^2_*)$ from $(\matr{H}_{V_M}, \matr{V}^2_M)$. 
 The eigenvalues of $(\matr{H}_{V_*}, \matr{V}^2_*)$ are denoted as 
    $\tilde \omega_k$ and satisfy inequalities,\footnote{ As usually we assume in (\ref{eq:prlt_FD_precision}) that $\omega_k \in B$ and that $\omega_l$ and $\omega_k$ are indexed in the same order, eg. 
    $\omega_k < \omega_{k+1} < \cdots$ implies $ \tilde \omega_k < \tilde \omega_{k+1} < \cdots$.}
    \begin{align}
 \tilde \varepsilon_M  \frac{\sum\limits_{\omega_l < \omega^* - \bw } (\omega_l - \omega_k)|a_l|^2 }{\lambda_m(\matr{V}^2_M) - \tilde \varepsilon_M C(0)}   \leq \tilde \omega_k -\omega_k \leq \tilde \varepsilon_M \frac{\sum\limits_{\omega_l > \omega^* + \bw } (\omega_l - \omega_k)|a_l|^2 }{\lambda_m(\matr{V}^2_M) - \tilde \varepsilon_M C(0)}.  \label{eq:prlt_FD_precision}
    \end{align}
 Here $\tilde \varepsilon_M = 2\pi \sum_{l=0}^{M-1}  \gamma_l(1-\gamma_l) C_{\text{extra},l}$ and $C_{\text{extra},l}$ is as in Theorem \ref{thm:prlt_bound}.
\end{theorem}

The error parameter $\tilde \varepsilon_M$ of Prolate Filter Diagonalization (PFD) is significantly close to zero if
 $M \ll 2\bw T / \pi$. In equation (\ref{eq:asymptotic_prlt_env}) of Section \ref{sec:asymptotic_prlt_env} we also 
given an asymptotic formula for $\tilde \varepsilon_M(c)$ for large $c= \bw T$.\\

In Section \ref{sec:FilterDiag_eps_approx_solver} we provide an expression to approximate the amplitudes $|a_k|^2$. 
Proposition \ref{prop:filter_diagonalization_amplitude_precision} gives precision guarantees for the approximated amplitudes. \\

In practice, the number of frequencies $m$ in the signal $C(t)$ is a priori unknown and has to be detected as described in Algorithm \ref{alg:1}
and Section \ref{sec:dimension_detection}.
Due to the small $\varepsilon$-estimates provided by PFD and strong independence of relations of the prolate basis, the technique enables highly sensitive dimension detection. 
Section \ref{sec:detectability_frequencies}  introduces practical equations, that quantify the signal strength of a filter system 
specific to a set of frequencies.\\

A notable aspect of Filter Diagonalization is that the computational complexity to approximate
$N$ frequencies scales linearly $N$. This is in sharp contrast to the computational cost of direct matrix diagonalisation, which quickly becomes impractical 
for large matrices.
Once all frequencies and amplitudes are approximated, they can be used in the lower and upper estimates in equation (\ref{eq:prlt_FD_precision})
to compute sharp precision estimates.

\subsection{Overview of the detailed Exposition}

In Section \ref{sec:method_development_infinite_time} we formally derive Filter Diagonalization through a quantum mechanical interpretation of signals in the theoretical context of infinite time access.
While this section has an introductory character, we introduce some new perspectives and mathematical formalism that is not common in the literature for Filter Diagonalization.
In particular, we formalize FD through an Alternant matrix, that is induced by the filter functions, which is of significant practical advantage as seen in Section \ref{sec:amplitude_uncertainty_quantification} and \ref{sec:detectability_frequencies}.\\

Section \ref{sec:finite_time_access} is the core of this chapter and establishes FD as an $\varepsilon$-approximation method as 
introduced in Definition \ref{asump:protocol}. In particular, FD can access the approximation theory of Chapter \ref{chap:DimRedSepAna} along 
with its precision guarantees. A more general version of Theorem \ref{thm:PFD_precision} is given in Theorem \ref{thm:filter_diagonalization} and permits to 
use any sequence of filter functions. In Section \ref{sec:amplitude_uncertainty_quantification} precision guarantees for approximated amplitudes are derived.
In Section \ref{sec:detectability_frequencies} we develop a concept, that allows to quantify the ability of a filter system to distinguish frequencies.\\

In Section \ref{sec:ProlateFilterDiag} we establish Theorem \ref{thm:PFD_precision} for the special case of Prolate Filter Diagonalization by 
applying the supremum bound we derived in Chapter \ref{chap:prlt_bound}.

\newpage 

\section{Method Development in the Playground of Infinite Time Access}

\label{sec:method_development_infinite_time}
\subsection{Quantum Mechanical Interpretation of Signals}

Within Physics and Chemistry, the Harmonic Inversion Problem (HIP)
often arises in the context of spectroscopy. The signal $C(t)$ 
is then a autocorrelation function of a quantum system described by a Hamiltonian $\matr{H}$
and the frequencies $\omega_k$ correspond 
to the energy levels of the system,
\begin{align}
    C(t) = \langle \Psi(t), \Psi(0) \rangle = \langle \Psi(0), e^{i \matr{H} t } \Psi(0) \rangle. \qquad \label{eq:autocorrelation_quantum}
\end{align}
Here $\Psi(0)  \equiv  \Psi$ is the initial state of the system,
\begin{align*}
    \Psi(t) = e^{-i\matr{H}t } \Psi(0) = \sum_k a_k e^{-i\omega_k t } \varphi_k,
\end{align*}
and $\varphi_k$ are the eigenstates of $\matr{H}$ to eigenvalue $\omega_k$.
However, any signal that is a Fourier transform of a measure, can be endowed with such a quantum mechanical 
interpretation and seen as an autocorrelation function. For simplicity, we will focus here on
signals consisting of discrete frequencies without an accumulation point. Natural generalizations will be subject to a future work.\\

We will remain in the quantum mechanical interpretation and treat the signal $C(t)$ as an autocorrelation function 
to a wavepacket $\Psi(t)$ evolving under a Hamiltonian $\matr{H}$. From this viewpoint, the HIP is mapped 
to a generalized eigenvalue problem. From a signal processing perspective,
the quantum mechanical interpretation is mere auxiliary construct, as we find that all the information necessary to compute the GEP is contained in the signal $C(t)$.\\

With respect to applications in quantum chemistry, the interpretation of $C(t)$ as autocorrelation function is not just a mathematical construct. But the 
observation that the GEP can be solely computed from a signal $C(t)$ without accessing the Hamiltonian allows to avoid high dimensional vector multiplications 
that would remain necessary if we were to remain in the picture of a wave vector evolving under $\matr{H}$.\\

The spirit of Filter Diagonalization as introduced here, leans on a long-standing 
intuition of a deeper dualism between harmonic analysis and quantum mechanics. Here we mean quantum mechanics 
as a mathematical theory close to functional analysis, that deals with vectors of a Hilbert space that are time evolved under an 
operator exponential. To be instructive, we first familiarize ourselves with the machinery in the un-physical scenario where 
we assume infinite access to the signal $C(t)$.

\subsection{From Harmonic Inversion to Generalized Eigenvalue Problem}

We assume access to a signal $C(t)= \sum_{k=1}^N |a_k|^2 e^{i\omega_k t}$ for all $t\in \mathbb{R}$. 
Naturally the frequencies are considered to be distinct from another, $\omega_k \neq \omega_l$. 
Let $\matr{H}: \Hil \to \Hil$ be an auxillary hermitian operator, acting on an auxiliary Hilbert space $\Hil$.
We require that the spectrum of $\matr{H}$ has a subset of eigenvalues $\omega_k$ that coincide with the frequencies of the signal $C(t)$ and denote the corresponding eigenvectors as $\varphi_k$.
Let $\Psi(t) = e^{i\matr{H}t} \Psi_0 = \sum_{k=0}^N a_k e^{-i \omega_k t }\varphi_k$ be the time evolving wave packet that generates the signal, in 
the sense of Hamiltonian time evolution,
\begin{align*}
    C(t) = \langle \Psi(t), \Psi_0 \rangle = \langle \Psi_0, e^{i \matr{H} t } \Psi_0 \rangle.
\end{align*}
The aim is to determine all frequencies of $C(t)$ within an interval $[-\bw +\omega^*, \bw +\omega^*]$.
We denote the number of frequencies in this interval by $m^*$, and assume that $m^* < \infty$.\\

To determine the frequencies of $C(t)$, the spectral information contained in the time evolved wavepacket $\Psi(t)$ is used 
to generate a guess vector space, that coincides with an spectral subspace of $\matr{H}$.
The spectral subspace of interest is spanned by all eigenvectors of $\matr{H}$ that have 
eigenvalues in the interval $[-\bw + \omega^*, \bw +\omega^*]$ and non zero overlap with $\Psi(0)$.
We denote this subspace by $\mathcal{E}_{\omega^*, \bw} (\matr{H}_\Psi)$ and want to generate guess vectors $v_l$ such that
\begin{align}
    \operatorname{span}(v_1, \cdots, v_M) = \mathcal{E}_{\omega^*, \bw} (\matr{H}_\Psi) . \quad \label{eq:span_condition}
\end{align}
The true dimensionality of $\mathcal{E}_{\omega^*, \bw} (\matr{H}_\Psi)$ and thus number of frequencies in $[-\bw +\omega^*, \bw +\omega^*]$ is denoted by $m^*$, and is a priori 
unknown. The guess vectors are stored in a matrix $\matr{V} = [v_1, \cdots, v_M] \in \mathbb{C}^{N\times M}$ and used to project the eigenvalue problem of $\matr{H}$ onto to 
an generalized eigenvalue problem (GEP) of dimension $M$,
\begin{align} 
    \matr{V}^\dagger \matr{H} \matr{V} \matr{\Phi} =  \matr{V}^2 \matr{\Phi} \matr{\Lambda}(\vec{\mu}). \label{eq:GEP_exact}
\end{align}
Here $\matr{\Lambda}(\vec{\mu}) = \operatorname{diag}(\mu_1, \cdots, \mu_M)$ is a diagonal matrix of the eigenvalues $\mu$.
If condition $(\ref{eq:span_condition})$ is satisfied, the GEP will yield eigenvalues $\mu$ that coincide with the frequencies $\omega_k \in [-\bw + \omega^*, \bw +\omega^*]$ of the signal.\\

To generate the guess vectors $v_i$ a sequence of \emph{filter functions} $\{f_l\} \subset \Ls^2_{\infty}$ is used.
We denote the Fourier transforms of the filter functions by $\{F_l\}$.\footnote{ Sometimes we will also refer to the Fourier transfroms $\{F_l\}$ as the filter functions.}
The guess vectors of choice are time overlaps of the evolved wave vector $\Psi(t) e^{i\omega^* t}$ with filter functions $f_l(t)$,
\begin{align}
    v_l & = \int_{-\infty}^\infty f_l(t) \Psi(t) e^{i\omega^* t} dt = \sum_k \int_{-\infty}^\infty  f_l(t) a_k e^{-i (\omega_k-\omega^*) t}  \varphi_k dt \notag \\
        & = \sum_k   F_l(\omega_k - \omega^*) a_k  \varphi_k = \sum_k   F_l(\omega_k^*) a_k  \varphi_k. \label{eq:guess_vector}
\end{align}
In the last line a shorthand notation $\omega_k^* = \omega_k - \omega^*$ for the shifted frequencies was introduced. The resulting 
guess vector operator $\matr{V} = [v_1, \cdots, v_M]$ has matrix elements with respect to the eigenbasis of $\matr{H}$,
\footnote{In our notation we will often not distinguish between the operator and its matrix representation.}
\begin{align*}
    \matr{V}_{kl} = \langle \varphi_k | \matr{V} e_l \rangle = F_l(\omega_k^*) a_k.
\end{align*}
We sometimes denote the matrix by  $\matr{V}_M$ to indicate the guess dimensionality $M$ of the matrix. As the method is iterative,
$\matr{V}_M$ contains all the information of $\matr{V}_{M-1}$.
We recognize in $(F_l(\omega_k^*))_{kl} \in \mathbb{C}^{N\times M}$ an \emph{Alterant matrix} of the filter functions evaluated at the frequencies of the signal. \\

Consider the operator-valued function defined as,
\begin{align*}
    \matr{\mathfrak{F}}_M : \mathbb{R}^\star \to \mathbb{C}^{ \star \times M}, \qquad \vec{\omega} \mapsto \matr{\mathfrak{F}}_M(\vec{\omega}) = (F_l(\omega_k))_{kl}.
\end{align*}
The star $\star$ indicates that the input dimensionality of $\matr{\mathfrak{F}}_M$ is not fixed, but adapted to the input vector.\footnote{
This could be more formally elaborated, by defining $\mathbb{R}^\star$ slightly analogous to the quantum mechanical 
Fock space but we will not pursue this here.} The subscript $M$ indicates the guess dimensionality of the matrix,  which may be omitted if it is clear from the context or to emphasize
 the generality of the operator, to which the guess dimensionality is a mere input parameter.
We refer to $\matr{\mathfrak{F}}$ as the \emph{Filter Operational}, corresponding to the filter functions $\{F_l\}$.\\

With the filter operational the guess matrix $\matr{V}$ can be factorized in amplitude and frequency dependent matrices. 
Let $\vec{a} = (a_1, \cdots, a_N)^T$ and $\vec{\omega} = (\omega_1, \cdots, \omega_N)^T$ and be the vectors with the energy coefficients $a_i$ corresponding
to frequencies $\omega_i$ of the wave packet $\Psi(0)$. Let $\matr{\Lambda}(\vec{a})= \operatorname{diag}(a_1, \cdots, a_N)$ denote the diagonal matrix of $\vec{a}$.
The guess vector matrix $\matr{V}$ can be written as
\begin{align*}
    \matr{V} = (F_l(\omega_k^*) a_k)_{kl} = \matr{\Lambda}(\vec{a}) \matr{\mathfrak{F}}(\vec{\omega^*}).
\end{align*}

In order to ensure that the eigenvalues $\mu$ of the GEP (\ref{eq:GEP_exact}) coincide with the frequencies $\omega_k \in [-\bw +\omega^*, \bw +\omega^*]$ of the signal, condition $(\ref{eq:span_condition})$
must be met. 
The guess vector space is guaranteed to be a subspace of $\mathcal{E}_{[-\bw, \bw]} (\matr{H}_\Psi)$, if 
the filter functions $f_l$ are band limited to $[-\bw, \bw]$, that is $f_l \in \BL_\bw$. Indeed, $(\ref{eq:guess_vector})$ 
has $v_l \in \mathcal{E}_{[-\bw, \bw]} (\matr{H}_\Psi)$ if $F_l(\omega_k^*) = 0$ for $|\omega_k^*| > \bw$. 
Thus a sequence of band limited filter functions $\{f_l\}$ guarantees
\begin{align*}
    \operatorname{span}(v_1, \cdots, v_M) \subseteq \mathcal{E}_{\omega^*, \bw} (\matr{H}_\Psi) .
\end{align*}
for all $M$. It remains to ensure that equality in $(\ref{eq:span_condition})$ can be achieved for $M$ high enough and to detect 
the true number of frequencies $m^*$ in target interval $[-\bw, \bw]$.

\subsection{Dimension Detection through Chebyshev Systems}
\label{sec:dimension_detection_Chebyshev}

The alterant matrix $\matr{\mathfrak{F}}(\vec{\omega})$ is a practical object in this context of Filter Diagonalization. 
A heuristic method to detect the dimensionality of $\mathcal{E}_{\omega^*, \bw} (\matr{H}_\Psi)$ would be to increase $M$ for a few iterations without the rank $m$ of $\matr{V}^2$ increasing, and then to assume $m=m^*$.
However, for maximal information gain of each iteration the guess vectors $v_l$ should be as independent as possible. And by requiring 
the filter functions  $F_l$ to satisfy a certain independence relation, a precise dimension detection mechanism is feasible.
\begin{definition}
    A set of functions $F_1, \cdots , F_M$  is a \emph{Chebyshev systems} on $[-\bw, \bw]$ if for all distinct $\omega_1, \cdots, \omega_M \in [-\bw, \bw]$ the Alterant matrix 
    $\matr{\mathfrak{F}}(\vec{\omega})$ is non singular
    \begin{align*}
        \operatorname{det}(\matr{\mathfrak{F}}(\vec{\omega})) = \left|\begin{array}{cccc} F_1\left(\omega_1\right) & F_2\left(\omega_1\right) & \cdots & F_M\left(\omega_1\right) \\ F_1\left(\omega_2\right) & F_2\left(\omega_2\right)& \cdots & F_M\left(\omega_2 \right) \\ \vdots & \vdots & \ddots & \vdots \\ F_1\left(\omega_M\right) & F_2\left(\omega_M\right) & \cdots & F_M\left(\omega_M\right)\end{array} \right|\neq 0.
    \end{align*}    We call a sequence $\{F_l\}$ a \emph{complete Chebyshev systems} if $F_1, \cdots , F_M$ is a Chebyshev system for all $M$. 
\end{definition}
A sequence of filter functions $\{F_l\}$  that is a complete Chebyshev system on $[-\bw, \bw]$ guarantees a full rank of the Gramm matrix $\matr{V}^2$ for all 
guess dimensionalities smaller or equal to the true dimensionality $M\leq m^*$.\\

If $\{f_l\}$ is also bandlimitted, the dimensionality of the guess space $\mathcal{E}_{[-\bw, \bw]} (\matr{H}_\Psi)$ can therefore 
be detected form the Gram matrix, by determining the largest guess dimensionality $M$ such that $\matr{V}^2_M$ is non-singular,
\begin{alignat*}{2}
    \operatorname{det}(\matr{V}^2_M) & > 0 \qquad \text{ for } M \leq m^*  \qquad &&\text{(Chebyshev system)} \\
    \operatorname{det}(\matr{V}^2_M) & = 0 \qquad \text{ for } M > m^*   \qquad &&\text{(band limited)}.
\end{alignat*}

In summary, a sequence of $\bw$-band limited filter functions $\{f_l\}$, whose Fourier transforms form a complete Chebyshev system in $[-\bw, \bw]$
can determine all frequencies $\omega_k$ in $[-\bw + \omega^*, \bw + \omega^*]$ of the signal $C(t)$, 
by solving a GEP of dimensionality equal to the number of frequencies in the interval of interest
\begin{align}
    \matr{V}^\dagger \matr{H} \matr{V} \matr{\Phi} =  \matr{V}^2 \matr{\Phi} \matr{\Lambda}(\tilde \omega).  \label{eq:GEP_PG}
\end{align}
Here $\tilde \omega$ is the vector containing all $m^*$ frequencies of $C(t)$ in $[-\bw + \omega^*, \bw + \omega^*]$,
and $\matr{\Phi}=[x_1,\cdots, x_{m^*}]$ is the matrix of eigenvectors of the GEP. 
With $\matr{\Phi}$ and the determined frequencies $\tilde \omega$ we can also obtain the 
amplitudes $\tilde{a}^2$ that correspond to $\tilde \omega$. 

\subsection{Amplitude Extraction}
\label{sec:amplitude_extraction}
Even though the Hilbert space $\Hil$ is mere auxiliary construct, we become briefly 
a bit more precise in the notation to see how the amplitudes $|a_k|^2$ can be computed from the GEP.
The guess vector operator is denoted as $\matr{\hat V}: \mathbb{C}^{m^*} \to \Hil$ 
and its matrix representation with respect to left basis $\{ \varphi_k \}$ and right basis $\{e_i\}$ as $\matr{V}$. 
We denote with $\vec{a}$ and $ \vec{\omega}$ the $N$-dimensional vecotrs, consisting of all frequencies and amplitudes in 
$C(t)$ and with  $\tilde a $ and $ \tilde \omega $ the $m^*$ dimensional sub-vectors, that correspond to the spectral subspace 
$\mathcal{E}_{\omega^*, \bw} (\matr{H}_\Psi)$.
Since the filter functions are band limited, the guess operator $\matr{\hat V}$ will only have 
a non trivial image in the target spectral subspace $\mathcal{E}_{\omega^*, \bw} (\matr{H}_\Psi) \subset \Hil$. 
Therefore, $\matr{\hat V}$ is isomorphic to a square matrix $\mathbb{C}^{m^* \times m^*}$,
\begin{alignat*}{2}
    &\matr{\hat V}: \mathbb{C}^{m^*} \to \Hil && \equiv \mathbb{C}^{m^*} \to \mathcal{E}_{\omega^*, \bw} (\matr{H}_\Psi) \\
    &\matr{V}  =  \matr{\Lambda}(\vec{a}) \matr{\mathfrak{F}}_{m^*}(\vec{\omega}) && \equiv \matr{\Lambda}(\tilde a) \matr{\mathfrak{F}}_{m^*}(\tilde \omega^*).
\end{alignat*}
We denote the eigenvectors to the GEP (\ref{eq:GEP_PG}) by $x_i$. 
The usual normalization of the eigenvectors of a GEP has $\matr{\Phi} \matr{V}^2 \matr{\Phi} = \matr{I}$ (Section \ref{sec:GeneralizedSpectralTheorem}).
In particular $\{\matr{\hat V} x_i\}$ will form an orthonormal set of vectors. Since we have that $\mu_i = \omega_i$ it follows 
from the uniqueness of the eigenvectors of the GEP, that $\matr{\hat V} x_i = \varphi_k$.
This implies on the matrix representation with respect to $\{\varphi_k\}$ that $\matr{V} x_i = e_{\sigma(i)}$ for all $i$ and some permutation $\sigma$. In particular, 
\begin{alignat}{3}
    & \qquad   \matr{V}   \matr{\Phi}  = \matr{\Lambda}(\tilde a) && \matr{\mathfrak{F}}_{m^*}(\tilde \omega^*) \matr{\Phi}  && = \matr{I}  \notag \\
    & \Longleftrightarrow \qquad && \matr{\mathfrak{F}}_{m^*}(\tilde{\omega}^*) \matr{\Phi}&& = \matr{\Lambda}(\tilde a)^{-1}.  \label{eq:amplitude_computation}
\end{alignat}
Thus after the frequencies $\tilde \omega$ are determined from the GEP, the alternant matrix $\matr{\mathfrak{F}}_{m^*}(\tilde \omega^*)$ can be used to compute the amplitudes $\tilde a^2$.
If $\omega_k$ is the eigenvalue of $\varphi_k$ then $\matr{V} x_i = e_{\sigma(i)}$ also implies $\omega_k = \tilde \omega_{\sigma(i)}$. Therefore, the ordering 
of the amplitudes $\tilde a$ obtained through eq.(\ref{eq:amplitude_computation}) agrees with the ordering of the frequencies $\tilde \omega$. 
Thus the signal $C(t)$ can be reconstructed from the GEP and an exact frequency decomposition obtained. The reconstruction
of the signal obtained through the amplitudes and frequencies from a Filter Diagonalization in the band $[-\bw + \omega^*, \bw + \omega^*]$,
effectively corresponds to the application of an ideal band pass filter $\mathcal{B}_{\omega^*,\bw}$ to the signal $C(t)$.
\begin{align*}
   \mathcal{B}_{\omega^*,\bw} C(t) = \sum_{k=1}^{m^*} \tilde{a}_k^2 e^{i \tilde \omega_k t}
\end{align*}
Here $\mathcal{B}_{\omega^*,\bw}$ is the projection operator onto the space of functions with Fourier transform supported in $[-\bw + \omega^*, \bw + \omega^*]$.

\subsection{Back to Signal Processing}

\label{sec:back_to_signal_processing}
It remains to be shown that the hermitian operator $\matr{H}$ along with the Hilbert space $\Hil$ and the time evolved wavepacket $\Psi(t)$ are
indeed pure auxiliary constructs. All information necessary to compute the matrix elements of the GEP is 
contained in the signal $C(t)$. For the Gram matrix $\matr{V}^2$ we have,
\begin{align*}
    \matr{V}_{sl}^2 = \langle v_s, v_l \rangle & = \int \int^\infty_{-\infty} e^{-i \omega^*(\tau-t)} \conj{f_s}(\tau) f_l(t) \langle \Psi(\tau), \Psi(t) \rangle dt d\tau\\
    & = \int \int^\infty _{-\infty}e^{-i \omega^*(\tau-t)} \conj{f_s}(\tau) f_l(t) C(\tau - t) dt d\tau.
\end{align*}
The action of the Hamiltonian $\matr{H}$ on the wavepacket $\Psi(t)$ is 
related to the derivative of the signal 
\begin{align*}
    \langle \Psi(\tau), \matr{H}\Psi(t) \rangle & = \langle \Psi(\tau), \matr{H} e^{- i \matr{H} t }\Psi(0) \rangle =  i \partial_t \langle \Psi(\tau), e^{- i \matr{H} t }\Psi(0) \rangle\\
                                        & = i \partial_t C(\tau - t).
\end{align*}
Thus, the left-hand side $\matr{V}^\dagger \matr{H} \matr{V}= \matr{H_V}$ of the GEP has matrix elements,
\begin{align*}
    ( \matr{H_V})_{sl} & = \langle v_s,\matr{H} v_l \rangle  = \int \int^\infty_{-\infty} e^{-i \omega^*(\tau-t)} \conj{f_s}(\tau) f_l(t) \langle \Psi(\tau), \matr{H} \Psi(t) \rangle dt d\tau\\
    & =  i \int \int^\infty_{-\infty} e^{-i \omega^*(\tau-t)} \conj{f_s}(\tau) f_l(t)  \partial_t C(\tau - t) dt d\tau \\
    &= \int \int^\infty_{-\infty}   e^{-i \omega^*(\tau-t)} \conj{f_s}(\tau)  C(\tau - t)  \left( \omega^* f_l(t)  -i f_l'(t) \right) dt d\tau.
\end{align*}
In the last line a partial integration was applied. \\

We have thus derived, that the problem of finding frequency decompositions of the signal $C(t)$ can be mapped to generalized eigenvalue problem.
Every signal $C(t)$ induces a convolution operator $\matr{T}_C$ defined through,
\begin{align*}
     \matr{T}_C f(\tau) = \int_{-\infty}^\infty C(\tau -t) f(t) dt.
\end{align*}
The matrix elements of the GEP that dermines the frequencies of $C(t)$ within the band $[-\bw , \bw ]$ are then given by
\begin{align}
    ( \matr{H_V})_{sl}  = \langle f_s, -i \frac{\partial}{\partial \tau }  \matr{T}_C f_l \rangle, \qquad \qquad \matr{V}_{sl}^2  = \langle f_s, \matr{T}_C f_l \rangle.  \label{eq:FD_Schrödinger}
\end{align}
where we assume the the inner product of $\Ls^2_{\infty}$. The operator $i \partial_\tau$, that appears on the left-hand side of the GEP
 is recognized from the left-hand side of the Schrödinger equation.

\paragraph{Remarks}
Chebyshev systems already play a crucial role in approximation theory. 
In particular in the numerical evaluation
in the numerical evaluation of integrals and interpolation they form a work horse. We find that the independence relation, that they ensure 
is also valuable asset in the context of signal processing and dimension reductions for spectral analysis. The alternant matrix that is induced by 
a filter system has previously not been used in the context of Filter Diagonalization. However, we find it to be a valuable object in the 
uncertainty quantification of the method. In real world applications that are subject to noise, a mere independence relation though a nonzero determinant
is not quite enough. The independence relation should be strong enough such that we can distinguish between the signal and noise, for accurate signal detection.
As seen in Section \ref{sec:detectability_frequencies}, the alternant matrix can be used to quantify the signal strength of a filter system.\\

Lastly, the perspective of Filter Diagonalization through equation (\ref{eq:FD_Schrödinger}) is new and useful for unexplored generalizations of the method.
Perhaps, it can in itself provide a new perspective on Quantum Mechanics, that is not based on an attempt to develop a theory that aims to describe 
the physical world as it is. \\

A viewpoint that equation (\ref{eq:FD_Schrödinger}) may enable, is to develop a theory, that describes how we \emph{observe the reality}. 
Namely, through signals. And in this Section, we have seen that the time evolution of every signal whose Fourier transform is a probability measure, can 
be described by a unitary operator as implied by the Schrödinger equation.
While it would be certainly interesting to see this speculation through, it is not the focus of this work.\\



\newpage 

\section{Filter Diagonalization with Finite Time Access}
\label{sec:finite_time_access}

Physical reality unfortunately does not allow infinite access to a signal.
In practice, it is not possible to take a full Fourier transform of actual data.  
In the literature of Wavelet Theory and related fields, 
it is well-established, that the challenge of spectral analysis despite finite time access to the signal,
requires approximate methods.
Filter Diagonalization shows promise to be an optimal approach for such an information processing task. \\

In fact, exact frequency recovery is possible despite finite time access to the signal,
and corresponds to the diagonalization of a GEP of dimensionality equal to the number of frequencies in the signal.\\

We assume access to a signal $C(t)= \sum_{k=1}^N |a_k|^2 e^{i\omega_k t}$ for a finite time $t\in [-2T,2T]$.
Analogous to the infinite time case, we consider an auxiliary Hilbert space $\Hil$, an operator $\matr{H}: \Hil \to \Hil$ and 
a wavepacket $\Psi(t) = e^{i\matr{H}t} \Psi_0 = \sum_{k=1}^N a_k e^{i \omega_k t }\varphi_k$, that generates the signal $C(t)$.
The following provides the definition of filter systems that will be used. 

\begin{definition}
    \label{def:filter_system}
    A \emph{filter system} is a sequence of square integrable functions $\{F_l\}_{l=1}^\infty \subset \Ls_\infty^2$.
    A filter system is accompanied by a matrix valued function $\matr{\mathfrak{F}}_M$ defined as
    \begin{align*}
        \matr{\mathfrak{F}}_M: \mathbb{R}^\star \to \mathbb{C}^{\star \times M}, \qquad \omega \mapsto \matr{\mathfrak{F}}_M(\omega) = (F_l(\omega_k))_{kl}.
    \end{align*}
    The \emph{filter envelope of degree M} is the function 
    \begin{align*}
        \mathfrak{F}^{\text{env}}_M (\omega)= \sum_{l=1}^M |F_l(\omega)|^2.
    \end{align*}
    The time dual $\{f_l\}$ of a filter system  is the sequence of inverse Fourier transforms of $\{F_l\}$,
    \begin{align*}
        \{f_l\}_{l=1}^\infty = \{\FT^{-1}[F_l] \}_{l=1}^\infty.
    \end{align*}
    If the functions $f_l$ are supported only on $[-T,T]$, we call 
    $\left( \{f_l\}, \{F_l\} \right)$ a \emph{$T$-filter system}.
    If the sequence of functions $\{F_l\}$ forms a complete Chebyshev system on $[-\bw, \bw]$ we call it a \emph{$\bw$-filter system}.
\end{definition}

We will synonymously refer to the filter functions in time space $\{f_l\}$ or in frequency space $\{F_l\}$ as the filter system, 
as the Fourier transform is a bijection. A time-limited filter system has $ f_l = \TL_T f_l$ and allows signal processing with finite time access.
The guess vectors $v_l$ are then given by

\begin{align*}
    v_l & = \int_{-T}^{T} f_l(t) \Psi(t) e^{i\omega^* t} dt = \sum_k \int_{-T}^{T}  f_l(t) a_k e^{-i (\omega_k-\omega^*) t}  \varphi_k dt\\
        & = \sum_{k=1}^N a_k \int_{-\infty}^{\infty}  e^{-i (\omega_k-\omega^*) t}  \TL_T f_l(t)   \varphi_k dt \\
        & = \sum_{k=1}^N a_k \FT[\TL_T f_l](\omega_k-\omega^*) \varphi_k = \sum_{k=1}^N a_k F_l(\omega_k-\omega^*) \varphi_k.
\end{align*}
Analogous to Section \ref{sec:back_to_signal_processing} integral representations for the GEP matrix elements can be derived.
\begin{align}
    \matr{V}_{sl}^2 = \langle v_s, v_l \rangle & = \int^T_{-T} \int^T_{-T} e^{-i \omega^*(\tau-t)} \conj{f_s}(\tau) f_l(t) C(\tau - t) dt d\tau.    \label{eq:GEP_V_el_time}\\
    & = \sum_{k=1}^N |a_k|^2 \conj{F_s}(\omega_k - \omega^*) F_l( \omega_k - \omega^*)       \label{eq:GEP_V_el_freq}\\
    ( \matr{H_V})_{sl} = \langle v_s,\matr{H} v_l \rangle  & =  i \int^T_{-T} \int^T_{-T} e^{-i \omega^*(\tau-t)} \conj{f_s}(\tau) f_l(t)  \partial_t C(\tau - t) dt d\tau \label{eq:GEP_H_el_time} \\
    & = \sum_{k=1}^N \omega_k |a_k|^2 \conj{F_s}(\omega_k - \omega^*) F_l( \omega_k - \omega^*)  \label{eq:GEP_H_el_freq}
\end{align}
Equations (\ref{eq:GEP_V_el_time}) and (\ref{eq:GEP_H_el_time}) are the time domain representations of the GEP matrix elements, that can be 
computed from the signal $C(t)$. Equations (\ref{eq:GEP_V_el_freq}) and (\ref{eq:GEP_H_el_freq}) are the frequency domain representations of the GEP matrix elements,
that are helpful for analytical derivations.\\

Compared to the infinite time access case, we no longer have the freedom, to select a frequency range with perfect 
precision. Since the filter functions are already time limited, they cannot be simultaneously band limited.
In particular other frequencies $\omega_k \notin (E_a,E_b)$ will contribute to the matrix elements of the GEP as seen in eq.(\ref{eq:GEP_V_el_freq}) and (\ref{eq:GEP_H_el_freq}).

\subsubsection{Exact Frequency Recovery despite Finite Time Access}

We can nevertheless still obtain exact frequency recovery, if the signal  $C(t)$ consist 
of a finite number of frequencies $N$. In many use cases, this is at least approximately true.
Let $\bw_C = \frac{1}{2}(\omega_{\max}-\omega_{\min})$ be the half-bandwidth of the signal $C(t)$. Then a $\bw_C T$-Filter system
can determine all frequencies of $C(t)$ by solving a GEP of dimensionality equal to the number of frequencies in the signal. \\

We will simply select the bandwidth $\bw$ of the filter system to be larger than the bandwidth $\bw_C$ of the signal $C(t)$ and 
choose a filter center $\omega^*$ that is approximately in the middle of the signal bandwidth. Then the GEP matrices are iteratively grown 
until $\matr{V}^2_{M+1} = \matr{\mathfrak{F}}_{M+1}( \vec{\omega}^*)^\dagger \matr{\Lambda}(\vec{a})^2 \matr{\mathfrak{F}}_{M+1}( \vec{\omega}^*)$ is singular.
This will happen at $M=N$, since the filter sequence is assumed to be a Chebyshev system. 
The generalised eigenvalues of the GEP $(\matr{H}_{V_N} , \matr{V}_N)$ will then perfectly coincide with the frequencies of the signal $C(t)$. \\

Thus, any arbitrary small but finite amount of access to the signal $C(t)$ is sufficient to
perfectly recover all the frequencies of the signal, by solving a GEP of dimensionality equal to the number of frequencies. In a way, this contradicts our intuition of what is physically possible. 
On the other hand, the identity theorem has that any analytic function, is uniquely defined by its values on an arbitrary small interval.
From this perspective, Filter Diagonalization can be seen as constructive instance of the identity theorem within Harmonic Analysis.
In contrast to an infinite Taylor expansion, only a finite dimensional GEP has to be solved.
It returns the frequencies and amplitudes that allow to extrapolate the signal $C(t)$ from a short time access into all of time.
Does Filter Diagonalization allow us to see infinitely far into the past and future? \\

Not exactly, at least not in such literal sense. We have made some assumptions in the above exposition, that will not perfectly hold a physical scenario. We assumed 
continuous access to the signal $C(t)$. In practice, it is not possible to continuously measure a signal, not even for a short time
and let alone with infinite precision. And additionally, for signal that consists of an infinite number of frequencies, the GEP will have infinite dimensionality.

This dampens the enthusiasm of Filter Diagonalization being able to perfectly extrapolate from a short observation time into all the past and future.
But nevertheless, input data $C(t)$ of finite precision will transfer favorably to a finite precision in the output 
of the method.\footnote{This will be elaborated in a future extension, that incorporates uncertainty quantification in the numerical evaluation of the GEP signals 
through sampling formulas. }\\

The true power of filter diagonalization lies in the favorable relationship between precision and computational cost of the method.
In practice, signals consist of a lot of frequencies, which would result in a high dimensional GEP that 
is respectively expansive to solve. However, in Filter Diagonalization we can allow for a small but controlled error 
and effectively adjust the computational cost of the method through a frequency window.

\subsection{Filter Diagonalization as $\varepsilon$-Approximate Solver}
\label{sec:FilterDiag_eps_approx_solver}
Through the quantum mechanical interpretation, 
Filter Diagonalization fits into the scheme of an $\varepsilon$-dimension reduction protocol of Chapter \ref{chap:DimRedSepAna}. 
The high-dimensional operator $\matr{H}$ is any fictitious Hamiltonian
that has a spectrum, that coincides with the one of the signal $C(t)$. An important precision parameter 
is the time access $4T$ to the signal. \\

We assume that a user has specified an interval $B = [\omega^* - \bw , \omega^* + \bw ]$ of interest, and wants to determine 
all frequencies of the signal $C(t)$ in this interval. Equations (\ref{eq:GEP_V_el_time}) and (\ref{eq:GEP_H_el_time}) are used 
to iteratively grow a GEP $(\matr{H}_{V_M}, \matr{V}_M)$. 
The resulting Gram matrix has a signal noise decomposition
\begin{align}
    \matr{V}^2_{sl} &= \underbrace{\sum_{\omega_k \in B} |a_k|^2 \conj{F_s}(\omega_k^*) F_l( \omega_k^*)}_{=\matr{B}^2_{sl}} + \underbrace{\sum_{\omega_k \notin B} |a_k|^2 \conj{F_s}(\omega_k^*) F_l( \omega_k^*)}_{=\matr{N}^2_{sl}}. \label{eq:FD_Gram_matrix_decomposition}
\end{align}
We can derive a bound on the operator norm of the noise matrix, that is independent on the spectral details of the signal. 

\begin{lemma}
    For the noise Gram matrix as in (\ref{eq:FD_Gram_matrix_decomposition}) we have the bound
    \begin{align*}
        \lambda_1(\matr{N}^2_M) & \leq \operatorname*{Tr}[\matr{N}^2_M] \leq  \sum_{\omega_k \notin B } |a_k|^2 \operatorname*{sup}_{\omega \notin [-\bw, \bw] } \sum_{l=1}^M |F_l(\omega)|^2 ,
    \end{align*}
    where
    \begin{align*}
        \sum_{\omega_k \notin I } |a_k|^2 \leq C(0).
    \end{align*}
\end{lemma}
In the simple proof we aim to convey some geometrical opportunities that arise from the interpretation of a hidden high-dimensional operator $\matr{H}$, that generates the signal $C(t)$.
\begin{proof}
    Let $\matr{P}_{B}$ be the projection operator onto eigenvectors of $\matr{H}$ with eigenvalues 
    $\omega \in B$ and let $\matr{P}_{N} = \matr{1} - \matr{P}_{B}$ be the projection operator on the orthogonal complement.
    From cyclical property of the trace we have,
    \begin{align*}
        \operatorname*{Tr}[\matr{N}^2_M] & = \operatorname*{Tr}[\matr{\mathfrak{F}}_M(\vec{\omega}^*)^\dagger \matr{\Lambda}(\vec{a}) \matr{P}_{N}^2 \matr{\Lambda}(\vec{a}) \matr{\mathfrak{F}}_M(\vec{\omega}^*)]\\
        & =  \operatorname*{Tr}[\matr{P}_{N} \matr{\Lambda}(\vec{a}) \matr{\mathfrak{F}}_M(\vec{\omega}^*)  \matr{\mathfrak{F}}_M(\vec{\omega}^*)^\dagger \matr{\Lambda}(\vec{a})^\dagger \matr{P}_{N}] \\
        & = \sum_{\omega_k \notin B } |a_k|^2 \sum_{l=1}^M |F_l(\omega_k^*)|^2 \\
        & \leq \sum_{\omega_k \notin B } |a_k|^2 \operatorname*{sup}_{\omega \notin [-\bw, \bw]  } \sum_{l=1}^M |F_l(\omega)|^2.
    \end{align*}
    The second identity follows from $C(0) = \sum_{k=1}^N |a_k|^2$.
\end{proof}
Thus, noise estimates of the generated guess vector matrix, can be directly obtained from the Filter Envelope of the Filter System.\\

In many signal processing scenarios of physical interest the signal $C(t)$ contains frequencies at various different scales.
Especially for such complicated cases, Filter Diagonalization provides a powerful tool. Rather that iterating various error estimates for different scales,
the integrated spectral inequalities of Theorem \ref{thm:integrated_spectral_stability} allow to swiftly capture all the noise analysis in a single bound.\\

To apply the integrated spectral inequalities, we require the spectral measure of Section {\ref{sec:spec_meas}}.
In the context of Filter Diagonalization, the spectral measure is related to the filter envelope of the filter system.
We denote with $|a(\omega)|^2$ the inverse Fourier transform of the signal $C(t)$,
\begin{align*}
    |a(\omega)|^2 & = \FT^{-1}[C](\omega) = \frac{1}{2\pi}\int_{-\infty}^{\infty} C(t) e^{i\omega t} dt \\
    &= \sum_{k=1}^N |a_k|^2 \delta(\omega - \omega_k).
\end{align*}
The spectral measure specific to a signal and a filter system is then given by

\begin{align*}
    |\alpha(\omega)|^2 = |a(\omega)|^2  \mathfrak{F}^{\text{env}}_M (\omega).
\end{align*}
By the Definition \ref{def:spec_meas} of the spectral integral and delta distributions, we have indeed 
\begin{align}
    \int_{I} |\alpha(\omega)|^2 d\omega = \sum_{\omega_k \in I_d} |a_k|^2 \mathfrak{F}^{\text{env}}_M (\omega_k ) + \int_{I_c} |a(\omega)|^2 \mathfrak{F}^{\text{env}}_M (\omega) d\omega 
\end{align}
for all $I \subset \mathbb{R}$.\footnote{Recall that the left-hand side is the spectral integral. This means for $I=\{\omega_k\}$ a single point at a discrete frequency is well-defined and not equal to zero.} 
Where $I_d$ is the set of discrete frequencies and $I_c$ is the continuous part of $C(t)$.
For a purely discrete spectrum, the integrated spectral inequalities of Theorem \ref{thm:integrated_spectral_stability}
write in the context of Filter Diagonalization as,
\begin{align*}
    \int_{\omega^* + \bw}^{\infty} \left(\omega - \omega_k \right) |\alpha(\omega)|^2 d\omega = \sum_{\omega_l > \omega^* + \bw} (\omega_l - \omega_k) |a_l|^2 \mathfrak{F}^{\text{env}}_M (\omega_l).
\end{align*}
The theoretical treatment can be adjusted to allow for continuous spectra in the signal $C(t)$
and will be presented in a future work.\\

\noindent We are ready to formulate Filter Diagonalization as an $\varepsilon$-dimension reduction protocol.

\begin{theorem}
    \label{thm:filter_diagonalization}
    Consider a signal $C(t)= \sum_{k=1}^N |a_k|^2 e^{i\omega_k t}$ with $|a_k|^2, \omega \in \mathbb{R}$ that is given for  $t\in [-2T, 2T]$ and
    a $\bw T$-filter system $( \{f_l\}, \{F_l\})$ as in Definition \ref{def:filter_system}.\\
    Let $\mathfrak{P}$ be the protocol, that generates for a given frequency band $B = [\omega^* - \bw, \omega^* + \bw]$ a GEP $(\matr{H}_{V_M}, \matr{V}_M)$ 
    with matrix elements, 
    \begin{align}
        \matr{V}_{sl}^2  & = \int^T_{-T} \int^T_{-T} e^{-i \omega^*(\tau-t)} \conj{f_s}(\tau) f_l(t) C(\tau - t) dt d\tau.    \label{eq:GEP_V_el_time2}\\
        ( \matr{H_V})_{sl}   & =  i \int^T_{-T} \int^T_{-T} e^{-i \omega^*(\tau-t)} \conj{f_s}(\tau) f_l(t)  \partial_t C(\tau - t) dt d\tau \label{eq:GEP_H_el_time2}.
    \end{align}
    Then $\mathfrak{P}$ defines an $\varepsilon$-dimension reduction protocol as in Definition \ref{asump:protocol}.
    The $\varepsilon$-estimates for the noise Gram matrix have 
    \begin{align*}
        \operatorname*{Tr}(\matr{N}^2_M) <  C(0) \sup_{\omega \not\in [-\bw, \bw]} \sum_{l=1}^M F_l (\omega)^2 =: \varepsilon_M.
    \end{align*}
\end{theorem}

The following algorithm can be used extract the frequencies and amplitudes of the signal $C(t)$ from the GEP.
For the sake of exposition, we assume that the correct number of frequencies $m$ in the frequency band $B$ of 
interest is known in Algorithm \ref{alg:2}. In practice, the GEP can be iteratively grown and a strong drop in the spectrum of the Gram matrix $\matr{V}^2_M$
can be used to detect the number of frequencies in the signal.\footnote{See Algorithm \ref{alg:1}, Section \ref{sec:dimension_detection} and Section \ref{sec:dimension_detection_Chebyshev}}
In Section \ref{sec:detectability_frequencies} we will discuss 
how the detectability of frequencies can be quantified through the Alterant matrix $\matr{\mathfrak{F}}_M(\vec{\omega}^*)$.

\begin{algorithm}[H]
    \caption{Filter Diagonalization}
    \label{alg:2}
    Consider a signal $C(t)$, a frequency band $ B =[\omega^* - \bw, \omega^* + \bw]$ and a $\bw T$-filter system $( \{f_l\}, \{F_l\})$ as in Theorem \ref{thm:filter_diagonalization}.
    Let $m$ denote the number of frequencies $\omega_k$ in the band $B$. 
    \begin{algorithmic}[1] 
        \State Pick $M \geq m$ \label{line:pickM}
        \Indent
            \State Compute $\matr{V}_M^2$ and $\matr{H}_{V_M}$ through (\ref{eq:GEP_V_el_time2}) and (\ref{eq:GEP_H_el_time2}) 
            \State Compute $\matr{U}$ s.t. $ \matr{U}^\dagger \matr{V}^2_M  \matr{U} = \operatorname{diag}(\lambda_1(\matr{V}_M^2), \cdots , \lambda_M(\matr{V}_M^2))$
            \State Let $\matr{U}_m =[u_1,\cdots, u_m]$  \Comment{Pick leading $m$ eigenvectors of $\matr{V}_M^2$} 
            \State Solve the GEP   $ \quad \matr{U}_m^\dagger\matr{H}_{V_M} \matr{U}_m \matr{\Phi} = \matr{U}_m^\dagger \matr{V}_M^2 \matr{U}_m \matr{\Phi} \matr{\Lambda}(\tilde \omega)$
            \State Set $\matr{\Lambda}( |\tilde a|^2) = \operatorname{diag}( \matr{\mathfrak{F}}_{M}(\tilde{\omega}^*)^{-\dagger}\matr{V}^2_M \matr{\mathfrak{F}}_{M}(\tilde{\omega}^*)^{-1} )$  \Comment{Approximated amplitudes} \label{line:amplitude_computation}
        \EndIndent
        \State Return $\tilde \omega$, $|\tilde a|^2$, $\lambda_{m}(\matr{V}_M^2)$, $M$
        \end{algorithmic}
    \end{algorithm}
In Algorithm \ref{alg:2} the vectors $\tilde \omega$ and $|\tilde a|^2$ contain the approximated frequencies and amplitudes of the signal $C(t)$
within the bandwidth of interest.
In line \ref{line:amplitude_computation} the right inverse of $\matr{\mathfrak{F}}_{M}(\tilde{\omega}^*)$ is denoted as $\matr{\mathfrak{F}}_{M}(\tilde{\omega}^*)^{-1}$.
Line \ref{line:amplitude_computation} alternatively reads as $|\tilde a_k |^{2} = ( \matr{\mathfrak{F}}_{M}(\tilde{\omega}^*)^{-\dagger}\matr{V}^2 \matr{\mathfrak{F}}_{M}(\tilde{\omega}^*)^{-1})_{kk}$.
It is important to honor the ordering of the determined frequencies as they are assigned to amplitudes. \\

We can access the approximation theory that was developed in Chapter \ref{chap:DimRedSepAna} to obtain precision guarantees for the approximated frequencies of the signal $C(t)$.

\begin{corollary}
    \label{cor:filter_diagonalization_precision}
    Let $ \omega_1 \geq \cdots \geq  \omega_m $  be the exact frequencies of the signal $C(t)$ in the interval $B = [\omega^* - \bw, \omega^* + \bw]$.
    The approximated frequencies $\tilde \omega_1 \geq \cdots \geq \tilde \omega_m$ obtained through Algorithm \ref{alg:2} have precision guarantees:
    \begin{itemize}
        \item If $C(t)$ only has frequencies in $[\omega_{\min}, \omega_{\max}]$ then         
        \begin{align}
            \frac{ \lambda_1(\matr{N}^2_M) }{\lambda_{m}(\matr{V}_M^2)}(\omega_{\min} - \omega_k) \leq \tilde{\omega}_k - \omega_k & \leq \frac{ \lambda_1(\matr{N}^2_M) }{\lambda_{m}(\matr{V}_M^2)}(\omega_{\max} - \omega_k)   \label{eq:FiltDiag_freq_precision_simp}
        \end{align}
        \item If $\lambda_{m}(\matr{V}_M^2) > (m+1)\lambda_1(\matr{N}^2_M)$ then         \begin{align}
            \frac{\sum\limits_{\omega_l < \omega^* - \bw} (\omega_l - \omega_k) |a_l|^2 \mathfrak{F}^{\text{env}}_M (\omega_l)}{\lambda_{m}(\matr{V}_M^2) - \lambda_1(\matr{N}^2_M)} \leq \tilde{\omega}_k - \omega_k & \leq \frac{\sum\limits_{\omega_l > \omega^* + \bw} (\omega_l - \omega_k) |a_l|^2 \mathfrak{F}^{\text{env}}_M (\omega_l)}{\lambda_{m}(\matr{V}_M^2)- \lambda_{1}(\matr{N}^2_M)}. \label{eq:FiltDiag_freq_precision_inte}
        \end{align}
    \end{itemize}
\end{corollary}

The method can now be iterated for different frequency bands of interest. It is important to highlight the following:\\

\emph{The computational compexity to approximate $N$ frequencies in a signal $C(t)$ through Filter Diagonalization scales linearly in $N$.}\\

To achieve this, we partition the frequency range $[\omega_{\min}, \omega_{\max}]$ into sub-intervals, each containing approximately $m$ frequencies.
Consequently, to approximate all $N$ frequencies of the signal $C(t)$, approximately $N/m$ generalized eigenvalue problems (GEPs) must be solved. 
The overall computational cost is then estimated as, 
\begin{align*}
    \sim  \frac{N}{m} m^3 = \mathcal{O}(N m^2).
\end{align*}
All additional steps, including the numerical evaluation of integrals, generation of filter functions, computation of the Alternant matrix, 
and more, can be incorporated into a scaling constant.\footnote{
    In most physical applications, the frequencies correspond to eigenvalues of high-dimensional Hamiltonians.    
Direct diagonalization
scales here with $N^3$ and quickly becomes intractable. Furthermore, there is limited flexibility in selectively targeting a range of eigenvalues of interest.}\\

After all frequencies and amplitudes have been determined, they can be used in (\ref{eq:FiltDiag_freq_precision_inte})
to compute precision guarantees. Rather than the exact values, the approximated frequencies $\tilde \omega_k$  and amplitudes $|\tilde a_k|^2$
are used in the error bounds. We write for the error in approximated values $\delta \omega =\omega - \tilde \omega$.
The error that results from using approximate values rather than exact values can be treated in perturbation sense.\footnote{
    For most applications, the error in the precision guarantee that is induced by using approximate values in the bounds can be readily neglected. 
    For high precision applications, one could derive rigorous bounds from the presented in equalities, that are independent from the exact values. Together Proposition \ref{prop:filter_diagonalization_amplitude_precision} 
    and (\ref{eq:FiltDiag_freq_precision_inte}) define a well-defined system of inequalities in $2N$ variables, $\delta \omega$ and $\delta a$.}\\


In Section \ref{sec:amplitude_uncertainty_quantification} we derive precision guarantees for the approximated amplitudes.
The error in the approximated amplitudes is guaranteed to be small if,
\begin{enumerate}
    \item The noise Gramm matrix $\lambda_1(\matr{N}_M^2)$ is small.
    \item The error in the frequency approximations $\delta \omega$ is small.
    \item The diagonal entries of $\matr{\mathfrak{F}}_{M}(\tilde{\omega}^*)^{-2}$ are well conditioned. 
\end{enumerate}
These criterias apply for the amplitudes individually. This means, that if a single 
diagonal element of  $\matr{\mathfrak{F}}_{M}(\tilde{\omega}^*)^{-2}$ is singular, 
only the precision of the corresponding amplitude is affected and not all. Criteria 
1 implies 2 through Corollary \ref{cor:filter_diagonalization_precision}. Criteria 3 can be explicitly 
verified from the filter matrix evaluated at the approximated frequencies. The meaning 
of $(\matr{\mathfrak{F}}_{M}(\tilde{\omega}^*)^{-2})_{kk}$ is closely tied to the detectability 
of the frequency $\omega_k$ in the signal $C(t)$ as we elaborate in Section \ref{sec:detectability_frequencies}.
The detailed precision inequalities are given in Proposition \ref{prop:filter_diagonalization_amplitude_precision}.

\subsection{Amplitude Uncertainty Quantification}
\label{sec:amplitude_uncertainty_quantification}
Here we derive error bounds for approximated amplitudes $|\tilde a_k|^2$ obtained through Algorithm \ref{alg:2}.
In the numerical literature of Filter Diagonalization, the amplitudes are extracted from the eigenvectors of the GEP,
similar to what we have seen in the toy example of infinite time access in Section \ref{sec:amplitude_extraction} \cite{Mandelshtam,wall_extraction_1995}.
However, bounding the error in the eigenvector of a perturbed eigenvalue problem typically seems to be more challenging than bounding the error in the eigenvalue itself \cite{stewart_perturbation_1978}.\\

In order to overcome this more intricate challenge we lean on the structure of the Alternant matrix $\matr{\mathfrak{F}}_{M}(\tilde{\omega}^*)$. 
As seen in Section \ref{sec:method_development_infinite_time} the guess vectors of FD are generated through the filter operator $\matr{\mathfrak{F}}_{M}$. While the exact input vector to 
$\matr{\mathfrak{F}}_{M}$ is not known, the operator itself is well known. And regularity properties of the filter system convey the intuition that
the guess vectors generated through $\matr{\mathfrak{F}}_{M}$ at the exact frequencies should not deviate much from the guess vectors generated at the approximated frequencies.
In the following proof, we find that this intuition can be equipped with mathematical precision through the diagonal elements of $\matr{\mathfrak{F}}_{M}(\tilde{\omega}^*)^{-2}$.\\

Let $\vec{\omega}_{\text{all}}^*$ be the vector of all frequencies $\omega_k$ shifted by the filter center $\omega^*$
and $\vec{a}_{\text{all}}$ the vector of all amplitudes $|a_k|^2$ in the signal $C(t)$. Let $ \vec{\omega}^*, \vec{a} \in \mathbb{R}^m$ be the vectors of the frequencies and amplitudes in the frequency band $B$.
Let $\matr{P}_N$ and $ \matr{P}_B$ be the projection operators onto the noise and signal subspace of the GEP. 
By inserting the identity $\matr{I} = \matr{P}_B + \matr{P}_N$ the Gram matrix $\matr{V}^2$ can be decomposed into signal and noise part,
\begin{align}
    \matr{V}^2_M & = \matr{\mathfrak{F}}_{M}(\vec{\omega}_{\text{all}}^*)^\dagger \left( \matr{P}_{B} + \matr{P}_{N }\right) \matr{\Lambda}(\vec{a}_{\text{all}} )^2 \matr{\mathfrak{F}}_{M}(\vec{\omega}_{\text{all}}^*) \notag \\
    & = \underbrace{\matr{\mathfrak{F}}_{M}(\vec{\omega}^*)^\dagger \matr{\Lambda}(\vec{a})^2 \matr{\mathfrak{F}}_{M}(\vec{\omega}^*)}_{=\matr{B}^2} + \matr{N}^2.  \label{eq:V2_decomp}
\end{align}
Note that $ \matr{\mathfrak{F}}_{M}(\vec{\omega}_{\text{all}}^*) $ maps into a $N$ dimensional vector space while  $\matr{\mathfrak{F}}_{M}(\vec{\omega}^*)$ maps into an $m$ dimensional vector space.
Let $\delta \omega \in \mathbb{R}^m $ be the vector that describes the deviations of the approximated frequencies
\begin{align*}
    \vec{\omega}^*  =  \tilde{\omega}^* + \delta \omega.
\end{align*}
Inserting into $\matr{B}^2$ gives 
\begin{align}
    \matr{B}^2 & = \matr{\mathfrak{F}}_{M}(\tilde{\omega}^* + \delta \omega )^\dagger \matr{\Lambda}(\vec{a})^2 \matr{\mathfrak{F}}_{M}(\tilde{\omega}^* + \delta \omega)  \label{eq:alternant_1 }\\
    & = \underbrace{\matr{\mathfrak{F}}_{M}(\tilde{\omega}^* )^\dagger \matr{\Lambda}(\vec{a})^2 \matr{\mathfrak{F}}_{M}(\tilde{\omega}^*)}_{:=  \tilde{\matr{B}}^2 } + \delta \matr{B}^2.  \label{eq:alternant_expansion }
\end{align}
The last equation follows from an expansion of the filter matrix $\matr{\mathfrak{F}}_{M}(\tilde{\omega}^* )$ 
through the mean value theorem: The elements of the Alternant matrix have
\begin{align}
    F_l(\tilde{\omega}_k^*+ \delta \omega_k)- F_l(\tilde{\omega}_k^*) = \int_{\tilde{\omega}_k^*}^{\tilde{\omega}_k^* +\delta \omega_k} F_l'(\omega) d\omega = \delta \omega_k F_l'(\tilde{\omega}_k^* +  \delta s_{lk}).
\end{align}
Here $\tilde{\omega}_k^* +  \delta s_{lk} \in [\tilde{\omega}_k^*, \tilde{\omega}_k^* +\delta \omega_k]$ and thus $|\delta s_{lk}| \leq |\delta \omega_k|$.
The Filter matrix $\matr{\mathfrak{F}}_{M}(\tilde{\omega}^* + \delta \omega )$
rewrites as
\begin{align*}
    \matr{\mathfrak{F}}_{M}(\tilde{\omega}^* + \delta \omega ) = \matr{\mathfrak{F}}_{M}(\tilde{\omega}^* ) + \matr{\Lambda}(\delta\omega) \matr{\mathfrak{F}}_{M}'(\tilde{\omega}^*, \delta s).
\end{align*}
Here $\matr{\mathfrak{F}}_{M}'(\tilde{\omega}^*, \delta s)$ is the matrix with entries $[F_l'(\tilde{\omega}_k^* +  \delta s_{lk})]_{kl}$.
Inserting this expansion of the Alternant matrix into (\ref{eq:alternant_1 }) proves (\ref{eq:alternant_expansion }) and has 
\begin{align*}
    \delta \matr{B}^2 =  \matr{\mathfrak{F}}_{M}'(\tilde{\omega}^*, \delta s)^\dagger \matr{\Lambda}(\delta\omega)  \matr{\Lambda}(\vec{a})^2 \matr{\mathfrak{F}}_{M}(\tilde{\omega}^*) + \matr{\mathfrak{F}}_{M}(\tilde{\omega}^*)^\dagger \matr{\Lambda}(\vec{a})^2 \matr{\Lambda}(\delta\omega) \matr{\mathfrak{F}}_{M}'(\tilde{\omega}^*, \delta s) + \\
    \matr{\mathfrak{F}}_{M}'(\tilde{\omega}^*, \delta s)^\dagger \matr{\Lambda}(\delta\omega)  \matr{\Lambda}(\vec{a})^2 \matr{\Lambda}(\delta\omega) \matr{\mathfrak{F}}_{M}'(\tilde{\omega}^*, \delta s).
\end{align*}
Since we assume complete Chebyshev filter system, $\matr{\mathfrak{F}}_{M}(\tilde{\omega}^*)$ has full row rank and in particular a
right inverse, denoted as $\matr{\mathfrak{F}}_{M}(\tilde{\omega}^*)^{-1}$.
Then $\matr{\mathfrak{F}}_{M}(\tilde{\omega}^*)^{-\dagger}$ is the left inverse of $\matr{\mathfrak{F}}_{M}(\tilde{\omega}^*)^{\dagger}$.
Equation (\ref{eq:V2_decomp}) rewrites as 
\begin{align*}
    \matr{\Lambda}(\vec{a})^2 - &\matr{\mathfrak{F}}_{M}(\tilde{\omega}^*)^{-\dagger} \matr{V}^2_M \matr{\mathfrak{F}}_{M}(\tilde{\omega}^*)^{-1} = \matr{\mathfrak{F}}_{M}(\tilde{\omega}^*)^{-\dagger} \matr{N}^2 \matr{\mathfrak{F}}_{M}(\tilde{\omega}^*)^{-1} + \\
   &  \matr{\mathfrak{F}}_{M}(\tilde{\omega}^*)^{-\dagger}  \matr{\mathfrak{F}}_{M}'(\tilde{\omega}^*, \delta s)^\dagger \matr{\Lambda}(\vec{a})^2 \matr{\Lambda}(\delta\omega)  + \matr{\Lambda}(\vec{a})^2 \matr{\Lambda}(\delta\omega) \matr{\mathfrak{F}}_{M}'(\tilde{\omega}^*, \delta s) \matr{\mathfrak{F}}_{M}(\tilde{\omega}^*)^{-1}  + \\
    &\matr{\mathfrak{F}}_{M}(\tilde{\omega}^*)^{-\dagger}  \matr{\mathfrak{F}}_{M}'(\tilde{\omega}^*, \delta s)^\dagger  \matr{\Lambda}(\vec{a})^2 \matr{\Lambda}(\delta\omega)^2   \matr{\mathfrak{F}}_{M}'(\tilde{\omega}^*, \delta s) \matr{\mathfrak{F}}_{M}(\tilde{\omega}^*)^{-1}.
\end{align*}
We used that diagonal matrices commute.
We left and right multiply the equation with $e_k^\dagger$ and $e_k$. Taking the absolute value of the resulting equation and applying the 
triangle inequality on the right-hand side gives
\begin{align*}
    \big| |a_k|^2 - & \left(\matr{\mathfrak{F}}_{M}(\tilde{\omega}^*)^{-\dagger} \matr{V}^2_M \matr{\mathfrak{F}}_{M}(\tilde{\omega}^*)^{-1} \right)_{kk} \big| \leq \lambda_1(\matr{N}^2_M) \left(\matr{\mathfrak{F}}_{M}(\tilde{\omega}^*)^{-\dagger} \matr{\mathfrak{F}}_{M}(\tilde{\omega}^*)^{-1} \right)_{kk} \\
    & + \delta \omega_k 2 |a_k|^2  \left| \left(\matr{\mathfrak{F}}_{M}'(\tilde{\omega}^*, \delta s) \matr{\mathfrak{F}}_{M}(\tilde{\omega}^*)^{-1}\right)_{kk}\right| + \lambda_1(\matr{\Lambda}(\vec{a})^2 \matr{\Lambda}(\delta\omega)^2) \left(  \left(\matr{\mathfrak{F}}_{M}'(\tilde{\omega}^*, \delta s) \matr{\mathfrak{F}}_{M}(\tilde{\omega}^*)^{-1} \right)^2\right)_{kk}.
\end{align*}
In the first and third term 
\begin{align*}
    \left| x^\dagger \matr{A}^\dagger \matr{B}^2 \matr{A} x \right| & \leq \lambda_1(\matr{B}^2) \left| x^\dagger \matr{A}^\dagger \matr{A} x \right|,
\end{align*}
was used. 
The third perturbation parameter $\lambda_1(\matr{\Lambda}(\vec{a})^2 \matr{\Lambda}(\delta\omega)^2) $ returns the largest error 
in frequency approximation weighted by the amplitude it corresponds to. \\

\noindent We derived that 
\begin{align}
    |\tilde{a}_k|^{2} = \left(\matr{\mathfrak{F}}_{M}(\tilde{\omega}^*)^{-\dagger} \matr{V}^2_M \matr{\mathfrak{F}}_{M}(\tilde{\omega}^*)^{-1} \right)_{kk}  \label{eq:amplitude_approximation2}
\end{align}
forms indeed good approximation for the true amplitude $|a_k|^2$. For 
algorithmic implementations the ordering of the obtained frequencies $\tilde \omega =(\tilde \omega_1, \cdots, \tilde \omega_m )^\top $ must be honored when matched to 
the amplitudes $|\tilde a|^2 = (|\tilde a_1|^2, \cdots, |\tilde a_m|^2)^\top$.\\
Equation (\ref{eq:amplitude_approximation2}) can be rewritten in matrix form, 
\begin{align*}
    \matr{\Lambda}(|\tilde{a}_k|^2) = \operatorname*{diag}\left(\matr{\mathfrak{F}_{M}}(\tilde{\omega}^*)^{-\dagger} \matr{V}^2 \matr{\mathfrak{F}_{M}}(\tilde{\omega}^*)^{-1} \right) .
\end{align*}
The following statement has been established.

\begin{proposition}
    \label{prop:filter_diagonalization_amplitude_precision}
    Consider approximated amplitudes obtained through Algorithm \ref{alg:2}
    \begin{align*} 
        |\tilde a_k |^{2} = ( \matr{\mathfrak{F}}_{M}(\tilde{\omega}^*)^{-\dagger}\matr{V}^2_M \matr{\mathfrak{F}}_{M}(\tilde{\omega}^*)^{-1})_{kk},
    \end{align*}
    and let $ \delta \omega =  \vec{\omega} - \tilde{\omega}$ be the vector of errors in the approximated frequencies.
    Then the following
    precision guarantees hold
    \begin{align}
        \left||a_k|^2 - |\tilde a_k|^2 \right| \leq  & \lambda_1(\matr{N}_M^2) \left( \matr{\mathfrak{F}}_{M}(\tilde{\omega}^*)^{-2}\right)_{kk}
        + 2\delta\omega_k |a_k|^2  \left| \left(\matr{\mathfrak{F}}_{M}'(\tilde{\omega}^*, \delta s) \matr{\mathfrak{F}}_{M}(\tilde{\omega}^*)^{-1}\right)_{kk}\right|   \notag \\
         & + \sup_l \left(|a_l|^2 \delta\omega_l^2 \right)  \left(\left(\matr{\mathfrak{F}}_{M}'(\tilde{\omega}^*, \delta s) \matr{\mathfrak{F}}_{M}(\tilde{\omega}^*)^{-1}\right)^2 \right)_{kk} .   \label{eq:amplitude_precision}
    \end{align}
    Here $\delta s \in \mathbb{R}^{m \times M}$ is a matrix with entries $|(\delta s)_{kl} - \delta \omega_k| \leq |\delta \omega_k|$,
    $\matr{\mathfrak{F}}_{M}'(\tilde{\omega}^*, \delta s)$ is a matrix with entries $[F_l'(\tilde{\omega}_k^* +  \delta s_{kl})]_{kl}$ and $ \matr{\mathfrak{F}}_{M}(\tilde{\omega}^*)^{-1}$ is 
    the right inverse.
\end{proposition}

Depending on the precision requirements, the matrix $\matr{\mathfrak{F}}_{M}'(\tilde{\omega}^*, \delta s)$ can be explicitly approximated 
by setting $\delta s =0$. This approximation is justified, if the error in the frequency approximations $\delta \omega$ can be guaranteed to be small
through the precision bounds of Corollary \ref{cor:filter_diagonalization_precision}.\\


One could also further estimate
\begin{align*}
    \left(\left(\matr{\mathfrak{F}}_{M}'(\tilde{\omega}^*, \delta s) \matr{\mathfrak{F}}_{M}(\tilde{\omega}^*)^{-1}\right)^2 \right)_{kk} \leq \lambda_1(\matr{\mathfrak{F}}_{M}'(\tilde{\omega}^*, \delta s)^2)  \left( \matr{\mathfrak{F}}_{M}(\tilde{\omega}^*)^{-2}\right)_{kk}.
\end{align*}
The importance of $\left( \matr{\mathfrak{F}}_{M}(\tilde{\omega}^*)^{-2}\right)_{kk}$ in the precision guarantees of the amplitudes
can also be motivated from the \emph{detectability} of a frequency in a signal.

\subsection{Detectability of Frequencies}
\label{sec:detectability_frequencies}

Consider the example of exact frequencies $\vec{\omega}^*$ feed into the filter system $\matr{\mathfrak{F}}_{M}(\vec{\omega}^*)$.
How well can an eigenvector $|\varphi_k\rangle$ of the hidden Hamiltonian be represented in the filter system?
We denote with $E_{\text{info}}(k) = \|\matr{\mathfrak{F}}_{M}(\vec{\omega}^*) x_k  \|^2$ the 
signal energy that a respesenation $x_k$ of the hidden eigenvector $|\varphi_k \rangle$ causes through the filter system.
Analogously we write $E_{\text{input}}(k) = \|x_k\|^2$ and ask 
for the relation between $E_{\text{info}}(k)$ and $E_{\text{input}}(k)$.\\

\noindent Let $ |\varphi_k\rangle$ be normalized and $x_k$ be scaled such that,
\begin{align}
 |\varphi_k\rangle = \matr{\mathfrak{F}}_{M}(\vec{\omega}^*) x_k.  \label{eq:signal_enegy}
\end{align}
Left multiplying with $\matr{\mathfrak{F}}_{M}(\vec{\omega}^*)^{-1}$ gives
\begin{align*}
 \matr{\mathfrak{F}}_{M}(\vec{\omega}^*)^{-1} |\varphi_k\rangle =  \matr{\mathfrak{F}}_{M}(\vec{\omega}^*)^{-1}  \matr{\mathfrak{F}}_{M}(\vec{\omega}^*)  x_k = x_k.
\end{align*}
Here we assumed wlog $x_k \in \operatorname*{Ker}(\matr{\mathfrak{F}}_{M}(\vec{\omega}^*))^\perp$.
Therefore,
\begin{align*}
    \|x_k\|^2 =  \langle \varphi_k | \matr{\mathfrak{F}}_{M}(\vec{\omega}^*)^{-\dagger}  \matr{\mathfrak{F}}_{M}(\vec{\omega}^*)^{-1} |\varphi_k\rangle\\
 = \left( \matr{\mathfrak{F}}_{M}(\vec{\omega}^*)^{-\dagger} \matr{\mathfrak{F}}_{M}(\vec{\omega}^*)^{-1} \right)_{kk}.
\end{align*}
We have thus derived the scaling factor between the input and output energy of the filter system,
\begin{align*}
 E_{\text{info}}(k) = \frac{E_{\text{input}}(k)}{\left(\matr{\mathfrak{F}}_{M}(\vec{\omega}^*)^{-2} \right)_{kk} }.
\end{align*}
Thus, $\left(\matr{\mathfrak{F}}_{M}(\vec{\omega}^*)^{-2} \right)_{kk}$ captures the signal amplification of the frequency $\omega_k$ 
through the filter system. 
By taking the square of (\ref{eq:signal_enegy}), we see that a low
value for $\left(\matr{\mathfrak{F}}_{M}(\vec{\omega}^*)^{-2} \right)_{kk}$ is equivalent to a high Rayleigh quotient of the eigenvector representation $x_k$
with respect to $\matr{\mathfrak{F}}_{M}(\vec{\omega}^*)^2$. \\

We call $\left(\matr{\mathfrak{F}}_{M}(\vec{\omega}^*)^{-2} \right)_{kk}^{-1} $ the \emph{detectability} of the frequency $\omega_k$ through the filter system.
Note that the detectability of a frequency is a collective property, as the value of $\left(\matr{\mathfrak{F}}_{M}(\vec{\omega}^*)^{-2} \right)_{kk}^{-1} $ also depends on the other frequencies in the signal.
We formally still need to justify that a high value for $\left(\matr{\mathfrak{F}}_{M}(\vec{\omega}^*)^{-2} \right)_{kk}^{-1} $ indeed results in a good detectability.\\

For a frequency $\omega_k$ to be detectable though Filter Diagonalization and Algorithm \ref{alg:1}, we need its signal energy to be larger than the noise energy $\varepsilon_M$. 
Weyl's inequality gives
\begin{align*}
    \lambda_{m}(\matr{V}_M^2) \geq \lambda_{m}(\matr{B}_M^2) + \lambda_M(\matr{N}_M^2) \geq  \lambda_{m}(\matr{B}_M^2).
\end{align*}
Thus, a sufficient condition to ensure that Algorithm \ref{alg:1} detects the true dimensionality is 
that signal is stronger than the noise, $\lambda_{m}(\matr{B}_M^2) \geq \varepsilon_M$. \\

\noindent The representations $x_k$
of the hidden eigenvectors are orthogonal in the inner product induced by the Alternant matrix:
Wlog we normalize $x_k$ such that, 
\begin{align*}
 \matr{B}_M x_k =  a_k |\varphi_k\rangle,  \implies \matr{\Lambda}(\vec{a})_M\matr{\mathfrak{F}}_{M}(\vec{\omega}^*)   x_k = a_k e_k.
\end{align*}
In particular, 
\begin{align*}
 \matr{\mathfrak{F}}_{M}(\vec{\omega}^*)   x_k = e_k.
\end{align*}
The representations of the hidden eigenvectors span the vector space since we assume, that the filter system is a complete Chebyshev system.
As just derived, the vectors $x_k$ satisfy the orthogonality relation 
\begin{align}
    \langle x_l,  \matr{B}^2 x_k \rangle = \langle x_l ,\matr{\mathfrak{F}}_{M}(\vec{\omega}^*)^2 x_k \rangle =0 
\end{align}
for $k \neq l$.
In particular, through the variational principle and an expansion in the $\{ x_k\}$ basis we find,
\begin{align*}
    \lambda_{m}(\matr{B}_M^2) & = \min_{k } \frac{|a_k|^2 \langle x_k  ,\matr{\mathfrak{F}}_{M}(\vec{\omega}^*)^2 x_k \rangle}{ \langle x_k, x_k \rangle} \\
        & = \min_{k } \frac{|a_k|^2}{\left(\matr{\mathfrak{F}}_{M}(\vec{\omega}^*)^{-2} \right)_{kk}}.
\end{align*}
Thus, if $\left(\matr{\mathfrak{F}}_{M}(\vec{\omega}^*)^{-2} \right)_{kk}^{-1}$ is large, a small amplitude $|a_k|^2$ is sufficient to ensure that the 
frequency can be detected despite the presence of noise.\\

\newpage

\section{Prolate Filter Diagonalization}
\label{sec:ProlateFilterDiag}

In Section \ref{sec:FilterDiag_eps_approx_solver} we have developed an approximation theory for Filter Diagonalization,
that comes along with precision guarantees for the approximated frequencies and amplitudes, and that can be applied for various different 
choices of filter systems. The natural remaining question is to ask for the optimal filter system.

\subsection{Prolates are the Optimal Filter System}

One way to formalize the search for optimal filter functions is to seek the $T$-time limited function that is most concentrated in the $\bw$-band.
This is the familiar optimization problem we solved in Section \ref{sec:Fourier_concentration_time_limited},
and it represents just one of the many paths that lead to the discovery of Prolate Fourier Theory.\\

As eigenfunctions of an integral operator with a totally positive kernel, prolates inherit strong independence properties.\footnote{According to Bochner's theorem, Fourier transforms of probability measures are totally positive (Theorem 23 in \cite{Bochner1932}),
Specifically, $\rho_\bw = \FT^{-1}[\chi_\bw]$ is totally positive.} Eigenfunctions of integral operators with totally positive kernels form complete Chebyshev systems \cite{Karlin1966}.
Consequently prolates defined by $\BL_T \TL_\bw \tilde \prlt_n = \gamma_n \tilde \prlt_n$ from a complete Chebyshev system in $[-\bw, \bw]$.\footnote{
    Alternatively the Chebyshev property can also be 
    seen from their characterization as eigenfunctions to a Sturm-Liouville operator.}\\

Thus, the optimal filter systems are provided by the prolate spheroidal wave functions.
In order to extract frequencies from a signal given in $[-2T, 2T]$ with a band of length $2\bw$ 
we choose a $\bw T$-prolate sequence $\TL_T  \BL_\bw \TL_T \prlt_n = \gamma_n \TL_T \prlt_n$ as filter functions,
\begin{align*}
    \{f_l\}_{l=0}^\infty = \{ \TL_T \prlt_l\}_{l=0}^\infty , \quad \{F_l\}_{l=0}^\infty  = \{ \conj{\mu}_l \tprlt_l\}_{l=0}^\infty.
\end{align*}
Here  $\FT[ \TL_T \prlt_n]= \conj{\mu}_n \tprlt_n$ from Fact \ref{fact:prolate_FT} was used. 
This choice of filer functions has,
\begin{align*}
    \| f_n \|^2 = \|\TL_T \prlt_n\|^2 = \gamma_n.
\end{align*}
Therefore, the contribution of the prolate filter function to the Gram matrix $\matr{V}^2$, is weighted by its concentration in the $\bw$-band.
We will from now on refer to Filter Diagonalization through a filter system given by prolates as \emph{Prolate Filter Diagonalization}.
All that is left, is to access the precision guarantees of Section \ref{sec:FilterDiag_eps_approx_solver}
and to analyze the Filter Envelope $\mathfrak{F}^{\text{env}}_M(\omega)$ for the specific case of a prolate filter system.

\subsection{The Prolate Filter Envelope}
\label{sec:prolate_filter_envelope}

For the prolate filter envelope we have 

\begin{align*}
    \mathfrak{F}^{\text{env}}_M(\omega) = \sum_{l=0}^{M-1} |\conj{\mu}_l \tprlt_l(\omega)|^2 = 2 \pi \sum_{l=0}^{M-1} \gamma_n|\tprlt_l(\omega)|^2
\end{align*}

In Chapter \ref{chap:prlt_bound} we have derived a supremum bound for prolates outside of their concentration interval. 
Theorem \ref{thm:prlt_bound} gives for the prolate filter envelope
\begin{align*}
    \sup_{\omega \not \in  [-\bw, \bw]} \mathfrak{F}^{\text{env}}_M(\omega) & =  2 \pi  \sup_{\omega \not \in  [-\bw, \bw]} \sum_{l=0}^{M-1}  \gamma_l |\tprlt_l(\omega)|^2\\
    & \leq    2 \pi  \sum_{l=0}^{M-1}  \gamma_l(1-\gamma_l) C_{\text{extra},l}. 
\end{align*}
By (\ref{eq:lambda_neg_C_estimate}) we can estimate $C_{\text{extra},l}$ for $l < 2\bw T/ \pi$ as,
\begin{align*}
    C_{\text{extra},l} & \leq \sqrt{\|\prlt'\|_\infty^2 + \frac{\lambda_n^2}{4T^2}} - \frac{\lambda_n}{2T}\\
                    & \leq \sqrt{\bw^2  + \frac{\lambda_n^2}{4T^2}} - \frac{\lambda_n}{2T}.
\end{align*}
The numerical evaluation of prolates and the eigenvalues $\gamma_n$ was long subject to instabilities due 
to catastrophic cancellations. However, the problem has been identified and resolved in \cite{Buren2002AccurateCO} and recently also 
came along with published code. Similarly, the initial method to compute the eigenvalues $\lambda_n$ through Bouwkamps method 
was cumbersome and has been long replaced by more efficient methods \cite{hodge_eigenvalues_1970}.\footnote{However, it seems 
that not everyone was aware of \cite{hodge_eigenvalues_1970} and similar methods have been rediscovered in the literature, e.g. \cite{MiyazakiNUMERICALCO}.}\\

\noindent Overall we have for the $\varepsilon_M$-estimate of prolate filter diagonalization,
\begin{align}
    \operatorname*{Tr}[\matr{N}^2_M] & < C(0) \sup_{\omega \not \in  [-\bw, \bw]} \mathfrak{F}^{\text{env}}_M(\omega)  \leq 2\pi C(0) \sum_{l=0}^{M-1}  \gamma_l(1-\gamma_l) C_{\text{extra},l}  \notag \\
    & <  2\pi C(0) \sum_{l=0}^{M-1}  \gamma_l(1-\gamma_l) \left( \sqrt{\bw^2  + \frac{\lambda_n^2}{4T^2}} - \frac{\lambda_n}{2T}\right).  \label{eq:eps_computer}
\end{align}
The $\varepsilon_M$-estimate in line (\ref{eq:eps_computer}) can be efficiently computed by subroutines. However, we also include here 
an asymptotic estimate for the prolate envelope,
\begin{align}
    \sup_{\omega \not \in  [-\bw, \bw]} \mathfrak{F}^{\text{env}}_M(\omega) <  M \bw \frac{ \pi^{\frac{3}{2}} 2^{3 M} c^{M+\frac{1}{2}} e^{-2 c}}{(M-1)!}\left[1-\frac{6 M^2+ 50M  -21 }{32 c}+O\left(c^{-2}\right)\right].   \label{eq:asymptotic_prlt_env}
\end{align}
Equation (\ref{eq:asymptotic_prlt_env}) follows from the asymptotic expressions given in Section \ref{sec:asymptotic_bounds}. 
The derivation is given in Section \ref{sec:asymptotic_prlt_env}.
Inserting the bound on the prolate envelope (\ref{eq:asymptotic_prlt_env}) into the estimates of Corollary \ref{cor:filter_diagonalization_precision}
gives the precision guarantees for Prolate Filter Diagonalization,
\begin{align}
    \tilde \varepsilon_M  \frac{\sum\limits_{\omega_l < \omega^* - \bw } (\omega_l - \omega_k)|a_l|^2 }{\lambda_m(\matr{V}^2_M) - \varepsilon_M}   \leq \tilde \omega_k -\omega_k \leq \tilde \varepsilon_M \frac{\sum\limits_{\omega_l > \omega^* + \bw } (\omega_l - \omega_k)|a_l|^2 }{\lambda_m(\matr{V}^2_M) - \varepsilon_M},
\end{align}
with $\tilde \varepsilon_M = 2\pi \sum_{l=0}^{M-1}  \gamma_l(1-\gamma_l) C_{\text{extra},l}$. In particular,  $\tilde \varepsilon_M$ is very close to $0$ for $M \ll \bw T$.\\

Theorem \ref{thm:PFD_precision} is now finally established and is based on our main results from Chapter \ref{chap:prlt_bound} and \ref{chap:DimRedSepAna}. 
We are not aware of any other method for identifying the frequencies of a signal, given only a finite observation time, that offers a similarly strong precision guarantee.

\subsection{Conclusion and Outlook}
In this chapter, we have formulated Filter Diagonalization as an approximation theory, that can access 
precision guarantees for the approximated frequencies and amplitudes. 
We believe that Prolate Filter Diagonalization is currently among the most accurate methods available for identifying frequencies of a signal given a finite observation time.
The precision guarantee given in Theorem \ref{thm:PFD_precision} for PFD is characterized by fundamental approximation parameters $1-\gamma_n$ 
that describe optimal concentration properties of the filter functions.\\

For a nearly perfect precision, 
in the approximated frequencies, the essential requirement is that the number of frequencies $m$ in each scanned frequency band $B_{\omega^*, \bw}$ has 
$m \ll 2 \bw T /\pi $. The computational cost to then determine $N$ frequencies through PFD only scales as $O(N m^2)$.
The ability of the method to face intricate scenarios, that involve frequencies that only have small amplitudes or are particularly close to each other,
can be analyzed through the frequency detectability of a filter system, which was introduced in Section \ref{sec:detectability_frequencies}.\\

In future work, we will include an uncertainty quantification of the method that applies the prolate sampling formulas 
to evaluate the GEP integrals. Further generalizations of the method, to analyze signals with continuous spectra, are of 
particular interest and will be considered. \\

Finally, we would like to emphasize the theoretical value of Filter Diagonalization. The method is based on a symmetry between harmonic analysis 
and quantum mechanics. While these fields already have a long history of mutual influence, Filter Diagonalization gave a new perspective that is 
of particular computational power. The quantum mechanical roots of the method turn out to be a mere mathematical auxiliary, which in itself is at least of 
pedagogical interest for the foundations of quantum mechanics.\\

\newpage 

\section{Appendix}
\subsection*{Asymptotic Estimate for the Prolate Filter Envelope}
\label{sec:asymptotic_prlt_env}

\begin{proof}[Proof of Equation \ref{eq:asymptotic_prlt_env}]
    In Section \ref{sec:asymptotic_bounds} we have derived
    \begin{align}
        (1-\gamma_n) C_{\text{extra},n} = \bw \frac{4 \sqrt{\pi} 2^{3 n} c^{n+\frac{3}{2}} e^{-2 c}}{n!}\left[1-\frac{6 n^2+ 62n +35 }{32 c}+O\left(c^{-2}\right)\right], \label{eq:sup_extra_asympt2}
    \end{align}
    Using Theorem \ref{thm:prlt_bound}, we can establish an asymptotic bound for the prolate filter envelope,
    \begin{align*}
        \sup_{\omega \not \in  [-\bw, \bw]} \mathfrak{F}^{\text{env}}_M(\omega) & \leq 2\pi  \sum_{l=0}^{M-1}  \gamma_l (1-\gamma_n) C_{\text{extra},n}  <  2\pi \sum_{l=0}^{M-1}  (1-\gamma_n) C_{\text{extra},n} \\
        & < 2\pi \bw  M  (1-\gamma_{M-1}) C_{\text{extra},M-1} \\
        & = 2\pi M \bw \frac{4 \sqrt{\pi} 2^{3 (M-1)} c^{M+\frac{1}{2}} e^{-2 c}}{(M-1)!}\left[1-\frac{6 (M-1)^2+ 62(M-1) +35 }{32 c}+O\left(c^{-2}\right)\right] 
    \end{align*}
    In the second line, we used the fact that the sequence $(1-\gamma_{l}) C_{\text{extra},l}$ is monotonically increasing in $l$ for $l \leq 2c/ \pi$ in leading asymptotic order, as can be
    seen from (\ref{eq:sup_extra_asympt2}).
\end{proof}

\newpage 

\section*{Postface}
\addcontentsline{toc}{section}{Postface}

In the final chapter of this thesis we have presented a novel protocol to determine the frequencies of a signal $C(t)= \sum_k |a_k|^2 e^{i \omega_k t}$, that is only given for a finite amount of time
and that comes along with a fundamental precision guarantee. 
The protocol does not require expensive hardware, is computationally efficient, and is straightforward to implement.\\

Thus Prolate Filter Diagonalization can be readily applied to address many real-world problems, such as evaluation of experimental data, signal processing, 
numerical and theoretical analysis and many more. \\

Addressing this applied scenario has been the central focus of this thesis. After reflecting on the journey 
that would be required to understand optimal approximation in this context, we do feel greatly excited about the 
 theoretical work of addressing applied problems. It has become evident that a cross-disciplinary 
 approach, rather than isolation within a single field, is essential to succeed in real-world challenges.\\

Approximation theory represents a thrilling research field, and we are looking forward to the further development of the techniques presented 
here and their applications.

\printbibliography

@article{Baiardi_2021,
  doi       = {10.1021/acs.jctc.1c00984},
  url       = {https://doi.org/10.1021%2Facs.jctc.1c00984},
  year      = 2021,
  month     = {dec},
  publisher = {American Chemical Society ({ACS})},
  volume    = {18},
  number    = {1},
  pages     = {415--430},
  author    = {Alberto Baiardi and Anna Kl{\'{a}
               }ra Kelemen and Markus Reiher},
  title     = {Excited-State {DMRG} Made Simple with {FEAST}},
  journal   = {Journal of Chemical Theory and Computation}
}

@article{Buren2002AccurateCO,
  title   = {Accurate calculation of prolate spheroidal radial functions of the first kind and their first derivatives},
  author  = {Arnie Lee Van Buren and Jeffrey E. Boisvert},
  journal = {Quarterly of Applied Mathematics},
  year    = {2002},
  volume  = {60},
  pages   = {589-599},
  url     = {https://api.semanticscholar.org/CorpusID:6564966}
}

@article{LEVITINA20091448,
title = {Filter diagonalization: Filtering and postprocessing with prolates},
journal = {Computer Physics Communications},
volume = {180},
number = {9},
pages = {1448-1457},
year = {2009},
issn = {0010-4655},
doi = {https://doi.org/10.1016/j.cpc.2009.03.003},
url = {https://www.sciencedirect.com/science/article/pii/S0010465509000861},
author = {T. Levitina and E.J. Brändas},
keywords = {Filter diagonalization, Finite Fourier transform, Prolates, Sampling},
abstract = {A detailed account is given of a recent modification of the Filter Diagonalization technique that serves to analyze a signal spectrum within a selected energy range. Our approach employs for filtering the eigenfunctions of the Finite Fourier Transform, or prolates, which are superior to other filters due to their special properties. In particular, prolates are simultaneously band-limited and highly concentrated at a finite time-interval, producing filters with optimal accuracy. In addition both features are acquired by the convolution of a band-limited function with a prolate, that permits the latter to be interpolated via the Walter and Shen sampling formula, which essentially simplifies the supplementary computations. Rigorous filtering error estimates are obtained. Test calculations illustrate the facilities of the presented modification.}
}

@article{Mandelshtam,
  author  = {Mandelshtam, Vladimir},
  year    = {2001},
  month   = {03},
  pages   = {},
  title   = {FDM: The Filter Diagonalization Method for Data Processing in NMR Experiments},
  volume  = {38},
  journal = {Progress in Nuclear Magnetic Resonance Spectroscopy - PROG NUCL MAGN RESON SPECTROS},
  doi     = {10.1016/S0079-6565(00)00032-7}
}

@article{Neuhauser1990BoundSE,
  title   = {Bound state eigenfunctions from wave packets: Time→energy resolution},
  author  = {Daniel Neuhauser},
  journal = {Journal of Chemical Physics},
  year    = {1990},
  volume  = {93},
  pages   = {2611-2616},
  url     = {https://api.semanticscholar.org/CorpusID:95273118}
}

@ARTICLE{ProI,
  author={Slepian, D. and Pollak, H. O.},
  journal={The Bell System Technical Journal}, 
  title={Prolate spheroidal wave functions, fourier analysis and uncertainty — I}, 
  year={1961},
  volume={40},
  number={1},
  pages={43-63},
  doi={10.1002/j.1538-7305.1961.tb03976.x}}

@article{ProII,
  author  = {Landau, H. J. and Pollak, H. O.},
  journal = {The Bell System Technical Journal},
  title   = {Prolate spheroidal wave functions, fourier analysis and uncertainty — II},
  year    = {1961},
  volume  = {40},
  number  = {1},
  pages   = {65-84},
  doi     = {10.1002/j.1538-7305.1961.tb03977.x}
}

@misc{ghojogh2023eigenvalue,
      title={Eigenvalue and Generalized Eigenvalue Problems: Tutorial}, 
      author={Benyamin Ghojogh and Fakhri Karray and Mark Crowley},
      year={2023},
      eprint={1903.11240},
      archivePrefix={arXiv},
      primaryClass={stat.ML}
}

@article{barshan_supervised_2011,
	title = {Supervised principal component analysis: {Visualization}, classification and regression on subspaces and submanifolds},
	volume = {44},
	copyright = {https://www.elsevier.com/tdm/userlicense/1.0/},
	issn = {00313203},
	shorttitle = {Supervised principal component analysis},
	url = {https://linkinghub.elsevier.com/retrieve/pii/S0031320310005819},
	doi = {10.1016/j.patcog.2010.12.015},
	abstract = {We propose “Supervised Principal Component Analysis (Supervised PCA)”, a generalization of PCA that is uniquely eﬀective for regression and classiﬁcation problems with high-dimensional input data. It works by estimating a sequence of principal components that have maximal dependence on the response variable. The proposed Supervised PCA is solvable in closed-form, and has a dual formulation that signiﬁcantly reduces the computational complexity of problems in which the number of predictors greatly exceeds the number of observations (such as DNA microarray experiments). Furthermore, we show how the algorithm can be kernelized, which makes it applicable to non-linear dimensionality reduction tasks. Experimental results on various visualization, classiﬁcation and regression problems show significant improvement over other supervised approaches both in accuracy and computational eﬃciency.},
	language = {en},
	number = {7},
	urldate = {2024-06-15},
	journal = {Pattern Recognition},
	author = {Barshan, Elnaz and Ghodsi, Ali and Azimifar, Zohreh and Zolghadri Jahromi, Mansoor},
	month = jul,
	year = {2011},
	pages = {1357--1371},
	file = {Barshan et al. - 2011 - Supervised principal component analysis Visualiza.pdf:/Users/timostroschein/Zotero/storage/668YG5XY/Barshan et al. - 2011 - Supervised principal component analysis Visualiza.pdf:application/pdf},
}

@incollection{stewart_perturbation_1978,
	title = {Perturbation {Theory} for the {Generalized} {Eigenvalue} {Problem}},
	copyright = {https://www.elsevier.com/tdm/userlicense/1.0/},
	isbn = {978-0-12-208360-0},
	url = {https://linkinghub.elsevier.com/retrieve/pii/B9780122083600500152},
	language = {en},
	urldate = {2024-06-27},
	booktitle = {Recent {Advances} in {Numerical} {Analysis}},
	publisher = {Elsevier},
	author = {Stewart, G.W.},
	year = {1978},
	doi = {10.1016/B978-0-12-208360-0.50015-2},
	pages = {193--206},
	file = {Stewart - 1978 - Perturbation Theory for the Generalized Eigenvalue.pdf:/Users/timostroschein/Zotero/storage/64ZSPZSN/Stewart - 1978 - Perturbation Theory for the Generalized Eigenvalue.pdf:application/pdf},
}

@article{elsner_perturbation_1982,
	title = {Perturbation thèorems for the generalized eigenvalue problem},
	volume = {48},
	issn = {0024-3795},
	url = {https://www.sciencedirect.com/science/article/pii/0024379582901203},
	doi = {https://doi.org/10.1016/0024-3795(82)90120-3},
	abstract = {Given a matrix pair Z = (A, B), the perturbation of its eigenvalues (α, β) is studied. Considering two pairs Z, W as points of the Grassman manifold Gn, 2n and its eigenvalues as points in G1, 2, the projective complex plane, the distance of the spectra, measured in the chordal metric in G1, 2, is bounded by some distance of the matrix pairs in Gn, 2n. Analogs of the Bauer-Fike theorem, Henrici's theorem, and the Hoffman-Weilandt theorem are obtained, from which the “classical” results can be derived.},
	journal = {Linear Algebra and its Applications},
	author = {Elsner, Ludwig and Sun, Ji-guang},
	year = {1982},
	pages = {341--357},
}

@ARTICLE{GenLDA,
  author={Ji, Shuiwang and Ye, Jieping},
  journal={IEEE Transactions on Neural Networks}, 
  title={Generalized Linear Discriminant Analysis: A Unified Framework and Efficient Model Selection}, 
  year={2008},
  volume={19},
  number={10},
  pages={1768-1782},
  keywords={Linear discriminant analysis;Principal component analysis;Gene expression;Scattering;Data mining;Data visualization;Noise reduction;Computer science;Genomics;Bioinformatics;Dimensionality reduction;linear discriminant analysis (LDA);model selection;principal component analysis (PCA);regularization;visualization},
  doi={10.1109/TNN.2008.2002078}}

@article{Zuev,
author = {Zuev, Dmitry and Vecharynski, Eugene and Yang, Chao and Orms, Natalie and Krylov, Anna I.},
title = {New algorithms for iterative matrix-free eigensolvers in quantum chemistry},
journal = {Journal of Computational Chemistry},
volume = {36},
number = {5},
pages = {273-284},
keywords = {diagonalization algorithms, interior eigenstates, eigensolvers, equation-of-motion, coupled-cluster, excited states, harmonic Ritz problem},
doi = {https://doi.org/10.1002/jcc.23800},
url = {https://onlinelibrary.wiley.com/doi/abs/10.1002/jcc.23800},
eprint = {https://onlinelibrary.wiley.com/doi/pdf/10.1002/jcc.23800},
abstract = {New algorithms for iterative diagonalization procedures that solve for a small set of eigen-states of a large matrix are described. The performance of the algorithms is illustrated by calculations of low and high-lying ionized and electronically excited states using equation-of-motion coupled-cluster methods with single and double substitutions (EOM-IP-CCSD and EOM-EE-CCSD). We present two algorithms suitable for calculating excited states that are close to a specified energy shift (interior eigenvalues). One solver is based on the Davidson algorithm, a diagonalization procedure commonly used in quantum-chemical calculations. The second is a recently developed solver, called the “Generalized Preconditioned Locally Harmonic Residual (GPLHR) method.” We also present a modification of the Davidson procedure that allows one to solve for a specific transition. The details of the algorithms, their computational scaling, and memory requirements are described. The new algorithms are implemented within the EOM-CC suite of methods in the Q-Chem electronic structure program. © 2014 Wiley Periodicals, Inc.},
year = {2015}
}

@article{Moyal_1949, title={Quantum mechanics as a statistical theory}, volume={45}, DOI={10.1017/S0305004100000487}, number={1}, journal={Mathematical Proceedings of the Cambridge Philosophical Society}, author={Moyal, J. E.}, year={1949}, pages={99–124}}

@inproceedings{Simon1978AnOO,
  title={An Overview of Rigorous Scattering Theory},
  author={Barry Simon},
  year={1978},
  url={https://api.semanticscholar.org/CorpusID:16913591}
}

@article{walter_sampling_2003,
	title = {Sampling {With} {Prolate} {Spheroidal} {Wave} {Functions}},
	volume = {2},
	issn = {1530-6429},
	url = {https://link.springer.com/10.1007/BF03549384},
	doi = {10.1007/BF03549384},
	abstract = {The prolate spheroidal wave functions (PSWFs) are used in sampling of bandlimited signals. Several formulae based on integer values of these PSWFs are derived and used to replace the sinc function in sampling theorems. They are also used to construct analysis and synthesis ﬁlter banks for sampled values. A type of multiresolution analysis based on PSWFs similar to that of wavelet theory is shown to exist. Several numerical examples illustrating the theory are presented.},
	language = {en},
	number = {1},
	urldate = {2024-03-13},
	journal = {Sampling Theory in Signal and Image Processing},
	author = {Walter, Gilbert G. and Shen, Xiaoping A.},
	month = jan,
	year = {2003},
	pages = {25--52},
	file = {Walter and Shen - 2003 - Sampling With Prolate Spheroidal Wave Functions.pdf:/Users/timostroschein/Zotero/storage/KUPPVAIG/Walter and Shen - 2003 - Sampling With Prolate Spheroidal Wave Functions.pdf:application/pdf},
}

@book{Karlin1966,
author = {Karlin, Samuel and Studden, William J},
address = {New York},
booktitle = {Tchebycheff systems with applications in analysis and statistics},
keywords = {KONVEXE FUNKTIONEN (ANALYSIS)},
language = {eng},
lccn = {66013398},
publisher = {Interscience},
series = {Pure and applied mathematics vol. 15},
title = {Tchebycheff systems : with applications in analysis and statistics},
year = {1966 - 1966},
}

@article{MarkusReiher2017,
author = {Markus Reiher  and Nathan Wiebe  and Krysta M. Svore  and Dave Wecker  and Matthias Troyer },
title = {Elucidating reaction mechanisms on quantum computers},
journal = {Proceedings of the National Academy of Sciences},
volume = {114},
number = {29},
pages = {7555-7560},
year = {2017},
doi = {10.1073/pnas.1619152114},
URL = {https://www.pnas.org/doi/abs/10.1073/pnas.1619152114},
eprint = {https://www.pnas.org/doi/pdf/10.1073/pnas.1619152114},
abstract = {Our work addresses the question of compelling killer applications for quantum computers. Although quantum chemistry is a strong candidate, the lack of details of how quantum computers can be used for specific applications makes it difficult to assess whether they will be able to deliver on the promises. Here, we show how quantum computers can be used to elucidate the reaction mechanism for biological nitrogen fixation in nitrogenase, by augmenting classical calculation of reaction mechanisms with reliable estimates for relative and activation energies that are beyond the reach of traditional methods. We also show that, taking into account overheads of quantum error correction and gate synthesis, a modular architecture for parallel quantum computers can perform such calculations with components of reasonable complexity. With rapid recent advances in quantum technology, we are close to the threshold of quantum devices whose computational powers can exceed those of classical supercomputers. Here, we show that a quantum computer can be used to elucidate reaction mechanisms in complex chemical systems, using the open problem of biological nitrogen fixation in nitrogenase as an example. We discuss how quantum computers can augment classical computer simulations used to probe these reaction mechanisms, to significantly increase their accuracy and enable hitherto intractable simulations. Our resource estimates show that, even when taking into account the substantial overhead of quantum error correction, and the need to compile into discrete gate sets, the necessary computations can be performed in reasonable time on small quantum computers. Our results demonstrate that quantum computers will be able to tackle important problems in chemistry without requiring exorbitant resources.}}

@article{FORD1974337,
title = {The generalized eigenvalue problem in quantum chemistry},
journal = {Computer Physics Communications},
volume = {8},
number = {5},
pages = {337-348},
year = {1974},
issn = {0010-4655},
doi = {https://doi.org/10.1016/0010-4655(74)90011-3},
url = {https://www.sciencedirect.com/science/article/pii/0010465574900113},
author = {Brian Ford and George Hall},
abstract = {The matrix form of the molecular orbital equations is that of a generalized eigenvalue equation, when the basis functions are non-orthogonal. Five algorithms for the reduction of this equation to standard eigenvalue form are analysed and compared. The behaviour of the algorithms as the overlap matrix becomes singular is considered in detail and illustrated from the examples of two, three and four functions approaching coalescence. In particular, it is shown that the elements of the density matrix corresponding to the coalescing functions are large and almost entirely determined by the coalescence. The resulting effect on the total energy is to produce instability through large cancellations.}
}

@article{GEPQuant,
  title = {Variational quantum algorithm for generalized eigenvalue problems and its application to the finite-element method},
  author = {Sato, Yuki and Watanabe, Hiroshi C. and Raymond, Rudy and Kondo, Ruho and Wada, Kaito and Endo, Katsuhiro and Sugawara, Michihiko and Yamamoto, Naoki},
  journal = {Phys. Rev. A},
  volume = {108},
  issue = {2},
  pages = {022429},
  numpages = {16},
  year = {2023},
  month = {Aug},
  publisher = {American Physical Society},
  doi = {10.1103/PhysRevA.108.022429},
  url = {https://link.aps.org/doi/10.1103/PhysRevA.108.022429}
}

@article{SlepianAsymp,
author = {Slepian, David},
title = {Some Asymptotic Expansions for Prolate Spheroidal Wave Functions},
journal = {Journal of Mathematics and Physics},
volume = {44},
number = {1-4},
pages = {99-140},
doi = {https://doi.org/10.1002/sapm196544199},
url = {https://onlinelibrary.wiley.com/doi/abs/10.1002/sapm196544199},
eprint = {https://onlinelibrary.wiley.com/doi/pdf/10.1002/sapm196544199},
year = {1965}
}

@article{FUCHS1964317,
title = {On the eigenvalues of an integral equation arising in the theory of band-limited signals},
journal = {Journal of Mathematical Analysis and Applications},
volume = {9},
number = {3},
pages = {317-330},
year = {1964},
issn = {0022-247X},
doi = {https://doi.org/10.1016/0022-247X(64)90017-4},
url = {https://www.sciencedirect.com/science/article/pii/0022247X64900174},
author = {W.H.J Fuchs}
}

@ARTICLE{ProIII,
  author={Landau, H. J. and Pollak, H. O.},
  journal={The Bell System Technical Journal}, 
  title={Prolate spheroidal wave functions, fourier analysis and uncertainty — III: The dimension of the space of essentially time- and band-limited signals}, 
  year={1962},
  volume={41},
  number={4},
  pages={1295-1336},
  keywords={},
  doi={10.1002/j.1538-7305.1962.tb03279.x}}

@ARTICLE{OnBandwidth,
  author={Slepian, D.},
  journal={Proceedings of the IEEE}, 
  title={On bandwidth}, 
  year={1976},
  volume={64},
  number={3},
  pages={292-300},
  keywords={Bandwidth;Information theory;Wires;Signal resolution;Frequency;Mathematical model;Solids;Copper;Electromagnetic propagation;Optical propagation},
  doi={10.1109/PROC.1976.10110}}

@ARTICLE{NoiseCommunication,
  author={Shannon, C.E.},
  journal={Proceedings of the IRE}, 
  title={Communication in the Presence of Noise}, 
  year={1949},
  volume={37},
  number={1},
  pages={10-21},
  keywords={Electron tubes;Voltage;Bandwidth;Circuits;Shape;Klystrons;Frequency measurement;Gain measurement;Communication systems;Telephony},
  doi={10.1109/JRPROC.1949.232969}}

@ARTICLE{ProIV,
  author={Slepian, David},
  journal={The Bell System Technical Journal}, 
  title={Prolate spheroidal wave functions, Fourier analysis and uncertainty — IV: Extensions to many dimensions; generalized prolate spheroidal functions}, 
  year={1964},
  volume={43},
  number={6},
  pages={3009-3057},
  keywords={},
  doi={10.1002/j.1538-7305.1964.tb01037.x}}

@ARTICLE{ProV,
  author={Slepian, D.},
  journal={The Bell System Technical Journal}, 
  title={Prolate spheroidal wave functions, fourier analysis, and uncertainty — V: the discrete case}, 
  year={1978},
  volume={57},
  number={5},
  pages={1371-1430},
  keywords={},
  doi={10.1002/j.1538-7305.1978.tb02104.x}}

@inproceedings{Grnbaum2004ThePS,
  title={The Prolate Spheroidal Phenomenon as a Consequence of Bispectrality},
  author={F. Alberto Gr{\"u}nbaum and Milen Yakimov},
  year={2004},
  url={https://api.semanticscholar.org/CorpusID:125580557}
}

@article{SlepianComment,
author = {Slepian, David},
title = {Some comments on Fourier analysis, uncertainty and modeling},
journal = {SIAM Review},
volume = {25},
number = {3},
pages = {379-393},
year = {1983},
doi = {10.1137/1025078},

URL = { https://doi.org/10.1137/1025078
},
eprint = { 
        https://doi.org/10.1137/1025078
}, 
abstract = { Investigation of the problem of simultaneously concentrating a function . and its Fourier transform has led to some interesting special functions that have widespread applications in engineering. They provide a means of proving a rigorous version of an engineering folk theorem called the \$2WT\$-Theorem. Many generalizations of these ideas seem to possess a similar elegant mathematical structure. A brief descriptive review is given of these developments. }
}

@book{Reed1980MethodsOM,
  title={Methods of modern mathematical physics (vol.) I : functional analysis / Reed Michael, Barry Simon},
  author={Michael C. Reed},
  year={1980},
  url={https://api.semanticscholar.org/CorpusID:163981516}
}

@article{landau_density_1993,
	title = {On the density of phase-space expansions},
	volume = {39},
	copyright = {https://ieeexplore.ieee.org/Xplorehelp/downloads/license-information/IEEE.html},
	issn = {00189448},
	url = {http://ieeexplore.ieee.org/document/243434/},
	doi = {10.1109/18.243434},
	language = {en},
	number = {4},
	urldate = {2024-04-23},
	journal = {IEEE Transactions on Information Theory},
	author = {Landau, H.J.},
	month = jul,
	year = {1993},
	pages = {1152--1156},
	file = {Landau - 1993 - On the density of phase-space expansions.pdf:/Users/timostroschein/Zotero/storage/W9AWLTXE/Landau - 1993 - On the density of phase-space expansions.pdf:application/pdf},
}

@article{BohrSommerfeldQuant,
  title = {The Bohr-Sommerfeld quantization rule and the Weyl correspondence},
  author = {Argyres, P. N.},
  journal = {Physics Physique Fizika},
  volume = {2},
  issue = {3},
  pages = {131--139},
  numpages = {9},
  year = {1965},
  month = {Nov},
  publisher = {American Physical Society},
  doi = {10.1103/PhysicsPhysiqueFizika.2.131},
  url = {https://link.aps.org/doi/10.1103/PhysicsPhysiqueFizika.2.131}
}

@book{reed_iv_1978,
	series = {Methods of {Modern} {Mathematical} {Physics}},
	title = {{IV}: {Analysis} of {Operators}},
	isbn = {978-0-08-057045-7},
	url = {https://books.google.ch/books?id=dnjNCgAAQBAJ},
	publisher = {Elsevier Science},
	author = {Reed, M. and Simon, B.},
	year = {1978},
}

@article{hodge_eigenvalues_1970,
	title = {Eigenvalues and {Eigenfunctions} of the {Spheroidal} {Wave} {Equation}},
	volume = {11},
	issn = {0022-2488},
	url = {https://doi.org/10.1063/1.1665398},
	doi = {10.1063/1.1665398},
	abstract = {A technique is presented for the calculation of the oblate and prolate spheroidal wave equation eigenvalues and eigenfunctions. The eigenvalue problem is cast in matrix form and a tridiagonal, symmetric matrix is obtained. This formulation permits the immediate calculation of the eigenvalues to the desired accuracy by means of the bisection method. The eigenfunction expansion coefficients are then obtained by a recursion method. This technique is quite simple to program, and the computation speed is rapid enough to allow its use as a function subroutine where values not previously tabulated or large numbers of values are required.},
	number = {8},
	journal = {Journal of Mathematical Physics},
	author = {Hodge, D. B.},
	month = aug,
	year = {1970},
	pages = {2308--2312},
}

@article{MiyazakiNUMERICALCO,
author = {Miyazaki, Yoshinori, Asai, Nobuyoshi, Cai, Dongsheng, Ikebe, Yasuhiko},
journal = {ETNA. Electronic Transactions on Numerical Analysis [electronic only]},
keywords = {infinite symmetric tridiagonal matrix; error estimate; inverse problem},
language = {eng},
pages = {329-338},
publisher = {Kent State University, Department of Mathematics and Computer Science},
title = {Numerical computation of the eigenvalues for the spheroidal wave equation with accurate error estimation by matrix method.},
url = {http://eudml.org/doc/127451},
volume = {23},
year = {2006},
}

@inproceedings{Bochner1932,
  title={S. Bochner, lectures on fourier integrals (1932)},
  author={Roger Cooke},
  year={2005},
  url={https://api.semanticscholar.org/CorpusID:116982736}
}

@book{meixner_mathieusche_1954,
	address = {Berlin, Heidelberg},
	title = {Mathieusche {Funktionen} und {Sphäroidfunktionen}},
	copyright = {http://www.springer.com/tdm},
	isbn = {978-3-540-01806-3 978-3-662-00941-3},
	url = {http://link.springer.com/10.1007/978-3-662-00941-3},
	language = {de},
	urldate = {2024-04-25},
	publisher = {Springer Berlin Heidelberg},
	author = {Meixner, Josef and Schäfke, Friedrich Wilhelm},
	year = {1954},
	doi = {10.1007/978-3-662-00941-3},
	file = {Meixner and Schäfke - 1954 - Mathieusche Funktionen und Sphäroidfunktionen.pdf:/Users/timostroschein/Zotero/storage/L47YE2BM/Meixner and Schäfke - 1954 - Mathieusche Funktionen und Sphäroidfunktionen.pdf:application/pdf},
}

@Inbook{Cramér1992,
author="Cram{\'e}r, Harald",
editor="Kotz, Samuel
and Johnson, Norman L.",
title="On Harmonic Analysis in Certain Functional Spaces",
bookTitle="Breakthroughs in Statistics: Foundations and Basic Theory",
year="1992",
publisher="Springer New York",
address="New York, NY",
pages="179--184",
abstract="1. In generalized harmonic analysis as developed by Wiener [7,8] and Bochner [1] we are concerned with a measurable complex-valued function f(t) of the real variable t (which may be thought of as representing time), and it is assumed that the limit(1){\$}{\$}{\backslash}varphi {\backslash}left( t {\backslash}right) = {\backslash}mathop {\{}{\backslash}lim {\}}{\backslash}limits{\_}{\{}x {\backslash}to {\backslash}infty {\}} {\backslash}frac{\{}1{\}}{\{}{\{}2T{\}}{\}}{\backslash}int{\_}{\{} - T{\}}^T {\{}f{\backslash}left( {\{}{\backslash}tau  + t{\}} {\backslash}right){\}} {\backslash}overline {\{}f{\backslash}left( {\backslash}tau  {\backslash}right){\}} {\{}{\backslash}text{\{} {\}}{\}}d{\backslash}tau{\$}{\$}exists for all real t. If, in addition, ϕ(t) is continuous at the particular point t = 0, it is continuous for all real t and may be represented by a Fourier Stieltjes integral(2){\$}{\$}{\backslash}varphi {\backslash}left( t {\backslash}right) = {\backslash}int{\_}{\{} - {\backslash}infty {\}}^{\backslash}infty  {\{}{\{}e^{\{}itx{\}}{\}}{\}} {\{}{\backslash}text{\{} {\}}{\}}d{\backslash}Phi {\backslash}left( x {\backslash}right){\$}{\$}where $\Phi$(x) is real, bounded and never decreasing.",
isbn="978-1-4612-0919-5",
doi="10.1007/978-1-4612-0919-5_12",
url="https://doi.org/10.1007/978-1-4612-0919-5_12"
}

@book{Neumann,
	title = {Mathematische Grundlagen der Quantenmechanik},
	volume = {38},
	publisher = {SpringerVerlag},
	author = {J. von Neumann},
	year = {2013},
	}

@article{wall_extraction_1995,
	title = {Extraction, through filter‐diagonalization, of general quantum eigenvalues or classical normal mode frequencies from a small number of residues or a short‐time segment of a signal. {I}. {Theory} and application to a quantum‐dynamics model},
	volume = {102},
	issn = {0021-9606},
	url = {https://doi.org/10.1063/1.468999},
	doi = {10.1063/1.468999},
	abstract = {In a previous paper we developed a method, Filter‐Diagonalization, for extracting eigenvalues and eigenstates of a given operator at any desired energy range. In essence, the method eliminates correlation between distant eigenstates through a short‐time filter while correlations between closely lying states are eliminated by diagonalization. Here we extend Filter‐Diagonalization. When used to extract eigenvalues for a given operator H, we show that all eigenvalue information is directly extracted from a short segment of the correlation function C(t)=(ψ(0)‖e−iHt‖ψ(0)), or alternately from a small number of residues (ψ(0)‖Rn(H)‖ψ(0)), where ψ(0) is a random initial function and Rn(H) is any desired polynomial expansion in H. The implications of this feature are twofold. First, in contrast to the previous version the wave packet needs only to be propagated once (to prepare C(t)), and eigenstates at all desired energy windows can then be extracted with negligible extra computation time (and negligible storage requirements). In a simulation presented here, accurate eigenvalues are extracted using propagation times which are only a 0.0041 fraction of the ‘‘natural’’ time, i.e., the time by which the relative phase of the two closest eigenstates reaches 2π. The second and more important feature is that the method is automatically suitable for extracting eigenvalues (or normal modes) using a short‐time segment of any signal C(t) which is a sum of (unknown) Fourier components (C(t)=∑ndne−iεnt) regardless of its origin. In addition to its use for determining eigenvalues of known operators, the method may also be utilized to extract normal modes from classical‐dynamics simulations, eigenstates from real‐time Quantum Monte‐Carlo studies, frequencies from experimental optical or electrical signals, or be utilized in any other circumstance where a correlation function or general signal is only known for short times (or expensive to generate at long times).},
	number = {20},
	journal = {The Journal of Chemical Physics},
	author = {Wall, Michael R. and Neuhauser, Daniel},
	month = may,
	year = {1995},
	pages = {8011--8022},
}

@ARTICLE{Nyquist1928,
  author={Nyquist, H.},
  journal={Transactions of the American Institute of Electrical Engineers}, 
  title={Certain Topics in Telegraph Transmission Theory}, 
  year={1928},
  volume={47},
  number={2},
  pages={617-644},
  keywords={Telegraphy;Steady-state;Frequency conversion;Circuits;Costs;Distortion;Shape;Interference;Equalizers;Telephony},
  doi={10.1109/T-AIEE.1928.5055024}}

@article{UVSpectrum, 
	author = {Alain Connes  and Henri Moscovici },
	title = {The UV prolate spectrum matches the zeros of zeta},
	journal = {Proceedings of the National Academy of Sciences},
	volume = {119},
	number = {22},
	pages = {e2123174119},
	year = {2022},
	doi = {10.1073/pnas.2123174119},
	URL = {https://www.pnas.org/doi/abs/10.1073/pnas.2123174119},
	eprint = {https://www.pnas.org/doi/pdf/10.1073/pnas.2123174119},
	abstract = {We show that the eigenvalues of the self-adjoint extension (introduced by A.C. in 1998) of the prolate spheroidal operator reproduce the UV behavior of the squares of zeros of the Riemann zeta function, and we construct an isospectral family of Dirac operators whose spectra have the same UV behavior as those zeros. We describe a remarkable property of the self-adjoint extension of the prolate spheroidal operator introduced in 1998 by A.C. The restriction of this operator to the interval whose characteristic function commutes with it is well known, has a discrete positive spectrum, and is well understood. What we have discovered is that the restriction of the prolate differential operator to the complement of the finite interval admits (besides a replica of the above positive spectrum) negative eigenvalues whose ultraviolet (UV) behavior reproduces that of the squares of zeros of the Riemann zeta function. Moreover, we show that their corresponding eigenfunctions belong to the Sonin space. This feature fits with the proof (by A.C. and C. Consani) of Weil’s positivity at the Archimedean place, which uses the compression of the scaling action to the Sonin space. Furthermore, we construct an isospectral family of Dirac operators whose spectra have the same UV behavior as the zeros of the Riemann zeta function.}}

@ARTICLE{MatCommunication,
  author={Shannon, C. E.},
  journal={The Bell System Technical Journal}, 
  title={A mathematical theory of communication}, 
  year={1948},
  volume={27},
  number={3},
  pages={379-423},
  keywords={},
  doi={10.1002/j.1538-7305.1948.tb01338.x}}

@article{walter_differential_1992,
title = {Differential operators which commute with characteristic functions with applications to a lucky accident},
volume = {18},
issn = {0278-1077, 1563-5066},
url = {https://www.tandfonline.com/doi/full/10.1080/17476939208814523},
doi = {10.1080/17476939208814523},
language = {en},
number = {1-2},
urldate = {2024-04-26},
journal = {Complex Variables, Theory and Application: An International Journal},
author = {Walter, Gilbert G.},
month = mar,
year = {1992},
pages = {7--12},
file = {Walter - 1992 - Differential operators which commute with characte.pdf:/Users/timostroschein/Zotero/storage/EI7WJ4VT/Walter - 1992 - Differential operators which commute with characte.pdf:application/pdf},
}

@book{courant_methoden_1993,
	address = {Berlin, Heidelberg},
	title = {Methoden der mathematischen {Physik}},
	copyright = {http://www.springer.com/tdm},
	isbn = {978-3-642-63447-5 978-3-642-58039-0},
	url = {http://link.springer.com/10.1007/978-3-642-58039-0},
	language = {de},
	urldate = {2024-05-01},
	publisher = {Springer Berlin Heidelberg},
	author = {Courant, Richard and Hilbert, David},
	year = {1993},
	doi = {10.1007/978-3-642-58039-0},
	file = {Courant and Hilbert - 1993 - Methoden der mathematischen Physik.pdf:/Users/timostroschein/Zotero/storage/SRNFZ7TW/Courant and Hilbert - 1993 - Methoden der mathematischen Physik.pdf:application/pdf},
}

@article{landau_eigenvalue_1980,
	title = {Eigenvalue distribution of time and frequency limiting},
	volume = {77},
	copyright = {https://www.elsevier.com/tdm/userlicense/1.0/},
	issn = {0022247X},
	url = {https://linkinghub.elsevier.com/retrieve/pii/0022247X80902413},
	doi = {10.1016/0022-247X(80)90241-3},
	language = {en},
	number = {2},
	urldate = {2024-05-16},
	journal = {Journal of Mathematical Analysis and Applications},
	author = {Landau, H.J and Widom, H},
	month = oct,
	year = {1980},
	pages = {469--481},
	file = {Landau and Widom - 1980 - Eigenvalue distribution of time and frequency limi.pdf:/Users/timostroschein/Zotero/storage/2YKUSA35/Landau and Widom - 1980 - Eigenvalue distribution of time and frequency limi.pdf:application/pdf},
}


\end{document}